\documentclass[final, UKenglish, envcountsect, envcountsame, runningheads]{lmcs}
\pdfoutput=1

% LMCS Layouting Macros
\usepackage{lastpage}
\lmcsdoi{21}{4}{4}
\lmcsheading{}{\pageref{LastPage}}{}{}%
{Nov.~01,~2024}{Oct.~08,~2025}{}

\usepackage[T1]{fontenc}
\usepackage[utf8]{inputenc}
\usepackage{graphicx}
\usepackage{hyperref}
\makeatletter
\hypersetup{
  pdfauthor={Fabian Lenke, Henning Urbat and Stefan Milius},
  pdftitle={Monoidal Extended Stone Duality},
  % kill those ugly red rectangles around links
  hidelinks,final,
}
% make autoref work for sections
%\addto\oprasUKenglish{% only necessary if babel is used
  %
  \renewcommand{\subsectionautorefname}{Section}%
  %\renewcommand{\sectionname}{Section}%
%}
\makeatother
\usepackage{color}
%\renewcommand\UrlFont{\color{blue}\rmfamily}     % I changed the link color back to black. If you do not agree with this, let us know!
% Referencing Items as <theoremlabel>.<itemlabel>

% Short version of the above if items are labelled as thmlabel:itemlabel
\newcommand{\iref}[2]{\autoref{#1}\ref{#1:#2}}
% For removing theorem braces when needed
\usepackage{etex,etoolbox}
\makeatletter

\makeatother

% \usepackage{ifdraft}
% \ifdraft{
  % \overfullrule=0mm % disable marking of overfull lines
% \usepackage[nohead,nofoot,
%             headsep = 1cm,
%             footskip = 1cm,
%             text={12.2cm,19.3cm},
%             paperwidth=16.2cm,
%             paperheight=23.3cm,
%             marginparsep=1mm,
%             marginparwidth=1.88cm,
%             footskip=1cm,
%             %showframe,
%             centering
%             ]{geometry}
% }{}
% show sections and subsections in toc (needed for hyperref)
\setcounter{tocdepth}{2}
\usepackage{microtype}
\usepackage[notref,notcite]{showkeys}
\usepackage{amssymb}
\usepackage{amsmath}
\usepackage{stmaryrd}
\usepackage{mathtools}
\usepackage{xspace}
\usepackage{tikz-cd}
\tikzset{shiftarr/.style={
        rounded corners,%
        to path={--([#1]\tikztostart.center)
                     -- ([#1]\tikztotarget.center) \tikztonodes
                     -- (\tikztotarget)},
}}
\usepackage{bussproofs}
\EnableBpAbbreviations % abbreviations

%
% Nice enumerate environments
%
\usepackage[inline]{enumitem}
\setlist[enumerate,1]{label=(\arabic*),font=\normalfont,align=left,leftmargin=0pt,labelindent=0pt,listparindent=\parindent,labelwidth=0pt,itemindent=!,topsep=3pt,parsep=0pt,itemsep=3pt,start=1}
\setlist[enumerate,2]{label=(\alph*),font=\normalfont,labelindent=*,leftmargin=*,start=1}
\setlist[itemize]{labelindent=*,leftmargin=*,topsep=5pt,itemsep=3pt}
\setlist[description]{labelindent=*,leftmargin=*,itemindent=-1 em}

\usepackage{fixme}
%% FIXme: Collaborative annotation
\fxuselayouts{inline,nomargin}
\fxusetheme{color}
%% register your initials "xx"
\FXRegisterAuthor{fb}{afb}{FB}
\FXRegisterAuthor{sm}{asm}{SM}
\FXRegisterAuthor{hu}{ahu}{HU}

% From Stefan
\newcommand{\takeout}[1]{\empty}
\newcommand{\eps}{\varepsilon}
\newcommand{\xra}[1]{\xrightarrow{~#1~}}

% 
% nicer overlines for math symbols
%

% Usage \mybar{<scale factor>}{<right shift>}{<math expression>}
%

\newcommand{\hatF}{\widehat{F}}
\newcommand{\hatU}{\widehat{U}}
\newcommand{\hatA}{\widehat{A}}
\newcommand{\hatX}{\widehat{X}}
\newcommand{\hatI}{\widehat{I}}
\newcommand{\hata}{\widehat{a}}
\newcommand{\hatB}{\widehat{B}}
\newcommand{\hatb}{\widehat{b}}
\newcommand{\hatff}{\widehat{f}}
\newcommand{\hateps}{\hat{\eps}}
\newcommand{\hateta}{\hat{\eta}}
\newcommand{\hatC}{\widehat{\cat C}}
\newcommand{\hatD}{\widehat{\cat D}}
\newcommand{\hatM}{\widehat{M}}
\newcommand{\hatminus}{\widehat{-}}
\newcommand{\hatotimes}{\mathbin{\widehat{\rule{0pt}{5pt}\smash{\otimes}}}}
\newcommand{\hatotimesp}{\widehat{\rule{0pt}{4pt}\smash{\otimes}}}

%\nolinenumbers
\numberwithin{equation}{section}

\theoremstyle{definition}
\newtheorem{constr}[thm]{Construction}
\newtheorem{myasm}[thm]{Assumption}
%\newtheorem{conven}{Convention}

%%%%%%%%%%%%%%%%%%%%%%%%%%%%%%%%%%%%%%%%%%%%%%
%%%%%%%%%%%%    MATH MACROS    %%%%%%%%%%%%%%%
%%%%%%%%%%%%%%%%%%%%%%%%%%%%%%%%%%%%%%%%%%%%%%

\newcommand*\coc{%
        \nobreak
        \mskip6mu plus1mu
        \mathpunct{}%
        \nonscript
        \mkern-\thinmuskip
        {:}%
        \mskip2mu
        \relax
}
\newcommand{\LR}{\,\Longleftrightarrow\,}

\newcommand{\q}[1]{\quad {#1} \quad}
\newcommand{\qq}[1]{\qquad {#1} \qquad}

% arrows
\newcommand{\epi}{\twoheadrightarrow}

\newcommand{\incl}{\hookrightarrow}
\newcommand{\id}{\mathrm{id}}

\DeclareMathOperator{\mor}{Mor}

\newcommand{\pfm}[1][\Sigma]{\ensuremath{\overline{{#1}^*}}\xspace}

\newcommand{\simeqop}{\simeq^{\mathrm{op}}}
% CATEGORIES
\newcommand{\cat}[1]{\ensuremath{\mathbf{#1}}\xspace}

\newcommand{\rel}{\cat{Rel}}
\newcommand{\ordrel}{\cat{OrdRel}}
\newcommand{\ccat}{\cat{Cat}}
\newcommand{\set}{\cat{Set}}

\newcommand{\pos}{\cat{Pos}}
\newcommand{\ba}{\cat{BA}}
\newcommand{\res}{\cat{Res}}

\newcommand{\relres}{\cat{RelRes}}
\newcommand{\der}{\cat{Der}}
\newcommand{\deracdl}{\cat{DerACDL}}

\newcommand{\relder}{\cat{RelDer}}
\newcommand{\relderacdl}{\cat{RelDerACDL}}

\newcommand{\dl}{\cat{DL}}

\newcommand{\relmon}{\cat{RelMon}}
\newcommand{\relordmon}{\cat{RelOrdMon}}

\newcommand{\relcomon}{\cat{RelComon}}

\newcommand{\comon}{\cat{Comon}}

\newcommand{\ordmon}{\cat{OrdMon}}

\newcommand{\stone}{\cat{Stone}}
\newcommand{\priest}{\cat{Priest}}
\newcommand{\stonemon}{\cat{StoneMon}}

\newcommand{\cslj}{\ensuremath{\cat{CSL}_{\bigvee}}\xspace}
\newcommand{\cslm}{\ensuremath{\cat{CSL}_{\bigwedge}}\xspace}
\newcommand{\caba}{\ensuremath{\cat{CABA}}\xspace}
\newcommand{\rescaba}{\ensuremath{\cat{ResCABA}}\xspace}

\newcommand{\resacdl}{\ensuremath{\cat{ResACDL}}\xspace}
\newcommand{\relresacdl}{\ensuremath{\cat{RelResACDL}}\xspace}
\newcommand{\acdl}{\ensuremath{\cat{ACDL}}\xspace}
\newcommand{\jsorc}{\jsor_{\mathrm{C}}}
\newcommand{\msorc}{\msor_{\mathrm{C}}}

\newcommand{\msl}{\cat{MSL}}
\newcommand{\jsl}{\cat{JSL}}
\newcommand{\stonejsl}{\cat{StoneJSL}}

% FUNCTORS
\newcommand{\viet}{\mathbb{V}}
\newcommand{\dviet}{\mathbb{V}_{\!\downarrow}}
\newcommand{\dvietplus}{\mathbb{V}_{\!\downarrow}^+}
\newcommand{\uviet}{\mathbb{V}_{\uparrow}}
\DeclareMathOperator{\at}{\mathcal{A}}

\newcommand{\pow}{\mathcal{P}} %% not a math operator: e.g. spacing
\newcommand{\Vj}{V_{\lor}}
\newcommand{\Vm}{V_{\land}}
\newcommand{\Uj}{U_{\lor}}
\newcommand{\Um}{U_{\land}}

                               %% \pow N is wrong
\DeclareMathOperator{\cl}{\mathrm{Cl}}
\DeclareMathOperator{\reg}{\mathrm{Reg}}

% OPERATORS
\DeclareMathOperator{\pro}{Pro}
\DeclareMathOperator{\ind}{Ind}
\DeclareMathOperator{\coalg}{\mathbf{Coalg}}

\DeclareMathOperator{\Id}{\mathrm{Id}}

\newcommand{\N}{\mathbb{N}}

\DeclareMathOperator{\V}{\bigvee} %% produziert falsches Spacing zum Index
\DeclareMathOperator{\A}{\bigwedge}
\newcommand{\lres}{\setminus}
\newcommand{\rres}{\mathrel /}
\newcommand{\genres}[1]{\ensuremath{\lres {#1}/}}

\newcommand{\moni}{\multimap}
\DeclareRobustCommand{\imon}{{\,{\reflectbox{$\moni$}}\,}}

% newcommand

\DeclareMathOperator{\idl}{Idl}

\newcommand{\jsor}{\mathbin{{\otimes}}}
\newcommand{\msor}{\mathbin{{\boxtimes}}}
\newcommand{\mtoj}{\ensuremath{\jtom^{-1}}}
\newcommand{\jtom}{\ensuremath{\omega}}

\DeclareMathOperator{\op}{\mathrm{Op}}

\begin{document}
%
% TITLE
%
\title{ Extended Stone Duality via Monoidal Adjunctions}
%\titlecomment{This is the journal version of FOSSACS 2024 pape paper.}
%
% AUTHORS
%
\author[F.~Lenke]{Fabian Lenke\lmcsorcid{0000-0001-5890-9485}}
\author[H.~Urbat]{Henning Urbat\lmcsorcid{0000-0002-3265-7168}}
\author[S.~Milius]{Stefan Milius\lmcsorcid{0000-0002-2021-1644}}
\address{Friedrich-Alexander-Universität Erlangen-Nürnberg, Germany}
\email{\{fabian.birkmann,stefan.milius,henning.urbat\}@fau.de}%

\begin{abstract}
  Extensions of Stone-type dualities have a long history in algebraic logic and have also been instrumental in proving results in algebraic language theory.
  We show how to extend abstract categorical dualities via monoidal adjunctions, subsuming various incarnations of classical extended Stone and Priestley duality as special cases, and providing the foundation for two new concrete dualities:
  First, we investigate residuation algebras, which are lattices with additional residual operators modeling language derivatives algebraically. We show that the subcategory of derivation algebras is dually equivalent to the category of profinite ordered monoids, restricting to a duality between Boolean residuation algebras and profinite monoids.
  We further refine this duality to capture relational morphisms of profinite ordered monoids, which dualize to natural morphisms of residuation algebras.
  Second, we apply the categorical extended duality to the discrete setting of sets and complete atomic Boolean algebras to obtain a concrete description for the dual of the category of all small categories.
\keywords{Stone Duality  \and Profinite Monoids \and Regular Languages.}
\end{abstract}

\maketitle              % typeset the header of the contribution

\section{Introduction}

Marshall H.~Stone's representation theorem for Boolean algebras~\cite{stone-36}, the
foundation for the so called \emph{Stone duality} between Boolean
algebras and Stone spaces, manifests a tight connection between logic
and topology.  It has thus become an ubiquitous tool in various areas
of theoretical computer science, not only in logic, but also, for
example, in domain theory~\cite{abr91} and automata theory~\cite{pippenger-97,gehrke-gregorieff-pin-08}.

From algebraic logic arose the need for extending Stone duality to
capture Boolean algebras equipped with additional operators (modelling
quantifiers or modalities). Originating from Jónsson and Tarski's
representation theorem for Boolean algebras with
operators~\cite{jonsson-tarski-51,jonsson-tarski-52}, a representation
in the spirit of Stone was proven by Halmos~\cite{halmos-58}; the
general categorical picture of the duality of Kripke frames and modal
algebras is based on an adjunction between operators and continuous
relations developed by Sambin and Vaccaro~\cite{sambin-vaccaro}.

In the context of automata theory, the need for extensions of Stone
duality was only unveiled in this millennium:
Using ordinary Stone duality, Pippenger~\cite{pippenger-97} (see also Almeida~\cite{almeida_94}) has already shown that the Boolean
algebra of regular languages on an alphabet \(\Sigma\) corresponds
to the Stone space \(\widehat{\Sigma^{*}}\) of
profinite words. This result and the theory surrounding it was placed in the bigger picture by categorical frameworks that have identified Stone-type dualities to be one of the cornerstones of
algebraic language theory~\cite{urb-ada-che-mil-17-proc,sal-17,blu21}.
On the other hand, Gehrke~et~al.~\cite{gehrke-gregorieff-pin-08}
discovered that, under Goldblatt's form of extended Priestley
duality~\cite{goldblatt-89}, the \emph{residuals} of language concatenation dualize to \emph{multiplication} of profinite words, but so far this result could not yet be placed in the categorical big picture. One
reason might be that, despite some progress in recent
years~\cite{bosangue-kurz-rewitzky-07,hofmann-15}, extended (Stone)
dualities for (co-)algebras are themselves not fully understood as
instances of a crisp categorical idea.

\paragraph{Contributions} In the present paper, we introduce in \autoref{sec:extending-dualities} a simple categorical framework for extending any categorical duality
\(\cat C \simeqop \hatC\) via \emph{monoidal adjunctions}: for
a given adjunction on~\(\cat C\) with a strong monoidal right adjoint
\(U\), we establish a dual equivalence between
\(U\)-operators on \(\cat C\) and operators in a Kleisli
category on \(\hatC\).
This framework is compositional both in its parameter -- the strong monoidal right adjoint -- and the object level of \(U\)-operators, which yields a simple categorical version of correspondence theory.

We demonstrate the power of this framework by working out several examples.
On one hand, we show how to recover existing dualities and applications thereof, but we also indicate how to extend existing dualities including completely new instances of dualities.

To this end, we show in \autoref{sec:exampl-extend-priest} how to use our framework to recover,
and mildly extend, extended Priestley duality for distributive lattices with
operators~\cite{goldblatt-89} and relational morphisms. We also show how to use the
compositionality of the abstract extended duality to recover results from modal correspondence
theory for free.

Guided by our
categorical foundations for extended Stone duality, we subsequently investigate in \autoref{sec:residuation-algebras} the
correspondence between language derivatives and multiplication of profinite words
in the setting of \emph{residuation algebras}.
The key observation is that, on
complete and completely distributive lattices, the residuals are equivalent to a
\emph{coalgebraic} operator on the lattice. This equivalence can then be composed with  an extended duality based on the discrete duality between complete atomic Boolean algebras (CABAs) and sets to obtain a duality between certain complete residuation algebras and ordered monoids.
We then proceed to lift
this correspondence to locally finite structures, i.e.~structures
built up from finite substructures.  By identifying suitable non-full
subcategories -- derivation algebras and (lattice) comonoids,
respectively -- and an appropriate definition of morphism for
residuation algebras, we augment Gehrke's characterization of
Stone-topological algebras in terms of residuation algebras to a
duality between the categories of derivation algebras and that of
profinite ordered monoids:
\begin{equation}\label{eq:duality}
  \der \;\cong\; \comon\;\simeqop\;
  \cat{ProfOrdMon}.
\end{equation}
The abstract theory of extended duality now suggests that the
dual equivalence between profinite ordered monoids on the one side and comonoids
as well as derivation algebras on the other side extends to a more general
duality capturing morphisms of \emph{relational} type of profinite ordered monoids.
To this end, we identify a natural notion of relational
morphism for residuation algebras and comonoids, and we use our abstract extended duality theorem to obtain the dual equivalence
\[\relder \;\cong\; \relcomon\;\simeqop\; \cat{RelProfOrdMon} \]
which extends \eqref{eq:duality} to relational morphisms.

Finally, we combine the ideas underlying these results
to derive a novel duality between the
category of all small categories and the category of \emph{categorical residuation algebras}:
\begin{equation}
  \label{eq:duality-cat}
  \ccat \;\simeqop\; \cat{CatResCABA}.
\end{equation}
To achieve this, we use the duality between monoids and residuation CABAs and combine it with two observations:
first, small categories  are equivalent to certain \emph{relational monoids};
second, the discrete version of the duality~\eqref{eq:duality} admits an extension to a duality between relational
monoids and \emph{residuation CABAs}.
We prove that the composite of these equivalences restricts to \eqref{eq:duality-cat}.

\section*{Related Work.}
This paper is a completely revised and extended version of our
conference paper~\cite{BirkmannEA24} presented at FoSSaCS 2024. Besides providing detailed proofs
of all results, we have included additional material:
\autoref{prop:composite-duality} simplifies dualization of composite operators,
and it is used in  \autoref{sec:towards-point-free} to show how to derive results from modal correspondence theory
in a categorical way, by encoding modal formulas as morphisms.
\autoref{sec:finite-resid-algebr} has been extended to complete lattices, instead of just finite ones, to set the base duality for the material in \autoref{sec:duality-cat}.
We have added \autoref{prop:every-priest-mon-prof} to complete the picture
relating profinite ordered monoids and Priestley monoids similarly to the unordered case. Finally, the dual characterization of the category \ccat of small categories (\autoref{sec:duality-cat}) is a new application of our abstract methods.
\paragraph{Extended Stone Duality.} Duality for (complete) Boolean algebras with operators goes back to
Jónsson and Tarski~\cite{jonsson-tarski-51,jonsson-tarski-52}.  This
duality was refined by the topological approach via Stone spaces taken
by Halmos~\cite{halmos-58}, which allowed to characterize the
relations arising as the duals of operators, namely \emph{Boolean
  relations}.  Halmos' duality was extended to distributive lattices
with (\(n\)-ary) operators by Goldblatt~\cite{goldblatt-89} and
Cignoli~\cite{cignoli-91}.  Kupke et al.~\cite{kupke-kurz-venema-04}
recognized that Boolean relations elegantly describe descriptive
frames as coalgebras for the (underlying functor of) the Vietoris monad on
the category of Stone spaces; notions of bisimulation for these coalgebras were
investigated by Bezhanishvili
et~al.~\cite{bezhanishvili-fontaine-venema-10}. Bonsangue
et~al.~\cite{bosangue-kurz-rewitzky-07} introduced a framework for
dualities over distributive lattices equipped with a theory of
operators for a signature, which are dual to certain coalgebras.
Hofmann and Nora~\cite{hofmann-15} have taken a categorical approach
to extend natural dualities to algebras for a signature equipped with
{unary} operators preserving only some of the operations
prescribed by the signature; they relate these to coalgebras for (the
underlying functor of) a suitable monad \(T\).    Recent work by Bezhanishvili
et~al.~\cite{bezhanishvili-harding-morandi-23} clarifies the relation
between free constructions on distributive lattices and different
versions of the Vietoris monad to derive several dualities for
distributive lattices with different types of operators and their
corresponding Priestley relations.

Similarly to Hofmann and Nora's work~\cite{hofmann-15} we have tried to fit most of these
developments in a single, categorical framework. Our approach differs in two ways: first,
incorporating monoidal structures into our setting immediately allows us to dualize operators
with multiple in- and outputs. Second, besides a ``nice'' base duality, our framework depends
only on some monoidal right adjoint. In contrast, op.~cit.~requires,
in addition to the right adjoint (represented by a subsignature), a candidate monad \(T\)
and gives a criterion whether an extended duality between operators for the subsignature and
\(T\)-coalgebras exists.

\paragraph{Residuation Algebras.} The original work by Jónsson and Tarski~\cite{jonsson-tarski-51} already used duality for residual operators (also called \emph{conjugates}) on Boolean algebras. Residuated Boolean algebras, i.e.~Boolean algebras with a residuated
binary operator, were explicitly studied by Jónsson and
Tsinakis~\cite{jonsson-93} to highlight the role of the residuals in
relation algebra.  Gehrke et~al.~\cite{gehrke-gregorieff-pin-08}
exposed the connection between the residuals of the concatenation
of regular languages and the multiplication on profinite words and
investigated applications to automata theory, most notably a
duality-theoretic proof of Eilenberg's variety
theorem~\cite{Eilenberg1976}.  The duality theory behind the
correspondence of general residuation algebras and
Priestley-topological algebras developed by Gehrke~\cite{gehrke-16}
is based on Goldblatt's
extension of Stone duality~\cite{goldblatt-89} for distributive lattices.
This duality was further investigated via the theory of canonical
extensions~\cite{gj94,gehrke-09,gehrke-priestley-07} to show that certain crucial parts are not entirely ``algebraic'':
While Gehrke~\cite{gehrke-16} provides a condition under which the dual relation of the
residuals is functional,
Fussner and Palmigiano~\cite{fussner-palmigiano-19} showed that it cannot be equationally defined in the language of
residuation algebras.

Our duality for Priestley monoids is a non-trivial restriction of Gehrke's
duality~\cite{gehrke-16}, and, to our knowledge, the first duality result for relational
morphisms of profinite monoids, which are a ubiquitous tool in algebraic language
theory~\cite{pin-88} and semigroup theory~\cite{rhodes-09}.  We also note that, while our
results are closely related to Gehrke's~\cite{gehrke-16}, our methods are fundamentally
different: while most of the proofs in op.~cit.~are of topological nature, we only work
on the side of ordered structures, and outsource any topology to already existing dualities. In
our opinion, this not only simplifies the theory, but it also clarifies the relation between
Gehrke's results and the duality by Rhodes and Steinberg~\cite{rhodes-09} between profinite
monoids and counital Boolean bialgebras: on the algebraic side, derivation algebras correspond
to Boolean comonoids, which are precisely counital Boolean bialgebras.
%The question on when the
%dual relation to the residuals is functional was posed and answered by
%

\section{Preliminaries}\label{sec:prelim}
Readers are assumed to be familiar with basic category theory, such as
functors, natural transformations, adjunctions and monoidal
categories, see Mac Lane~\cite{maclane-98} for an introduction.

We
briefly recall the foundations of Stone duality~\cite{stone-36} and
Priestley duality~\cite{priestley-70}.  By the latter we mean the dual
equivalence $\dl \simeqop \priest$ between the category \(\dl\) of
bounded distributive lattices and lattice homomorphisms, and the
category \(\priest\) of Priestley spaces (ordered compact topological
spaces in which for every $x\not\leq y$ there exists a clopen upset
containing $x$ but not $y$) and continuous monotone maps.  The duality
maps a distributive lattice $D$ to the pointwise-ordered space
\(\dl(D, 2)\) of homomorphisms into the two-element lattice
(equivalently, prime filters, ordered by inclusion), and topologized
via pointwise convergence. In the reverse direction, it maps a
Priestley space $X$ to the distributive lattice \(\priest(X, 2)\) of
continuous maps into the two-element poset $2=\{0\leq 1\}$ with
discrete topology (equivalently, clopen upsets), with the pointwise
lattice structure. Priestley duality restricts to Stone duality
\(\ba \simeqop \stone\) between the full subcategories~$\ba$ of
Boolean algebras and $\stone$ of Stone spaces (discretely ordered
Priestley spaces).

This duality also has a ``topologically discrete'' version:
First recall that an element \(k\) in a lattice \(D\) is \emph{compact} if for every subset \(A \subseteq D\) with \(k \le \bigvee A\) there exists a finite subset \(F \subseteq A\) with \(k \le F\). A lattice is \emph{algebraic} if every element is the join of the compact elements below it.
There exists a duality \(\pos \simeqop \acdl\)
between the category \pos of posets with order-preserving maps and the category \acdl of algebraic completely distributive lattices (ACDLs) with maps preserving all joins and meets. Under this duality a poset \(P\) is mapped to the set \(\pos(P, 2) \cong \mathcal{D} P\), which corresponds to the set of downsets of \(P\). In the reverse direction, an ACDL \(D\) is mapped to the pointwise-ordered poset \(\acdl(D, 2) \cong \mathcal{J} P\), which corresponds to the poset of completely join-prime elements of \(D\), i.e.\ those elements of \(D\) satisfying \(p \le \bigvee A \Rightarrow p \le a\) for some \(a \in A\).
This duality restricts to the well-known duality \(\set \simeqop \caba\) between the category \set and the category \caba of complete atomic boolean algebras.

Both the topological and the discrete duality restrict to Birkhoff
duality~\cite{birkhoff-37}
$\dl_{\mathrm{f}} \simeqop \pos_{\mathrm{f}}$ between finite
distributive lattices and finite posets. For a comprehensive
introduction to ordered structures and their dualities, see the first
two chapters of the classic textbook by Johnstone~\cite{johnstone-82}.

% We assume that all lattices are bounded and distributive, and that join- and meet-semilattices posess a bottom, respectively top element.

\section{Extending Dualities}
\label{sec:extending-dualities}

Our first contribution is a general categorical framework for extending Stone-type dualities via monoidal adjunctions. It is motivated by the extension of Priestley duality to operators due to Goldblatt~\cite{goldblatt-89} (which is recovered in \autoref{sec:exampl-extend-priest}) and serves as the basis for several concrete duality results derived subsequently.

We start this chapter by setting up some notation for the two ingredients involved: adjunctions and monoidal categories.

\begin{nota}\hfill    % I added a blank line before every first enumerate item in a thm environment, because it is recommended in instructions.pdf. If you do not agree with this, let us know!
  \begin{enumerate}
  \item Given functors \(U \colon \cat C \rightarrow \cat D\) and
    \(F \colon \cat D \rightarrow \cat C\), we write
    \[ F \colon \cat D \dashv \cat C \coc U \]
    (or simply
    $F \dashv U$) if $F$ is left adjoint to $U$. We denote the unit
    and counit by \begin{equation}\label{eq:unit-counit}\eta\colon \Id \to UF\qquad\text{and}\qquad \varepsilon\colon FU \to \Id;\end{equation} the transposing isomorphisms
    are denoted by
    \[(-)^{+}\colon \cat D(C, UD) \rightleftarrows \cat C(FC, D) \coc (-)^{-}\]
    Hence, for every $f\colon C \to UD$ and $g\colon FC \to D$ we have
    \[
      f^{+} = \varepsilon_D \cdot Ff,
      \qquad\text{and}\qquad
      g^{-} = Ug \cdot \eta_C.
    \]

  \item For a category $\cat C$ with dual $\hatC$ we
    denote both contravariant functors witnessing the dual equivalence by
    \[(\hatminus)\colon \cat C\xra{\simeq} \hatC \qquad\text{and}\qquad
    (\hatminus)\colon \hatC\xra{\simeq} \cat{C}.\] Moreover,
    if \(F \colon \cat C \rightarrow \cat D\) is a functor and~$\hatD$
    is dual to $\cat D$, then we denote the dual of $F$ by
    \[\hatF = (\hatminus) \cdot F \cdot (\hatminus) \colon \hatC
    \rightarrow \hatD.\]
  \item The Kleisli category of a monad \((T,\eta,\mu)\) on the
    category $\cat C$
    is denoted by \(\cat C_{T}\).  It has the same objects as
    \(\cat C\) and \(\cat C_{T}(X, Y) = \cat C(X, TY)\); the
    composition of \emph{Kleisli morphisms} $f\colon C \to TD$ and
    $g\colon D \to TE$ is defined by%

    \[
      g \cdot f = \big(
      C \xra{f} TD \xra{Tg} TTE \xra{\mu_E} TE
      \big).
    \]
    A Kleisli morphism
    \(f \colon C \rightarrow T D\) is
    \emph{pure} if \(f=\eta_D \cdot f'\) for some \(f'\colon C \to D\)
    in~$\cat C$. %(We omit the components of $\eta$ and $\mu$.)
    We write $J_T\colon \cat C\to \cat C_T$ for the usual identity-on-objects functor
    mapping $f\colon A \to B$ to $\eta_B \cdot f\colon A \to TB$.
  \end{enumerate}
\end{nota}
\begin{conv}
  To lighten notation, we omit subscripts indicating components of
  natural transformations when they are clear from the context,
  e.g.~we write $\eta\colon A \to TA$ for \(\eta_{A}\).
\end{conv}

\begin{defi}\hfill
  \begin{enumerate}
          \item A \emph{monoidal category} is a category \cat{C} with a bifunctor \(\otimes \colon \cat{C} \times \cat{C} \rightarrow \cat{C} \) called \emph{tensor}, an object \(I \in \cat{C}\) called \emph{unit} together with natural isomorphisms
          \[\upsilon \colon I \otimes \Id_{\cat{C}} \cong \Id_{\cat{C}} \quad \text{ and } \quad \alpha \colon (\Id_{\cat{C}} \otimes \Id_{\cat{C}}) \otimes \Id_{\cat{C}} \cong \Id_{\cat{C}} \otimes (\Id_{\cat{C}} \otimes \Id_{\cat{C}}), \]
          % (where the second is natural in all three components)
          which are subject to natural \emph{coherence axioms}~(see e.g.~\cite[Section VII]{maclane-98}).
          A monoidal category is \emph{strict} if \(\upsilon\) and \(\alpha\) are identities.
          \item A functor \(U \colon \cat{C} \rightarrow \cat{D}\) between monoidal categories that is equipped with a morphism \(\epsilon \colon I_{\cat D} \rightarrow UI_{\cat C}\) and a transformation
          \[\lambda \colon U(-) \otimes U(-) \rightarrow U(- \otimes -) \]
          which is  natural in both components is \emph{lax monoidal} if it satisfies appropriate {associativity} and {unitarity} axioms with respect to \(\upsilon\) and \(\alpha\).
          If both \(\epsilon\) and \(\lambda\) are isomorphisms (identities) then we say \(U\) is \emph{strong} (\emph{strict}) monoidal.
  \end{enumerate}
\end{defi}

\begin{nota}
  Given an object $X$ of a monoidal category, the $n$th tensor power is denoted
\[ X^{\otimes n}=\bigotimes_{i=1}^n X,\]
and for a functor $G$ the expression $GX^{\otimes n}$ is parsed as $G(X^{\otimes n})$, as usual.
\end{nota}

We now introduce the setting in which our framework for extending dualities applies.
In the simplest sense, the only ingredient is a strong monoidal functor:

\begin{myasm}\label{asm}
  We fix strict monoidal categories \(\cat C, \cat D\) with dually equivalent
  categories $\hatC, \hatD$; we regard $\hatC, \hatD$ as
  monoidal categories with tensor products $\hatotimes$ dual to the tensor products~$\otimes$ of $\cat C, \cat D$.
Moreover, we fix an adjunction \(F \colon \cat D \dashv \cat C \coc U\)
  with unit $\eta$ and counit $\varepsilon$ \eqref{eq:unit-counit},
  and we assume that~\(U\) is a {strong monoidal functor} with
  associated natural isomorphisms \[\lambda \colon UX \otimes UY \cong
  U(X \otimes Y) \qquad\text{and}\qquad \epsilon \colon I_{\cat D} \cong UI_{\cat
    C}.\]
  We denote the monad dual to the comonad \(FU\) by \(T =
  \hatF\hatU\), with unit and multiplication
\[ e=\hateps \colon \Id
  \rightarrow T\qquad\text{and}\qquad m = \hatF\hateta\hatU
  \colon TT \rightarrow T.\]
  Overall, we have the following situation:
  \begin{equation}\label{eq:adjoints}
    \begin{tikzcd}[sep = large]
      \cat D
      \rar[phantom]{\simeqop}
      \dar[bend right=30pt, swap,pos=.52]{F}
      &
      \hatD
      \ar[bend right=30pt]{d}[swap,pos=.55]{\hatF}
      \\
      \cat C
      \rar[phantom]{\simeqop}
      \uar[bend right=30pt, swap, pos=.48]{U}
      \uar[phantom]{\dashv}
      &
      \hatC
      \rar[loop right]{T}
      \uar[phantom]{\vdash}
      \ar[bend right=30pt]{u}[swap,pos=.45]{\hatU}
    \end{tikzcd}
  \end{equation}
\end{myasm}

\begin{rem}\label{rem:mon-adj-kleisli}\hfill
  \begin{enumerate}
  \item By Mac Lane's \emph{Coherence Theorem}~\cite[Section VII.2]{maclane-98}, every monoidal
    category is equivalent to a strict monoidal category, hence the strictness requirement on $\cat C$ in \autoref{asm} is without loss of generality.
  \item  The isomorphism \(\lambda\) can be extended to an isomorphism
    \[
      \lambda \colon UX_{1} \otimes \cdots \otimes UX_{n} \cong
      U(X_{1} \otimes \cdots \otimes X_{n})
    \]
for all finite families of objects $X_1\ldots, X_n$ in $\cat C$.
%, which is well-defined by Mac Lane's
% \emph{coherence theorem}~\cite[Sec.~VII.2]{maclane-71}.
  \item  The functor \(\hatU \colon \hatC \rightarrow \hatD\) dual to $U$ is a
  strong monoidal \emph{left} adjoint of \(\hatF\), and the unit and
  counit of the adjunction $\hatU \dashv \hatF$ are given by \(\hateps\) and \(\hateta\), respectively. Since \(\hatU\) is strong monoidal with respect to the isomorphisms
    \[{\hat{\epsilon}} \colon \hatI_{\cat{D}} \cong \hatU
    \hatI_{\cat{C}} \qquad\text{and}\qquad
    \hat{\lambda} \colon \hatU X \hatotimes \hatU Y \cong \hatU(X
    \hatotimes Y) \coc \hat{\lambda}^{-1},\] its right adjoint \(\hatF\) is lax monoidal
    (see e.g.~\cite[p.~17]{schwede-shipley-02}) for the natural transformations
  \[ (\hat{\epsilon}^{-1})^{-} \colon \hatI_{\cat C} \rightarrow \hatF \hatI_{\cat D} \qquad\text{and}\qquad  ((\hateta \hatotimes \hateta) \cdot
    \hat{\lambda}^{-1})^{-} \colon \hatF X \hatotimes \hatF Y \rightarrow \hatF(X \hatotimes Y).  \]
          This makes $\hatU \dashv \hatF$ a monoidal adjunction and \(T = \hatF \hatU\)  a \emph{monoidal} or \emph{commutative} monad on $\hatC$ with monoidal structure \(\hat{\delta} \colon TX \hatotimes TY \rightarrow T(X \hatotimes Y)\) given as the appropriate composite of the monoidal structures of \(\hatF\) and \(\hatU\).
          Note that \(\hat \delta\) also
  extends to any arity, that is, for every $n$-tuple of objects $X_1,
  \ldots X_n$, we obtain a natural transformation
  \[
    \hat{\delta}\colon TX_1 \hatotimes \cdots \hatotimes TX_n \to T(X_1 \hatotimes \cdots
    \hatotimes X_n).
  \]
\item The tensor product ${\hatotimes}$ of $\hatC$ lifts to the
  Kleisli category \(\hatC_{T}\); the lifting maps a pair
  \((f\colon X \rightarrow TY, g\colon X'\to TY')\) of
  \(\hatC_{T}\)-morphisms to the \(\hatC_{T}\)-morphism
  \[\hat{\delta} \cdot (f{\hatotimes} g) \colon X {\hatotimes} X'
  \rightarrow TY {\hatotimes} TY' \to T(Y {\hatotimes} Y').\] This
  makes $\hatC_T$ itself a monoidal
  category (see e.g.~\cite[Prop.~1.2.2]{seal03}) with tensor \(\hatotimes\), and
  the canonical left adjoint $J_T\colon \hatC\to\hatC_T$ a strict
  monoidal functor.
\end{enumerate}
\end{rem}

We now show how to extend the duality \(\cat C \simeqop \hatC \) along the adjunction \(F \dashv U\) to yield a duality between operators.
\begin{defi}\label{def:operator}\hfill
  \begin{enumerate}
  \item Let $G\colon {\cat A}\to {\cat B}$ be a functor between
    monoidal categories, and let \(k, n \in \N\).  An
    \emph{\((k, n)\)-ary \( G \)-operator} consists of an object
    \(A \in \cat A\) and a morphism
    \(a \colon (GA)^{\otimes k} \rightarrow (GA)^{\otimes n}\) of
    $\cat B$.  A \emph{\((k, n)\)-ary \( G \)-operator morphism} from
    \((A, a) \) to \((B, b)\) is a morphism $h\colon GA\to GB$ of
    $\cat B$ such that the following square commutes:
    \[
      \begin{tikzcd}
        (GA)^{\otimes k}
        \rar{a}
        \dar[swap]{h^{\otimes k}}
        &
        (GA)^{\otimes n}
        \dar{ h^{\otimes n} }
        \\
        (GB)^{\otimes k}
        \rar{b}
        &
        (GB)^{\otimes n}
      \end{tikzcd}
    \]
    The category of \((k, n)\)-ary \( G \)-operators and their
    morphisms is denoted by \(\op_{G}^{k, n}( \cat A )\). 
\item A \emph{$G$-operator} is a $(k,n)$-ary $G$-operator for some $k$ and $n$.
  \item A \emph{\(G\)-algebra} is a  \((k, 1)\)-ary \(G\)-operator,
    and a \emph{\(G\)-coalgebra} is a \((1, n)\)-ary one.
  \item If $G$ is strong monoidal, then a $(k,n)$-ary $G$-operator $(A, a)$ is \emph{pure} 
    if there exists a morphism $a'\colon A^{\otimes k} \to A^{\otimes
      n}$ such that 
    \[
      a = \big(
      (GA)^{\otimes k}
      \xra{\lambda}
      G(A^{\otimes k})
      \xra{Ga'}
      G(A^{\otimes n})
      \xra{\lambda^{-1}}
      (GA)^{\otimes n} \big),
    \]
    where $\lambda$ is 
    given analogously to \autoref{asm}. An operator morphism $h\colon (A,a)
    \to (B,b)$ is \emph{pure} if there exists a morphism $h'\colon A
    \to B$ such that \( h = Gh'\colon GA \to GB\).
  \end{enumerate}
\end{defi}

% Note that the full subcategory of \(\cat B\) consisting of the objects
% in the image of \(G\) fully embeds into \(\op_{G}^{1,1}(\cat A)\) via
% \(GA \mapsto (GA, \id_{GA})\). Moreover, for the identity functor
% \(\mathrm{Id}\colon \cat A \to \cat A\), the category \(\op_{\Id}^{1,1}(\cat A)\) of
% \(\mathrm{Id}\)-operators consists of objects of \(\cat A\) equipped
% with an endomorphism.

Two cases will be important for intuition: (1) if \(G\) is a forgetful functor between varieties of algebraic structures, then a $G$-operator on an object \(A\) is a map preserving only part of the structure on \(A\); (2) if \(T\) is a powerset-like monad on a category of spaces, then an operator for the embedding \(J_{T} \colon \hatC \rightarrow \hatC_{T}\) on a space \(\hatA\) is a (continuous) \emph{relation} \(\hatA \rightarrow T\hatA\).

\begin{thm}[Abstract Extended Duality]\label{thm:extended-duality}
  The category of \((k, n)\)-ary \(U\)-operators is dually equivalent
  to the category of \((n, k)\)-ary \(J_T\)-operators:
  \[\op_{U}^{k, n}(\cat C) \;\simeqop\; \op_{J_T}^{n, k}(\hatC). \]
\end{thm}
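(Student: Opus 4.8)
The plan is to realize the asserted dual equivalence as an explicit pair of mutually inverse contravariant functors, each obtained by composing three elementary bijections: conjugation by the strong monoidal isomorphism $\lambda$, the transpose across the adjunction $F \dashv U$, and the duality $\cat C \simeqop \hatC$.

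\emph{On objects.} Given a $(k,n)$-ary $U$-operator $(A,a)$ with $a\colon (UA)^{\otimes k}\to (UA)^{\otimes n}$, I first conjugate by the (extended) isomorphism $\lambda$ of \autoref{rem:mon-adj-kleisli} to obtain $\overline{a} = \lambda\cdot a\cdot\lambda^{-1}\colon U(A^{\otimes k})\to U(A^{\otimes n})$ in $\cat D$, then transpose to $\overline{a}^{+}=\varepsilon\cdot F\overline{a}\colon FU(A^{\otimes k})\to A^{\otimes n}$ in $\cat C$, and finally dualize. Since $\widehat{A^{\otimes n}}=\hatA^{\hatotimes n}$ and $\widehat{FU(A^{\otimes k})}=\hatF\hatU(\hatA^{\hatotimes k})=T(\hatA^{\hatotimes k})$, the dual morphism $\widehat{\overline{a}^{+}}\colon \hatA^{\hatotimes n}\to T(\hatA^{\hatotimes k})$ is precisely an $(n,k)$-ary $J_T$-operator on $\hatA$. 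Each step is a bijection, so $a\mapsto\widehat{\overline{a}^{+}}$ is a bijection between $(k,n)$-ary $U$-operators on $A$ and $(n,k)$-ary $J_T$-operators on $\hatA$; I set $\Phi(A,a)=(\hatA,\widehat{\overline{a}^{+}})$. The same recipe applies to underlying morphisms: a $U$-operator morphism $h\colon UA\to UB$ transposes to $h^{+}\colon FUA\to B$ and dualizes to $\widehat{h^{+}}\colon \hatB\to \hatF\hatU\hatA=T\hatA$, i.e.\ a Kleisli morphism $\hatB\to\hatA$; I set $\Phi(h)=\widehat{h^{+}}$. Both the contravariance and the arity swap $(k,n)\mapsto(n,k)$ are produced by the duality.

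\emph{Main obstacle.} The crux is to show that $\Phi$ carries $U$-operator morphisms to $J_T$-operator morphisms, i.e.\ that the commuting square $h^{\otimes n}\cdot a=b\cdot h^{\otimes k}$ in $\cat D$ transports to the Kleisli square $\Phi(h)^{\otimes k}\cdot\widehat{\overline{b}^{+}}=\widehat{\overline{a}^{+}}\cdot\Phi(h)^{\otimes n}$ in $\hatC_T$. The difficulty is that the latter equation lives in the Kleisli category: its tensor powers $\Phi(h)^{\otimes k}$ unfold, by \autoref{rem:mon-adj-kleisli}, through $\hat{\delta}$, and the Kleisli composite uses the multiplication $m=\hatF\hateta\hatU$. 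I would handle this by pulling the Kleisli square back along the duality into $\cat C$, where $m$ becomes the comultiplication $F\eta U$ of the comonad $FU$ and $\hat{\delta}$ becomes the oplax monoidal structure of $F$ dual to $\lambda$; transposing the resulting equation across $F\dashv U$ and invoking naturality of $\lambda$ together with the triangle identities then reduces it to the original $U$-operator square. This computation — matching the Kleisli structure $(m,\hat{\delta})$ with the strong monoidal and adjunction data — is the technical heart of the argument, while everything else is bookkeeping.

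\emph{Functoriality and inverse.} That $\Phi$ is contravariantly functorial follows because each constituent bijection is: conjugation by $\lambda$ is functor-conjugation, the transpose turns $\cat D$-composition into coKleisli composition for the comonad $FU$, and the duality is a contravariant equivalence; chasing these identities matches the composite with Kleisli composition in $\hatC_T$. The inverse functor $\Psi$ is defined by running the three bijections backwards — dualize, transpose via $(-)^{-}=U(-)\cdot\eta$, conjugate by $\lambda^{-1}$ — and lands in $\op_U^{k,n}(\cat C)$ by the symmetric version of the computation above. Finally $\Phi\Psi\cong\Id$ and $\Psi\Phi\cong\Id$, since every constituent bijection is invertible and the natural isomorphisms witnessing $\widehat{\widehat{(-)}}\cong\Id$ assemble into the required natural isomorphisms of operator categories; this yields the dual equivalence $\op_U^{k,n}(\cat C)\simeqop\op_{J_T}^{n,k}(\hatC)$.
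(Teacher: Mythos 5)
Your proposal is correct and takes essentially the same route as the paper: the same three-step correspondence (conjugation by $\lambda$, adjoint transposition along $F\dashv U$, dualization) and the same key diagram chase — dualizing the Kleisli square into $\cat C$ and transposing back, using naturality of $\lambda$, $\eta$, $\varepsilon$ and the triangle identities — transport operator morphisms; the paper merely runs the functor in the opposite direction (from $J_T$-operators to $U$-operators) and concludes via full faithfulness plus essential surjectivity rather than a pseudo-inverse. The one point you assert but do not check — that the isomorphisms witnessing $\widehat{\widehat{(-)}}\cong\Id$ lift to isomorphisms in the operator categories — is exactly the content of the paper's essential-surjectivity step (showing that $U$ applied to such an isomorphism is a pure operator isomorphism), so that routine verification should not be omitted.
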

\begin{proof}
 The desired equivalence of categories is given by the functor
  \begin{equation}\label{eq:Phi}\Phi\colon \op_{J_{T}}^{n, k}(\hatC) \rightarrow \op_{U}^{k, n}(\cat
  C)\end{equation} which maps a $(n,k)$-ary $J_T$-operator
  \[
    \hata \colon
    \hatA^{\hatotimesp n} \to
    T \hatA^{\hatotimesp k} = \hatF\hatU \hatA^{\hatotimesp k}
  \]
  to the $(k,n)$-ary $U$-operator
  \begin{equation}
    \label{eq:dual-op}
    (UA)^{\otimes k}
    \xra{\lambda}
    UA^{\otimes k}
    \xra{a^-}
    UA^{\otimes n}
    \xra{\lambda^{-1}}
    (UA)^{\otimes n},
  \end{equation}
  and an operator morphism
  \(\hatff \colon (\hatA, \hata) \rightarrow (\hatB, \hatb)\)
  to \(f^{-} \colon UB \rightarrow UA\).

  Let us first verify that the definition of $\Phi$ makes sense on morphisms, i.e.\ that \(f^{-}\) is  an operator
  morphism from \((B, \lambda^{-1} \cdot b^{-} \cdot \lambda)$ to $(A, \lambda^{-1} \cdot a^{-} \cdot \lambda)\).
  Unfolding the definitions of composition and monoidal structure in the Kleisli category, we see that \(\hatff\) is a morphism \(\hatA \rightarrow T \hatB = \hatF \hatU \hatB \) such that
  the following diagram in \(\hatC\) commutes:
  \begin{equation}
    \label{eq:in-Chat}
    \begin{tikzcd}
      \hatA^{ \hatotimesp n}
      \rar{\hata}
      \dar{\hatff^{\widehat{\otimes} n}}
      &
      T \hatA^{\hatotimes k}
      \rar{T \hatff^{\widehat{\otimes} k}}
      &
      T (T \hatB)^{\hatotimesp k}
      \rar{T \delta}
      &
      TT\hatB^{\hatotimesp k}
      \dar{m}
      \\
      (T\hatB)^{\hatotimesp n}
      \rar{\delta}
      &
      T\hatB^{\hatotimesp n}
      \rar{T \hatb}
      &
      T T \hatB^{\hatotimesp k}
      \rar{m}
      &
      T \hatB^{\hatotimesp k}
    \end{tikzcd}
  \end{equation}
  The multiplication \(m\) of the monad is the dual of \(F\eta U\), so
  the dual diagram of~\eqref{eq:in-Chat} is precisely the outside
  of the following diagram in $\cat C$, in which all parts except~\((\#)\) commute by naturality of \(\eta, \varepsilon\) and \(\lambda\), and the triangle equations. 
  \begin{tiny}
    \[
      \begin{tikzcd}
        FUFUB^{\otimes k} \dar{FUF \lambda^{-1}} \ar[shiftarr={xshift=-35pt},pos=.35 ]{dddd}{FU \delta}
        & FUB^{\otimes k} \lar[swap]{F \eta U} \dar{F \lambda^{-1}} \rar{F \eta U} \ar[bend right=10, swap]{rr}{F b^{-}}
        & FUFUB^{\otimes k} \rar{FU b}
        & FUB^{\otimes n} \dar{F \lambda^{-1}} \ar{rr}{\delta}
        &
        & (FUB)^{\otimes n} \ar[shiftarr={xshift=28pt}]{dddd}{f^{\otimes n}}
        \\
        FUF(UB)^{\otimes k} \dar{FUF(\eta U)^{\otimes k}}
        & F(UB)^{\otimes k} \lar[swap]{F \eta} \dar{F( \eta U )^{\otimes k}} \drar[xshift=4, yshift=3]{F(f^{-})^{\otimes k}} \ar[phantom]{rrrrddd}{(\#)}
        &
        & F(UB)^{\otimes n} \rar{F \eta^{\otimes n}} \drar[swap]{F(f^{-})^{\otimes n}}
        & F(UFUB)^{\otimes n} \rar{F \lambda} \dar{F(Uf)^{\otimes n}}
        & FU(FUB)^{\otimes n} \uar[swap]{\varepsilon} \dar[swap]{FU f^{\otimes n}}
        \\
        FUF(UFUB)^{\otimes k} \dar{FUF \lambda} \drar[xshift=-6, pos=.6]{FUF(Uf)^{\otimes k}}
        & F(UFUB)^{\otimes k} \lar[swap]{F \eta} \rar{F(Uf)^{\otimes k}}
        & F(UA)^{\otimes k} \dlar[xshift=6, swap]{F \eta} \dar{F \lambda}
        &
        & F(UA)^{\otimes n} \rar{F \lambda}
        & FUA^{\otimes n} \ar{dd}{ \varepsilon}
        \\
        FUFU(FUB)^{\otimes k} \dar{FU \varepsilon} \drar[xshift=-6]{FUFU f^{\otimes k}}
        & FUF(UA)^{\otimes k} \dar{FUF \lambda}
        & FUA^{\otimes k} \dlar[swap, pos=.4, xshift=6]{F \eta U} \dar[equal]{}
        & & &
        \\
        FU(FUB)^{\otimes k} \ar[shiftarr={yshift=-10pt}, swap]{rr}{FU f^{\otimes k}}
        & FUFUA^{\otimes k} \rar{FU \varepsilon}
        & FUA^{\otimes k} \ar{rrr}{a}
        & & &
        A^{\otimes n}
      \end{tikzcd}
    \]
  \end{tiny}
  \noindent
  To see that $(\#)$ also commutes, note that the counit \(\varepsilon = \id^{+}\) is the
  adjoint transpose of the identity, and transposition is natural. So transposing the two paths
  from \(FUB^{\otimes k}\) to \(A^{\otimes n}\) that form part \((\#)\) yields the inner square of the following commutative diagram in \(\cat D\):
  \[
    \begin{tikzcd}[column sep =2em]
      (UB)^{\otimes k}
      \rar{\lambda}
      \drar[equal]{}
      &
      U B^{\otimes k}
      \rar{{b}^{-}}
      \dar{\lambda^{-1}}
      &
      U B^{\otimes n}
      \rar{\lambda^{-1}}
      &
      (UB)^{\otimes n}
      \rar{(f^{-})^{\otimes n}}
      &
      (UA)^{\otimes n}
      \dar{\lambda}
      \drar[equal]{}
      &
      \\
      &
      (UB)^{\otimes k}
      \rar{({f}^{-})^{\otimes k}}
      &
      (UA)^{\otimes k}
      \rar{\lambda}
      &
      UA^{\otimes k}
      \rar{{a}^{-}}
      &
      UA^{\otimes n}
      \rar{\lambda^{-1}}
      &
      U(A^{\otimes n})
    \end{tikzcd}
  \]
  By pre- and postcomposition of this square with \(\lambda\) and \(\lambda^{-1}\), respectively, and replacing \(a^{-}\) and \(b^{-}\) by their respective conjugates \(\alpha = \lambda^{-1} \cdot a^{-} \cdot \lambda\) and \(\beta = \lambda^{-1} \cdot  b^{-} \cdot \lambda\) this diagram simply becomes the square
  \[
    \begin{tikzcd}
      (UB)^{\otimes k}
      \rar{\beta}
      \dar[swap]{({f}^{-})^{\otimes k}}
      &
      (U B)^{\otimes n}
      \dar{({f}^{-})^{\otimes n}}
      \\
      (UA)^{\otimes k}
      \rar{\alpha}
      &
      (UA)^{\otimes n}
    \end{tikzcd}
  \]
  in \(\cat D\), which is a homomorphism diagram of \((k, n)\)-ary \(U\)-operators.

  We have shown that \(\Phi\) is indeed a functor. Now we verify that it is an equivalence.

  The (natural) isomorphisms
  \[
    \cat{D}(UA, UB)
    \cong
    \cat{C}(FUA, B)
    \cong
    \hatC(\hatB,\hatF \hatU \hatA)
    =
    \hatC(\hatB,T\hatA)
  \]
  given by the duality and the adjunction \(F \dashv U\) ensure that
  this functor is fully faithful.

  To see that it is essentially surjective, pick any \(U\)-operator
  \(c \colon (UC)^{\otimes k} \rightarrow (UC)^{\otimes n}\). We have
  to show that it is isomorphic to an operator of the
  form~\eqref{eq:dual-op}.  Since the original duality
  \(\cat{C} \simeqop \hatC\) is essentially surjective there exists an isomorphism
  \(h \colon C \overset{\sim}\rightarrow \hatX\) for some \(X \in \hatC\).
  Let \(a \colon FU \hatX^{\otimes k} \rightarrow \hatX^{\otimes n}\)
  be the adjoint transpose
  of the \(\cat D\)-morphism
  \[
    U \hatX^{\otimes k}
    \xra{\lambda^{-1}}
    (U\hatX)^{\otimes k}
    \xra{(Uh^{-1})^{\otimes k}}
    (UC)^{\otimes k}
    \xra{c}
    (UC)^{\otimes n}
    \xra{(Uh)^{\otimes n}}
    (U\hatX)^{\otimes n}
    \xra{\lambda}
    U\hatX^{\otimes n}.
  \]
  Then \(Uh \colon (UC, c) \rightarrow (U\hat X, \lambda^{-1} \cdot a^{-} \cdot \lambda)\) is a pure operator isomorphism.
\end{proof}

\begin{rem}\label{rem:heterogeneous-duality}
The definition of $\Phi$ \eqref{eq:Phi} can be slightly generalized to yield a dual correspondence between morphisms $a\colon (UA)^{\otimes k}\to (UB)^{\otimes n}$ of $\cat D$ and morphisms $\rho\colon \hat{B}^{\hatotimesp n}\to \hat TA^{\hatotimesp k}$ of $\hatD$: the dual of $a$ is given by $\rho=\widehat{h^{+}}$, where $h=\lambda \cdot a\cdot \lambda^{-1}\colon UA^{\otimes k}\to UB^{\otimes n}$.
\end{rem}

Our approach to extending dualities is \emph{compositional}
on two levels:
first, compositionality of adjunctions allows us to dualize certain operators more precisely;
second, on the object level, \(U\)-operators themselves admit a compositional structure under morphism composition and tensor product,
leading to \emph{simpler} calculations of dual operators.
We will use the first and second version of compositionality in
\renewcommand{\subsectionautorefname}{Sections}%
\autoref{sec:splitting-adjunction}
\renewcommand{\subsectionautorefname}{Section}
and~\ref{sec:towards-point-free}, respectively.

We elaborate on the first point:
Let \(\cat E\) be a monoidal category with monoidal adjunctions
\[F_{1} \colon \cat E \dashv \cat C \coc U_{1} \qquad\text{and}\qquad
F_{2}\colon \cat D \dashv \cat E \coc U_{2}\]
which \emph{split}
\(F \dashv U\) (i.e.~\(F = F_{1} F_{2}\), \(U = U_{2}U_{1}\)), and
suppose that the monoidal structure of $U$ is given by
\(\lambda = U_{2} \lambda_{1} \cdot \lambda_2U_{1}\).
The compositionality of adjunctions leads to
the following lifting property, applying to both operators
and operator morphisms by setting
\(A=B\) and \(k=n=1\), respectively, in the following proposition.
% for an example, see the paragraph about partial function and total relations in \autoref{sec:splitting-adjunction}.
%
Here we say that, for a monoidal functor \(G\), a morphism \(f \colon (GX)^{\otimes i} \rightarrow (G Y)^{\otimes j}\) \emph{lifts along \(G\)}, if there exists a morphism \(g \colon X^{\otimes i} \rightarrow Y^{\otimes j}\) with \(f = \lambda^{-1} \cdot G g \cdot \lambda\).
The morphism \(g\) is called a \emph{lifting} of \(f\).
\begin{prop}\label{prop:defect}
  Let \(a \colon (UA)^{\otimes k} \rightarrow (UB)^{\otimes n}\)  be a
  morphism in \(\cat D\) dual to \(\rho \colon \hatB^{\hatotimesp n} \rightarrow T \hatA^{\hatotimesp k}\) in \(\hatC\) as in \autoref{rem:heterogeneous-duality}.
  Then the following are equivalent:
\begin{enumerate} \item \(a\) lifts along \(U_{2}\), that is, 
\[a = \lambda_{2}^{-1} \cdot U_{2} b \cdot \lambda_{2} \qquad \text{for some \(b \colon (U_{1}A)^{\otimes k} \rightarrow
      (U_{1}B)^{\otimes n}\)};
\]
\item  \(\rho\) factorizes through the monad morphism
  \(\hatF_{1}\hateps_{2}\hatU_{1} \colon T_{1}
  \rightarrow T\) (where $T_1=\hatF_1 \hatU_1$), that is, 
\[\rho = \hatF_{1}\hateps_{2}\hatU_{1} \cdot \sigma  \qquad\text{for some \( \sigma \colon \hatB^{\hatotimesp n}
      \rightarrow T_{1} \hatA^{\hatotimesp k}\)}.\]
\end{enumerate}
\end{prop}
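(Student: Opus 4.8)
The plan is to follow the single morphism $a$ through the chain of bijections that produces its dual $\rho$, checking that at each step the relevant \emph{factorization} condition is transported to an equivalent one. Write $\varepsilon_1,\eta_1$ and $\varepsilon_2,\eta_2$ for the (co)units of the split adjunctions $F_1\dashv U_1$ and $F_2\dashv U_2$, so that the counit of the composite decomposes as $\varepsilon=\varepsilon_1\cdot F_1\varepsilon_2 U_1$. By \autoref{rem:heterogeneous-duality} we have $\rho=\widehat{h^{+}}$ for $h=\lambda\cdot a\cdot\lambda^{-1}\colon UA^{\otimes k}\to UB^{\otimes n}$, so it suffices to re-express each of~(1) and~(2) as a factorization statement about $h^{+}$.

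First I would reduce~(1) to the statement that $h$ factors as $h=U_2 c$ for some $c\colon U_1A^{\otimes k}\to U_1B^{\otimes n}$. This is a direct computation using only $\lambda=U_2\lambda_1\cdot\lambda_2 U_1$ and the invertibility of $\lambda_1,\lambda_2$: substituting $a=\lambda_2^{-1}\cdot U_2 b\cdot\lambda_2$ into $h=\lambda\cdot a\cdot\lambda^{-1}$, the inner copies of $\lambda_2^{\pm1}$ cancel against those coming from $\lambda$, leaving $h=U_2(\lambda_1\cdot b\cdot\lambda_1^{-1})$; conversely $h=U_2 c$ yields the lifting $b=\lambda_1^{-1}\cdot c\cdot\lambda_1$. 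Hence $b\mapsto c=\lambda_1\cdot b\cdot\lambda_1^{-1}$ is a bijection between liftings of $a$ along $U_2$ and factorizations of $h$ through $U_2$.

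Next I would pass to the transpose. I claim $h=U_2 c$ holds iff $h^{+}\colon FUA^{\otimes k}\to B^{\otimes n}$ factors as $h^{+}=g\cdot (F_1\varepsilon_2 U_1)_{A^{\otimes k}}$ for some $g\colon F_1U_1A^{\otimes k}\to B^{\otimes n}$. The forward direction is the computation $h^{+}=\varepsilon\cdot Fh=\varepsilon_1\cdot F_1\varepsilon_2 U_1\cdot F U_2 c=\varepsilon_1\cdot F_1 c\cdot (F_1\varepsilon_2 U_1)_{A^{\otimes k}}$, where $c$ is moved past $\varepsilon_2$ by naturality and $g=\varepsilon_1\cdot F_1 c$ is exactly the transpose of $c$ along $F_1\dashv U_1$. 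Since that transpose is a bijection $\cat E(U_1A^{\otimes k},U_1B^{\otimes n})\cong\cat C(F_1U_1A^{\otimes k},B^{\otimes n})$, conversely any $g$ witnessing such a factorization is the $F_1\dashv U_1$-transpose of a unique $c$, whence $h^{+}=(U_2 c)^{+}$ and $h=U_2 c$ by injectivity of the $F\dashv U$ transpose. Thus $c\mapsto g$ matches factorizations of $h$ through $U_2$ with factorizations of $h^{+}$ through $(F_1\varepsilon_2 U_1)_{A^{\otimes k}}$.

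Finally I would dualize. Since $(\hatminus)$ is a contravariant equivalence it preserves and reflects factorizations, and by the rule for dualizing whiskered transformations $\widehat{F_1\varepsilon_2 U_1}=\hatF_1\hateps_2\hatU_1$, whose component at $\hatA^{\hatotimesp k}$ is precisely the monad morphism $T_1\to T$. Thus $h^{+}=g\cdot(F_1\varepsilon_2 U_1)_{A^{\otimes k}}$ is equivalent to $\rho=\widehat{h^{+}}=(\hatF_1\hateps_2\hatU_1)\cdot\widehat g$, i.e.\ to~(2) with $\sigma=\widehat g$. Composing the three bijections $b\leftrightarrow c\leftrightarrow g\leftrightarrow\sigma$ then gives the asserted equivalence. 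I expect the only genuine obstacle to be the bookkeeping in the transposition step: one must handle the two-stage adjoint transpose correctly and verify that the naturality square for $\varepsilon_2$ is exactly what turns ``$h$ factors through $U_2$'' into ``$h^{+}$ factors through $F_1\varepsilon_2 U_1$'', while confirming that both transposes involved are honest bijections so that no liftings are lost in either direction; everything else is routine manipulation with the coherence isomorphisms $\lambda_1,\lambda_2$.
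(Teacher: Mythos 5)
Your proposal is correct and follows essentially the same route as the paper's own proof: both reduce the lifting condition to the factorization \(h = U_{2}c\) via the coherence isomorphisms \(\lambda_{1},\lambda_{2}\), convert this by adjoint transposition into the factorization \(h^{+} = g \cdot F_{1}\varepsilon_{2}U_{1}\), and then pass through the duality to obtain condition (2). The only cosmetic difference is that the paper runs the chain from (2) to (1) and phrases the transposition step via naturality of the transposing bijections \(\flat_{1},\flat_{2}\), whereas you use the counit decomposition \(\varepsilon=\varepsilon_{1}\cdot F_{1}\varepsilon_{2}U_{1}\) and naturality of \(\varepsilon_{2}\) -- the same computation in different notation.
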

\begin{proof}
The dual of the morphism \(a \colon (U_{2}U_{1}A)^{\otimes k} \rightarrow (U_{2}U_{1}B)^{\otimes n}\) under the abstract extended duality is given by
  \[\rho=\widehat{h^{+}} \colon \hatB^{\hatotimesp n} \rightarrow T\hatA^{\hatotimesp k} = \hatF_{1}\hatF_{2}\hatU_{2}\hatU_{1}\hatA^{\hatotimesp k}, \]
where $h$ is the unique morphism making the outside of the following diagram commute:
  \[
    \begin{tikzcd}
      U_{2}U_{1}A^{\otimes k}
      \rar{h}
      &
      U_{2}U_{1}B^{\otimes n}
      \dar{U_{2}\lambda_{1}^{-1}}
      \ar[shiftarr={xshift=40pt}]{dd}{\lambda^{-1}}
      \\
      U_{2}(U_{1}A)^{\otimes k}
      \uar{U_{2}\lambda_{1}}
      % \rar[dashed]{U_{2}g^{\$}}
      &
      U_{2}(U_{1}B)^{\otimes n}
      \dar{\lambda_{2}^{-1}}
      \\
      (U_{2}U_{1}A)^{\otimes k}
      \uar{\lambda_{2}}
      \rar{a}
      \ar[shiftarr={xshift=-40pt}]{uu}{\lambda}
      &
      (U_{2}U_{1}B)^{\otimes n}
      \\
    \end{tikzcd}
  \]
  We denote the transposition isomorphisms of the adjunctions \(F_{i} \dashv U_{i}\) by  \[(-)^{\flat_{i}} \colon \cat{C}(F_{i}X, Y) \rightleftarrows \cat{D}(X, U_{i}Y) \coc (-)^{\sharp_{i}}.\]
  Now assume that
  \(\widehat{g^{\sharp_{1}}} \colon \hatB^{\hatotimesp n} \rightarrow T_{1}\hatA^{\hatotimesp k} = \hatF_{1}\hatU_{1}\hatA^{\hatotimesp k}\) is a morphism in \(\hatC\)
  such that we have a factorization \(\widehat{h^{+}} =  \hatF_{1}\hateps_{2}\hatU_{1} \cdot \widehat{g^{\sharp_{1}}}\).
  Under duality this is equivalent to
  \[h^{+} = g^{\sharp_{1}} \cdot F_{1}\varepsilon_{2}U_{1}
   \,\, \LR \,\, h^{\sharp_{2}} = (h^{+})^{\flat_{1}} = g \cdot \varepsilon_{2}U_{1}
   \,\, \LR \,\, h = (h^{\sharp_{2}})^{\flat_{2}} = U_{2}g \cdot (\varepsilon_{2}U_{1})^{\flat_{2}} = U_{2}g \]
  using naturality of the isomorphisms \(\flat_{1}, \flat_{2}\).
  The dual of the \(J_{T_{1}}\)-operator \(\widehat{g^{\sharp_{1}}}\) under the abstract extended duality along the adjunction \(F_{1} \dashv U_{1}\) is the \(U_{1}\)-operator \(b = \lambda_{1}^{-1} \cdot g \cdot \lambda_{1}\).
  Therefore the following diagram commutes:
  \[
    \begin{tikzcd}
      U_{2}U_{1}A^{\otimes k}
      \rar{U_{2}g \,=\, h}
      &
      U_{2}U_{1}B^{\otimes n}
      \dar{U_{2}\lambda_{1}^{-1}}
      \ar[shiftarr={xshift=40pt}]{dd}{\lambda^{-1}}
      \\
      U_{2}(U_{1}A)^{\otimes k}
      \uar{U_{2}\lambda_{1}}
      \rar{U_{2}b}
      &
      U_{2}(U_{1}B)^{\otimes n}
      \dar{\lambda_{2}^{-1}}
      \\
      (U_{2}U_{1}A^{\otimes k})
      \uar{\lambda_{2}}
      \rar{a}
      \ar[shiftarr={xshift=-40pt}]{uu}{\lambda}
      &
      (U_{2}U_{1})B^{\otimes n}
      \\
    \end{tikzcd}
  \]
  This is equivalent to \(a\) admitting the lifting \(b\), that is, \(a = \lambda_{2}^{-1} \cdot U_{2}b \cdot \lambda_{2}\).
\end{proof}

\pagebreak
\begin{rem}\label{rem:defect}\hfill
  \begin{enumerate}
  \item\label{rem:defect:pure} A special case of \autoref{prop:defect}
    proves that extended dualities preserve purity: splitting
    \(F \dashv U\) into \(F_{1} = \Id \dashv \Id = U_{1}\) and
    \(F_{2}= F \dashv U = U_{2}\), we see that a \(U\)-operator (or
    operator morphism) \(a\) is pure iff its dual \(f\) is pure as a
    Kleisli morphism, that is, it factorizes through the unit \(\hat\varepsilon\) of
    the monad \(T\).
      
  \item The right adjoint \(U_{2}\) often is faithful and in this case
    \(\hatF_{1}\hateps_{2}\hatU_{1}\) is monic, that is, \(T_{1}\)
    is a submonad of \(T\): indeed, faithfulness of \(U_{2}\) is equivalent to
    having an epic counit \(\varepsilon_{2} \), hence
    \(\hateps_{2}\hatU_{1}\) is monic, and the right adjoint
    \(\hat{F_{1}}\) preserves monos.  In particular, if \(T\) is
    `powerset-like', then~\(\hatC_{T}\) is a category of relations,
    and we think of $U$-operators (or operator morphisms) of the form
    \(a = \lambda_{2}^{-1} \cdot U_{2}b \cdot \lambda_{2}\) as
    dualizing to `more functional' relations.  This idea is
    illustrated by the examples in \autoref{sec:splitting-adjunction}.
  \end{enumerate}
\end{rem}

The compositionality on the level of \(U\)-operators manifests itself
as follows:

\begin{prop}\label{prop:composite-duality}
  Let \(h, g \colon UA \rightarrow UA\) be \(U\)-operators with respective duals \(\rho, \sigma \colon \hatA \rightarrow T\hatA\).
  On objects the abstract extended duality of \autoref{thm:extended-duality} preserves
  \begin{enumerate}
    \item\label{prop:composite-duality:tensor} \emph{Tensor products of operators:} \[h \otimes g \colon UA \otimes UA \rightarrow UA \otimes UA \quad \text{is dual to} \quad
          \hat{\delta} \cdot (\rho \hatotimes \sigma) \colon \hatA \hatotimes \hatA \rightarrow T \hatA \hatotimes T\hatA \rightarrow T (\hatA \hatotimes \hatA). \]
    \item\label{prop:composite-duality:composition} \emph{Composition of operators:} \[g \cdot h \colon UA \rightarrow UA \rightarrow UA \quad \text{is dual to} \quad  m \cdot T \rho \cdot \sigma \colon \hatA \rightarrow T \hatA \rightarrow TT \hatA \rightarrow T \hatA.\]
    \item\label{prop:composite-duality:identity} \emph{Identity operators:} \[\id_{UA} \colon UA \rightarrow UA \quad \text{is dual to} \quad e_{\hatA} \colon \hatA \rightarrow T \hatA.\]
  \end{enumerate}
\end{prop}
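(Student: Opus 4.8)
The plan is to reduce all three equations to identities between morphisms of \(\cat C\), exploiting the explicit form of the duality from \autoref{thm:extended-duality}. Recall that a unary \(U\)-operator \(h\colon UA\to UA\) (here \(\lambda\) is an identity since \(k=n=1\)) is dual to the Kleisli morphism \(\rho=\widehat{h^{+}}\colon \hatA\to T\hatA\), and conversely \(\widehat{\rho}=h^{+}\colon FUA\to A\). Since the contravariant functor \((\hatminus)\colon \hatC\to\cat C\) is faithful, two Kleisli morphisms \(\hatA\to T\hatA\) agree iff their dualizations to \(\cat C\) agree. Hence in each case it suffices to dualize the claimed right-hand side back into \(\cat C\) and check that it equals the transpose of the left-hand side; I record that \(m=\hatF\hateta\hatU\) dualizes to \(F\eta U\), that \(\widehat{T\rho}=FU\widehat{\rho}\), and that \((\hatminus)\) sends \(\hatotimes\) to \(\otimes\) by \autoref{asm}.

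The identity claim is immediate: for \(h=\id_{UA}\) we have \(h^{+}=\eps_A\), whose dual is \(\widehat{\eps_A}=\hateps_{\hatA}=e_{\hatA}\). For the composition claim I dualize \(m\cdot T\rho\cdot\sigma\); by contravariance this is \(\widehat{\sigma}\cdot\widehat{T\rho}\cdot\widehat{m}=g^{+}\cdot FUh^{+}\cdot F\eta_{UA}\). On the other hand \((g\cdot h)^{+}=\eps_A\cdot Fg\cdot Fh=g^{+}\cdot Fh\), and the two coincide because \(FUh^{+}\cdot F\eta_{UA}=F\big(Uh^{+}\cdot\eta_{UA}\big)=F\big((h^{+})^{-}\big)=Fh\), the last equality being the transposition identity \((h^{+})^{-}=h\). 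Thus \(g\cdot h\) is dual to \(m\cdot T\rho\cdot\sigma\).

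For the tensor claim, dualizing \(\hat{\delta}\cdot(\rho\hatotimes\sigma)\) into \(\cat C\) gives
\[
(h^{+}\otimes g^{+})\cdot\widehat{\hat{\delta}}\colon FU(A\otimes A)\to FUA\otimes FUA\to A\otimes A,
\]
where \(\widehat{\hat{\delta}}\) is the oplax structure on the comonad \(FU\) dual to \(\hat{\delta}\); by \autoref{rem:mon-adj-kleisli} it is the composite of the oplax structure of \(F\) (dual to the lax structure of \(\hatF\)) with \(\lambda^{-1}\) (dual to the strong structure of \(\hatU\)). This has to be matched with the transpose of the \((2,2)\)-ary operator \(h\otimes g\), namely \(\big(\lambda\cdot(h\otimes g)\cdot\lambda^{-1}\big)^{+}=\eps_{A\otimes A}\cdot F\lambda\cdot F(h\otimes g)\cdot F\lambda^{-1}\). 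The main obstacle is precisely this step: unlike the first two, it genuinely uses the monoidal structure of the adjunction, and establishing it amounts to the compatibility of \(\eps\), \(F\lambda\) and the oplax structure of \(F\) — that is, to \(F\dashv U\) being a monoidal adjunction, with the oplax structure of \(F\) the mate of \(\lambda\). Concretely this is a coherence diagram chase analogous to, but smaller than, the one in the proof of \autoref{thm:extended-duality}.

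Conceptually, the cleanest account of all three claims is that the bijection \(h\mapsto\rho\) factors as the duality \((\hatminus)\colon \cat D(UA,UA)\to\hatD(\hatU\hatA,\hatU\hatA)\) followed by the inverse of the fully faithful Kleisli comparison functor \(K\colon \hatC_{T}\to\hatD\) of the monoidal adjunction \(\hatU\dashv\hatF\), which sends a Kleisli endomorphism to its adjoint transpose and acts as \(\hatU\) on objects. As \((\hatminus)\) is a contravariant strong monoidal anti-equivalence and \(K\) is strong monoidal for the lifted tensor of \autoref{rem:mon-adj-kleisli}, preservation of identities, of (reversed) composition and of tensor products follows at once, and unwinding \(K\) recovers exactly the formulas above.
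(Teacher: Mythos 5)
Your overall strategy---use faithfulness of the duality to reduce each claim to an identity between adjoint transposes in $\cat C$---is exactly the paper's. Parts (2) and (3) are correct and complete: your computation $FUh^{+}\cdot F\eta_{UA} = F\big((h^{+})^{-}\big) = Fh$ is precisely the justification that the paper compresses into the word ``similarly'', and your identity case is verbatim the paper's argument.

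Part (1) is where the paper does essentially all of its work, and it is exactly the step you leave out. You correctly reduce the claim to the identity $\big(\lambda\cdot(h\otimes g)\cdot\lambda^{-1}\big)^{+} = (h^{+}\otimes g^{+})\cdot\delta$, where $\delta=\widehat{\hat{\delta}}$ is the comonoidal structure of the comonad $FU$, but you then only assert that this is ``a coherence diagram chase'' without performing it; the paper's proof of this part \emph{is} that chase (a diagram whose pieces commute by the definition of the transpose, naturality of $\eta$, $\eps$ and $\lambda$, the triangle identities, and the explicit description $\delta = \eps\cdot F\lambda\cdot F(\eta\otimes\eta)\cdot F\lambda^{-1}$). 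Your fallback via the Kleisli comparison functor $K\colon\hatC_{T}\to\hatD$ is an attractive and genuinely different packaging: full faithfulness of $K$ is standard, the factorization $\rho = K^{-1}(\widehat{h})$ does hold, and strong monoidality of $K$ would indeed yield all three claims uniformly. But it does not discharge the obligation either: that strong monoidality is established neither in the paper (\autoref{rem:mon-adj-kleisli} and the cited reference only give monoidality of $\hatC_{T}$ and strict monoidality of $J_{T}$, not of $K$) nor in your proposal, and proving it is the very same mate/triangle-identity computation you are avoiding. So nothing in your argument would fail---the missing identity is true and its verification routine---but as written the tensor case rests, twice over, on an unproven step, and that step is the mathematical content of the proposition.
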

\begin{proof}\hfill
  \begin{enumerate}
    \item Under the extended duality the operator \(h \otimes g\) is
      mapped to \(\widehat{\alpha^{+}}\), the dual of the adjoint
      transpose \(\alpha^{+}\) of the conjugate $\alpha$ of \(h \otimes g\):
      \[
        \alpha = \big(
        U(A\otimes A)
        \xra{\lambda^{-1}}
        UA \otimes UA
        \xra{h \otimes g}
        UA \otimes UA
        \xra{\lambda}
        U(A \otimes A)
        \big).
      \]
      We now prove that the following diagram commutes, where \(\delta\) is the comonoidal structure of \(FU\) dual to the monoidal structure \(\hat{\delta}\) of \(T\) from \autoref{rem:mon-adj-kleisli}.
      \[
          \begin{tikzcd}
            FU(A \otimes A)
            \rar{F \lambda^{-1}}
            \drar{{\delta}}
            \dar{\alpha^{+}}
            &
            F(UA \otimes UA)
            \rar{F(\eta \otimes \eta)}
            \ar[shiftarr={yshift=20pt}]{rrr}{F(h \otimes g)}
            &
            F(UFUA \otimes UFUA)
            \ar{rr}{F(Uh^{+} \otimes Ug^{+})}
            \dar{F \lambda}
            &&
            F(UA \otimes UA)
            \dar{F\lambda}
            \\
            A \otimes A
            &
            FUA \otimes FUA
            \lar[swap]{h^{+} \otimes g^{+}}
            &
            FU(FUA \otimes FUA)
            \lar[swap]{\varepsilon}
            \ar{rr}{FU(h^{+} \otimes g^{+})}
            &&
            FU(A \otimes A)
            \ar[swap, shiftarr={yshift=-20pt}]{llll}{\varepsilon}
          \end{tikzcd}
        \]
        Its outside commutes using the definition of the adjoint transpose $\alpha^+$, and the upper part also commutes by adjoint transposition. The right-hand and lower parts commute by naturality of $\lambda$ and $\varepsilon$, respectively. The middle part commutes trivially by the definition of \(\delta\).
        Thus, the left-hand triangle commutes.
        This shows that, under the duality \(\cat{C} \simeqop \hatC\),  the dual of \(\alpha^{+}\) is equal to the dual of \((h^{+} \otimes g^{+}) \cdot {\delta}\), which in turn is given by \(\hat{\delta} \cdot (\rho \hatotimes \sigma)\).
        
    \item Similarly, the adjoint transpose of \(g \cdot h\) is equal to \(g^{+} \cdot FUh^{+} \cdot F\eta U\) whose dual is given by \(m \cdot T \rho \cdot \sigma\).
    \item The adjoint transpose of \(\id_{UA}\) is \(\varepsilon_{A}\), whose dual is the unit \(\hat{\varepsilon}_{\hat{A}} = e_{\hat{A}} \) of \(T\). \qedhere
  \end{enumerate}
\end{proof}

\section{Example: Extended Priestley Duality}
\label{sec:exampl-extend-priest}
As a first application of our adjoint framework, we investigate the
classical Priestley duality (\autoref{sec:prelim}) and derive a
generalized version of Goldblatt's duality~\cite{goldblatt-89} between
distributive lattices with operators and relational Priestley
spaces. We instantiate \eqref{eq:adjoints} to the following categories
and functors, which we will subsequently explain in detail:
  \begin{equation}\label{eq:asm-gold}
    \begin{tikzcd}[sep = large]
      \cat D
      \rar[phantom]{\simeqop}
      \dar[bend right=30pt, swap]{F}
      &
      \hatD
      \ar[bend right=30pt]{d}[swap,pos=.55]{\hatF}
      \\
      \cat C
      \rar[phantom]{\simeqop}
      \uar[bend right=30pt, swap]{U}
      \uar[phantom]{\dashv}
      &
      \hatC
      \rar[loop right]{T}
      \uar[phantom]{\vdash}
      \ar[bend right=30pt]{u}[swap,pos=.45]{\hatU}
    \end{tikzcd}
\qquad=\qquad\;\;\;
    \begin{tikzcd}[sep = large, row sep=3em]
      \cat \jsl
      \rar[phantom]{\simeqop}
      \dar[bend right=30pt, swap]{F}
      &
      \stonejsl
      \ar[bend right=30pt]{d}[swap,pos=.49]{\hatF}
      \\
      \dl
      \rar[phantom]{\simeqop}
      \uar[bend right=30pt, swap]{U}
      \uar[phantom]{\dashv}
      &
      \priest
      \rar[loop right]{\dviet}
      \uar[phantom]{\vdash}
      \ar[bend right=30pt]{u}[swap,pos=.45]{\dviet}
    \end{tikzcd}
  \end{equation}

\paragraph*{Categories} The upper duality is Hofman-Mislove-Stralka
duality~\cite{hofmann-mislove-stralka-74} between the category of join-semilattices with a
bottom element and the category of Stone semilattices (i.e.~topological join-semilattices with
a bottom element whose underlying topological space is a Stone space) and continuous
semilattice homomorphisms. The duality maps a join-semilattice \(J\) to the Stone semilattice
\(\jsl(J, 2)\) of semilattice homomorphisms into the two-element semilattice, topologized by
pointwise convergence. Equivalently, \(\jsl(J, 2)\) is the space $\idl(J)$ of \emph{ideals}
(downward closed and upward directed subsets) of \(J\), ordered by reverse inclusion, with topology generated by the subbasic open sets \(\sigma(j) = \{I \in \idl(J) \mid j \in I\}\) and their complements for \(j \in J\).
In the other direction, a Stone semilattice \(X\) is mapped to its semilattice \(\stonejsl(X, 2)\) of clopen ideals, ordered by inclusion.

\paragraph*{Functors} The functor
\(U \colon \dl \rightarrow \jsl\) is the obvious forgetful
functor. Its left adjoint \(F \colon \jsl \rightarrow \dl\) maps a
join-semilattice \(J\) to the set \(\mathcal{U}_{\mathrm{fg}}^{\partial}(J)\)
of finitely generated upsets of \(J\), ordered by reverse inclusion.
The dual right adjoint \(\hatF\) of \(F\) is the
forgetful functor mapping a Stone semilattice to its underlying Priestley space.
Indeed, since \(F \dashv U\),  we compute for the carrier set \(|X|\) of a Stone semilattice \(X\) that
\[|\hatF X| = |\widehat{F \hatX}| \cong |\dl(F \hatX, 2)| \cong |\jsl(\hatX, U2)| = |\jsl(\hatX, 2)|  \cong |X|,\]
and this bijection is a homeomorphism.

The left adjoint \(\hatU \colon \priest \rightarrow \stonejsl\) maps a Priestley
space \(X\) to the Stone join-semilattice
\[\widehat{U \hatX} = \jsl(U(\priest(X, 2)), 2) \cong
  \idl(\cl_{\uparrow} X) \] of ideals of clopen upsets
of \(X\).  This space is isomorphic to the \emph{(downset) Vietoris
  hyperspace} \(\dviet X\) of \(X\) carried by the set of
closed downsets of \(X\).  The isomorphism is given by
\begin{align*}
  \idl(\cl_{\uparrow}X) \qquad &\cong \qquad \dviet X \\
  I \qquad &\mapsto \qquad \bigcap_{U \in I} X \setminus U \\
  \{U \in \cl_{\uparrow}X \mid C \subseteq X \setminus U\} \qquad &\mapsfrom \qquad C.
\end{align*}
The topology of pointwise convergence on \(\jsl(U(\priest(X, 2)), 2)\)
translates to the \emph{hit-or-miss topology} on \(\dviet X\)
generated by the subbasic open sets
\[ \{A \subseteq X \text{ closed } \mid A \cap U \ne \emptyset\}
  \q{\text{for}} U \in \cl_{\uparrow}X\] and their complements.
Note that for a Stone space \(X\), the Stone join-semilattice \(\dviet X\) is the \emph{free Stone join-semilattice}
on \(X\),
as observed by Johnstone~\cite[Sec.~4.8]{johnstone-82}.
The monad induced by the adjunction is the \emph{(downset) Vietoris monad};
its unit
\(e \colon X \rightarrow \dviet X\) is given by
\(x \mapsto {\downarrow}x\) and multiplication is given by union~\cite{hofmann-15}.
The monad \(\dviet\) restricts to the Vietoris monad \(\viet\) on the category  $\stone$ of {Stone spaces}.

The duality of modal algebras and coalgebras for the Vietoris construction, going back to Esakia~\cite{esa74}, has often been rediscovered and extended since, see e.g.~\cite{kupke-kurz-venema-04,vj14}; for recent accounts with detailed computations regarding the dualities of $FU$ and $\dviet$ see Bezhanishvili et~al.~\cite{bezhanishvili-harding-morandi-23} or the textbook by Gehrke and van Gool~\cite[Section 6.4]{gg24}.

\begin{rem}[Continuous Relations]\label{rem:cont-rel}
  Continuous maps in \priest of the form \[\rho \colon X \rightarrow \dviet Y\] are known in the literature under a variety of names; we call them as \emph{Priestley relations} as in \cite{cignoli-91,goldblatt-89}, or \emph{Stone relations} if \(X, Y\) are Stone spaces. We write $x\mathbin{\rho} y$ for $y\in \rho(x)$, and sometimes identify $\rho$ with a subset of $X\times Y$. Let us note that some authors (e.g.~\cite{rhodes-09}) call a relation $R\subseteq X\times Y$ between topological spaces \emph{continuous} if it is closed as a subspace of $X\times Y$.
  Every Priestley relation is continuous. The converse generally fails: for instance, consider any non-discrete Stone space $X$ and let $C\subseteq X$ be a subset that is closed but not open. The relation $C\times 1\subseteq X\times 1$ is closed, but the corresponding map $\rho\colon X\to \viet 1$ (given by $\rho(x)=1$ if $x\in C$ and $\rho(x)=\emptyset$ otherwise) is not continuous because $\rho^{-1}[1]=C$ is not open.
\end{rem}

\paragraph*{Monoidal Structure} The category \(\jsl\) of
join-semilattices has a tensor product \(\otimes\) that represents \emph{join-bilinear} maps, that is, maps $J \times J' \to K$ between join-semilattices preserving finite joins in each argument:
\[\mathrm{Bilin}(J\times J', K) \;\cong\; \jsl(J \otimes J', K).\]
Join-bilinear maps $J\times J'\to K$ and their corresponding $\jsl$-morphisms $J\otimes J'\to K$ are often tacitly identified. The tensor product $\otimes$ makes \(\jsl\) a monoidal category with unit \(2\),
i.e.~\(2 \otimes J \cong J\).  The standard presentation of $J  \otimes J'$ as a join-semilattice is given
by the generators \(\{j \otimes j' \mid j \in J, j' \in J'\}\)
with equations
% \((j_{1} \lor j_{2})\otimes (j_{1}' \lor j_{2}') = j_{1} \otimes
% j_{1}' \lor j_{2} \otimes j_{1}' \lor j_{1} \otimes j_{2}' \lor j_{2}
% \otimes j_{2}'\).
\[j_{1} \otimes 0 = 0 \otimes j'_{1} = 0, \quad (j_{1} \lor j_{2})\otimes j' = j_{1} \otimes j' \lor j_{2} \otimes j' \quad\text{and}\quad j \otimes (j'_{1} \lor j'_{2}) = j \otimes j'_{1} \lor j \otimes j'_{2} \]
ranging over \( j_{1}, j_{2} \in J\) and \(j'_{1}, j'_{2} \in J'\).
We call (the equivalence class of) a generating element \(j \otimes j'\) a \emph{pure tensor}.
If \(D, D'\) are bounded distributive lattices then so is \(UD \otimes UD'\)~\cite{fraser-76},
with meet given on pure tensors by \((d \otimes d') \land (e \otimes e') = (d \land e) \otimes (d' \land e')\).
Moreover, the lattice \(UD \otimes UD'\) is the coproduct of \(D, D'\) in \(\dl\):
the coproduct injections are given by
\[\iota(d) = d \otimes 1'\qquad\text{and}\qquad\iota'(d') = 1 \otimes d'\]
for \(d \in D, d' \in D'\), and the copairing of lattice homomorphisms \(f \colon D \rightarrow E, f' \colon D' \rightarrow E\) is given by the extension of the join-bilinear map \[\land \cdot (f \times f') \colon D \times D' \rightarrow E,\qquad (d, d') \mapsto f(d)\land f(d').\]
Taking coproducts yields a monoidal structure on \(\dl\), and since \(U(D + D') = UD \otimes UD'\), the functor \(U\) is strict monoidal.
\begin{lem}
  The dual monoidal structure \(\hat{\delta}\) of $\dviet$ is given by product:
  \[\hat{\delta} \colon \dviet X \times \dviet Y \rightarrow \dviet (X \times Y), \qquad (C, D) \mapsto C \times D. \]
\end{lem}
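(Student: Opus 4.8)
The plan is to identify the monoidal structure $\hat\delta$ by a density argument, comparing it with the candidate map $\kappa\colon \dviet X\times\dviet Y\to\dviet(X\times Y)$, $(C,D)\mapsto C\times D$. First I record the ingredients. Since the tensor $\hatotimes$ on $\priest$ is dual to the coproduct $+$ on $\dl$ and a dual equivalence sends coproducts to products, $\hatotimes$ is the categorical product $\times$; thus $\hat\delta$ indeed has the domain and codomain claimed. The candidate $\kappa$ is a well-defined $\priest$-morphism: the product $C\times D$ of closed downsets is again a closed downset of $X\times Y$ (products of closed sets are closed, and downward closure is inherited componentwise), $\kappa$ is monotone, and it is continuous for the hit-or-miss topologies (a routine check on subbasic opens). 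Moreover $\kappa$ is \emph{join-bilinear}: $\kappa(C_1\cup C_2,D)=\kappa(C_1,D)\cup\kappa(C_2,D)$, $\kappa(\emptyset,D)=\emptyset$, and symmetrically in the second argument.

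The key structural input is that $\hat\delta$ is itself join-bilinear. Recall that $\dviet X=\hatF\hatU X$ carries the semilattice structure of the Stone join-semilattice $\hatU X$, whose join is union, and that $\hat\delta=\hatF\hat\lambda\cdot \hatF_2$ factors through the lax structure $\hatF_2$ of the forgetful functor $\hatF$ followed by $\hatF$ applied to the coherence isomorphism $\hat\lambda$ of the strong monoidal functor $\hatU$. The map $\hatF_2\colon \dviet X\times\dviet Y\to\hatF(\hatU X\hatotimes\hatU Y)$ is the universal (continuous) join-bilinear map into the tensor of Stone join-semilattices, and $\hatF\hat\lambda$ is a semilattice isomorphism; hence $\hat\delta$ is join-bilinear. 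Being the structure map of a monoidal monad, $\hat\delta$ is furthermore compatible with the unit $e$, i.e.\ $\hat\delta\cdot(e_X\times e_Y)=e_{X\times Y}$, which on principal downsets reads $\hat\delta({\downarrow}x,{\downarrow}y)={\downarrow}(x,y)={\downarrow}x\times{\downarrow}y=\kappa({\downarrow}x,{\downarrow}y)$. By bilinearity of both maps it follows that $\hat\delta$ and $\kappa$ agree on all pairs of \emph{finitely generated} downsets $({\downarrow}F,{\downarrow}G)$, since there both sides equal $\bigcup_{x\in F,\,y\in G}{\downarrow}(x,y)$.

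It remains to upgrade this to equality everywhere, and this is where the main work lies. The finitely generated closed downsets are dense in $\dviet X$: a basic hit-or-miss open set is a finite intersection of conditions $C\cap U_i\neq\emptyset$ and $C\cap V_j=\emptyset$ for clopen upsets $U_i,V_j$; given a closed downset $C_0$ in it, choosing $x_i\in C_0\cap U_i$ and setting $F=\{x_i\}$ yields a finitely generated downset ${\downarrow}F$ that still hits every $U_i$ and, since each $V_j$ is an upset disjoint from $C_0$, still misses every $V_j$. Consequently the pairs $({\downarrow}F,{\downarrow}G)$ are dense in the product $\dviet X\times\dviet Y$. As $\hat\delta$ and $\kappa$ are continuous maps into the Hausdorff space $\dviet(X\times Y)$ that agree on this dense subset, they coincide. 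The main obstacle is thus not a hard computation but the care needed at two points: verifying that the abstractly defined $\hat\delta$ is genuinely join-bilinear with respect to the union structure (which rests entirely on $\hatU$ being strong monoidal), and establishing the density of finitely generated downsets; all explicit formulas must moreover be read through the hyperspace isomorphism $\idl(\cl_{\uparrow}X)\cong\dviet X$. As an alternative route one could instead dualize $\hat\delta$ to the comonoidal structure of the comonad $FU$ on $\dl$ and compute it directly from the oplax structure of $F$ and the strictness of $U$, but this requires grinding through the coproduct in $\dl$ and is less transparent.
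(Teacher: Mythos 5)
Your overall architecture — pin $\hat\delta$ down on principal downsets via the monoidal-monad unit axiom, propagate to finitely generated downsets by bilinearity, then globalize by density and Hausdorffness — is coherent, and your density and continuity arguments are fine. But the step carrying all the weight, join-bilinearity of $\hat\delta$, is not justified as written, and this is a genuine gap. Strong monoidality of $\hatU$ gives you exactly the factorization you state, which unwinds to $\hat\delta = \hatF\big(\hat\lambda \cdot (\hateta \hatotimes \hateta)\cdot \hat\lambda^{-1}\big)\cdot e$: a single $\hatF$-image of a $\stonejsl$-morphism precomposed with the \emph{unit} $e = {\downarrow}(-)$ of $\dviet$ on the product space. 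Nothing in this formula is bilinear: $e$ itself badly fails bilinearity (${\downarrow}(C_1\cup C_2,D)\neq {\downarrow}(C_1,D)\cup{\downarrow}(C_2,D)$), and the union-preservation of the outer $\hatF$-image does not pass "inside" the downset operator. From the factorization you only get monotonicity, hence the inclusion $\hat\delta(C_1,D)\cup\hat\delta(C_2,D)\subseteq \hat\delta(C_1\cup C_2,D)$; the reverse inclusion is precisely what needs an argument. Your appeal to the claim that the lax structure of $\hatF$ is "the universal continuous join-bilinear map into the tensor of Stone join-semilattices" is an appeal to a real theorem of Hofmann--Mislove--Stralka tensor-product theory (that the dual tensor on $\stonejsl$ classifies continuous bimorphisms), which is of comparable depth to the lemma itself, appears nowhere in the paper, and is not proven by you; in particular it emphatically does not "rest entirely on $\hatU$ being strong monoidal" — mateship of lax structures knows nothing about unions.

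The gap is repairable, in two ways. (a) Abstractly: by Kock's correspondence between monoidal and commutative (strong) monads, fixing one argument of $\hat\delta$ yields a Kleisli extension, i.e.\ a map of the form $m \cdot T g$; since $m = \hatF\hateta\hatU$ and $Tg = \hatF\hatU g$ are $\hatF$-images of $\stonejsl$-morphisms, and the union on $\dviet Z = \hatF\hatU Z$ is exactly the join of the Stone semilattice $\hatU Z$, such maps preserve finite unions and $\emptyset$ — this gives bilinearity from the axioms, at the cost of importing the strength/monoidal-monad dictionary. (b) Concretely: verify bilinearity by dualizing and computing on pure tensors — but that is essentially the paper's entire proof, which represents $C$ and $D$ by ideals of clopen upsets (i.e.\ $\jsl$-morphisms $c,d$ into $2$), identifies $\hat\delta(C,D)$ with the transpose of $U[c^+,d^+]\cdot(\eta\otimes\eta)$, evaluates it on pure tensors as $A\otimes B \mapsto c(A)\wedge d(B)$, and reads off the product formula in one step, with no topology at all. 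So with repair (a) your proof becomes a correct and genuinely different route, trading the paper's short algebraic computation for a modular argument that isolates what is forced by the monad axioms and finishes with point-set topology; as it stands, however, the bilinearity assertion is an unproven borrowed theorem rather than a consequence of the paper's setup.
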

\begin{proof}
  Let \(C \in \dviet X, D \in \dviet Y\) be closed downsets.
  We first represent them by their respective ideals \(I_{C} ,I_{D}\) of \(\cl_{\uparrow}X\) and \(\cl_{\uparrow} Y\), which are equivalently $\jsl$-morphisms
  \[c \colon U \cl_{\uparrow}X \rightarrow U 2,\,  d \colon U \cl_{\uparrow} Y \rightarrow U 2. \]
  The map \(\hat{\delta}\) is the given as the dual
  \[\dl(FU \cl_{\uparrow}X + FU \cl_{\uparrow} Y, 2) \rightarrow \dl(FU(\cl_{\uparrow}X + \cl_{\uparrow}Y), 2) \]
  of the comonoidal structure $FU(A + B) \rightarrow FU A + FU B$
  mapping $[c^+, d^+]$ to the prime filter that is the  transpose of
  \begin{align*}
    U[c^{+}, d^{+}] \cdot (\eta \otimes \eta) \colon  &U(\cl_{\uparrow}X + \cl_{\uparrow}Y) \\
    \cong\, &U(\cl_{\uparrow}X) \otimes U(\cl_{\uparrow}Y) \rightarrow UFU\cl_{\uparrow}X \otimes UFU \cl_{\uparrow}Y \\
    \cong\, &U(FU \cl_{\uparrow} X + FU \cl_{\uparrow} Y) \rightarrow U2.
  \end{align*}
 The latter map sends a pure tensor \(A \otimes B \in U(\cl_{\uparrow} X + \cl_{\uparrow} Y)\) to its `product' \(c(A) \land d(B)\).
  Therefore the closed set \(\hat{\delta}(C, D)\) corresponding to
  \(U[c^{+}, d^{+}] \cdot (\eta \otimes \eta)\) contains a pair $(x,y)$ iff \(x \in C\) and \(y \in D\).
\end{proof}

Expanding \autoref{def:operator}, the category \(\op_{J_{\dviet}}^{n,k}(\priest)\) is given as follows:
\begin{defi}
  A (\emph{\((n, k)\)-ary}) \emph{relational Priestley space}  consists of a
  carrier Priestley space~\(X\) and a Priestley relation
  \(\rho \colon X^{n} \rightarrow \dviet X^{k}\).  A \emph{relational
    morphism} from a relational Priestley space \((X, \rho)\) to a
  relational Priestley space
  \((X', \rho')\) is given by a Priestley relation
  \(\beta \colon X \rightarrow \dviet Y\) such that, for all
  $\mathbf{x} \in X^{n}$, $\mathbf{y} \in X^{k}$, and $\mathbf{y}' \in
  X'^{k}$,%
  \[
    \mathbf{x} \mathrel{\rho} \mathbf{y} \land (\forall i \colon y_{i} \mathrel{\beta}
    y_{i}') \implies \exists \mathbf{x}' \colon (\forall i \colon
    x_{i} \mathrel{\beta} x_{i}') \land \mathbf{x}' \mathrel{\rho'} \mathbf{y}',
  \]
  and, for all $\mathbf{x} \in X^{n}$, $\mathbf{x}' \in X'^{n}$, and $\mathbf{y}' \in X'^{k}$,
  \[
    (\forall i \colon x_{i} \mathrel{\beta} x_{i}') \land \mathbf{x}' \mathrel{\rho'}
    \mathbf{y}' \implies \exists \mathbf{y} \colon \mathbf{x} \mathrel{\rho}
    \mathbf{y} \land (\forall i \colon y_{i} \mathrel{\beta} y_{i}').
  \]
We denote by \(\op_{J_{\dviet}}^{n,k}(\priest)\) the category of $(n,k)$-ary relational Priestley spaces and relational morphisms.
\end{defi}
Then \autoref{thm:extended-duality} instantiates to the following result:

\begin{thm}[Extended Priestley duality]\label{thm:extended-priestley-duality}
 The category of \((k, n)\)-ary
  $U$-operators of distributive lattices is dually equivalent to the category of \((n, k)\)-ary relational Priestley spaces and relational
  morphisms:
  \[\op_{U}^{k,n}(\dl) \;\simeqop\; \op_{J_{\dviet}}^{n,k}(\priest).\]
\end{thm}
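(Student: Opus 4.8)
The plan is to obtain this as a direct instantiation of the abstract extended duality theorem (\autoref{thm:extended-duality}) applied to the specific adjunction and dualities set up in \eqref{eq:asm-gold}. Recall that \autoref{thm:extended-duality} gives
\[
\op_{U}^{k,n}(\cat C) \;\simeqop\; \op_{J_T}^{n,k}(\hatC)
\]
for any situation satisfying \autoref{asm}. So the essential work is to verify that the concrete data of \eqref{eq:asm-gold} genuinely fit \autoref{asm}, and then to check that the abstract categories on both sides coincide with the concrete categories named in the statement.

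First I would confirm the hypotheses of \autoref{asm} hold for the chosen instantiation. The dual equivalences $\jsl \simeqop \stonejsl$ (Hofmann--Mislove--Stralka) and $\dl \simeqop \priest$ (Priestley) are classical and recalled in the text; the adjunction $F \dashv U$ with $U\colon \dl \to \jsl$ the forgetful functor has already been described, together with its duals $\hatF, \hatU$ and the induced monad $T = \dviet$. The monoidal-category requirements are exactly what the ``Monoidal Structure'' paragraph establishes: $\jsl$ is monoidal under the join-bilinear tensor $\otimes$ with unit $2$, $\dl$ is monoidal under coproduct, and crucially $U(D+D') = UD \otimes UD'$ shows $U$ is (strict, hence strong) monoidal. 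One mild technical point is the strictness assumption on $\cat C, \cat D$ in \autoref{asm}; by \autoref{rem:mon-adj-kleisli}(1) this is without loss of generality via the Coherence Theorem, so I would invoke that remark to discharge it. Having checked these, \autoref{thm:extended-duality} applies verbatim and yields
\[
\op_{U}^{k,n}(\dl) \;\simeqop\; \op_{J_{\dviet}}^{n,k}(\priest).
\]

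Second, I would identify the abstract right-hand category $\op_{J_T}^{n,k}(\hatC)$ with the concrete category of $(n,k)$-ary relational Priestley spaces. By \autoref{def:operator}, an $(n,k)$-ary $J_{\dviet}$-operator is a Priestley space $X$ together with a $\hatC_T$-morphism, i.e.\ a Priestley relation $\rho\colon X^{\hatotimesp n} \to \dviet X^{\hatotimesp k}$; using the preceding lemma that $\hat\delta$ is given by cartesian product, the tensor powers $X^{\hatotimesp m}$ are just cartesian powers $X^m$, so this is precisely the data of an $(n,k)$-ary relational Priestley space. Likewise, unfolding the Kleisli-composition square of \autoref{def:operator} for an operator morphism $\beta\colon X \to \dviet Y$ and translating the diagram into the relational notation ($x \mathbin{\rho} y$ meaning $y \in \rho(x)$) should reproduce exactly the two implications in the definition of relational morphism. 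This bookkeeping is the one place requiring genuine (if routine) calculation.

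I expect the main obstacle to be precisely this last translation: verifying that the commuting square defining a Kleisli $J_{\dviet}$-operator morphism, once the monad multiplication $m$ (union), unit $e$ ($x \mapsto {\downarrow}x$), and monoidal structure $\hat\delta$ (product) are substituted, unwinds into the two quantified implications of the relational-morphism definition. The two implications correspond to the two composite paths around the square commuting as an equality of closed downsets, and extracting element-wise membership conditions from an equality of hyperspace-valued maps is where one must be careful about the $\dviet$-semantics (closed downsets, hit-or-miss topology) and the asymmetry that produces one implication from each inclusion. Once this identification is established on both objects and morphisms, the theorem follows immediately from \autoref{thm:extended-duality}.
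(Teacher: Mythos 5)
Your proposal is correct and takes essentially the same route as the paper: there, too, the theorem is obtained as a direct instantiation of \autoref{thm:extended-duality}, with the preceding material of the section (the two dualities, the adjunction $F \dashv U$, strict monoidality of $U$ via $U(D+D')=UD\otimes UD'$, and the lemma identifying $\hat{\delta}$ with the cartesian product) serving as the verification of \autoref{asm}. The element-wise translation you single out as the main obstacle is exactly what the paper builds into its definition of relational Priestley spaces and relational morphisms, which it introduces as the explicit unfolding of $\op_{J_{\dviet}}^{n,k}(\priest)$, so no additional argument is required beyond that bookkeeping.
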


By setting \(n = 1\) and restricting the operator morphisms on both sides to be pure (\autoref{rem:defect}\ref{rem:defect:pure}),
we recover Goldblatt's duality~\cite{goldblatt-89}. In the latter work,
pure relational morphisms are called \emph{bounded morphisms} and $n$-ary $U$-algebras $(UD)^{\otimes n}\to UD$ in $\jsl$ are called \(n\)-ary \emph{join-hemimorphisms}.
\begin{cor}[Goldblatt duality]
The category of distributive lattices with $n$-ary join-hemimorphisms, and pure morphisms between them, is dually equivalent
to the category of \((1, n)\)-relational Priestley spaces and bounded morphisms.
\end{cor}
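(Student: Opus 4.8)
The plan is to read the corollary off the Extended Priestley duality (\autoref{thm:extended-priestley-duality}) as the special case in which the codomain arity equals $1$ and all morphisms are required to be pure. Concretely, I would set the parameter $n$ of \autoref{thm:extended-priestley-duality} equal to $1$ and rename the remaining domain arity $k$ to $n$, obtaining the dual equivalence
\[
  \op_U^{n,1}(\dl) \;\simeqop\; \op_{J_{\dviet}}^{1,n}(\priest).
\]
Unfolding \autoref{def:operator}, an object of $\op_U^{n,1}(\dl)$ is a bounded distributive lattice $D$ together with a morphism $(UD)^{\otimes n}\to UD$ of $\jsl$, i.e.\ exactly an $n$-ary join-hemimorphism in the sense recalled just before the corollary; and an object of $\op_{J_{\dviet}}^{1,n}(\priest)$ is exactly a $(1,n)$-relational Priestley space. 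Thus the two sides already agree on objects, and it remains to handle morphisms.

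On the full categories the equivalence is witnessed by the equivalence of \autoref{thm:extended-priestley-duality} (an instance of the functor $\Phi$ from the proof of \autoref{thm:extended-duality}). To cut it down to the subcategories named in the corollary I would invoke purity preservation, \autoref{rem:defect}\ref{rem:defect:pure}: taking the trivial splitting $F_1=\Id\dashv\Id=U_1$ and $F_2=F\dashv U=U_2$, a $U$-operator morphism is pure if and only if its dual Kleisli (relational) morphism is pure. Since pure morphisms are closed under composition and contain the identities ($Gh'\cdot Gg'=G(h'g')$ and $\id_{GA}=G\id_A$), they form wide subcategories on both sides; the objects being unchanged, the displayed iff shows that the equivalence and its inverse restrict to a dual equivalence between these two subcategories.

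It then remains to match these abstract pure morphisms with the named notions. On the algebraic side a pure operator morphism is, by definition, a lattice homomorphism $h'\colon D\to E$ whose image $Uh'$ commutes with the two hemimorphisms, i.e.\ exactly a homomorphism of distributive lattices with $n$-ary operators; these are the pure morphisms of the corollary. On the spatial side a pure relational morphism factors through the unit $e\colon x\mapsto {\downarrow}x$ of $\dviet$, hence arises from a $\priest$-morphism $\beta'\colon X\to Y$ via $x\mathbin{\beta}y\iff y\le\beta'(x)$. The main (and only genuinely concrete) obstacle is to verify that, for such pure $\beta$, the two zig-zag implications defining a relational morphism collapse to Goldblatt's back-and-forth conditions for a bounded morphism; this is a finite substitution of $\beta=e\cdot\beta'$ into those implications, after which both identifications align the corollary's two categories with the restricted equivalence and complete the proof.
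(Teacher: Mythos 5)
Your proposal is correct and follows essentially the same route as the paper: the corollary is read off \autoref{thm:extended-priestley-duality} by specializing the codomain arity to $1$ and restricting both sides to pure morphisms via the purity preservation of \iref{rem:defect}{pure}, exactly as you do. The only difference is that the paper \emph{defines} bounded morphisms to be the pure relational morphisms (and join-hemimorphisms to be the $n$-ary $U$-algebras), so the concrete back-and-forth verification you flag as the one remaining obstacle is a matter of terminology rather than proof in the paper's framing.
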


\subsection{Deriving Concrete Formulas}
\label{sec:deriv-concr-form}

We proceed to show how an order-enriched extension of our adjoint framework can be used to
methodically derive concrete (i.e.~element-based)
formulas for the dual join operator of a continuous relation and vice versa.
Let us first observe that all involved
categories are \emph{order-enriched} if we equip the homsets
with the usual pointwise order on functions.
Moreover, from the definitions it is clear that the
transposing isomorphisms of the adjunction \(F \dashv U\) and the
duality \(\dl \simeqop \priest\) are order-isomorphisms.

Second, we can represent an element \(\hat{x}\) of a Priestley space \(\hatX\) as a continuous function \(\hat{x} \in \priest(1, \hatX)\);
on the lattice side, elements \(j\) of a join-semilattice \(J\) correspond bijectively to $\jsl$-morphisms $j \in \jsl(2, J)$, using that $2$ is the free semilattice on a single generator. 

For the rest of this subsection let us fix a $U$-algebra $h$ and its dual Priestley relation $\rho$:
\[h \colon (UX)^{\otimes n} \rightarrow U X\qquad\text{and}\qquad \rho \colon \hatX \rightarrow \dviet \hatX^{n}.\]
 We first show how to express $\rho$ in terms of~$h$. Viewing $\rho$ as a relation from $\hatX$ to $\hatX^n$ (\autoref{rem:cont-rel}), two elements \(\hat{x} \in \hatX, \mathbf{\hat{x}} \in \hatX^{n}\) are related by \(\rho\) (i.e.\
\(\hat{x} \mathrel{\rho} \mathbf{\hat{x}}\)) iff the inequality
\(e(\mathbf{\hat{x}}) = {\downarrow}{\mathbf{\hat{x}}} \le
\rho(\hat{x})\) holds,
equivalently, iff the left diagram below
commutes laxly as indicated:
\[
  \begin{tikzcd}
    \hatX
    \rar{\rho}
    \drar[phantom]{\rotatebox{-45}{\(\ge\)}}
    &
    \dviet \hatX^{n}
    &
    % \dviet 1^n
    % \lar[swap]{\dviet \prod_{i}\hat{y}_{i}}
    &&
    UX
    \dar[swap]{U x}
    \drar[phantom]{\rotatebox{-45}{\(\ge\)}}
    &
    (UX)^{\otimes n}
    \lar[swap]{h}
    \dar{\bigotimes_{i}U x_i}
    \\
    1
    \uar{\hat{x}}
    \rar{\mathbf{\hat{x}}}
    \ar[shiftarr={yshift=-15pt}]{rr}{\Delta}
    &
    \hatX^{n}
    \uar[swap]{e}
    &
    1^n
    \lar[swap]{\prod_{i}\hat{x}_{i}}
    % \uar[swap]{e}
    &&
    U 2
    &
    (U2)^{\otimes n}
    \lar[swap]{\nabla }
  \end{tikzcd}
\]
The duals of \(\hat{x}, \hat{x}_{i}\) are $\dl$ morphisms
\(x, x_{i} \colon X \rightarrow 2\).
Under duality and transposition the left diagram
corresponds to the right diagram where \(\nabla\) is the codiagonal
given by \(n\)-fold conjunction, i.e.~it maps
\(\bigotimes_{i=1}^{n} x_{i}\) to \(\A_{i=1}^{n}x_{i}\).
% The above diagram transposes after duality to the laxly commuting square
% \[
%   \begin{tikzcd}
%     UX
%     \dar[swap]{U x}
%     \drar[phantom]{\rotatebox{-45}{\(\ge\)}}
%     &
%     (UY)^{\otimes n}
%     \lar[swap]{h}
%     \dar{\bigotimes_{i}U y_i}
%     \\
%     U 2
%     &
%     (U2)^{\otimes n}
%     \lar[swap]{\nabla }
%   \end{tikzcd}
% \]
Writing \(F_{z} = z^{-1}(1)\) for the prime filter corresponding to a morphism \(z \in \dl(X, 2)\) the right diagram yields Goldblatt's formula~\cite[p.~186]{goldblatt-89}
for the dual Priestley relation of an algebra~$h$: we have 
\[\hat{x}
\mathrel{\rho} \mathbf{\hat{x}} \qquad\text{iff}\qquad  h[\prod_{i}F_{x_{i}}]
\subseteq F_{x}.\]
Conversely, to express $h$ in terms of $\rho$, it suffices to describe
$h(\mathbf{x})$ for a pure tensor \(\mathbf{x} \in (UX)^{\otimes n}\) by the universal property of the tensor product.
We factorize
\[
  \mathbf{x} = \bigotimes_{i} x_{i} \cdot \nabla^{-1} \colon U 2 \cong (U 2)^{\otimes
    n} \rightarrow (U X)^{\otimes n}
\]
to see that the element \(h(\mathbf{x})\) corresponds to the following morphism
representing an element of the join-semilattice $UX$:
\[h \cdot \bigotimes_{i} x_{i} \cdot \nabla^{-1} \colon U 2 \cong (U 2)^{\otimes n} \rightarrow (U X)^{\otimes n} \rightarrow U X.\]
Its dual is the characteristic function
\[\hatX \xra{\rho} \dviet \hatX^{n} \xra{\dviet(\prod_{i} C_{i})} \dviet (\dviet 1)^{n} \xra{\dviet \hat{\delta}} \dviet \dviet 1^{n} \xra{m} \dviet 1^{n} \xra{\dviet \Delta^{-1}} \dviet 1 = 2,  \]
where \(C_{i}= \widehat{x_{i}^{+}}\) is the clopen upset of \(\hatX\) dual to
\[x_{i} \in \jsl(U 2, U X) \;\cong\; \dl(FU 2, X) \;\cong\; \priest(\hatX, \dviet 1) \;\cong\; \priest(\hatX, 2). \]
This shows that \(h(\mathbf{x}) \in X \;\cong\; \cl_{\uparrow}\hatX\) corresponds to the clopen upset
\[h(\mathbf{x}) =  \{a \in \hatX \mid \exists (b_{1}, \ldots , b_{n}) \in \rho(a) \colon \forall i \colon b_{i} \in C_{i}  = \widehat{x_{i}^{+}} \} \in \cl_{\uparrow}(\hatX),\]
so we derived Goldblatt's formula~\cite[p.\ 184]{goldblatt-89} for the dual algebra of a relation $\rho$.

\subsection{Functional Properties of Priestley Relations.}
\label{sec:splitting-adjunction}
As a further application of the adjoint framework to extended Stone duality, we show how to recover the characterization of those operators on distributive lattices  whose dual Priestley relation is a partial function or a total relation, respectively.
As outlined in \autoref{rem:defect}, we achieve this by considering
suitable splittings of the adjunction \( F \colon \jsl \dashv \dl \coc U\) to obtain submonads of \(\dviet\) on which we instantiate \autoref{prop:defect}.

\paragraph{Partial Functions.} We split the adjunction $F\dashv
U$ into
\[F_1 \colon \dl_{0} \dashv \dl \coc U_1 \qquad \text{and} \qquad F_2 \colon \jsl \dashv \dl_0 \coc U_2,\]
where \(\dl_{0}\) is the
category of distributive lattices that are only bounded from below,
and $U_1,U_2$ are forgetful functors.
The tensor product of objects of \(\dl_{0}\) is given by the tensor product of their underlying join-semilattices.
The left adjoint \(F_1\) adds a top element
to a lattice in \(\dl_{0}\).
\begin{lem}\label{lem:adj-split-partial}
  The submonad \(T_{1}\) on \priest dual to the adjunction \(F_{1} \dashv U_{1}\) is given by the partial function monad  \[T_{1}X = X + \{\emptyset\}.\]
\end{lem}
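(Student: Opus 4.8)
The plan is to compute the monad $T_1 = \widehat{F_1}\widehat{U_1}$ directly on the space side, by dualizing the comonad $F_1 U_1$ on $\dl$, and then to locate the resulting space inside $\dviet$ using that $T_1$ is a submonad (\autoref{rem:defect}). First I would make $F_1$ and $U_1$ explicit. The functor $U_1\colon \dl \to \dl_0$ forgets that the top is part of the structure, and its left adjoint $F_1$ freely adjoins a new top: for $L \in \dl_0$ one has $F_1 L = L \oplus \{\top\}$, with $\top$ a fresh maximum above all of $L$; the unit $\eta_1$ is the inclusion and a $\dl$-morphism out of $F_1 L$ is forced to send $\top$ to the top of the codomain. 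Consequently, for a distributive lattice $D$ the comonad $F_1 U_1 D = D \oplus \{\top\}$ is $D$ with a new top placed above its original top $1_D$, and the counit $\varepsilon_1\colon F_1 U_1 D \to D$ is the identity on $D$ and sends $\top \mapsto 1_D$. Since $T_1 X = \widehat{F_1}\widehat{U_1}X = \widehat{F_1 U_1 \widehat X}$, it remains to dualize the distributive lattice $\widehat X \oplus \{\top\}$.

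The second step is the prime-filter computation. Adjoining a top $\top$ to a distributive lattice $L$ adds exactly one new prime filter, namely $\{\top\}$. Indeed, any prime filter of $L \oplus \{\top\}$ that meets $L$ restricts to a prime filter of $L$ and necessarily contains $\top$; conversely every prime filter $Q$ of $L$ yields $Q \cup \{\top\}$; and the only prime filter disjoint from $L$ is $\{\top\}$ itself, which is prime because $\top$ is join-irreducible in $L \oplus \{\top\}$ (a join of two elements of $L$ stays in $L$). Ordered by inclusion, $\{\top\}$ is contained in every other prime filter, so the dual poset of $\widehat X \oplus \{\top\}$ is $X$ with a new least point $\bot$ adjoined. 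Moreover $\{\bot\}$ is clopen: its complement is the clopen upset $\{P \mid 1_{\widehat X} \in P\}$ dual to the original top $1_{\widehat X} \in \widehat X \oplus \{\top\}$. Hence, as a Priestley space, $T_1 X$ is $X$ together with an isolated least point $\bot$.

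Third, I would transport this description into $\dviet X$ along the submonad inclusion. Because $U_2\colon \dl \to \dl_0$ is faithful, \autoref{rem:defect} shows that the monad morphism $\widehat{F_1}\hateps_2\widehat{U_1}\colon T_1 \to \dviet$ is monic, so $T_1 X$ embeds as a subspace of $\dviet X$. Since units of submonads agree, this embedding sends the image of the $T_1$-unit to the $\dviet$-unit ${\downarrow}x$, so each $x \in X$ is sent to the principal closed downset ${\downarrow}x$; the one remaining point $\bot$ lies below all ${\downarrow}x$ and is distinct from each of them, forcing $\bot \mapsto \emptyset$. Thus $T_1 X \cong \{{\downarrow}x \mid x \in X\} \cup \{\emptyset\}$, which is precisely $X + \{\emptyset\}$ sitting inside $\dviet X$ with $\emptyset$ as its bottom. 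The monad structure is then inherited from $\dviet$ by restriction: the unit is the inclusion $x \mapsto {\downarrow}x$, and since the multiplication of $\dviet$ is union one checks it restricts to the map collapsing both ``undefined'' points to $\emptyset$ and fixing the rest --- exactly the multiplication of the partial function (lift) monad.

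I expect the main obstacle to be the topological bookkeeping in the dualization: verifying not merely that the new prime filter sits at the bottom of the poset but that $\{\bot\}$ is clopen and isolated, and then correctly matching the abstract dual $\widehat{F_1 U_1 \widehat X}$ with the concrete subset $\{{\downarrow}x\} \cup \{\emptyset\}$ of $\dviet X$ under the submonad embedding, so that the inherited unit and multiplication genuinely are those of the partial function monad rather than some reindexing of them.
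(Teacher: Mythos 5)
Your proof is correct and takes essentially the same route as the paper: you dualize the comonad $F_1U_1$ pointwise and find exactly one new point, a freely adjoined bottom --- your prime-filter dichotomy (a prime filter of $\widehat{X}\oplus\{\top\}$ either meets $\widehat{X}$ or equals $\{\top\}$) is precisely the paper's case split $f(\top)=\top$ versus $f(\top)=\bot$ for $f\in\dl_{0}(U_1D, U_12)$, read through the character/filter correspondence, the only cosmetic difference being that the paper never constructs $F_1$ explicitly but transposes $\dl(F_1U_1D,2)\cong\dl_{0}(U_1D,U_12)$ along the adjunction. Your third step --- embedding $T_1$ into $\dviet$ via the monad morphism, pinning the new point to $\emptyset$, and checking that unit and multiplication restrict to those of the lift monad --- is sound and supplies detail that the paper's proof leaves implicit in the phrase ``freely adjoins a bottom element.''
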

\begin{proof}
The submonad
\(T_{1} = \hatF_1\hatU_1 \incl \dviet \) on \priest is given by
\[\hatF_1\hatU_1\widehat{D} \;\cong\; \widehat{F_1U_1D} \;\cong\; \dl(F_1U_1D, 2) \;\cong\; \dl_{0}(U_1D, U_12). \]
Every morphism \(f \in \dl_{0}(U_1D, U_12)\) either satisfies \(f(\top) = \top\), in
which case \(f \in \widehat{D}\) is prime; or \(f(\top) = \bot\), but then
\(f\) is the constant zero map \(\bot! \colon U_1D \rightarrow U_12\).
The map \(\bot!\) is the bottom element in the pointwise ordering of
\(\dl_{0}(U_1D, U_12)\), so the monad \(\hatF_1\hatU_1\) just freely adjoins a bottom
element to a Priestley space.
\end{proof}

In particular, the dual category of \(\dl_{0}\) is readily
seen to be equivalent to \(\priest_{0}\), the category of Priestley
spaces with a bottom element, and bottom-preserving continuous
monotone maps.  A Kleisli morphism
\(\rho \colon X \rightarrow T_1 X \incl \dviet X \) is a
\emph{partial continuous function}, and a $U$-operator lifts along $U_2$ iff it preserves non-empty meets.
\autoref{prop:defect} thus recovers the following result (for the unary version see~\cite{halmos-58,hofmann-15}).

\begin{cor}\label{cor:par-fun}
  The dual Priestley relation of a \(U\)-operator is a partial function iff the operator preserves non-empty meets.
\end{cor}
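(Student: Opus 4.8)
The plan is to derive \autoref{cor:par-fun} as a direct instantiation of the equivalence established in \autoref{prop:defect}, applied to the splitting of $F \dashv U$ described just above into $F_1 \colon \dl_0 \dashv \dl \coc U_1$ and $F_2 \colon \jsl \dashv \dl_0 \coc U_2$. By \autoref{lem:adj-split-partial}, the submonad $T_1$ dual to $F_1 \dashv U_1$ is the partial function monad $T_1 X = X + \{\emptyset\}$, and the monad morphism $\hatF_1 \hateps_2 \hatU_1 \colon T_1 \to \dviet$ is simply the inclusion of partial continuous functions into arbitrary Priestley relations. Hence condition (2) of \autoref{prop:defect} — that the dual relation $\rho$ factors through this inclusion — says precisely that $\rho \colon \hatX \to \dviet \hatX$ lands in $T_1 \hatX = \hatX + \{\emptyset\}$, i.e.\ that $\rho$ is a \emph{partial function}. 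Condition (1) says that the $U$-operator $h$ lifts along $U_2$, meaning $h = \lambda_2^{-1} \cdot U_2 b \cdot \lambda_2$ for some $U_1$-operator $b$.

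The remaining work, then, is purely to identify what ``lifts along $U_2$'' means concretely in terms of preservation of structure. The functor $U_2 \colon \dl_0 \to \jsl$ forgets everything in a lower-bounded distributive lattice except its join-semilattice structure, and $\dl_0$-morphisms are exactly the join-semilattice homomorphisms that additionally preserve all \emph{non-empty} finite meets (they need not preserve the top element, since objects of $\dl_0$ have no distinguished top). So a $U$-operator $h \colon (UX)^{\otimes n} \to UX$, which a priori is merely a join-bilinear map, lifts along $U_2$ exactly when the corresponding map on underlying lattices is a $\dl_0$-morphism, i.e.\ preserves non-empty finite meets. First I would make this dictionary explicit: a lifting $b \colon (U_1 X)^{\otimes n} \to U_1 X$ in $\dl_0$ is precisely a witness that $h$, regarded as a multilinear operation, commutes with non-empty meets in each coordinate.

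The main step is therefore to verify that the abstract lifting condition (1) of \autoref{prop:defect} coincides with the element-level statement ``$h$ preserves non-empty meets.'' The one direction — a lifting exists $\Rightarrow$ non-empty meets are preserved — is immediate from functoriality of $U_2$, since $U_2 b$ is then a $\dl_0$-homomorphism composed with the monoidal coherence isomorphisms $\lambda_2^{\pm 1}$, which themselves respect the relevant structure. The converse — non-empty meets preserved $\Rightarrow$ a lifting in $\dl_0$ exists — requires observing that a multilinear join-semilattice map preserving non-empty meets in each argument is literally a $\dl_0$-morphism out of the tensor $(U_1 X)^{\otimes n}$ computed in $\dl_0$, using that this tensor carries the lower-bounded distributive lattice structure with meet defined on pure tensors as in the monoidal-structure discussion. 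Combining the two directions with \autoref{prop:defect} and \autoref{lem:adj-split-partial} yields the claimed equivalence.

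The step I expect to be the main obstacle is the bookkeeping in this last identification, specifically checking that ``preserves non-empty meets'' as a condition on the join-bilinear operator $h$ matches exactly the requirement that its lifting be a $\dl_0$-morphism, and in particular that the distinction between preserving \emph{all} finite meets (which would force preservation of the top, landing us in $\dl$ rather than $\dl_0$) versus \emph{non-empty} finite meets is faithfully reflected by the choice $F_1 \dashv U_1$ adding a top element. This is precisely where the asymmetry between $\dl$ and $\dl_0$ does the work, and it is what distinguishes a partial function ($\rho$ may send a point to $\emptyset$) from a total one; getting this boundary case right is the crux, and everything else is a transcription of \autoref{prop:defect} and \autoref{lem:adj-split-partial}.
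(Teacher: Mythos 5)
Your skeleton is exactly the paper's: the corollary is obtained by instantiating \autoref{prop:defect} at the splitting $F_1 \dashv U_1$, $F_2 \dashv U_2$ together with \autoref{lem:adj-split-partial}, and the paper's entire proof consists of the observation that Kleisli morphisms for $T_1$ are partial continuous functions and that a $U$-operator lifts along $U_2$ iff it preserves non-empty meets. The genuine gap is in how you cash out this last equivalence. You gloss a lifting $b$ in \(\dl_{0}\) as ``precisely a witness that $h$, regarded as a multilinear operation, commutes with non-empty meets in each coordinate,'' and your converse direction claims that argument-wise preservation of non-empty meets makes $h$ a \(\dl_{0}\)-morphism out of the tensor. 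Since the tensor product in \(\dl_{0}\) is the \(\jsl\)-tensor of the underlying semilattices (so $U_2$ is strict monoidal and $\lambda_2 = \id$), a lifting along $U_2$ exists if and only if $h \colon (UX)^{\otimes k} \to (UX)^{\otimes n}$ preserves non-empty finite meets \emph{as a map between these lattices}; for $k \geq 2$ this is strictly stronger than coordinate-wise preservation, so the equivalence you assert is false, and it is precisely the step you yourself flag as the crux.

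Concretely, let $C = \{0 < m < 1\}$ and define the join-bilinear map $f \colon C \times C \to C$ (a $(2,1)$-ary $U$-operator with values in $\{0,1\}$) by $f(x,y) = 1$ iff ($x = 1$ and $y \geq m$) or ($x \geq m$ and $y = 1$), and $f(x,y)=0$ otherwise. Because $C$ is a chain, $f$ preserves non-empty meets in each argument: binary meets are minima and $f$ is monotone in each variable. Yet the induced map $h \colon C \otimes C \to C$ satisfies $h(m \otimes 1) \wedge h(1 \otimes m) = 1$, while $h\bigl((m \otimes 1) \wedge (1 \otimes m)\bigr) = h(m \otimes m) = f(m,m) = 0$ by the componentwise meet of pure tensors, so $h$ is not a \(\dl_{0}\)-morphism and admits no lifting along $U_2$. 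Correspondingly, Goldblatt's formula shows its dual relation sends each point of $\widehat{C}$ to the downset generated by two \emph{incomparable} pairs of $\widehat{C} \times \widehat{C}$, which is neither empty nor principal -- so \autoref{cor:par-fun} itself is false under your coordinate-wise reading, not merely unproved. For unary operators the two readings coincide, so your argument is fine in that special case, but the corollary covers all arities. The repair is to read ``preserves non-empty meets'' as the paper does, as a condition on $h$ itself; then the equivalence with ``lifts along $U_2$'' is immediate from the definition of \(\dl_{0}\)-morphism, and the coordinate-wise detour should simply be removed.
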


\paragraph{Total Relations.} We split the adjunction $F\dashv
U$ into
\[F'_1 \colon \jsl_1 \dashv \dl \coc U_1' \qquad\text{and}\qquad 
F'_2 \colon \jsl \dashv \jsl_1 \coc U'_2,\]
 where $\jsl_1$ is the category
of join-semilattices with both a bottom and top element with morphisms
preserving joins, bottom, and top. The right adjoints $U'_1,U'_2$ are forgetful functors,
and the tensor product of objects of \(\jsl_{1}\) is given by the tensor product of their underlying join-semilattices.
The left adjoint $F'_1$ maps \(J\in \jsl_1\) to the distributive
lattice \(\mathcal{U}_{\mathrm{fg}+}^{\partial}J\) of \emph{non-empty}
finitely generated upsets of \(J\), ordered by reverse inclusion.
\begin{lem}\label{lem:adj-split-ne}
  The submonad \(T'_{1}\) on \priest dual to the adjunction \(F'_{1} \dashv U'_{1}\) is given by the non-empty Vietoris monad
  \[T'_{1} X = \dvietplus X.  \]
\end{lem}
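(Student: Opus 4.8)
The plan is to mirror the proof of \autoref{lem:adj-split-partial}: I would transport the submonad $T'_1 = \hatF'_1\hatU'_1$ across the duality $\dl \simeqop \priest$ and the adjunction $F'_1 \dashv U'_1$, and then identify its value on objects as the non-empty part of $\dviet$. Concretely, for a distributive lattice $D$ with dual Priestley space $X = \widehat{D}$, I would compute
\[
  T'_1 X \;=\; \hatF'_1\hatU'_1\widehat{D} \;\cong\; \widehat{F'_1 U'_1 D} \;\cong\; \dl(F'_1 U'_1 D, 2) \;\cong\; \jsl_1(U'_1 D, U'_1 2),
\]
where the last isomorphism is adjoint transposition along $F'_1 \dashv U'_1$ and $U'_1 2$ is the two-element semilattice with bottom and top. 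Since the right adjoint $U'_2$ is a faithful forgetful functor, \autoref{rem:defect} already guarantees that $T'_1$ is a submonad of $\dviet$; it therefore remains only to pin down which closed downsets arise, the monad structure being automatically inherited.

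The second step recalls the identifications $\dviet X \cong \idl(\cl_{\uparrow}X) \cong \jsl(UD, 2)$ established above. Under them a closed downset $C$ corresponds to the ideal $I_C = \{U \in \cl_{\uparrow}X : C \cap U = \emptyset\}$ of clopen upsets, hence to the $\jsl$-morphism $c \colon UD \to 2$ with $c^{-1}(0) = I_C$, i.e.\ $c(U) = 1 \iff C \cap U \neq \emptyset$. Tracing through the monad morphism $\hatF'_1\hateps'_2\hatU'_1 \colon T'_1 \to \dviet$ witnessing the inclusion, this inclusion corresponds precisely to the inclusion $\jsl_1(U'_1 D, 2) \incl \jsl(UD, 2)$ of those $\jsl$-morphisms that additionally preserve the top element.

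The crux is then to verify that $c \colon UD \to 2$ preserves the top element if and only if its associated closed downset $C$ is non-empty. The decisive observation is that $c^{-1}(0)$ is always an ideal of $UD$; hence $c(\top) = 0$ forces $\top \in c^{-1}(0)$ and therefore $c^{-1}(0) = UD$, so that $c$ is the constant-zero map $\bot!$. Thus $\bot!$ is the \emph{unique} $\jsl$-morphism failing to preserve $\top$, and since $c(\top) = c(X) = 1 \iff C \cap X \neq \emptyset \iff C \neq \emptyset$, its associated downset is exactly $C = \emptyset$. This is the same identification that pins the added point $\emptyset$ of the partial-function monad to $\bot!$ in \autoref{lem:adj-split-partial}, so it doubles as a consistency check on the orientation.

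Combining these steps yields $T'_1 X \cong \dviet X \setminus \{\emptyset\} = \dvietplus X$, with the submonad structure inherited from $\dviet$ being precisely that of the non-empty Vietoris monad. I expect the only genuine difficulty to be bookkeeping: keeping straight whether the defining ideal of a closed downset is the $0$- or the $1$-preimage of its representing morphism, so that ``preserves $\top$'' lands on non-emptiness rather than on emptiness. The partial-function case fixes this orientation unambiguously, so no new topological input is needed.
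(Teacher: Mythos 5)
Your proof is correct and follows essentially the same route as the paper's: transporting $T'_1\widehat{D}$ along the duality and adjoint transposition to $\jsl_1(U'_1D, U'_12)$, observing that the constant-bottom map $\bot!$ is the unique $\jsl$-morphism failing to preserve $\top$ (by monotonicity of ideals/morphisms), and identifying it with the empty closed downset. Your extra care with the orientation of the correspondence $c^{-1}(0) = I_C$ and the explicit appeal to \autoref{rem:defect} for the submonad structure only make explicit what the paper leaves implicit.
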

\begin{proof}
We have
\[\hatF'_1\hatU'_1\widehat{D} \;\cong\; \dl(F'_1U'_1 D, 2) \;\cong\; \jsl_{1}(U'_1D, U'_12).\]
The only ideal \(f \in \jsl(U D, U 2)\) that is not an
element of \(\jsl_1(U'_{1} D, U'_{1} 2)\)
is the trivial ideal \(\bot! \colon d \longmapsto \bot\): if $f(\top)\neq \top$ then $f(\top)=\bot$, so $f$ maps all elements to $\bot$ by monotonicity.. The trivial ideal corresponds to the empty set \(\emptyset \in \dviet \widehat{D}\).
Thus \(T'_{1} X = \dviet X \setminus \{\emptyset\}\).
\end{proof}
Kleisli morphisms  \(X \rightarrow T_{1}' Y\) therefore are simply \emph{total} Priestley relations. Moreover, a $U$-operator lifts along $U_2'$ iff it preserves the top element. \autoref{prop:defect} then yields the following result (for the unary version see~\cite{halmos-58,hofmann-15}).
\begin{cor}\label{cor:dual-total}
  The dual Priestley relation of a \(U\)-operator is a total relation iff the operator preserves the top element.
\end{cor}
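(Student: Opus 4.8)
The plan is to obtain the corollary as a direct instance of \autoref{prop:defect}, applied to the splitting $F'_1 \dashv U'_1$, $F'_2 \dashv U'_2$ just introduced. Write $a \colon (UX)^{\otimes k} \to (UX)^{\otimes n}$ for the given $U$-operator and $\rho \colon \hatX^{\hatotimesp n} \to \dviet\hatX^{\hatotimesp k}$ for its dual Priestley relation under \autoref{thm:extended-duality}. By \autoref{prop:defect}, $a$ lifts along $U'_2$ if and only if $\rho$ factorizes through the monad morphism $T'_1 \to \dviet$. I would therefore reduce the statement to two translations: first, that factorization through $T'_1$ is totality of $\rho$; second, that lifting along $U'_2$ is preservation of the top element.

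For the first translation I invoke \autoref{lem:adj-split-ne}, which identifies $T'_1$ with the non-empty Vietoris monad $\dvietplus \hookrightarrow \dviet$, whose values are exactly the non-empty closed downsets. A continuous map $\rho \colon \hatX^{\hatotimesp n} \to \dviet\hatX^{\hatotimesp k}$ factorizes through $\dvietplus\hatX^{\hatotimesp k}$ precisely when $\rho(\mathbf{x}) \neq \emptyset$ for every $\mathbf{x}$; reading $\rho$ as a relation via \autoref{rem:cont-rel}, this says exactly that every tuple is $\rho$-related to some tuple, i.e.\ that $\rho$ is total.

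For the second translation I use that $U'_2 \colon \jsl_1 \to \jsl$ is the forgetful functor from bounded (top-and-bottom) join-semilattices to join-semilattices with bottom, and that it is strict monoidal (the tensor of objects of $\jsl_1$ being that of their underlying join-semilattices), so $\lambda'_2 = \id$ and the lifting criterion of \autoref{prop:defect} collapses to the demand that $a = U'_2 b$ for some $\jsl_1$-morphism $b \colon (U'_1X)^{\otimes k} \to (U'_1X)^{\otimes n}$. Since $U'_2$ is faithful and merely forgets top-preservation, such a $b$ exists if and only if $a$---already a join- and bottom-preserving map---sends the top of $(U'_1X)^{\otimes k}$ to the top of $(U'_1X)^{\otimes n}$. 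Chaining these two translations with \autoref{prop:defect} then yields the corollary.

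The step I expect to require the most care is the second translation, and within it the identification of the top element of the tensor power $(U'_1X)^{\otimes k}$ formed in $\jsl_1$: one must check that it is the pure tensor $\top \otimes \cdots \otimes \top$, so that ``lifting along $U'_2$'' genuinely amounts to the element-level condition $a(\top^{\otimes k}) = \top^{\otimes n}$ rather than to some finer coherence requirement. This holds because in a tensor product of bounded join-semilattices every pure tensor is dominated by the tensor of the tops, but it is the one point where the bookkeeping is not completely automatic; the remaining steps are transcriptions of \autoref{lem:adj-split-ne} and \autoref{prop:defect} through the dictionary of \autoref{rem:cont-rel}.
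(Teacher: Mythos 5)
Your proposal is correct and takes essentially the same route as the paper: the paper likewise obtains the corollary from \autoref{prop:defect} applied to the splitting $F'_1\dashv U'_1$, $F'_2\dashv U'_2$, using \autoref{lem:adj-split-ne} to identify $T'_1=\dvietplus$ (so that Kleisli morphisms into $T'_1$ are exactly the total Priestley relations) together with the observation that lifting along $U'_2$ amounts to top-preservation. Your extra verification that the top of $(U'_1X)^{\otimes k}$ is the pure tensor $\top\otimes\cdots\otimes\top$ fills in a detail the paper leaves implicit, and it is argued correctly.
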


\subsection{Equational Properties of Operators}
\label{sec:towards-point-free}

In \autoref{sec:splitting-adjunction} we have shown how to use factorizations of the adjunction \(F \dashv U\) to obtain more precise dualities for operators with additional properties.
However, spelling out and computing the factorizing adjunctions and their dual monads for the desired operator property individually is tedious, and there often is a simpler method:
If an operator property can be described \emph{equationally},
we can use the fact that inequations of
operators translate to inequations of their dual relations to obtain additional information under dualization.

The results we recover in this section are usually associated with modal correspondence theory~\cite{ben77,gt75,brv01}. Note that our proofs neither use first-order logic nor canonical frames: after having found a suitable encoding of order-theoretic properties as operator (in-)equations they are mere instantiations of the abstract results from \autoref{sec:extending-dualities}.

\begin{prop}\label{prop:operator-props}
  Let \(h \colon UA \rightarrow UA\) be an operator on a bounded distributive lattice \(A\) with dual Priestley relation \(\rho \colon \hatA \rightarrow \dviet \hatA\).
  The following correspondences hold:
  \begin{enumerate}
    \item  \(\rho\) is reflexive iff \(\forall a \in A \colon a \le h(a)\).
    \item  \(\rho\) is symmetric iff \(\forall a, b \in A \colon a \land h(b) \le h (h(a) \land b)\).
    \item  \(\rho\) is euclidean iff \(\forall a, b \in A \colon h(a) \land h(b) \le h (a \land h(b))\).
    \item  \(\rho\) is transitive iff \(\forall a, b \in A \colon h(a) \land h(b) \ge h (a \land h(b))\).
    \item  \(\rho\) is total iff \(h(\top) = \top \).
\item $\rho$ is empty iff \(h(\top) = \bot\).
  \end{enumerate}
\end{prop}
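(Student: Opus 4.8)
The plan is to transport each correspondence across the order-enriched duality $\op_U^{1,1}(\dl)\simeqop\op_{J_{\dviet}}^{1,1}(\priest)$ of \autoref{thm:extended-priestley-duality}, so that every statement becomes an instance of the abstract machinery. Two ingredients make this mechanical. First, as observed at the start of this subsection, the transposing isomorphisms and the duality are order-isomorphisms on homsets; hence for operators $h_1,h_2\colon UA\to UA$ with duals $\rho_1,\rho_2\colon\hatA\to\dviet\hatA$ one has $h_1\le h_2$ iff $\rho_1\le\rho_2$ pointwise (i.e.\ $\rho_1(x)\subseteq\rho_2(x)$ for all $x$). Second, \autoref{prop:composite-duality} provides a dictionary: $\id_{UA}$ dualizes to the unit $e$ (the relation ${\ge}$, since $e(x)={\downarrow}x$), operator composition to Kleisli (hence relational) composition, and a tensor $h\otimes g$ to the product relation $(x_1,x_2)\mapsto\rho(x_1)\times\rho(x_2)$. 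I also use that the meet-codiagonal $\nabla\colon UA\otimes UA\to UA$, $a\otimes b\mapsto a\land b$, dualizes to the pure diagonal relation $x\mapsto{\downarrow}(x,x)$, as computed in \autoref{sec:deriv-concr-form}, together with the concrete ``diamond'' description $h(a)=\{x\mid\rho(x)\cap C_a\neq\emptyset\}$ derived there, where $C_a$ is the clopen upset dual to $a$.

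For items (1), (5) and (6) this is immediate. Reflexivity of $\rho$ means $e\le\rho$; since $e$ is the dual of $\id_{UA}$, order-preservation gives $e\le\rho$ iff $\id_{UA}\le h$, that is, $a\le h(a)$ for all $a$. For (5) and (6) I compute $h(\top)$ directly from the diamond formula: as $\top$ is the clopen upset $\hatA$, we get $h(\top)=\{x\mid\rho(x)\neq\emptyset\}$. Thus $h(\top)=\top$ iff $\rho$ is total (re-proving \autoref{cor:dual-total}), and $h(\top)=\bot$ iff $\rho(x)=\emptyset$ for all $x$, i.e.\ $\rho$ is empty; note $h(\top)=\bot$ forces $h\equiv\bot$ by monotonicity.

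Items (2), (3) and (4) are handled uniformly. Each listed (in)equation compares binary composite operators $UA\otimes UA\to UA$, namely $\nabla\cdot(\id\otimes h)=a\land h(b)$, $\nabla\cdot(h\otimes h)=h(a)\land h(b)$, $h\cdot\nabla\cdot(\id\otimes h)=h(a\land h(b))$ and $h\cdot\nabla\cdot(h\otimes\id)=h(h(a)\land b)$; each is a legitimate $\jsl$-morphism because $h$ preserves joins and $A$ is distributive. By order-preservation the operator inequation holds iff its dual does, and the dual is read off termwise from the dictionary ($\otimes\mapsto$ product relation, $\cdot\mapsto$ Kleisli composition, $\id\mapsto e$, $\nabla\mapsto$ diagonal). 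Unfolding the Kleisli composites and simplifying using the monotonicity of $\rho$ and the fact that its values are downsets, (4) becomes $\rho\cdot\rho\le\rho$, i.e.\ transitivity; here the cleanest route is to first note that the stated inequation $h(a\land h(b))\le h(a)\land h(b)$ is equivalent to $h\cdot h\le h$ (for ``$\le$'' use monotonicity twice, for ``$\ge$'' substitute $a=\top$), and then apply the composition clause of \autoref{prop:composite-duality} directly. Items (2) and (3) unfold analogously to the symmetry and euclidean conditions for $\rho$, which coincide with the classical relational notions on Stone spaces.

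The main obstacle lies in (2) and (3). Symmetry and euclideanness are naturally phrased with the converse relation, as $\rho\le\rho^{\mathrm{op}}$ and $\rho^{\mathrm{op}}\cdot\rho\le\rho$ respectively, yet the converse of a Kleisli morphism is not itself a Kleisli morphism and so has no operator dual; this is exactly why no composition-only encoding such as $h\cdot h\le h$ is available. The one creative step is therefore the \emph{nested} encodings $a\land h(b)\le h(h(a)\land b)$ and $h(a)\land h(b)\le h(a\land h(b))$, in which the converse appears only implicitly: upon dualization the interplay of the inner and outer applications of $h$ with the diagonal and with the down-closure built into $e$ and into the values of $\rho$ reconstitutes the ``backward'' reachability of $\rho^{\mathrm{op}}$. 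Verifying this—carefully tracking the contravariance of the duality and the down-closure in each Kleisli composite—is the delicate part; by contrast the converse direction (a relational property implying its operator inequation) is in every case a one-line check with the diamond formula.
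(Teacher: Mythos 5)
Your proposal is correct, and for items (1)--(3) it is essentially the paper's own proof: the paper sets up the same dictionary (identity $\mapsto$ unit $e$, composition $\mapsto$ Kleisli composition, tensor $\mapsto$ product relation, and the conjunction operator $\land=\lambda\cdot U\nabla\cdot\lambda^{-1}$, being pure, $\mapsto$ the diagonal $\Delta$) and dualizes the encodings $\land\cdot(\id\otimes h)\le h\cdot\land\cdot(h\otimes\id)$, etc., termwise, unfolding the Kleisli composites exactly as you describe; it writes out item (2) and leaves (3) and (4) to the reader. You genuinely diverge on (4), (5) and (6), and your routes are valid. For (4), you first establish the elementary lattice equivalence of the stated inequation with $h\cdot h\le h$ (monotonicity of $h$ for one direction, $a=\top$ for the other) and then apply the composition clause of \autoref{prop:composite-duality} directly, whereas the paper repeats the nested-encoding computation of (2); your reduction is cleaner and makes transitivity a one-line consequence. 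For (5) and (6), you compute $h(\top)=\{x\mid\rho(x)\ne\emptyset\}$ from the concrete formula derived in \autoref{sec:deriv-concr-form}, whereas the paper stays equational: it introduces auxiliary operators $t$ and $z$, whose duals are the all-relation $\tau$ and the empty relation $\zeta$, and encodes totality as $h\cdot t=t$ and emptiness as $h=z$ before dualizing. Your version is shorter but element-based; the paper's preserves the section's advertised discipline that every item is a mere instantiation of the abstract composite-duality results. Two caveats: the dual of $\nabla$ for a general lattice $A$ is computed inside the paper's proof itself (via purity of $\land$), not in \autoref{sec:deriv-concr-form}, where only the unit object appears; and for (2)/(3) your text asserts rather than verifies that the dual inclusions unfold to symmetry and euclideanness --- that unfolding (for (2): the two sides dualize, up to down-closure, to $x\mapsto\{(x,y)\mid y\in\rho(x)\}$ and $x\mapsto\{(z,y)\mid y\in\rho(x),\ z\in\rho(y)\}$, whose pointwise inclusion is symmetry) is routine with your dictionary, but it is the actual content of those items, so a complete write-up should include it.
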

\noindent
The proof follows a simple pattern: (a)~translate the inequation into an inequation between operators, (b)~dualize the inequation between operators to an inequation between Priestley relations and (c) convert the relational inequality into the corresponding first-order property.
\begin{proof}
  First we dualize some canonical operators whose combinations we can then dualize using \autoref{prop:composite-duality}.
  Recall from \iref{prop:composite-duality}{identity} that the identity operator \(\id_{UA} \colon UA \rightarrow UA\) dualizes to the unit \(e \colon \hatA \rightarrow \dviet \hatA\).
  The conjunction operator \(\land \colon UA \otimes UA \rightarrow UA\) is equal to the conjugate \[ \lambda  \cdot U \nabla \cdot \lambda^{-1} \colon UA \otimes UA \cong U(A + A) \rightarrow UA \cong UA \]
  and thus dualizes to the diagonal function \(\Delta \colon \hatA \rightarrow \hatA \times \hatA\).

  Now let \(h \colon UA \rightarrow UA\) be a unary operator with dual Priestley relation \(\rho \colon \hatA \rightarrow \dviet \hatA\).
  \begin{enumerate}
    \item The condition \(\forall a \in A \colon a \le h(a)\) is equivalent to the operator inequality  \(\id_{UA} \le h\).
          The dual inequality is given by \(e \le \rho \colon \hatA \rightarrow \dviet \hatA\), that is, \(\forall x \in \hatA \colon \{x\} \subseteq \rho(x)\), which states precisely that \(\rho\) is reflexive.
    \item The condition \(\forall a, b \in A \colon a \land h(b) \le h (h(a) \land b)\) is equivalent to \(\land \cdot (\id \otimes h) \le h \cdot \land \cdot (h \otimes \id)\).
          The left side dualizes to \[\hat{\delta} \cdot (e \times \rho) \cdot \Delta \colon \hatA \rightarrow \hatA \times \hatA \rightarrow \dviet \hatA \times \dviet \hatA \rightarrow \dviet (\hatA \times \hatA) \]
          which is given by \(x \mapsto \{(x, y) \mid y \in \rho(x)\}\),
          while the right side dualizes to
          \[m \cdot \hat{\delta} \cdot \dviet (\rho \times e) \cdot  \dviet \Delta \cdot \rho \colon \hatA \rightarrow \dviet \hatA \rightarrow \dviet (\hatA \times \hatA) \rightarrow \dviet (\dviet \hatA \times \dviet \hatA) \rightarrow \dviet \dviet (\hatA \times \hatA) \rightarrow \dviet (\hatA \times \hatA)\]
          which is given by \(x \mapsto \{(z, y) \mid y \in \rho(x), z \in \rho(y)\}\).
          On elements this inequality of relations reads
          \[ \forall x, y, z \in \hatA \colon y \in \rho(x) \Rightarrow x \in \rho(y), \]
          which states precisely that \(\rho\) is symmetric.
    \item[(3), (4)] The proof is analogous to part (2) and left as an exercise to the reader.
    \item[(5)] Here we have to use two auxiliary operators that exist on every Boolean algebra:
          the `bottom' operator \[z \colon UA \rightarrow UA,\quad x \mapsto \bot \qquad \text{whose dual is the empty relation} \qquad \zeta \colon \hatA \rightarrow \dviet \hatA,\quad x \mapsto \emptyset.\]
          and the `top' operator \[t \colon UA \rightarrow UA,\quad x \mapsto
          \begin{cases}
            \bot & \text{if } x = \bot \\
            \top & \text{otherwise,}
          \end{cases}
          \qquad \text{whose dual is} \qquad \tau \colon \hatA \rightarrow \dviet \hatA,\quad x \mapsto \hatA.\]
          The equation \(h(\top) = \top \) holds iff the equation \(h \cdot t = t \colon UA \rightarrow UA\) of operators holds.
          The left side dualizes to%
          \[m \cdot \dviet \tau \cdot \rho \colon \hatA \rightarrow \dviet \hatA \rightarrow \dviet \dviet \hatA \rightarrow \dviet \hatA,\quad x \mapsto \bigcup_{y \in \rho(x)} \tau(y) =
            \begin{cases}
              \hatA & \text{if $\rho(x)$ is non-empty}, \\
              \emptyset & \text{else}.
            \end{cases}
            \]
          The right side is simply \(\tau \colon \hatA \rightarrow \dviet \hatA, x \mapsto \hatA\),
          and these relations are equal iff \(\rho\) is total, that is, \(\rho(x)\) is non-empty for every \(x\).
   \item  One has \(h(\top) = \bot\) iff the operator equation \(h = z\) holds.
          Its dual equation \(\rho = \zeta\) simply states that \(\rho(x)\) is empty for every \(x\). \qedhere
  \end{enumerate}
\end{proof}

\begin{rem}
  The axioms (1)--(4) considered in \autoref{prop:operator-props} correspond to the classical
  modal axioms (T), (B), (D) and~(5).  Usually, axioms (2)--(4) are phrased differently by
  using \emph{Boolean} modal logic: (2) is equivalent to
  \(\forall a \colon a \le \neg h (\neg h(a))\), (3) to
  $\forall a \colon h(a) \le \neg h(\neg h(a))$ and (4) to \(h(h(a)) \le h(a)\).  We show that
  axiom (2) is equivalent to (B) and leave the rest as an easy exercise for the reader.

  Suppose that~(2) holds in a Boolean algebra \(B\), that is, it satisfies
  \(\forall a, b \in B \colon a \land h(b) \le h(h(a) \land b)\). We prove that it 
  satisfies \(\forall a \in B \colon a \le \neg h (\neg h(a))\). Note that in a Boolean algebra
  $p \le q $ is equivalent to $p \land \neg q \le \bot$. Hence, we have
  \[a \land \neg \neg h(\neg h(a)) = a \land h(\neg h(a)) \le h(h(a) \land \neg h(a)) \le h(\bot) = \bot, \]
  where we use~(2) instantiated with \(b = \neg h(a)\) in the middle inequality. So \(B\) satisfies~(B).
  For the converse, suppose that \(B\) satisfies \(\forall a \colon a \le \neg h(\neg
  h(a))\). We show that it satisfies \(\forall a, b \colon a \land h(b) \le h(h(a) \land b)\):
  \begin{align*}
    \,& (a \land h(b)) \land \neg h(h(a) \land b) \le \neg h(\neg h(a)) \land h(b) \land \neg h(h(a) \land b) \\
    = \,& h(b) \land \neg (h(\neg h(a)) \lor h(h(a) \land b)) = h(b) \land \neg h(\neg h(a) \lor (h(a) \land b)) \\
    = \,& h(b) \land \neg h(\neg h(a) \lor b) = h(b) \land \neg h(\neg h(a)) \land \neg h(b) = \bot,
  \end{align*}
  where we used (B) in the first step. Thus, \(B\) satisfies (2).

  Our phrasings of (B) and (D) as (2) and (3) have the clear benefit of not using negation, and therefore being applicable to all (and not just Boolean) bounded distributive lattices, which are models of \emph{positive} modal logic. As for the phrasing of transitivity as (4), we simply found the symmetry to the euclidean property appealing.
\end{rem}

\autoref{prop:operator-props} allows us, for example, to generalize a classic result of algebraic (Boolean) modal logic due to Halmos~\cite{halmos-58} to the positive setting:
an operator \(h \colon UA \rightarrow UA\) is a \emph{quantifier} if it satisfies the inequations from (1), (3) and (5):

\begin{cor}[Halmos]
An operator on a bounded distributive lattice algebra is a quantifier if and only if its dual relation is an equivalence relation.
\end{cor}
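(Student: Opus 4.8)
The plan is to reduce the statement to a purely relation-theoretic fact by invoking the correspondences of \autoref{prop:operator-props}. By definition an operator $h$ is a quantifier exactly when it satisfies the inequations~(1),~(3) and~(5), and by parts~(1),~(3) and~(5) of \autoref{prop:operator-props} these hold if and only if the dual relation $\rho$ is reflexive, euclidean, and total, respectively. Thus the corollary follows once I show that a relation $\rho$ on $\hatA$ is reflexive, euclidean and total if and only if it is an equivalence relation.

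For the forward implication I would argue that reflexivity together with the euclidean property already forces symmetry and transitivity. Symmetry: given $x \mathrel{\rho} y$, feed the pair $x \mathrel{\rho} y$ and $x \mathrel{\rho} x$ (the latter by reflexivity) into the euclidean property to obtain $y \mathrel{\rho} x$. Transitivity: given $x \mathrel{\rho} y$ and $y \mathrel{\rho} z$, first use the symmetry just established to get $y \mathrel{\rho} x$, then apply the euclidean property to $y \mathrel{\rho} x$ and $y \mathrel{\rho} z$ to conclude $x \mathrel{\rho} z$. Reflexivity, symmetry and transitivity make $\rho$ an equivalence relation; the totality hypothesis is not even needed here, since reflexivity already yields $x \in \rho(x)$.

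For the converse I would check that an equivalence relation satisfies all three conditions: reflexivity holds by definition (and hence so does totality), while the euclidean property follows by combining symmetry and transitivity -- from $x \mathrel{\rho} y$ and $x \mathrel{\rho} z$ one gets $y \mathrel{\rho} x$ by symmetry and then $y \mathrel{\rho} z$ by transitivity. Chaining these equivalences back through \autoref{prop:operator-props} completes the argument.

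I do not anticipate any real obstacle: once the translation via \autoref{prop:operator-props} is in place, everything reduces to elementary reasoning about relations. The only point demanding a little care is choosing the correct instances of the euclidean property -- pairing a relation with the reflexive loop to extract symmetry, and pairing it with the symmetrized relation to extract transitivity.
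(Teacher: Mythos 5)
Your proof is correct and follows exactly the route the paper intends: the corollary is stated without proof as an immediate consequence of \autoref{prop:operator-props}, the implicit argument being precisely your translation of the quantifier inequations (1), (3), (5) into reflexivity, the euclidean property, and totality, combined with the standard fact that a reflexive euclidean relation is an equivalence relation. Your elementary relation-theoretic verification (symmetry from reflexivity plus euclideanness, then transitivity, with totality subsumed by reflexivity) is exactly the reasoning the paper leaves to the reader.
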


\section{Monoids, Comonoids and Residuation Algebras}
\label{sec:residuation-algebras}
In this section we investigate \emph{residuation algebras}, as introduced by Gehrke~\cite{gehrke-16}, which are ordered structures with residual operators similar to language derivatives.
After recalling some foundations, we divide our study into two steps:
First, we start with the simpler case of \emph{complete} ordered structures, for which we prove a duality between certain complete residuation algebras and ordered monoids.
This result will then serve as the foundation for both major applications in Sections~\ref{sec:duality-theory} and~\ref{sec:duality-cat}: the duality for complete structures in particular restricts to finite structures, since finite lattices are complete, which can then be extended to a duality for more general structures by forming appropriate completions.
On the other hand, the discrete duality for monoids is also the basis for the duality of the category of all categories.

\subsection{Foundations: Discrete Duality}
\label{sec:adjunct-distr-latt}

We recall some facts and notation for distributive lattices and their complete counterparts. In particular, we describe the instantiation of the following discrete version of the extended duality setting from Diagram~\eqref{eq:asm-gold}:
\begin{equation}\label{eq:asm-cat}
    \begin{tikzcd}[sep = large]
      \cat D
      \rar[phantom]{\simeqop}
      \dar[bend right=30pt, swap]{F}
      &
      \hatD
      \ar[bend right=30pt]{d}[swap,pos=.55]{\hatF}
      \dar[bend left=30pt, leftarrow, pos=.55]{\hatU}
      \\
      \cat C
      \rar[phantom]{\simeqop}
      \uar[bend right=30pt, swap]{U}
      \uar[phantom]{\dashv}
      &
      \hatC
      \rar[loop right]{T}
      \uar[phantom]{\vdash}
    \end{tikzcd}
\qquad=\qquad\;\;\;
    \begin{tikzcd}[sep = large, row sep=3em]
      \cat{CSL_{\bigwedge}}
      \rar[phantom]{\simeqop}
      \dar[bend right=30pt, swap]{\mathcal{D}}
      \dar[bend left=30pt,leftarrow]{\Vm}
      &
      \cat{CSL_{\bigvee}}
      \ar[bend right=30pt, swap]{d}{|-|}
      \\
      \cat{ACDL}
      \rar[phantom]{\simeqop}
      \uar[phantom]{\dashv}
      &
      \pos
      \rar[loop right]{\mathcal{D}}
      \uar[phantom]{\vdash}
      \ar[bend right=30pt, swap]{u}{\mathcal{D}}
    \end{tikzcd}
  \end{equation}

\paragraph{(Complete) Semilattices}
We denote the  category of meet-semilattices with a top element, which is isomorphic to \jsl, by
\(\msl\). It has a monoidal structure
given by tensor product \(\msor\) of meet-semilattices.
The category \msl is dual to the category of Stone
meet-semilattices~\cite{hofmann-mislove-stralka-74}. Henceforth, we
denote the forgetful functors from \dl to \(\jsl\) and \(\msl\) by
\(\Uj\) and \(\Um\), respectively, to avoid ambiguity.
  The monad on \priest induced by the dual of \(F_{\land} \dashv \Um\)
  maps a Priestley space \(X\) to its hyperspace \({\uviet} X\) of closed
  \emph{upsets}~\cite{bezhanishvili-harding-morandi-23}.  The respective comonads on \dl
  for the adjunctions \(F_{\land} \dashv \Um\) and
  \(F_{\lor} \dashv \Uj\) are not isomorphic but
  \emph{conjugate}:
  \(F_{\land}\Um \cong (F_{\lor}\Uj(-)^{\partial})^{\partial}\), where \(X^{\partial}\) denotes the \emph{order-dual} of \(X\).
  Their restrictions to the category of Boolean algebras are
  isomorphic since their dual monads on \stone satisfy
  \(\dviet = \viet = \uviet\).

Similarly, the category \cslj (\cslm) consists of complete join- (meet-) semilattices with morphisms preserving all joins (meets).
Note that every complete join- or meet-semilattice~\(X\) also has all meets and joins, respectively, given by
\[\bigwedge A = \bigvee \{x \mid \forall a \in A \colon x \le a\}, \qquad \bigvee A = \bigwedge \{x \mid \forall a \in A \colon x \ge a\}.\]
The categories $\cslj$ and $\cslm$ are easily seen to be dual to each other by swapping joins for meets and taking right, respectively left adjoints of morphisms. This duality is often stated equivalently as \(\cslj \simeqop \cslj\), if the duality also reverses the order on objects.
Analogously to its finitary counterpart, \(\cslj\) has a monoidal structure given by the complete tensor product \(\jsorc\) representing $\bigvee$-bilinear maps, making the left adjoints \(\mathcal{D} \colon (\pos, \times) \rightarrow (\cslj, \jsorc)\) and \(\pow \colon (\set, \times ) \rightarrow (\cslj, \jsorc)\) strong monoidal (the anlogous structure for \cslm is denoted \(\msorc\)).
Analogously to the finitary setting, the tensor product of complete join-semilattices \(J, J'\) is a quotient of \(\mathcal{D}(|J| \times |J'|)\), that is, it is presented by generators \((j, j') \in |J| \times |J'\), which we also write as \(j \jsor j'\), modulo the equations \[(\bigvee_{j \in A}) \jsor (\bigvee_{j' \in A'}) = \bigvee_{j \in A, j' \in A'} j \jsor j' \qquad \text{ for all } A \subseteq J, A' \subseteq J'. \]

The monad on \pos induced by this adjunction is the \emph{downset monad} \(\mathcal{D}\). It
maps a set $X$ to the set of all downward closed subsets of $X$. Its unit and multiplication are given by \({\downarrow}(-) \colon X \rightarrow \mathcal{D} X\) and union, respectively, and its \(\times\)-monoidal structure simply takes products of downsets:
  \[
    \hat{\delta} \colon \mathcal{D} X \times \mathcal{D} Y \rightarrow \mathcal{D} (X \times Y), \quad (A, B ) \mapsto A \times B.
  \]
  The Kleisli category of $\mathcal{D}$ is the category \ordrel of posets with \emph{order
    relations} as morphisms, that is, those relations $R \subseteq X \times Y$ between posets satisfying \(x' \ge x R y \ge y' \Longrightarrow x' R y'\).
  We denote the lifting of the cartesian structure of \pos to \ordrel by \(\bar{\times}\):
  on objects of \ordrel we have  \(X \bar{\times} Y = X \times Y \), and the tensor \(r
  \bar{\times} r'\) of order relations \(r \colon X \rightarrow \mathcal{D} Y\) and \(r' \colon X' \rightarrow \mathcal{D} Y'\) is defined as
  \[ \hat{\delta} \cdot (r \times r') \colon X \times X' \rightarrow \mathcal{D} Y \times \mathcal{D} Y' \rightarrow \mathcal{D} (Y \times Y'), \quad (x, x') \mapsto \{(y, y') \mid y \in r(x), y' \in r(y)\}.\]

  The order-discrete restriction \(\pow \colon \set \dashv \cslj \coc |-|\) induces the classical product monoidal structure on the powerset monad, and its Kleisli category \(\set_{\pow}\) is well-known to be the category of sets and relations.

\paragraph{Completely Distributive Lattices.}
We denote the completely join-prime elements of an ACDL \(D\) by \(\mathcal{J} D \cong \acdl(D, 2)\) and the atoms of a CABA \(B\) by \(\mathcal{A} B \cong \caba(B, 2)\).
The left adjoint  to the forgetful functor  \(\Vm \colon \acdl \rightarrow \cslm\) takes downsets, i.e.\ \(\mathcal{D} \dashv \Vm\), and its unit \(M \rightarrow \mathcal{D} \Vm\) is given by
\[ \eta_{M} \colon M \rightarrow \mathcal{D} M,\quad x \mapsto \{y \mid y \le x\}.\]
Note that the left adjoint of the forgetful functor \(\caba \rightarrow \cslm\) is given by \(\pow\), with the same unit \(\eta_{M}\).
If we respectively equip \acdl and \cslm with coproduct \(+\) and tensor product \(\msorc\) of complete meet-semilattices as monoidal structures, then the forgetful functor \((\acdl, +) \rightarrow (\cslm, \msorc)\) is strong monoidal, analogously to the forgetful functor \((\dl, +) \rightarrow (\cslm, \msor)\); the same holds for the forgetful functor \((\caba, +) \rightarrow (\msl, \msorc)\).
A straightforward verification shows that the adjunctions \(\mathcal{D} \dashv \Vm \) and \(\mathcal{D} \dashv |-|\) are dual with respect to the dualities \(\acdl \simeqop \pos\) and \(\cslm \simeqop \cslj\).
For detailed verifications on the order-discrete setting we refer the reader to Bezhanishvili~et al.~\cite{bezhanishvili-carai-morandi-22}.

\begin{rem}
  Restricting to finite carriers, the diagrams~\eqref{eq:asm-gold} and~\eqref{eq:asm-cat} coincide, since
  \[\acdl_{\mathrm{f}} = \dl_{\mathrm{f}}, \qquad \jsl_{\mathrm{f}} \cong \cat{CSL}_{\bigvee, {\mathrm{f}}}, \qquad \priest_{\mathrm{f}} \cong \pos_{\mathrm{f}},\]
  and so the results we establish in the following for complete structures will in particular hold for finite ones.
\end{rem}

\begin{nota}
We tacitly omit the
forgetful functors \(\Um\) and \(\Uj\) for notational brevity, whenever they are
clear from the context, and just write the join- and meet-semilattice tensor products
of the underlying
semilattices of distributive lattices lattices \(D, D'\) as \(D \jsor D'\) and \(D \msor D'\), respectively.
The same holds for the complete versions: we omit the functors \(\Vm, \Vj\)
and write \(D \msorc D'\) and \(D \jsorc D'\) for the complete tensor products of ACDLs \(D\)
and \(D'\).
\end{nota}

% \begin{rem}[Coproduct of Distributive Lattices]
%   The coproduct \(D + D'\) of two lattices \(D, D'\) has a presentation as the tensor product \(\Uj D \jsor \UjD'\) of their underlying join-semilattices, analogously to the coproduct of unital commutative rings.
%   The copairing of lattice morphisms \(f: D \rightarrow E, f' \colon D' \rightarrow E\) is given by the extension of the join-bilinear function mapping \(d, d' \mapsto f(d) \land f(d')\).
%   The coproduct injections send \(d \in D\) and \(d' \in D'\) to \(\check{\iota}_{1}(d) = d \jsor 1\) and \(\check{\iota}_{2}(d') = 1 \jsor d'\), respectively.
%   The codiagonal \(\nabla \colon D + D \rightarrow D\) therefore is simply conjunction, i.e.~\(d \jsor e\) is mapped to \(\nabla(d \jsor e) = d \land e\).
%   Recall Birkhoff's duality between finite distributive lattices and finite posets sends a finite lattice \(D\) to the poset \(\mathcal{J} D\) of its \emph{join-primes}.
%   Therefore, by \(\mathcal{J}(D + D) = \mathcal{J} D \times  \mathcal{J} D'\) one has for every \(T \in \Uj D \jsor \Uj D'\) a unique minimal representation as a join \(T = \V_{i} p_{i} \jsor q_{i}\) with \(p_{i} \in \mathcal{J} D, q_{i} \in \mathcal{J} D'\).
% \end{rem}
\begin{rem}[Adjunctions on Lattices]\label{rem:adjunctions-lattices}
  \sloppypar
  It is well known that a
  monotone map \(f \colon D \rightarrow D'\) between complete
  lattices preserves all joins if and only if it has a right adjoint
  \(f_{*} \colon D' \rightarrow D\), which is then given by
  \(f_{*}(d') = \V_{f(d) \le d'} d \); dually, it preserves all meets
  iff it has a left adjoint \(f^{*} \colon D' \rightarrow D\), given
  by \(f^{*}(d') = \A_{d' \le f(d)} d \). The
  join-primes \(\mathcal{J} D\) of a bounded distributive lattice \(D\) are
  precisely those elements $p\in D$ whose characteristic function
  \(\chi_{p} \colon D \rightarrow 2\) (mapping \(x \in D\) to \(1\)
  iff \(p \le x\)) is a homomorphism.  The left adjoint of
  \(\chi_{p}\), denoted \(p \colon 2 \rightarrow D\), is the join-semilattice morphism
  defined by \(1 \mapsto p\).
\end{rem}

\begin{lem}\label{lem:tensor}\hfill
  \begin{enumerate}
    \item\label{lem:tensor:iso} The join- and meet-semilattice tensor products of bounded distributive lattices \(D, E\) yield isomorphic lattices, i.e.\ there exists an isomorphism
          \[\omega \colon \Uj D \jsor \Uj E \xra{\cong} \Um D \msor \Um E\]
          satisfying \(\omega(d \jsor 1) = d \msor 0 \) and \(\omega(1 \jsor e) = 0 \msor e\).
    \item \label{lem:tensor:adj} Adjunctions on bounded distributive lattices `compose horizontally':
          Given adjunctions \( f \colon D \dashv E \coc g\) and \(f' \colon D' \dashv E' \coc g'\) between distributive lattices we get adjunctions:\\
            \begin{center}
              \begin{tikzcd}[sep=19]
                E \msor E'
                \rar[yshift=4pt]{g \msor g'}
                \rar[phantom]{\rotatebox{90}{\(\dashv\)}}
                &
                D \msor D'
                \dar{\mtoj}
                \lar[yshift=-4pt]{}
                \\
                E \jsor E'
                \uar{\jtom}
                &
                D \jsor D'
                \lar{f \jsor f'}
              \end{tikzcd}
              \hspace{-5pt}
              \begin{tikzcd}[sep=19]
                E \msor E'
                \rar{g \msor g'}
                \drar[yshift=3pt,xshift=3pt]
                \drar[phantom]{\rotatebox{62}{\(\dashv\)}}
                &
                D \msor D'
                \dar{\mtoj}
                \\
                E \jsor E'
                \uar{\jtom}
                &
                D \jsor D'
                \lar{f \jsor f'}
                \ular[yshift=-3pt,xshift=-3pt]{}
              \end{tikzcd}
              \hspace{-5pt}
              \begin{tikzcd}[sep=19]
                E \msor E'
                \rar{g \msor g'}
                &
                D \msor D'
                \dar{\mtoj}
                \dlar[yshift=-3pt,xshift=3pt]{}
                \\
                E \jsor E'
                \uar{\jtom}
                \urar[phantom]{\rotatebox{118}{\(\dashv\)}}
                \urar[yshift=3pt,xshift=-3pt]
                &
                D \jsor D'
                \lar{f \jsor f'}
              \end{tikzcd}
              \hspace{-5pt}
              \begin{tikzcd}[sep=19]
                E \msor E'
                \rar{g \msor g'}
                &
                D \msor D'
                \dar{\mtoj}
                \\
                E \jsor E'
                \uar{\jtom}
                \rar[phantom]{\rotatebox{90}{\(\dashv\)}}
                \rar[yshift=4pt]{}
                &
                D \jsor D'
                \lar[yshift=-4pt]{f \jsor f'}
              \end{tikzcd}
            \end{center}

          \noindent If the right adjoints $g$ and $g'$ preserve finite joins, then this simplifies to
          \[f \jsor f' \dashv g \jsor g' = \omega^{-1} (g \msor g') \omega.\]
          Dually, \(\omega (f \jsor f') \omega^{-1} = f \msor f'\), if $f$ and $f'$ preserve finite meets.
  \end{enumerate}
\end{lem}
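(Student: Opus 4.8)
The plan is to handle both parts through the single observation that each semilattice tensor product of distributive lattices is their coproduct in \dl. For part~\ref{lem:tensor:iso} I would recall that, by Fraser~\cite{fraser-76} as already used for Priestley duality, \(\Uj D \jsor \Uj E\) is a distributive lattice realizing the coproduct \(D + E\) in \dl, with injections \(d \mapsto d \jsor 1\), \(e \mapsto 1 \jsor e\) and meets of pure tensors given by \((d \jsor d') \land (e \jsor e') = (d \land e) \jsor (d' \land e')\). Applying this to the order-duals \(D^{\partial}, E^{\partial}\) and dualizing (meet-bilinear maps out of \(\Um D \times \Um E\) correspond to join-bilinear maps out of \(\Uj D^{\partial} \times \Uj E^{\partial}\)) shows that \(\Um D \msor \Um E\) is likewise the coproduct \(D + E\), now with injections \(d \mapsto d \msor 0\), \(e \mapsto 0 \msor e\) and with \emph{joins} of pure tensors \((d \msor d') \lor (e \msor e') = (d \lor e) \msor (d' \lor e')\). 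The only minor check is that \(d \mapsto d \msor 0\) is a \dl-morphism, which uses exactly this dual join formula. Since both lattices are coproducts of \(D\) and \(E\), the unique comparison isomorphism \(\jtom\) is fixed by its values on the injections, yielding \(\jtom(d \jsor 1) = d \msor 0\) and \(\jtom(1 \jsor e) = 0 \msor e\).

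For part~\ref{lem:tensor:adj} I would first reduce all four displayed adjunctions to one. As \(\jtom\) is a lattice isomorphism it is bi-adjoint, \(\jtom \dashv \mtoj\) and \(\mtoj \dashv \jtom\), so conjugating by \(\jtom\) moves adjunctions between the two tensor products; the four diagrams then all arise from the core statement \(f \jsor f' \dashv \mtoj\,(g \msor g')\,\jtom\) by pre- and post-composition with these trivial iso-adjunctions. Here \(f \jsor f'\) is defined and join-preserving because left adjoints preserve finite joins, while \(g \msor g'\) is defined and meet-preserving because right adjoints preserve finite meets; consequently \(L := f \jsor f'\) preserves joins and \(R := \mtoj\,(g \msor g')\,\jtom\) preserves meets. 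It remains to verify the unit \(u \le RL(u)\) and the counit \(LR(v) \le v\).

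The structural key, delivered by the (dual) Fraser formulas of part~\ref{lem:tensor:iso}, is that every element of the join-tensor is a finite join of pure tensors and every element of the meet-tensor is a finite meet of pure tensors. For the unit this reduces the claim to pure \(u = d \jsor d'\) (since \(L\) preserves joins and \(R\) is monotone); using \(\jtom(e \jsor e') = (e \msor 0) \land (0 \msor e')\), meet-preservation of \(g \msor g'\) and the units \(d \le gf(d)\), \(d' \le g'f'(d')\), one computes \(RL(d \jsor d') \ge gf(d) \jsor g'f'(d') \ge d \jsor d'\). For the counit I would transport it along \(\jtom\) to the meet-side inequality \(L'(g \msor g')(w) \le w\) with \(L' := \jtom\,(f \jsor f')\,\mtoj\), and reduce to pure \(w = e \msor e'\) (now legitimate because \(g \msor g'\) preserves meets, \(L'\) is monotone, and \(w\) is a meet of pure tensors), where the counits \(fg(e) \le e\), \(f'g'(e') \le e'\) close the argument. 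This counit is the main obstacle: reducing over a join-presentation of \(v\) directly on the join side fails, because \(R\) is built from \(g \msor g'\), which only satisfies \((g \msor g')(x \lor y) \ge (g \msor g')(x) \lor (g \msor g')(y)\); passing to the meet-tensor, whose elements admit \emph{meet}-presentations compatible with the meet-preservation of \(g \msor g'\), is precisely what repairs it. A small wrinkle is that \(f, f'\) need not preserve the top, so the pure-tensor computation for \(L'\) carries harmless extra summands, which are nonetheless bounded above by \(e \msor e'\).

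Finally, the two ``simplifies to'' identities are just naturality of \(\jtom\): if \(g, g'\) moreover preserve finite joins they are \dl-morphisms, so \(g \jsor g'\) and \(g \msor g'\) both compute the coproduct functor \(g + g'\) in the two realizations, and naturality of the comparison isomorphism gives \(\mtoj\,(g \msor g')\,\jtom = g \jsor g'\); the dual identity \(\jtom\,(f \jsor f')\,\mtoj = f \msor f'\) for meet-preserving \(f, f'\) is entirely symmetric.
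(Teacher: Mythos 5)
Your proposal is correct, and it follows the same overall route as the paper's proof: part~\ref{lem:tensor:iso} via the observation that both tensor products realize the coproduct \(D+E\) in \dl (the meet-sided one by order-duality), and part~\ref{lem:tensor:adj} by reducing the four squares to a single adjunction through conjugation with the iso-adjunctions \(\jtom \dashv \mtoj \dashv \jtom\), then verifying the unit and counit inequalities using meet-preservation of \(g \msor g'\) and the component (co)units. The differences are in execution. The paper first extracts from the coproduct argument the explicit combinatorial formula \(\jtom(\V_{i \in I} d_i \jsor e_i) = \A_{A \in \pow I}(\V_{i \in A} d_i) \msor (\V_{i \notin A} e_i)\) and verifies the counit by one direct computation on a general element of \(E \jsor E'\); the very first step of that computation (applying \(\jtom\) to trade a join-presentation for a meet-presentation) is exactly the phenomenon you identify as the crux, so your diagnosis of why a naive join-side reduction fails matches the paper's implicit strategy. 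Your pure-tensor reductions (unit on the join side by monotonicity of \(R\), counit transported to the meet side where \(g \msor g'\) preserves the presentation) avoid the combinatorial formula entirely, which makes the argument lighter, at the cost of not producing the formula — which the paper needs anyway for its computation and records for later use. Your handling of the \(f(1) \neq 1\) issue is sound: the problematic terms end up below \(e \msor e'\) because one meetand/joinand already is, exactly as a careful expansion confirms. Finally, for the simplification under join-preserving \(g, g'\), the paper argues \(f \jsor f' \dashv g \jsor g'\) directly and invokes uniqueness of adjoints to conclude \(g \jsor g' = \omega^{-1}(g \msor g')\omega\), whereas you derive that equality from naturality of the coproduct comparison (both \(g \jsor g'\) and \(g \msor g'\) compute \(g + g'\)) and then inherit the adjunction from the core statement; both arguments are valid, and yours makes more transparent why the hypothesis forces \(g, g'\) to be \dl-morphisms.
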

\begin{proof}\hfill
  \begin{enumerate}
    \item We have already seen that for bounded distributive lattices \(D, E\) their \jsl tensor product \(\Uj D \jsor \Uj E\) is their coproduct in \dl.
          But by order-duality, the meet-semilattice tensor product \(\Um D \msor \Um E\) also gives a representation of the coproduct \(D + E\) in \dl:
          its inclusions \(\hat{\iota}_{1}, \hat{\iota}_{2}\) map \(d \in D, e \in E\) to \(\hat{\iota}_{1}(d) = d \msor 0\) and \(\hat{\iota}_{2}(e) = 0 \msor e\), respectively.
          By the universal property of the coproduct we obtain a unique isomorphism
          \[\jtom \colon \Uj D \jsor \Uj E \rightarrow \Um D \msor \Um E \qquad \text{such that} \qquad \text{$\jtom \cdot \iota_{i} = \hat{\iota}_{i}$ for $i=1,2$.}\]
          We now show that $\omega$ is given by the following concrete formula:
          \[\V_{i \in I} d_{i} \jsor e_{i} \mapsto \A_{A \in \pow I} \big(\V_{i \in A} d_{i}\big) \msor \big(\V_{i \not \in A} e_{i}\big). \]
          By definition, the canonical isomorphism \(\jtom\) is the coparing \(\jtom = [\hat{\iota}_{1}, \hat{\iota}_{2}]\) of the inclusions of the meet-semilattice tensor product.
          Therefore on pure tensors \(\jtom\) maps \(d \jsor e \mapsto d \msor 0 \land 0 \msor e\), which extends to general elements of \(D \jsor E\) via distributivity as
          \begin{align*}
            \jtom\big(\V_{i \in I} d_{i} \jsor e_{i}\big) &= \V_{i \in I} \jtom(d_{i} \jsor e_{i}) \\
                                                &= \V_{i \in I} d_{i} \msor 0 \land 0 \msor e_{i} \\
                                                &= \A_{A \in \pow I} \V_{i \in A} d_{i} \msor 0 \lor \V_{i \not \in A} 0 \msor e_{i} \\
                                                &= \A_{A \in \pow I} \big(\V_{i \in A} d_{i}\big) \msor 0 \lor 0 \msor \big(\V_{i \not \in A} e_{i}\big) \\
                                                &= \A_{A \in \pow I} \big(\V_{i \in A} d_{i}\big) \msor \big(\V_{i \not \in A} e_{i}\big),
          \end{align*}
          where we use in the last two steps that joins in \(D \msor E\) satisfy the equation
          \[(a \msor b) \lor (c \msor d) = (a \lor c) \msor (b \lor d).\]
          Note that by order-duality the inverse \(\omega^{-1}\) is given by
          \[\A_{i \in I} d_{i} \msor e_{i} \mapsto \V_{A \in \pow I} \big(\A_{i \in A} d_{i}\big) \jsor \big(\A_{i \not \in A} e_{i}\big). \]
    \item We only need to prove that one of the squares is an adjunction, since we obtain all others by suitable composition with  \(\jtom\) and its inverse. We show this for the third diagram, that is, we show that there is an adjunction
          \[(f \jsor f') \cdot \jtom^{-1} \dashv (g \msor g') \cdot \jtom\]
          by verifying the unit and counit inequalities
          \[ \id\leq (g \msor g') \cdot \jtom \cdot (f \jsor f') \cdot \mtoj \qquad\text{and}\qquad  (f \jsor f') \cdot \mtoj \cdot (g \msor g') \cdot \jtom \leq \id.  \]
          We only prove the counit inequality; the proof of the unit inequality is dual. Recall that the right adjoint $g$ preserves meets.
          Given a finite index set \(A\) we write \(x_{A} = \V_{i \in A} x_{i}\) and compute for every element \(\bigvee_{i} x_{i} \jsor y_{i} \in D \jsor E\)
          \begin{align*}
            &  (f \jsor f') \mtoj (g \msor g') \jtom (\V_{i} x_{i} \jsor y_{i}) \\
            =~& (f \jsor f') \mtoj (g \msor g')(\A_{A \in \pow I } x_{A} \msor y_{A^{c}} ) && \text{def.\ } \jtom \\
            =~& (f \jsor f') \mtoj (\A_{A \in \pow I} g(x_{A}) \msor g'(y_{A^{c}}) ) && \text{def.\ } g \msor g' \\
            =~& (f \jsor f')  (\V_{B \in \pow \pow I} (\A_{A \in B} g(x_{A})) \jsor (\A_{A \in B^{c}} g'(y_{A^{c}})) ) && \text{def.\ } \mtoj  \\
            =~& (f \jsor f')  (\V_{B \in \pow \pow I} g(\A_{A \in B} x_{A}) \jsor g'(\A_{A \in B^{c}} y_{A^{c}}) ) && g \text{ preserves meets} \\
            =~&  (\V_{B \in \pow \pow I} fg(\A_{A \in B} x_{A}) \jsor f'g'(\A_{A \in B^{c}} y_{A^{c}}) ) && \text{def.\ } f \jsor f' \\
            \le~&  (\V_{B \in \pow \pow I} \id(\A_{A \in B} x_{A}) \jsor \id(\A_{A \in B^{c}} y_{A^{c}}) ) && \text{counits } f \dashv g, f' \dashv g' \\
            =~& \mtoj \jtom (\V_{i} x_{i} \jsor y_{i}) = \V_{i} x_{i} \jsor y_{i}.
          \end{align*}

          As for the last statement, if $g$ and $g'$ preserve finite joins, then \(g \jsor g'\) is defined (otherwise it would not be!), and it is clear that \(f \jsor f' \dashv g \jsor g'\).
          By uniqueness of adjoints this implies \(g \jsor g' =
          \omega^{-1} (g \msor g') \omega\).\qedhere
  \end{enumerate}
\end{proof}

\begin{rem}
  \autoref{lem:tensor} holds analogously for complete tensor products of ACDLs: there exists a unique isomorphism \(\omega \colon \Vj D \jsorc \Vj E \simeq \Vm D \msorc \Vm E\) commuting with the coproduct injections.
\end{rem}

% \begin{constr}\label{con:mon-impl}
%   For every finite lattice \(D\) the map \(x \jsor (-) \colon D \rightarrow D \jsor D\) preserves all joins, so it admits a right adjoint \(x \moni (-) \colon \Um(D \jsor D) \rightarrow \Um D \)
%  which we call \emph{tensor implication}. By \autoref{rem:adjunctions-lattices}, it is  given by \(x \moni T = \V_{x \jsor y \le T} y\).
%   Analogously, we let \((-) \imon x\) denote the right adjoint of \((-) \jsor x\).
% \end{constr}

\begin{prop}\label{prop:mon-impl}
Let \(D\) be an ACDL.
  \begin{enumerate}
    \item\label{prop:mon-impl:def}
          For every \(x \in D\) the \cslj-morphism
          \[x \jsor (-) \colon D \rightarrow D \jsorc D, \qquad y \mapsto x \jsor y,\] has a right adjoint
          \begin{align*}
            x \moni (-) \colon D \jsorc D \quad&\rightarrow \quad D \\
            T \quad \quad &\mapsto \quad  \V_{x \jsor y \le T} y
          \end{align*}
          called \emph{tensor implication}.
          It can be extended to a binary function  \[(-) \moni (-) \colon D^{\partial} \msorc (D \jsorc D) \rightarrow D. \]
Analogously,  $(-)\otimes x\colon D \rightarrow D \jsorc D$ has a right adjoint $(-)\imon x\colon D \jsorc D\to D$.
    \item\label{prop:mon-impl:p} If \(p \in \mathcal{J} D\) is completely join-prime then \(p \moni (-)\) is a lattice homomorphism given by
          \[\lambda \cdot (\chi_{p} + \id) \colon D + D \rightarrow 2 + D \cong D, \qquad \V_{i \in I} p_{i} \jsor q_{i} \mapsto \V_{p \le p_{i}} q_{i}.\]
    \item\label{prop:mon-impl:preserve}
    Every adjunction \( l \colon E \dashv D \coc r\) between ACDLs satisfies
          \[x \moni \omega^{-1}(r \msor r)\omega (T) = r(l(x) \moni T)\]
          as well as
          \[l(x \moni T) \le l(x) \moni (l \jsor l)(T) \qq{\text{and}} r(x \moni T) \le r(x) \moni \omega^{-1}(r \msor r)\omega(T), \]
          where the latter inequality holds with equality if \(r\) is order-reflecting.
  \end{enumerate}
\end{prop}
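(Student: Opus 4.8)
The plan is to handle the three parts in order, reducing each to the adjoint characterizations of \autoref{rem:adjunctions-lattices} and the horizontal composition of adjunctions along tensor products in \autoref{lem:tensor} (in its complete version for ACDLs). For part \ref{prop:mon-impl:def}, I would first note that $x \jsorc (-)$ preserves all joins, which is immediate from the $\bigvee$-bilinearity defining $\jsorc$. By \autoref{rem:adjunctions-lattices} it thus has a right adjoint, and the general formula $f_*(d') = \V_{f(d) \le d'} d$ specializes to exactly $x \moni T = \V_{x \jsorc y \le T} y$. To obtain the extension to $D^{\partial} \msorc (D \jsorc D)$, I would check meet-bilinearity of $(x,T) \mapsto x \moni T$: the map $x \moni (-)$ preserves all meets as a right adjoint, and $(-) \moni T$ preserves joins-as-meets because $y$ satisfies $(\V_i x_i) \jsorc y \le T$ iff $x_i \jsorc y \le T$ for every $i$, giving $(\V_i x_i) \moni T = \A_i (x_i \moni T)$. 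The universal property of $\msorc$ then yields the binary map, and the claim for $(-) \imon x$ is symmetric. This computation also records that $(-) \moni T$ is \emph{antitone}, which I will reuse below.

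For part \ref{prop:mon-impl:p}, the idea is to factor $p \jsorc (-)$ through the tensor unit. Writing $p \colon 2 \to D$ for the left adjoint of $\chi_p$ (\autoref{rem:adjunctions-lattices}), one has $p \jsorc (-) = (p \jsorc \id) \cdot \lambda^{-1}$ with $\lambda \colon 2 \jsorc D \cong D$ the unitor. Since $p$ is completely join-prime, $\chi_p$ preserves finite joins, so part \ref{lem:tensor:adj} of \autoref{lem:tensor} gives $p \jsorc \id \dashv \chi_p \jsorc \id$; composing with the iso $\lambda$ shows $p \moni (-) = \lambda \cdot (\chi_p \jsorc \id)$. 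As the tensor of lattice homomorphisms agrees with their coproduct copairing on $D \jsorc D = D + D$, this is exactly $\lambda \cdot (\chi_p + \id)$, a composite of lattice homomorphisms and hence one itself; evaluating on pure tensors (using $1 \jsorc q = q$ and $0 \jsorc q = 0$ in $2 \jsorc D$) yields the stated formula $\V_i p_i \jsorc q_i \mapsto \V_{p \le p_i} q_i$.

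For part \ref{prop:mon-impl:preserve}, I would first prove the identity by comparing left adjoints. Setting $R = \omega^{-1}(r \msorc r)\omega$, part \ref{lem:tensor:adj} of \autoref{lem:tensor} gives $l \jsorc l \dashv R$. Both $T \mapsto x \moni R(T)$ and $T \mapsto r(l(x) \moni T)$ are composites of right adjoints $D \jsorc D \to E$, and unwinding their left adjoints shows that each sends $y \mapsto l(x) \jsorc l(y)$; uniqueness of adjoints then forces the two maps to coincide. The inequality $l(x \moni T) \le l(x) \moni (l \jsorc l)(T)$ follows by transposing along $l(x)\jsorc (-) \dashv l(x) \moni (-)$, since $l(x) \jsorc l(x \moni T) = (l \jsorc l)(x \jsorc (x \moni T)) \le (l \jsorc l)(T)$ using the counit $x \jsorc (x \moni T) \le T$. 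Finally, instantiating the identity at $r(x)$ gives $r(x) \moni R(T) = r(l(r(x)) \moni T)$; the counit $l(r(x)) \le x$, antitonicity of $(-) \moni T$, and monotonicity of $r$ then yield $r(x \moni T) \le r(x) \moni R(T)$. If $r$ is order-reflecting it is an order-embedding, so the counit of $l \dashv r$ is an isomorphism, i.e.\ $l(r(x)) = x$, and the inequality collapses to the asserted equality.

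The main obstacle I anticipate lies in the bookkeeping for part \ref{prop:mon-impl:preserve}: one must identify the right adjoint of $l \jsorc l$ as $\omega^{-1}(r \msorc r)\omega$ and not the nonexistent $r \jsorc r$ (since $r$ preserves meets, not joins), and carefully track which variables live in $E$ and which in $D$. Once the correct adjoint is in place, every remaining step is a uniqueness-of-adjoints argument or a unit/counit inequality.
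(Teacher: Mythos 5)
Your proposal is correct and takes essentially the same route as the paper: both proofs rest on the horizontal composition of adjunctions from \autoref{lem:tensor} (yielding $l \jsorc l \dashv \omega^{-1}(r \msorc r)\omega$), obtain part (2) by factoring $p \jsorc (-)$ through the tensor unit using that $\chi_p$ preserves joins, and establish part (3) by proving the identity first and then deriving both inequalities from unit/counit inequalities together with antitonicity of $(-) \moni T$. The only cosmetic differences are that the paper justifies the binary extension in part (1) by citing Mac Lane's adjunctions with a parameter rather than your direct meet-bilinearity check, and verifies the identity in part (3) by an element-wise chain of Galois-connection equivalences rather than your uniqueness-of-adjoints argument.
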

\begin{proof}\hfill
  \begin{enumerate}
    \item The function \( x \jsor (-)\) preserves joins by definition, so its right adjoint \(x \moni (-)\) exists and is given by \(T \mapsto \V_{x \jsor y \le T} y\).
          We can write \(x \jsor (-)\) as
          \[x \jsor (-) = \lambda^{-1} \cdot (x \jsor \id) \colon D \cong 2 \jsor D \rightarrow D \jsor D.\]
          By \iref{lem:tensor}{adj}, this has the right adjoint  \(\lambda \cdot \omega^{-1} \cdot (\chi_{x} \msor \id) \cdot \omega\).
          The extension to a binary function \((-) \moni (-) \colon D^{\partial} \msor (D \jsor D) \rightarrow D \) is an instance of an \emph{adjunction with a parameter} (cf.~\cite[Ch.~IV.7]{maclane-98}).
          
    \item If \(x = p \in \mathcal{J} D\) is join-prime, then \(\chi_{p}\) preserves joins. Hence, \(p \moni (-)\) simplifies to
          \[\lambda^{-1} \cdot (\chi_{p} \jsor \id) \colon D \jsor D \rightarrow 2 \jsor D \cong 2, \quad \V_{i} p_{i} \jsor q_{i} \mapsto \V_{p \le p_{i}} q_{i},\]
          which is an ACDL morphism since \(\lambda^{-}, \chi_{p}\) and \(\id\) are.
          % , and it is easy to verify that \((-) \moni T\) maps joins (meets  in \(D^{\partial}\)) to meets in \(D\).
    \item Let \(T \in D \jsor D\) and \(x \in E\). Then for all \(y \in E\) we have
          \begin{align*}
            y \le x \moni \omega^{-1}(r \msor r)\omega (T) &\LR x \jsor y \le \omega^{-1}(r \msor r)\omega (T) \\
                                               &\LR l(x) \jsor l(y) \le T \\
                                               &\LR l(y) \le l(x) \moni T \\
                                               &\LR y \le r(l(x) \moni T).
          \end{align*}
          So the first statement follows. Using this we compute
          \[x \moni T \le x \moni \omega^{-1}(r \msor r)\omega (l \jsor l)(T) = r(l(x) \moni (l \jsor l)(T)),\]
          which by adjoint transposition is equivalent to
          \[l(x \moni T) \le l(x) \moni (l \jsor l)(T).\]
          Similarly,
          \[r(x \moni T) \le r(l(r(x)) \moni T) = r(x) \moni \omega^{-1}(r \msor r)\omega(T),\]
          and the first step is an equality if \(l \cdot r = \id\), which is equivalent to \(r\) being order-reflecting.\qedhere
  \end{enumerate}
\end{proof}

\subsection{Residuation Algebras}
We proceed with recalling the definition of residuation algebras~\cite{gehrke-16}, which are distributive lattices equipped with two additional operations to be thought of as abstractions of language derivatives (Example~\ref{ex:res-alg}.\ref{ex:res-alg:reg}).
We furthermore introduce an obvious extension of residuation algebras for the complete setting.

\begin{defi}\label{D:bra}\hfill
  \begin{enumerate}
  \item A \emph{(Boolean) residuation algebra} consists of a (Boolean)
    lattice \(R \in \dl\) equipped with \msl-morphisms
    \(\lres \colon R^{\partial} \msor R \rightarrow R\) and
    \(\rres \colon R \msor R^{\partial} \rightarrow R\), the
    \emph{left} and \emph{right residual}, satisfying the
    \emph{residuation property}:
    \(b \le a \lres c \LR a \le c \rres b\).

  \item A \emph{residuation ACDL} (\emph{CABA}) \(R\) is an ACDL (CABA) whose residuals are complete morphisms \(R^{\partial} \msorc R \rightarrow R \leftarrow R \msorc R^{\partial}\).
  \item A residuation algebra \(R\) is  \emph{associative} if it satisfies
    \[
      x \lres (z \rres y) = (x \lres z) \rres y
      \qquad\text{for all \(x, y, z \in R\)},
    \]
      and it is \emph{(prime-)unital}
  if there exists a (join-prime) element
  \(e \in R\)  which is a \emph{unit}:
  \[
    e \lres z = z = z \rres e.
  \]
\end{enumerate}
\end{defi}

\begin{rem}
 Units are unique: If
  \(e, e'\) are units, then \(e = e' \lres e \) since \(e'\) is a unit,
  so by the residuation property \(e' \le e \rres e = e\).
  Analogously we have \(e \le e'\), so \(e = e'\).
\end{rem}

As indicated above, residuals serve as algebraic generalizations of language
derivatives, but as the following examples indicate they are not limited to this
interpretation.

\begin{exas}\label{ex:res-alg}\hfill
  \begin{enumerate}
  \item\label{ex:res-alg:heyt} Every Heyting algebra is
    an associative residuation algebra with residuals
    \(a \lres c = a \rightarrow c \) and \(c \rres b = b \rightarrow c\).

  \item\label{ex:res-alg:neg} Every Boolean algebra \(B\) is a
    non-associative residuation algebra with \(x \lres 1 = 1\) and \(x \lres z = \neg x\) for \(z \ne 1\). If \(B\) is non-trivial, then it is not prime-unital by the first equality.
    \item\label{ex:res-alg:stone}
          Every continuous binary function \(f \colon X \times X \rightarrow X\) on
    a Stone space \(X\) induces a
    residuation algebra on its dual Boolean algebra \(\hatX\) of clopens: given $A,B,C\in \hatX$, put
    \begin{align*}
      A \lres C &= \{x\in X \mid \forall a \in A\colon 
      f(a, x) \in C\},\\
C \rres B &= \{x\in X \mid \forall b \in B\colon f(x, b) \in C\}.
    \end{align*}     

  \item\label{ex:res-alg:reg} The set \(\reg \Sigma\) of all regular
    languages over a finite alphabet \(\Sigma\) forms an associative
    Boolean residuation algebra with residuals given by left and right
    \emph{extended derivatives}:
    \[K \lres L = \{v \in \Sigma^{*} \mid Kv \subseteq L\}, \qquad L \rres K = \{v \in \Sigma^{*} \mid vK \subseteq L\}.\]
          The unit is the singleton \(\{\varepsilon\}\), where $\varepsilon$ is the empty word.
          This example is a special case
    of~\ref{ex:res-alg:stone}: take as $(X,f)$
    the \emph{free profinite monoid} on $\Sigma$, which is the Stone dual of \(\reg \Sigma\).
  \end{enumerate}
\end{exas}

We now introduce the notion of a \emph{residuation
  morphism} between residuation algebras and also its \emph{relational} generalization.

\begin{defi}\label{D:mor}\hfill
  \begin{enumerate}
  \item A lattice morphism \(f \colon R \rightarrow S\) between prime-unital
    residuation algebras is a \emph{(pure) residuation morphism} if it
    satisfies the conditions
    \begin{align}
      & \tag{Forth}
      \label{eq:forth}
      \forall x, z \in R \colon f(x \lres z) \le f(x) \lres f(z) \text{ and } f(z \rres x) \le f(z) \rres f(x)  \\
      & \tag{Back}
      \label{eq:back}
      \forall(y,z) \in S \times R\colon \exists x_{y,z} \in R\colon y \le f(x_{y,z}) \text{ and }  y \lres f(z) = f(x_{y,z} \lres z)\\
      &\tag{Back'} \forall(y,z) \in S \times R\colon \exists x_{y,z} \in R\colon y \le f(x_{y,z}) \text{ and }  f(z) \rres y = f(z \rres x_{y, z})\\
& \tag{Unit}
      \label{eq:unit}
      \forall x \in R \colon e \le x \Leftrightarrow e' \le f(x)
    \end{align}
where $e$ and $e'$ are the units of $R$ and $S$. The morphism \(f\) is \emph{open} if, additionally, it has a left adjoint.
    Prime-unital residuation algebras and residuation morphisms
    form a category \(\res\).

  \item A \emph{corelational residuation morphism} from a
    prime-unital residuation algebra \(R\) to a prime-unital residuation algebra \(S\) is a
    morphism \(\rho \in \jsl_{1}(R, S)\) satisfying
    \[\rho(x \lres z) \le \rho(x) \lres \rho(z) \q{\text{and}} e' \le
      \rho(e).\] Prime-unital residuation algebras with relational
    morphisms  form a category \(\relres\).
  \end{enumerate}
\end{defi}
\pagebreak
\begin{nota}\hfill
  \begin{enumerate}
    \item We use the convention that for a subcategory \(\cat C\) of \(\res\) or \(\relres\) we denote the full subcategory of \(\cat C\) with Boolean carriers by \(\cat{BC}\).
    \item All categories above have obvious counterparts for residuation ACDLs and residuation CABAs with complete morphisms, which we denote by \(\resacdl, \relresacdl\), etc.
  \end{enumerate}
\end{nota}

\begin{rem}
  Let us provide some intuition behind \autoref{D:mor}.

  \begin{enumerate}
    \item The notion of residuation morphism is derived from a result by Gehrke
          \cite[Thm.~3.19]{gehrke-16}, where it is shown to capture precisely
          the conditions satisfied by the duals of morphisms of binary Stone
          algebras.

    \item We speak about \emph{corelational} morphisms of residuation algebras
          since for these will dualize precisely to relational
          morphisms of monoids. Recall that a \emph{relational morphism}
          a monoid \(M\) to a monoid \(N\) is a total relation
          \(\rho \colon M \rightarrow \pow^{+} N\) satisfying
          \begin{equation}
            \label{eq:rel-mor}
            \rho(x)\rho(y) \subseteq \rho(xy) \qquad \text{and} \qquad 1_{N} \in \rho(1_{M}).
          \end{equation}
          Relational morphisms represent inverses of surjective monoid
          homomorphisms~\cite[p.~38]{rhodes-09}.  More precisely, the inverse relation \(h^{-1}\) of a surjective monoid
          homomorphism \(h \colon N \epi M\) is a relational morphism; conversely, if a relational morphism
          \(h^{-1} \colon M \rightarrow \pow^{+}N\) is the inverse of a function
          \(h \colon N \rightarrow M\), then \(h\) is a
          surjective monoid homomorphism.

          Categorically, we can consider an
          inverse relation \(e^{-1} \colon M \rightarrow \pow N \) of a surjective map
          \( e \colon N \epi M\) as is its \emph{right adjoint} in the
          order-enriched category \(\cat{Rel} \simeq \set_{\pow} \) of sets and
          relations: as relations they satisfy \(\id_{N} \le e^{-1} \cdot e\) and
          \(e \cdot e^{-1} \le \id_{M}\). Under duality the composition is reversed,
          so an inverse relation \(e^{-1}\) dualizes to a \emph{left adjoint}
          \(\widehat{e^{-1}} \dashv \widehat{e}\). Since left adjoints between
          lattices are precisely the join-preserving functions, this justifies
          our choice that corelational morphisms of residuation algebras preserve
          (finite) joins (and not necessarily meets). Note also that totality of
          \(e^{-1}\) is equivalent to surjectivity of \(e\), which by
          \autoref{cor:dual-total} dualizes to the property that
          \(\widehat{e^{-1}}\) preserves the top element.

    \item This is also the rationale behind the naming for \emph{open}
          residuation morphisms: if \(e \colon M \epi N\) is a continuous
          surjection between Stone monoids then \(e^{-1} \colon N \rightarrow \viet M\) is
          continuous precisely iff \(e\) is an open map.
  \end{enumerate}
\end{rem}

For open residuation morphisms the three conditions \eqref{eq:back},
\eqref{eq:forth}, \eqref{eq:unit} can be replaced by two equivalent, yet much
simpler conditions. Over complete residuation algebras this is particularly
convenient, since every residuation morphism is open.
\begin{lem}\label{lem:loc-fin-mor-open}
  Let \(R, S\) be prime-unital residuation algebras. A lattice
  morphism \(f \colon R \rightarrow S\) with a left adjoint \(f^{*} \colon S \rightarrow R\) is an open residuation
  morphism iff it satisfies the equations
  \begin{equation}
    \label{eq:open}
   f^{*}(e') = e, \qquad \qquad        \forall y \in S, z
   \in R \colon y \lres f(z) = f(f^{*}(y) \lres z),
   \tag{Open}
 \end{equation}
 and the same equation for the right residual $\rres$.
\end{lem}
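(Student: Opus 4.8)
Since the lemma already assumes that $f$ has a left adjoint $f^{*}$, being an \emph{open} residuation morphism amounts precisely to satisfying the four axioms \eqref{eq:forth}, \eqref{eq:back}, (Back$'$) and \eqref{eq:unit}; the task is therefore to show that these four axioms are jointly equivalent to \eqref{eq:open}. The plan is to run the argument entirely through the adjunction $f^{*}\dashv f$, using the transposition $f^{*}(y)\le z \Leftrightarrow y \le f(z)$, the unit $y \le f(f^{*}(y))$ and the counit $f^{*}(f(z))\le z$, together with the variance of the residuals read off from the residuation property: $a \lres c$ is antitone in $a$ and monotone in $c$, and dually $c \rres b$ is monotone in $c$ and antitone in $b$.

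First I would dispatch the unit conditions, which are equivalent independently of the residual equations. Transposing $e' \le f(x)$ across the adjunction turns it into $f^{*}(e')\le x$, so \eqref{eq:unit} reads $e \le x \Leftrightarrow f^{*}(e') \le x$ for all $x \in R$; since this asserts that $e$ and $f^{*}(e')$ determine the same up-set, antisymmetry gives the equation $f^{*}(e') = e$ of \eqref{eq:open}, and conversely.

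For the implication \eqref{eq:open} $\Rightarrow$ residuation morphism I would derive the remaining axioms. For \eqref{eq:forth}, instantiate the residual equation of \eqref{eq:open} at $y = f(x)$ to obtain $f(x)\lres f(z) = f(f^{*}(f(x))\lres z)$, then apply the counit $f^{*}(f(x))\le x$, antitonicity of $\lres$ in its first argument, and monotonicity of $f$ to get $f(x\lres z)\le f(x)\lres f(z)$. For \eqref{eq:back}, take the witness $x_{y,z} = f^{*}(y)$: the unit gives $y \le f(f^{*}(y))$, while the residual equation of \eqref{eq:open} gives $y \lres f(z) = f(f^{*}(y)\lres z)$ outright. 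The $\rres$-parts of \eqref{eq:forth} and (Back$'$) follow symmetrically from the right-residual equation.

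For the converse, assuming the four axioms I must establish $y \lres f(z) = f(f^{*}(y)\lres z)$, which I would prove by two inequalities. The inequality $f(f^{*}(y)\lres z) \le y \lres f(z)$ follows by chaining \eqref{eq:forth}, namely $f(f^{*}(y)\lres z) \le f(f^{*}(y))\lres f(z)$, with antitonicity applied to the unit $y \le f(f^{*}(y))$, giving $f(f^{*}(y))\lres f(z)\le y \lres f(z)$. For the reverse inequality, \eqref{eq:back} supplies a witness $x_{y,z}$ with $y \le f(x_{y,z})$ and $y \lres f(z) = f(x_{y,z}\lres z)$; transposing yields $f^{*}(y)\le x_{y,z}$, whence $x_{y,z}\lres z \le f^{*}(y)\lres z$ by antitonicity, and applying $f$ gives $y\lres f(z) = f(x_{y,z}\lres z)\le f(f^{*}(y)\lres z)$. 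The right residual is handled symmetrically via (Back$'$). There is no real obstacle here; the only thing demanding care is keeping track of the variance of the two residuals and the direction of each adjoint transposition.
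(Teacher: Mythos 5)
Your proposal is correct and follows essentially the same route as the paper's proof: the same witness $x_{y,z}=f^{*}(y)$ for \eqref{eq:back}, the same unit/counit manipulations with contravariance of $\lres$ for both directions, and the same transposition argument identifying $e$ with $f^{*}(e')$. The handling of the right residual by symmetry also matches the paper, which likewise omits it as analogous.
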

\noindent
In the proof below we omit mentioning the right residual because the
arguments for it are completely analogous. This will be the case in
most of the subsequent proofs involving properties of residuals. %; we often omit the completely
%analogous formulations and proofs for the right residual.
\begin{proof}
  We first show that if \(f\) satisfies \eqref{eq:open}, then it is an open residuation morphism.
  The \eqref{eq:forth} condition follows from \(f^{*} \cdot f \le \id\) and contravariance of $\lres$ in the first argument:
  \[f(x \lres z) \le f(f^{*}(f(x)) \lres z) = f(x) \lres f(z).\]
  For the \eqref{eq:back} condition, given $y \in S$ and $z \in R$ we choose the element \(x_{y,z} = f^{*}(y) \in R\) (independently of \(z\)).
  The unit of the adjunction yields \(y \le f(f^{*}(y)) = f(x_{y,z})\), and using \eqref{eq:open} we obtain
  \[y \lres f(z) = f(f^{*}(y) \lres z) = f(x_{y,z} \lres z).\]

  For the other direction, we prove that every open residuation morphism satisfies the condition \eqref{eq:open}.
  Let \((y, z) \in S \times R\). By the \eqref{eq:back} condition, there exists \(x_{y,z} \in R\) such that \(y \le f(x_{y,z})\) and \(y \lres f(z) = f(x_{y,z} \lres z)\).
  This implies \(f^{*}(y) \le x_{y,z}\), and using \eqref{eq:back} and contravariance of $\lres$ in the first argument, we obtain
  \[y \lres f(z) = f(x_{y,z} \lres z) \le f(f^{*}(y) \lres z).\]
  On the other hand, the adjunction unit \(y \le f(f^{*}(y))\), \eqref{eq:forth} and contravariance of $\lres$ yield
  \[f(f^{*}(y) \lres z) \le f(f^{*}(y)) \lres f(z) \le y \lres f(z).\]
  This proves that \(f\) indeed satisfies \eqref{eq:open}.

  For the respective unitality conditions we have by \(f^{*} \dashv f\) that
  \[\forall x \colon e \le x \Leftrightarrow e' \le f(x) \Leftrightarrow f^{*}(e') \le x, \]
  which is equivalent to \(e = f^{*}(e')\).
\end{proof}

%\begin{rem}
%  In the statement and proof of \autoref{lem:loc-fin-mor-open} we only state and verify the residuation morphism conditions for the left residual, since the right residual is completely analogous.
%
%\end{rem}

\begin{exa}
  Let \(\Sigma\) and \(\Delta\) be finite alphabets. Every substitution
  \(f_{0} \colon \Sigma \rightarrow \Delta^{*}\) can be extended to a
  monoid homomorphism \(f \colon \Sigma^{*} \rightarrow \Delta^{*}\),
  and for regular languages \(L \in \reg \Sigma\) and \(K \in \reg \Delta\), both \(f[L]\) and \(f^{-1}[K]\) are also regular.
  Then
  \(f^{-1} \colon \reg \Delta \rightarrow \reg \Sigma\) is an open
  residuation morphism. Indeed, its left adjoint is given by the direct image map \(f[-] \colon \reg \Sigma \rightarrow \reg \Delta\), satisfying \(f[\{\varepsilon\}] = \{f(\varepsilon)\} = \{\varepsilon\}\) and
  \[K \lres f^{-1}[L] = \{w \mid K w \subseteq f^{-1}[L]\} = \{w \mid f[K]f(w) \subseteq L\} = f^{-1}(f[K] \lres L).\]
\end{exa}

\subsection{Residuation ACDLs}
\label{sec:finite-resid-algebr}

We start by investigating complete residuation algebras, whose characterization
(\autoref{thm:fin-dual}) in terms of coalgebras forms not only the backbone of  the
classification and duality theory of \emph{locally finite} residuation algebras in
\renewcommand{\subsectionautorefname}{Sections}
\autoref{sec:locally-finite-resid} and~\ref{sec:duality-theory},
\renewcommand{\subsectionautorefname}{Section}
but also of  the duality for the category of small categories in \autoref{sec:duality-cat}.
Concretely, we use the tensor implication operator introduced in the last section to associate a \emph{comultiplication} to the residuals and investigate its properties.

\begin{constr}\label{con:op-translation}\hfill
\begin{enumerate}
\item\label{con:op-translation:1} Every $\Vj$-algebra
  \(\Vj D \jsorc \Vj D \rightarrow \Vj D\)
  on an
  ACDL \(D\) has a right adjoint
  \(\gamma \colon \Vm D \rightarrow \Vm(D \jsorc D) \) that can be extended, by
  using the isomorphism \(\omega\) from \autoref{lem:tensor},
  to a \(\Vm\)-coalgebra
  \[ \bar{\gamma} = \Vm \omega \cdot \gamma \colon \Vm D \rightarrow \Vm(D \jsorc D) \cong \Vm(D \msorc D) = \Vm D \msorc \Vm D.  \]
  We refer to both versions \(\gamma\) and \(\bar{\gamma}\) as \emph{comultiplication} or \emph{coalgebra structure}.
  Conversely, we obtain a \(\Vj\)-algebra from a comultiplication \(\gamma \colon \Vm D \rightarrow \Vm(D \jsorc D)\) by taking its left adjoint.
\item  In a residuation ACDL \(R\) the partially applied
  residuals have respective left adjoints
  \(\mu(x,-) \dashv (x \lres -) \) and \( \mu(-, y) \dashv (- \rres y)\) that
  can be combined
  into a \(\Vj\)-algebra structure
  \(\mu \colon \Vj R \jsorc \Vj R \rightarrow \Vj R\)
  that we call \emph{multiplication}. By part \ref{con:op-translation:1}, the multiplication $\mu$ induces the comultiplication \(\gamma \colon \Vm R \rightarrow \Vm(R \jsorc R) \), or $\bar{\gamma} \colon \Vm R \rightarrow \Vm R \msorc \Vm R$.
\end{enumerate}

\end{constr}

  Each of the operators $\rres$, $\lres$, $\mu$, $\gamma$ determines the others uniquely due to the equivalences
  \[x \le z \rres y \iff y \le x \lres z \iff \mu(x \jsor y) \le z \iff x \jsor y \le \gamma(z). \]
  The following lemma provides the concrete formulas.

% We give a direct translation between the residuals and the comultiplication.
% This translation will later be used in the infinite setting where the adjoints do not exist.

\begin{lem}\label{lem:gamma-res}
  Let \(R\) be a residuation ACDL.%
  \begin{enumerate}
  \item\label{lem:gamma-res:1} The residuals can be calculated from the comultiplication: 
          \[x \lres z = x \moni \gamma(z) \quad \text{ and } \quad z \rres y = \gamma(z) \imon y,\]
where $\moni$ and $\imon$ are the tensor implications given by \autoref{prop:mon-impl}.
    \item\label{lem:gamma-res:2} The comultiplication can be calculated from the residuals:
          \[\gamma(z) = \V_{x \in R} x \jsor (x \lres z) = \V_{p \in \mathcal{J}R} p \jsor (p \lres z). \]
  \end{enumerate}
  % \[\gamma(z) = \V_{x \in D} x \jsor x \lres z \,\,(= \V_{x \in D} z \rres x \jsor x) \]
  % which may be simplified to
  % \[\gamma(z) = \V_{p \in \mathcal{J} D} p \jsor p \lres z.\]
\end{lem}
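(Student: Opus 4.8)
The plan is to derive both formulas directly from the defining chain of equivalences
\[ x \le z \rres y \iff y \le x \lres z \iff \mu(x \jsor y) \le z \iff x \jsor y \le \gamma(z) \]
recorded just before the lemma, together with the adjunctions defining the tensor implications. For part \iref{lem:gamma-res}{1}, recall from \iref{prop:mon-impl}{def} that $x \moni (-)$ is the right adjoint of $x \jsor (-)$, so that $y \le x \moni \gamma(z) \iff x \jsor y \le \gamma(z)$. Combining this with the equivalence $y \le x \lres z \iff x \jsor y \le \gamma(z)$ shows that $y \le x \lres z \iff y \le x \moni \gamma(z)$ for every $y \in R$, whence $x \lres z = x \moni \gamma(z)$ by antisymmetry. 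The identity $z \rres y = \gamma(z) \imon y$ follows symmetrically, using that $(-) \imon y$ is the right adjoint of $(-) \jsor y$ and the equivalence $x \le z \rres y \iff x \jsor y \le \gamma(z)$.

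For part \iref{lem:gamma-res}{2}, I would first establish the easy inequality $\V_{x \in R} x \jsor (x \lres z) \le \gamma(z)$: instantiating $y \le x \lres z \iff x \jsor y \le \gamma(z)$ at $y = x \lres z$ gives $x \jsor (x \lres z) \le \gamma(z)$ for each $x$, and taking the join yields the bound. Since join-primes are in particular elements, $\V_{p \in \mathcal{J}R} p \jsor (p \lres z) \le \V_{x \in R} x \jsor (x \lres z)$ holds trivially, so it remains only to prove the single reverse inequality $\gamma(z) \le \V_{p \in \mathcal{J}R} p \jsor (p \lres z)$; the three expressions then coincide.

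The reverse inequality is the crux, and it hinges on describing the completely join-prime elements of the tensor $R \jsorc R$. Since $\mathcal{D} \colon (\pos, \times) \to (\cslj, \jsorc)$ is strong monoidal and $R \cong \mathcal{D}(\mathcal{J}R)$ under the discrete duality, we obtain $R \jsorc R \cong \mathcal{D}(\mathcal{J}R \times \mathcal{J}R)$; in particular $R \jsorc R$ is again an ACDL, and its completely join-primes are exactly the pure tensors $p \jsor q$ with $p, q \in \mathcal{J}R$. Consequently $\gamma(z)$ equals the join of the join-primes below it, namely of those $p \jsor q$ satisfying $p \jsor q \le \gamma(z)$; by the chain of equivalences (with $x = p$, $y = q$) this condition is precisely $q \le p \lres z$, and monotonicity of $p \jsor (-)$ then gives $p \jsor q \le p \jsor (p \lres z) \le \V_{p' \in \mathcal{J}R} p' \jsor (p' \lres z)$. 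Joining over all join-primes below $\gamma(z)$ yields $\gamma(z) \le \V_{p \in \mathcal{J}R} p \jsor (p \lres z)$. The main obstacle is thus not the adjunction bookkeeping but this identification of the join-primes of $R \jsorc R$; once it is available, the claim is just the completely-distributive decomposition of $\gamma(z)$ combined with monotonicity of the pure-tensor maps.
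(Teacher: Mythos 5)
Your proof is correct, and while part~\iref{lem:gamma-res}{1} coincides with the paper's argument (both run the chain $y \le x \lres z \iff \mu(x \jsor y) \le z \iff x \jsor y \le \gamma(z) \iff y \le x \moni \gamma(z)$ and conclude by antisymmetry), your part~\iref{lem:gamma-res}{2} takes a genuinely different route. The paper computes $\gamma(z)$ head-on from the explicit formula for right adjoints, $\gamma(z) = \bigvee\{T \mid \mu(T) \le z\}$: join-preservation of $\mu$ lets one restrict the join to pure tensors, the adjunction $\mu(x \jsor -) \dashv (x \lres -)$ turns the side condition into $y \le x \lres z$, and the join collapses to $\bigvee_{x \in R} x \jsor (x \lres z)$; the passage to join-primes is then obtained by decomposing each $x$ as $\bigvee_{p \le x} p$ \emph{inside} $R$ and using bilinearity of $\jsor$ together with contravariance of $(-) \lres z$. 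You instead establish the sandwich $\bigvee_{p} \le \bigvee_{x} \le \gamma(z) \le \bigvee_{p}$, where the first two inequalities are immediate and the crucial last one rests on identifying the completely join-primes of $R \jsorc R$ as the pure tensors $p \jsor q$ of join-primes (via $R \jsorc R \cong \mathcal{D}(\mathcal{J}R \times \mathcal{J}R)$, using discrete duality and strong monoidality of $\mathcal{D}$), followed by the ACDL decomposition of $\gamma(z)$ into the join-primes below it. Both arguments are sound. The paper's computation is more self-contained, needing only elementary properties of $\mu$ and the residuals within $R$; your argument front-loads a structural fact about the tensor product, but that fact is reusable --- the paper itself asserts exactly this description of the join-primes of $R \jsor R$ later, in the proof of \autoref{lem:fin-op-equiv}, without the justification you supply --- so your route has the merit of making that ingredient explicit and deriving it cleanly from the monoidal structure.
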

\begin{proof}\hfill
  \begin{enumerate}
    \item For all \(x, y, z \in R\) we have
          \[y \le x \lres z \LR \mu(x \jsor y) \le z \LR x \jsor y \le \gamma(z) \LR y \le x \moni \gamma(z),\]
          and analogously \(x \le z \rres y = \gamma(z) \imon y\).
    \item We compute
          \begin{align*}
            \gamma(z) &= \V \{\V_{i} x_{i} \jsor y_{i} \mid x_{i}, y_{i} \in R, \mu(\V_{i} x_{i} \jsor y_{i}) \le z\} && \text{formula for right adjoint} \\
                 &= \V \{x \jsor y \mid x, y \in R,\mu(x \jsor y) \le z\} && \text{\(\mu\) preserves joins} \\
                 &= \V \{x \jsor y \mid x, y \in R,y \le x \lres z\} && \mu(x \jsor -) \dashv (x \lres -) \\
                 &= \V_{x \in R} x \jsor x \lres z  && \text{simplification.} \\
          \end{align*}
          It is clear that \(\V_{p \in \mathcal{J} R} p \jsor p \lres z \le \V_{x \in R} x \jsor x \lres z\),
          for the reverse inclusion we compute
          \[x \jsor x \lres z = (\V_{p \le x} p) \jsor x \lres z = \V_{p \le x} p \jsor x \lres z \le \V_{p \le x} p \jsor p \lres z,\]
          where $p$ ranges over $\mathcal{J} R$ and we use contravariance of \((- \lres z)\) in the last step. \qedhere
  \end{enumerate}
\end{proof}

We now investigate when the comultiplication is \emph{pure},
that is, lifts to a complete lattice morphism \(R \rightarrow R + R\) and thus corresponds to a pure multiplication.

\begin{lem}\label{lem:fin-op-equiv}
 For a residuation ACDL \(R\), the following are equivalent:
  \begin{enumerate}
    \item\label{lem:fin-op-equiv:gamma} The comultiplication is pure: for all \(A \subseteq R\) we have
      \(\gamma(\bigvee_{x \in A} x) = \bigvee_{a \in A}\gamma(x)\).
      
    \item\label{lem:fin-op-equiv:res} For all $p \in \mathcal{J} R$ and \(A \subseteq R\)
      we have
      \[
        p \lres (\bigvee_{x \in A} x) = \bigvee_{x \in A} p \lres x
        \qquad\text{and}\qquad
        (\bigvee_{x \in A}) \rres p = \bigvee_{x \in A} x \rres p
      \]
      
    \item\label{lem:fin-op-equiv:mu} For all $x, y \in R \colon \mu(x \jsor y) = 0 \iff x = 0 \lor y = 0$,
          and  \(\mu[\mathcal{J}(R + R)] \subseteq \mathcal{J} R\).
  \end{enumerate}
\end{lem}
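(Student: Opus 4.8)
The plan is to take condition~(1), the purity of the comultiplication~$\gamma$, as the hub and to prove the biconditionals (1)~$\Leftrightarrow$~(2) and (1)~$\Leftrightarrow$~(3) separately. The guiding remark is that $\gamma$ is a $\Vm$-coalgebra structure and hence always preserves arbitrary meets, so purity of $\gamma$ reduces to the single requirement that $\gamma$ additionally preserve arbitrary joins. Throughout I would use the adjunction $\mu \dashv \gamma$ from \autoref{con:op-translation} in the form $\mu(t) \le z \Leftrightarrow t \le \gamma(z)$, together with the two explicit formulas of \autoref{lem:gamma-res}.

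For (1)~$\Leftrightarrow$~(2) I would argue purely with these formulas, treating only the left residual since the right one is symmetric. In the forward direction I write $x \lres z = x \moni \gamma(z)$ and specialise to a completely join-prime~$p$: by \iref{prop:mon-impl}{p} the map $p \moni (-)$ is an ACDL morphism, hence preserves all joins, so join-preservation of $\gamma$ is inherited by each $p \lres (-)$. For the converse I expand $\gamma(z) = \V_{p \in \mathcal{J} R} p \jsor (p \lres z)$ from \iref{lem:gamma-res}{2} and use that the complete tensor $\jsorc$ distributes over arbitrary joins in each argument; once every $p \lres (-)$ preserves a given join $\V A$, it can be pulled out of $\gamma(\V A)$ after interchanging the two joins.

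The heart of the argument is (1)~$\Leftrightarrow$~(3), which I would carry out through the completely join-prime elements of the coproduct $R + R$. Since the discrete duality turns coproducts of ACDLs into products of posets, $\mathcal{J}(R + R) \cong \mathcal{J} R \times \mathcal{J} R$, and these join-primes are precisely the pure tensors $p \jsor q$ with $p, q \in \mathcal{J} R$. I then split join-preservation of $\gamma$ into its empty and non-empty parts. Preservation of the empty join, $\gamma(0) = 0$, transposes across $\mu \dashv \gamma$ to the implication $\mu(t) = 0 \Rightarrow t = 0$; since $\mu$ preserves joins and $x \jsor y = 0$ holds in the tensor exactly when $x = 0$ or $y = 0$, this is precisely the zero-divisor clause of~(3). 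For non-empty joins I use that $R + R$ is join-generated by its join-primes, so that $\gamma(\V A) = \V_{a \in A} \gamma(a)$ holds iff every join-prime $p \jsor q \le \gamma(\V A)$ lies below some $\gamma(a)$; transposing each side, this says that every $\mu(p \jsor q)$ is completely join-prime or zero, and ruling out zero by the zero-divisor clause yields $\mu[\mathcal{J}(R + R)] \subseteq \mathcal{J} R$.

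The main obstacle I anticipate is the bookkeeping around the completely-join-prime property, in particular whether the empty family is admitted, since this is exactly what separates ``$\gamma$ preserves the bottom'' from ``$\gamma$ preserves non-empty joins'', together with pinning down $\mathcal{J}(R + R)$ as the pure tensors of join-primes. Once these are fixed, each implication is a one-line transposition across $\mu \dashv \gamma$ followed by the elementary observation that $w$ is completely join-prime exactly when $w \le \V A$ forces $w \le a$ for some $a \in A$. I also note that the two clauses of~(3) are not logically independent --- the primality clause already forces $\mu(p \jsor q) \neq 0$ --- so listing both merely keeps the match with ``preserves the bottom'' and ``preserves non-empty joins'' transparent.
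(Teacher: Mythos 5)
Your proposal is correct and takes essentially the same route as the paper's proof: condition (1) serves as the hub, (1)$\Leftrightarrow$(2) follows from the formulas of \autoref{lem:gamma-res} together with the fact that $p \moni (-)$ preserves joins for completely join-prime $p$ (\autoref{prop:mon-impl}), and (1)$\Leftrightarrow$(3) is obtained by identifying $\mathcal{J}(R+R)$ with the pure tensors of join-primes and transposing across $\mu \dashv \gamma$, with the case $\gamma(0)=0$ handled separately exactly as you do. Your additional bookkeeping --- the explicit empty/non-empty split of join preservation and the remark that the primality clause already subsumes the zero-divisor clause --- is a sound refinement of, not a departure from, the paper's argument.
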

\begin{proof}
 
  \ref{lem:fin-op-equiv:gamma} $\LR$ \ref{lem:fin-op-equiv:mu}. First, we have
  \begin{align*}
    \gamma(0)= 0 &\LR \gamma(0) \le 0 \\
            &\LR \forall T \colon T \le \gamma(0) \Rightarrow T \le 0 \\
            &\LR \forall T = \V_{i} x_{i} \jsor y_{i} \colon \mu(T) = \V_{i} \mu(x_{i} \jsor y_{i}) \le 0 \Rightarrow \forall i \colon x_{i} \jsor y_{i} \le 0\\
            &\LR \forall x, y\colon \mu(x \jsor y) = 0 \Rightarrow x \jsor y = 0 \\
            &\LR \forall x, y \colon \mu(x \jsor y) = 0 \Leftrightarrow x = 0 \lor  y = 0,
  \end{align*}
  where we use in the penultimate equivalence that \(\mu(0) = 0\), and in the last equivalence that \(x \otimes y = 0\) iff \(x = 0\) or \(y = 0\).
  To show that $\gamma$ preserves joins, note that the join-primes of \(R\jsor R\) are given by pure tensors of \(p \jsor q\) of join-primes \(p, q \in \mathcal{J} R\) and that in a distributive lattice an element \(j\) is join-prime iff it is is \emph{join-prime}: for \(A \subseteq R\) if  $j\leq \bigvee A$ then \(j \le x\) for some \(x \in R\).
  Given \(A \subseteq R\), we compute:
  \begin{align*}
    &\gamma(\bigvee_{x \in A} x) = \bigvee_{x \in A} \gamma(x) \\
    \LR & \forall a, b \in \mathcal{J} R : a \jsor b \le \bigvee_{x \in A} \gamma(x)  \Rightarrow a \jsor b \le \bigvee_{x \in A}\gamma(x)  \\
    \LR & \forall a, b \in \mathcal{J} R : a \jsor b \le \gamma(\bigvee_{x \in A} x) \Rightarrow [\exists x \in A \colon a \jsor b \le \gamma(x)] \\
    \LR &\forall a, b \in \mathcal{J} R : \mu(a \jsor b) \le \bigvee_{x \in A} x \Rightarrow [\exists x \in A \colon \mu(a \jsor b) \le x] \\
    \LR & \forall a \jsor b \in \mathcal{J}[R \jsor R] : \mu(a \jsor b) \in \mathcal{J} R.
  \end{align*}

  For the equivalence \ref{lem:fin-op-equiv:gamma} \(\LR\) \ref{lem:fin-op-equiv:res}, we combine \autoref{lem:gamma-res} with the preservation properties of \(x \moni (-)\) from \iref{prop:mon-impl}{def}:
  If \(\gamma\) is pure, then it preserves joins and so does \((p \lres -) = p \moni \gamma(-)\) for \(p \in \mathcal{J} R\);
  and if every \((p \lres -)\) preserves joins, then so does \(\gamma = \V_{p \in \mathcal{J} R}p \jsor p \lres (-)\).
\end{proof}

Next we show how structural identities like (co-)associativity or
unitality translate between $\gamma$, $\mu$ and the residuals. Note that while the
statements are to be expected, the proof is non-trivial due to the
complication introduced by the seemingly innocent isomorphism
\(\omega \colon R \jsorc R \cong R \msorc R\). Recall that a coalgebra
$c\colon  R \to  R \msorc  R$ is
\emph{coassociative} if $(c \msor \id) \cdot c = (\id
\msor c) \cdot c$ and \emph{(prime-)counital} if it is equipped with a (prime) \emph{counit}
\(\eps \in \cslm(R, 2)\) ($\eps \in \acdl(R, 2)$) satisfying \((\eps \msor \id) \cdot c = \id = (\id \msor \eps) \cdot c\).
Diagrammatically, these equations are dual to the well-known monoid equations:
\[
  \begin{tikzcd}[column sep=30]
     C \rar{c} \dar{c} &  C \msorc C \dar{\id \msorc c} \\
     C \msorc  C \rar{c \msorc \id} &  C \msorc  C \msorc  C 
   \end{tikzcd}
   \qquad
   \begin{tikzcd}[column sep=small]
     C \msorc  C \dar{\id \msorc \eps} &  C \lar[swap]{c} \rar{c}  &  C \msorc  C \dar{\eps
       \msorc \id} \\
     C \msorc 2 \rar[phantom]{\cong} &  C \rar[phantom]{\cong} & 2 \msorc  C
   \end{tikzcd}
 \]

\begin{lem}\label{lem:op-props}
  The following are equivalent for a residuation ACDL \(R\):
  \begin{enumerate}
  \item\label{lem:op-props:gamma} The comultiplication on \(R\) is
    coassociative and has a (prime) counit \(\eps\).
    
  \item\label{lem:op-props:res} The residuals are associative and
    \(R\) has  a (prime) unit \(e \in R\).
    
  \item\label{lem:op-props:mu} The multiplication \(\mu\) is
    associative and has a (prime) unit, that is, there exists
    \(e \in R\) satisfying
    \(\mu(e \jsor -) = \id = \mu(- \jsor e)\).
  \end{enumerate}
\end{lem}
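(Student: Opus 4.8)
The plan is to prove $\text{(2)}\Leftrightarrow\text{(3)}$ by elementary manipulation of the defining adjunctions, and then to derive $\text{(3)}\Leftrightarrow\text{(1)}$ by dualizing the associativity and unit laws for $\mu$ along the adjunction $\mu \dashv \gamma$; the only genuine work is the bookkeeping for the isomorphism $\omega$.

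For $\text{(2)}\Leftrightarrow\text{(3)}$ I would use the chain of equivalences $x \le z \rres y \iff y \le x \lres z \iff \mu(x \jsor y) \le z$ recorded just before \autoref{lem:gamma-res}. Writing $x\cdot y := \mu(x\jsor y)$, associativity of the residuals, $x \lres (z \rres y) = (x \lres z) \rres y$, unfolds under these equivalences into $(x\cdot w)\cdot y \le z \iff x\cdot(w\cdot y)\le z$ for all $z$, which by extensionality is exactly associativity of $\mu$. Likewise $e \lres z = z = z \rres e$ for all $z$ translates into $e\cdot y = y = x\cdot e$, i.e.\ $\mu(e\jsor -)=\id=\mu(-\jsor e)$. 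Since $\mu$ preserves joins in each argument it suffices to test all these identities on pure tensors, so completeness causes no difficulty; and since being completely join-prime is a property of the element $e$ alone, the \emph{prime} variants correspond on the two sides.

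For $\text{(3)}\Leftrightarrow\text{(1)}$ the key observation is that $\gamma$ is the right adjoint of $\mu$ and that right adjoints between complete lattices are unique. Hence the associativity law $\mu\cdot(\mu\jsorc\id)=\mu\cdot(\id\jsorc\mu)$, an equation of join-preserving maps $R\jsorc R\jsorc R\to R$, holds iff the equation between their right adjoints holds. By the horizontal composition of adjunctions (\iref{lem:tensor}{adj}) the right adjoint of $\mu\jsorc\id$ is $\omega^{-1}(\gamma\msorc\id)\omega$, and similarly for $\id\jsorc\mu$; composing with the right adjoint $\gamma$ of $\mu$ and substituting $\gamma=\omega^{-1}\bar\gamma$ rewrites the two right adjoints as the composites $(\bar\gamma\msorc\id)\bar\gamma$ and $(\id\msorc\bar\gamma)\bar\gamma$ defining coassociativity of $\bar\gamma$. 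The unit law is handled in the same spirit: the unit $e\colon 2\to R$, $1\mapsto e$, is a $\cslj$-morphism whose right adjoint is $\chi_e\in\cslm(R,2)$, so dualizing $\mu(e\jsor -)=\id=\mu(-\jsor e)$ along $\mu\dashv\gamma$ and identifying $2\jsorc R\cong R\cong 2\msorc R$ yields the counit laws $(\eps\msorc\id)\bar\gamma=\id=(\id\msorc\eps)\bar\gamma$ with $\eps=\chi_e$. By \autoref{rem:adjunctions-lattices}, the map $\eps$ is a homomorphism, i.e.\ lies in $\acdl(R,2)$ and is thus a \emph{prime} counit, precisely when $e$ is completely join-prime, so the prime cases match.

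The main obstacle, flagged in the statement of the lemma, is the coherence bookkeeping for $\omega$: after the substitution $\gamma=\omega^{-1}\bar\gamma$ the two right adjoints above do not literally read as $(\bar\gamma\msorc\id)\bar\gamma$ and $(\id\msorc\bar\gamma)\bar\gamma$, but acquire prefactors, namely the two triple-tensor comparison isomorphisms $\omega^{-1}(\omega^{-1}\msorc\id)$ and $\omega^{-1}(\id\msorc\omega^{-1})$ from $R\msorc R\msorc R$ to $R\jsorc R\jsorc R$. I would dispose of this by noting that both triple tensors represent the ternary coproduct $R+R+R$ in \acdl (as in the remark following \autoref{lem:tensor}), so there is a \emph{unique} isomorphism $R\msorc R\msorc R\cong R\jsorc R\jsorc R$ commuting with the three coproduct injections; since both prefactors commute with the injections, they coincide, and the prefactors cancel identically on both sides. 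This reduces the whole of $\text{(3)}\Leftrightarrow\text{(1)}$ to uniqueness of adjoints together with this single coherence fact, after which everything is formal.
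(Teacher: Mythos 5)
Your proposal is correct and follows essentially the same route as the paper's proof: the equivalence (2)$\Leftrightarrow$(3) via the residuation Galois connections and uniqueness of adjoints, and (1)$\Leftrightarrow$(3) by passing to right adjoints along $\mu\dashv\gamma$ using the horizontal composition of adjunctions from \iref{lem:tensor}{adj}, with the unit/counit and primality statements matched through $e\dashv\eps=\chi_e$. Your cancellation of the two triple-tensor comparison isomorphisms via the universal property of the ternary coproduct is precisely the paper's observation that the bottom diamond of its pasting diagram of adjunctions commutes, just argued more explicitly.
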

\begin{proof}
  For the proof we only use `adjunctional calculus'. The equivalence \ref{lem:op-props:mu} \(\LR\) \ref{lem:op-props:res} follows from uniqueness of adjoints; to see this, we write associativity of \(\mu\) as
  \[\forall x, y \colon \mu(- \jsor y) \cdot \mu(x \jsor -) = \mu(x \jsor -) \cdot \mu(- \jsor y)\]
  and associativity of the residuals as
  \[\forall x, y \colon (x \lres -) \cdot (- \rres y) = (- \rres y) \cdot (x \lres -).\]
  Since the respective left and right sides of the above equalities are adjoint, and adjoints
  are unique, it is clear that one of the equations holds iff the other one does. The unit of
  the residuals is the left adjoint of the comultiplication $\gamma$, which implies the 
  equivalence of the (co-)unit properties.

  The equivalence \ref{lem:op-props:gamma} \(\LR\) \ref{lem:op-props:mu} is shown similarly, but we have to be careful, since \(\mu\) and \(\bar{\gamma}\) are adjoint only up to composition with the isomorphism \(\omega \colon R \jsorc R \rightarrow R \msorc R\).
  By \iref{lem:tensor}{adj}, we have the following diagram of adjunctions:
\[
    \begin{tikzcd}[column sep=tiny, row sep=2em]
      & &
      R
      \dlar[yshift=3pt, xshift=-2pt, swap]{\gamma}
      \dlar[phantom]{\rotatebox{117}{\(\dashv\)}}
      \dlar[yshift=-3pt, xshift=2pt, leftarrow]{\mu}
      \drar[yshift=3pt, xshift=2pt]{\gamma}
      \drar[phantom]{\rotatebox{64}{\(\dashv\)}}
      \drar[yshift=-3pt, xshift=-2pt, swap, leftarrow]{\mu}
      \ar[xshift=-1pt, yshift=-13pt, shiftarr={yshift=20pt}, swap]{ddll}{\bar \gamma}
      \ar[xshift=1pt,  yshift=-13pt, shiftarr={yshift=20pt}]{ddrr}{\bar \gamma}
      & &
      \\
      &
      R \jsorc R
      \dlar{\jtom}
      \drar[yshift=-2pt, xshift=-1pt, swap]{\id \jsorc \gamma}
      \drar[yshift=5pt, xshift=3pt, leftarrow]{\id \jsorc \mu}
      \drar[phantom, yshift=2pt]{\rotatebox{242}{\(\dashv\)}}
      & &
      R \jsorc R
      % \ular[yshift=-3pt, xshift=-2pt]{\mu}
      \drar{\jtom}
      % \dlar[yshift=-2pt]{\gamma \jsorc \id}
      \dlar[yshift=-2pt, xshift=1pt]{\gamma \jsorc \id}
      \dlar[yshift=5pt, xshift=-3pt,swap, leftarrow]{\mu \jsorc \id}
      \dlar[phantom, yshift=2pt]{\rotatebox{298}{\(\dashv\)}}
      &
      \\
      R \msorc R
      \drar{\id \msorc \gamma}
      \ar[xshift=-3pt, yshift=19pt, shiftarr={yshift=-25pt}]{ddrr}{\id \msorc \bar \gamma}
      & &
      R \jsorc R \jsorc R
      % \ular[yshift=6pt,swap]{\id \jsorc \mu}
      % \urar[yshift=6pt]{\mu \jsorc \id}
      \dlar[yshift=-2pt]{\jtom}
      \drar[swap, yshift=-2pt]{\jtom}
      & &
      R \msorc R
      \dlar[swap]{\gamma \msorc \id}
      \ar[xshift=3pt, yshift=19pt, shiftarr={yshift=-25pt}, swap]{ddll}{\bar \gamma \msorc \id}
      \\
      &
      R \msorc (R \jsorc R)
      \urar[yshift=2pt]{\mtoj}
      \drar{\id \msorc \jtom}
      & &
      (R \jsorc R) \msorc R
      \ular[swap, yshift=2pt]{\mtoj}
      \dlar[swap]{\jtom \msorc \id}
      &
      \\
      & &
      R \msorc R \msorc R
      & &
    \end{tikzcd}
  \]
  The left and right diamonds come from the horizontal composition of adjunctions under the respective tensor products.
  The bottom diamond is easily seen to commute.
  If \(\mu\) is associative the top inner diamond commutes, and so by uniqueness the outer big diamond commutes by uniqueness of adjoints, proving \(\bar{\gamma}\) coassociative.
  Dually, if \(\bar{\gamma}\) is coassociative, then \(\mu\) is associative.
  The unit of \(\mu\) is the left adjoint of the counit of $\gamma$, so one is prime iff the other one is.
\end{proof}

These lemmas suggest the following definitions:

\begin{defi}\label{def:res-alg}\hfill
  \begin{enumerate}
    \item A residuation ACDL \(R\) is \emph{pure} if it satisfies one of the equivalent conditions of \autoref{lem:fin-op-equiv}.
    \item\label{def:res-alg:der-alg} A \emph{
          derivation ACDL} is a pure residuation ACDL that satisfies the equivalent conditions of \autoref{lem:op-props}. We denote the respective full subcategories by
         \[ \deracdl \hookrightarrow \resacdl  \qquad\text{and}\qquad  \relderacdl \hookrightarrow \relresacdl.  \]
    \item A  \(\Vm\)-coalgebra
      \(\bar{\gamma} \colon \Vm C \rightarrow \Vm C \msorc
      \Vm C\) is a \emph{\(\Vm\)-comonoid} if it is
          coassociative and prime-counital, and a
          \emph{comonoid} if \(\bar{\gamma}\) is pure. We analogously define \emph{\(\Um\)-comonoids} and (pure)  \emph{comonoids} in \dl.
  \end{enumerate}
\end{defi}

In order to extend the correspondence of residuation ACDLs and
coalgebras to a categorical equivalence we introduce appropriate morphisms for coalgebras, which we also define for the general case of \(\Um\)-coalgebras.

\begin{defi}\label{def:fin-res-mor}\hfill
  \begin{enumerate}
  \item\label{def:fin-res-mor:pure} A \emph{pure morphism} from a
    prime-counital \(\Vm\)-coalgebra \((C, \bar{\gamma},\epsilon)\) to
    \((C', \bar{\gamma}',\epsilon')\) is a morphism
    \(f \in \acdl(C, D)\) satisfying
    \((f \msorc f) \cdot \bar{\gamma} = \bar{\gamma}' \cdot f\) and
    \(\epsilon = \epsilon' \cdot f\).
    \[
      \begin{tikzcd}[column sep=40]
        \Vm C
        \rar{\Vm f}
        \dar{\bar{\gamma}}
        &
        \Vm C'
        \dar{\bar{\gamma}'}
        \\
        \Vm C \msorc \Vm C
        \rar{\Vm\! f \,\msorc\, \Vm\! f}
        % \urar[phantom]{\rotatebox{45}{\(\le\)}}
        &
        \Vm C' \msorc \Vm C'
      \end{tikzcd}
      \qquad\qquad
      \begin{tikzcd}
        \Vm C
        \rar{\Vm f}
        \drar[swap]{\Vm \epsilon}
        &
        \Vm C'
        \dar{\Vm \epsilon'}
        \\
        &
        \Vm 2
      \end{tikzcd}
    \]
    The category of prime-counital  \(\Vm\)-coalgebras with pure morphisms
    is denoted by \(\coalg(\Vm)\) and its full
    subcategory  of \(\Vm\)-comonoids by \(\comon(\Vm)\), again with the full subcategory \(\comon\) of comonoids.

  \item\label{def:fin-res-mor:lax} Let \(C\) and \(C'\) be comonoids in \acdl.  A \emph{corelational
      morphism} from \(C\) to \(C'\) is a morphism
    \(\rho \in \cslj(C, C')\) satisfying
    \(\rho(1) = 1\), \((\rho \jsorc \rho) \cdot \gamma \le \gamma' \cdot \rho\)
    and \(\epsilon \le \epsilon' \cdot \rho\),
    that is, the following diagrams in $\cslj$ commute laxly as indicated:
    \[
      \begin{tikzcd}
        \Vj C
        \rar{\rho}
        \dar{\Vj\gamma}
        &
        \Vj C'
        \dar{\Vj \gamma'}
        &
        &
        \Vj C
        \rar{\rho}
        \drar[swap]{\Vj\epsilon}
        &
        \Vj C'
        \dar{\Vj\epsilon'}
        \\
        \Vj C \jsorc \Vj C
        \rar{\rho \jsorc \rho}
        \urar[phantom]{\rotatebox{45}{\(\le\)}}
        &
        \Vj C' \jsorc \Vj C'
        &
        &
        \phantom{X}
        \urar[phantom, near end]{\rotatebox{45}{\(\le\)}}
        &
        \Vj 2
      \end{tikzcd}
    \]
    Comonoids with corelational morphisms form a category \(\relcomon\).
   
    \item Analogously, we define the category \(\coalg(\Um)\) with its subcategories
      \(\comon(\Um)\) and \(\comon\) of (pure) \(\Um\)-comonoids. We also denote by
      \(\relcomon\) the category of comonoids with corelational morphisms. So we overload
      notation; whether we mean comonoids in \(\msl\) or \(\cslm\) will be clear from context.
  \end{enumerate}
\end{defi}

Recall from \autoref{lem:loc-fin-mor-open} that every morphism of residuation ACDLs is open.

\begin{prop}\label{prop:fin-mor}
  Let \(R\) and \(R'\) be prime-unital residuation ACDLs.
  \begin{enumerate}
    \item A  morphism \(f \in \acdl(R, R')\)  is a pure coalgebra morphism iff it is a residuation morphism.
    \item If \(R\) and \( R'\) are comonoids, then a morphism \(\rho \in \cslj(R, R')\) preserving the top element is a corelational comonoid morphism iff it is a  corelational residuation morphism.
  \end{enumerate}
\end{prop}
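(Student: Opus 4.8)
The plan is to prove both equivalences by translating the (co)algebraic conditions into the residual conditions through the identity $x \lres z = x \moni \gamma(z)$ of \autoref{lem:gamma-res}, exploiting the preservation laws for tensor implications in \iref{prop:mon-impl}{preserve}. Throughout I use that every $\acdl$-morphism preserves all meets and hence has a left adjoint (\autoref{rem:adjunctions-lattices}), so by \autoref{lem:loc-fin-mor-open} every residuation morphism of residuation ACDLs is automatically open and may be tested via \eqref{eq:open}. I also use that the prime counit is the characteristic map of the prime unit, i.e.\ $\epsilon = \chi_{e}$ and $\epsilon' = \chi_{e'}$, since by \autoref{lem:op-props} the unit is the left adjoint of the counit and, for join-prime $e$, the map $\chi_{e}$ has left adjoint $e$ (\autoref{rem:adjunctions-lattices}).

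For part (1), I would first record the counit translation: $\epsilon = \epsilon'\cdot f$ unfolds to $e \le x \Leftrightarrow e' \le f(x)$, which is exactly \eqref{eq:unit} and, by the adjunction $f^{*}\dashv f$, equivalent to $f^{*}(e') = e$, the unit part of \eqref{eq:open}. For the comultiplication, naturality of $\omega$ with respect to the $\acdl$-morphism $f$ (both $\jsorc$ and $\msorc$ present the coproduct) turns the pure coalgebra law $(f \msorc f)\cdot\bar\gamma = \bar\gamma'\cdot f$ into $(f \jsorc f)\cdot\gamma = \gamma'\cdot f$. I then apply the equality $y \moni \omega^{-1}(f\msorc f)\omega(T) = f(f^{*}(y)\moni T)$ of \iref{prop:mon-impl}{preserve} (with $l=f^{*}$, $r=f$) at $T=\gamma(z)$; since naturality of $\omega$ collapses $\omega^{-1}(f\msorc f)\omega$ to $f\jsorc f$, this reads
\[
 y \moni (f\jsorc f)\gamma(z) = f(f^{*}(y) \lres z).
\]
Comparing with $y \lres f(z) = y \moni \gamma'(f(z))$, the coalgebra law yields the residual equation of \eqref{eq:open}. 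Conversely, given \eqref{eq:open} the displayed identity forces $y \moni \gamma'(f(z)) = y \moni (f\jsorc f)\gamma(z)$ for all $y$, and since an element of $R'\jsorc R'$ is determined by its values under $p\moni(-)$ for join-prime $p$ (the join-primes of the tensor are the pure tensors of join-primes), this gives $(f\jsorc f)\gamma = \gamma' f$. The right residual is handled identically with $\imon$. Combining the two translations with \autoref{lem:loc-fin-mor-open} yields (1).

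For part (2), $\rho$ preserves joins and the top, hence is a left adjoint, so I would use the inequality $\rho(x \moni T) \le \rho(x)\moni(\rho\jsorc\rho)(T)$ of \iref{prop:mon-impl}{preserve} at $T=\gamma(z)$. Together with monotonicity of $\rho(x)\moni(-)$, the comonoid inequality $(\rho\jsorc\rho)\gamma \le \gamma'\rho$ then gives $\rho(x\lres z) \le \rho(x)\lres\rho(z)$, i.e.\ the forth-condition \eqref{eq:forth} of a corelational residuation morphism (the $\rres$-analogue follows the same way via $\imon$). For the converse I expand $\gamma(z) = \bigvee_{x} x \jsor (x\lres z)$ (\autoref{lem:gamma-res}), apply the join-preserving map $\rho\jsorc\rho$, and bound each summand using $\rho(x\lres z)\le \rho(x)\moni\gamma'(\rho(z))$ followed by the counit $\rho(x)\jsor(\rho(x)\moni T)\le T$ of the adjunction $\rho(x)\jsor(-)\dashv\rho(x)\moni(-)$ from \iref{prop:mon-impl}{def}; joining over $x$ gives $(\rho\jsorc\rho)\gamma(z)\le\gamma'(\rho(z))$. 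Finally, $\epsilon=\chi_e$ and $\epsilon'=\chi_{e'}$ make the lax counit law $\epsilon \le \epsilon'\cdot\rho$ equivalent, by monotonicity of $\rho$ and evaluation at $x=e$, to $e'\le\rho(e)$, matching the unit condition of a corelational residuation morphism.

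The main obstacle is the bookkeeping around the isomorphism $\omega\colon \Vj R\jsorc \Vj R\cong \Vm R\msorc \Vm R$: the coalgebra conditions are phrased for $\bar\gamma$ in the $\msorc$-presentation, whereas the residuals live in the $\jsorc$-presentation via $\moni$. The crux of (1) is therefore to verify that $\omega^{-1}(f\msorc f)\omega$ reduces to $f\jsorc f$ by naturality, and that the relevant instance of \iref{prop:mon-impl}{preserve} is a genuine \emph{equality} (rather than only the inequality) because $f^{*}\dashv f$. Once this identification is secured, both directions of each part are a direct comparison of the two sides of $x\lres z = x\moni\gamma(z)$, with the join-prime density argument supplying the only nontrivial cancellation.
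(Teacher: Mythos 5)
Your proposal is correct, and for most of the statement it follows the paper's own strategy: the same pivot identities $x \lres z = x \moni \gamma(z)$ (\autoref{lem:gamma-res}), the equality and inequalities of \iref{prop:mon-impl}{preserve}, the reduction of residuation morphisms to \eqref{eq:open} via \autoref{lem:loc-fin-mor-open}, and the treatment of (co)units by uniqueness of adjoints; your part (2) and the forward direction of part (1) are essentially the paper's proofs. The one genuine divergence is the converse of part (1). The paper expands $\gamma' f(z)$ and $(f\jsorc f)\gamma(z)$ using the formula $\gamma(z)=\bigvee_{x} x \jsor (x\lres z)$ and sandwiches them between two inequalities obtained from the unit $\id \le f\cdot f^{*}$, the counit $f^{*}\cdot f\le \id$, and \eqref{eq:open}, then transports the resulting equality along $\omega$. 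You instead get the equality in one stroke: \eqref{eq:open} plus the tensor-implication identity give $y \moni (f\jsorc f)\gamma(z) = y\moni \gamma'(f(z))$ for all $y$, and you conclude by join-prime density, using that the completely join-primes of $R'\jsorc R'$ are the pure tensors $p\jsor q$ of join-primes and that $p\jsor q \le T \iff q \le p\moni T$. Both arguments are sound; yours replaces an explicit order computation by a density principle, at the price of having to justify the naturality collapse $\omega^{-1}(f\msorc f)\omega = f\jsorc f$ — which you correctly ground in the coproduct presentation (equivalently, the last clause of \iref{lem:tensor}{adj}), and which, it is worth noting, the paper also uses, silently, in the step $\omega(f\jsorc f)\gamma = (f\msorc f)\omega\gamma$ of its own converse.
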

\pagebreak
\begin{proof}\hfill
  \begin{enumerate}
    \item First, let \(f\) be a pure coalgebra morphism.
          Then
          \begin{align*}
            x \lres f(z) &= x \moni \gamma' f (z) && \text{\autoref{lem:gamma-res}\ref{lem:gamma-res:1}} \\
                         &= x \moni \omega^{-1}\bar{\gamma}' f (z) && \bar{\gamma}' = \omega \cdot \gamma' \\
                         &= x \moni \omega^{-1} (f \msorc f) \omega
                         \gamma (z) && f \text{ is a coalgebra morphism}\\
                         &= f(f^{*}(x) \moni \gamma(z)) && \text{\autoref{prop:mon-impl}\ref{prop:mon-impl:preserve}}\\
            &= f(f^{*}(x) \lres z) && \text{\autoref{lem:gamma-res}},
          \end{align*}
          which by \autoref{lem:loc-fin-mor-open} shows that \(f\) is
          a  residuation morphism.
          
          Conversely, if \(f\) is a  residuation morphism, then for every \(z \in R\) we compute
          \begin{align*}
            \gamma' f (z)       &= \V_{x' \in R'} x' \jsor x' \lres f(z) && \text{\autoref{lem:gamma-res}\ref{lem:gamma-res:2}} \\
                           &= \V_{x' \in R'} x' \jsor f(f^{*}(x') \lres z) && \text{\(f\) residuation morphism} \\
                           &\le \V_{x' \in R'} f f^{*}(x') \jsor f(f^{*}(x') \lres z) && \id \le f \cdot f^{*} \\
                           &= (f \jsorc f)(\V_{x' \in R'} f^{*}(x') \jsor f^{*}(x') \lres z) && f \jsor f \text{ preserves joins} \\
                           &\le (f \jsorc f) (\V_{x \in R} x \jsor x \lres z) && f^{*}[R'] \subseteq R \\
                           &= (f \jsorc f)\gamma (z) && \text{\autoref{lem:gamma-res}\ref{lem:gamma-res:2}}. \\
          \end{align*}
          Now (order-isomorphic) postcomposition with \(\omega\) gives
          \[ \bar{\gamma}' f  = \omega \gamma' f \le \omega (f \jsorc f) \gamma = (f \msorc f) \omega \gamma = (f \msorc f) \bar{\gamma}.\]
          Conversely,                   
          \begin{align*}
            (f \jsorc f) \gamma (z) &= \V_{x \in R} f(x) \jsor f(x \lres z) && \text{\autoref{lem:gamma-res}\ref{lem:gamma-res:2}} \\
                              &\le \V_{x \in R}  f(x) \jsor f(f^{*}f(x) \lres z) && \text{$f^*f\leq\id$ and contravariance} \\
                              &= \V_{x \in R} f(x) \jsor f(x) \lres f(z) && \text{\(f\) is open and \autoref{lem:loc-fin-mor-open}} \\
                              &\le \V_{x' \in R'} x' \jsor x' \lres f(z) && \text{$f[R] \subseteq R'$} \\
                              &= \gamma'(f(z))&& \text{\autoref{lem:gamma-res}\ref{lem:gamma-res:2}}. \\
          \end{align*}
          Postcomposition with \(\omega\) again yields \(\bar{\gamma}' f \ge (f \msorc f) \bar{\gamma}\).
          Hence, we obtain \(\bar{\gamma}' f = (f \msorc f) \bar{\gamma}\), so $f$ is a pure coalgebra morphism. Moreover,
          it is clear that the counit condition from \iref{def:fin-res-mor}{pure} is equivalent to the unit conditions from \eqref{eq:open}, since (1) \(e \dashv \epsilon\) and \(f^{*}(e) \dashv \epsilon' \cdot f \), and (2) adjoints are unique, so either equations holds iff the other one does.%
         
        \item If \(\rho \colon R \rightarrow R'\) is a corelational morphism of pure coalgebras, then
          \begin{align*}
            \rho(x \lres z) &= \rho(x \moni \gamma(z)) && \text{\iref{lem:gamma-res}{1}} \\
                         &\le \rho(x) \moni (\rho \jsorc \rho)(\gamma(z)) && \text{\iref{prop:mon-impl}{preserve}} \\
                         &\le \rho(x) \moni \gamma'(\rho(z)) && \text{\(\rho\) corelational morphism} \\
            &= \rho(x) \lres \rho(z) && \text{\iref{lem:gamma-res}{1}}.
          \end{align*}
          Conversely, if $\rho\colon R\to R'$ is a corelational morphism of residuation algebras, then using \iref{lem:gamma-res}{2} twice, and that $\rho[R] \subseteq R'$, we obtain
          \[(\rho \jsorc \rho)\gamma(z) = \V_{x \in R} \rho(x) \jsor \rho(x \lres z) \le \V_{x \in R}\rho(x) \jsor \rho(x) \lres \rho(z) \le \gamma'(\rho(z)).\]
          For the respective counits we identify the neutral element \(e \in R\) with the \jsl-morphism \(e \colon 2 \rightarrow R\) to compute
          \begin{align*}
            && \epsilon &\le \epsilon' \cdot \rho  && \\
            \LR && \forall x \colon \epsilon(x) &\le \epsilon'(\rho(x)) && \\
            \LR && \forall x, y \colon y \le \epsilon(x) & \Rightarrow y \le \epsilon'(\rho(x)) && \\
            \LR && \forall x, y \colon e(y)\le x & \Rightarrow e'(y) \le \rho(x) && e \dashv \epsilon, e' \dashv \epsilon' \\
            \LR && \forall x \colon e \le x & \Rightarrow e' \le \rho(x) && y \in \{0, 1\} \text{ and } e(0) = e'(0) = 0 \\
            \LR && e' \le \rho(e),
          \end{align*}
          where in the last step we set \(x = e\) for the downward implication, and the upward direction is simply monotonicity of \(\rho\). \qedhere
  \end{enumerate}
\end{proof}

\begin{thm}\label{thm:fin-iso}
  The following categories are isomorphic
  \[\coalg(\Vm) \cong \resacdl, \quad \comon \cong \deracdl, \quad \relcomon \cong \relderacdl. \]
\end{thm}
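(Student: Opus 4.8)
The plan is to exhibit all three isomorphisms through a single assignment: the identity on underlying ACDLs (and on underlying $\acdl$- or $\cslj$-morphisms), under which the residuation structure and the $\Vm$-coalgebra structure are translated into one another by \autoref{con:op-translation}. Concretely, I would define a functor $\Psi\colon \coalg(\Vm)\to\resacdl$ sending a prime-counital $\Vm$-coalgebra $(C,\bar\gamma,\epsilon)$ to the ACDL $C$ equipped with the residuals $x\lres z = x\moni\gamma(z)$ and $z\rres y=\gamma(z)\imon y$ of \iref{lem:gamma-res}{1} (where $\gamma=\Vm\omega^{-1}\cdot\bar\gamma$), and acting as the identity on morphisms. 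The inverse assignment is supplied by \autoref{con:op-translation}, which builds $\mu$ and hence $\gamma,\bar\gamma$ from the residuals; the four displayed equivalences following that construction show that these assignments are mutually inverse on the level of structure, so $\Psi$ is a bijection on objects. Since $\Psi$ alters neither the carriers nor the underlying morphisms, preservation of identities and composition is immediate, and it remains only to match the object conditions and the morphism conditions on each side.

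For the first isomorphism I would argue as follows. On objects, the residuals produced by $\Psi$ are complete $\msl$-morphisms satisfying the residuation property, as recorded in \iref{lem:gamma-res}{1} and \autoref{prop:mon-impl}; and the prime counit $\epsilon$ corresponds, by the unit/counit part of the argument in \autoref{lem:op-props} (its left adjoint is the unit $e$ of the residuals, with primality transferring across the adjunction), precisely to a join-prime unit of the residuation ACDL. Crucially, this last correspondence does \emph{not} invoke (co)associativity, so it applies to all of $\resacdl$ and $\coalg(\Vm)$, neither of which imposes it. On morphisms, the first part of \autoref{prop:fin-mor} states exactly that a map $f\in\acdl(R,R')$ is a pure $\Vm$-coalgebra morphism iff it is a residuation morphism, so $\Psi$ restricts to a bijection on each hom-set. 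Hence $\Psi$ is an isomorphism $\coalg(\Vm)\cong\resacdl$.

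The remaining two isomorphisms are restrictions of $\Psi$ to full subcategories, so I would only verify that the objects match up. A residuation ACDL is \emph{pure} iff its comultiplication is pure (\autoref{lem:fin-op-equiv}, using that $\omega$ is an isomorphism, so $\bar\gamma$ is pure iff $\gamma$ is), and it has associative residuals with a prime unit iff $\bar\gamma$ is coassociative and prime-counital (\autoref{lem:op-props}). Combining these, the derivation ACDLs of \autoref{def:res-alg} are exactly the objects whose image under $\Psi$ is a comonoid, so $\Psi$ restricts to a bijection between $\comon$ and $\deracdl$; together with the first part of \autoref{prop:fin-mor} for the (pure) morphisms this yields $\comon\cong\deracdl$. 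For the relational version the objects are again the comonoids, respectively the derivation ACDLs, while the morphisms are governed by the second part of \autoref{prop:fin-mor}, which says that for comonoid carriers a top-preserving map in $\cslj$ is a corelational comonoid morphism iff it is a corelational residuation morphism; this gives $\relcomon\cong\relderacdl$.

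The substantive work is concentrated in the lemmas already established, so the theorem is essentially an assembly of them. The one point requiring care -- and the place I expect the main (if modest) obstacle -- is keeping the bookkeeping honest: one must isolate the purely \emph{unital} correspondence from the associativity statements in \autoref{lem:op-props} for the first isomorphism (where associativity is absent), and track which of the two morphism notions (pure versus corelational) is in play for each category, so that the single identity-on-carriers functor $\Psi$ simultaneously witnesses all three equivalences by restriction.
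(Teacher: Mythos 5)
Your proposal is correct and matches the paper's own proof, which likewise defines the isomorphisms as identity-on-carriers (and on morphisms) functors that swap residuals and comultiplication via \autoref{con:op-translation}, and then cites \autoref{lem:fin-op-equiv}, \autoref{lem:op-props} and \autoref{prop:fin-mor} for well-definedness. Your additional care in isolating the unit/counit adjunction argument of \autoref{lem:op-props} from its associativity part (needed because $\coalg(\Vm)\cong\resacdl$ imposes no associativity) is a point the paper glosses over but handles implicitly, since that part of the lemma's proof is independent of coassociativity.
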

\begin{proof}
  The three isomorphisms are given on objects
 by swapping between residuals and
  comultiplication (note that the residual unit is a left adjoint of
  the counit of the comultiplication), and act as identity on morphisms.
  \newcommand{\lemmaautorefname}{Lemmas}%
  \autoref{lem:fin-op-equiv} and \ref{lem:op-props}
  \renewcommand{\lemmaautorefname}{Lemma}%
and \autoref{prop:fin-mor} show that they are well-defined.
\end{proof}

From \autoref{thm:fin-iso} we obtain a dual characterization of ordered monoids; it restricts to a duality between ordinary monoids and derivation CABAs.
Recall that a \emph{relational
  morphism} of ordered monoids $M$ and $N$
  is a total order-relation \(\rho \colon M \rightarrow \mathcal{D}^+ N\) (where \(\mathcal{D}\) is the downset monad) making the
  following diagrams commute laxly:
  \begin{equation}
    \label{eq:rel-mon-mor}
    \begin{tikzcd}
      M \times M
      \ar{rr}{\cdot_{M}}
      \dar{\rho \times \rho}
      &
      &
      M
      \dar{\rho}
      &&
      1
      \rar{1_{M}}
      \dar{1_{N}}
      &
      M
      \dar{\rho}
      \\
      \mathcal{D} N \times \mathcal{D} N
      \rar{\hat{\delta}}
      \ar[phantom]{urr}{\rotatebox{45}{\(\le\)}}
      &
      \mathcal{D}(N \times N)
      \rar{\mathcal{D}(\cdot_N)}
      &
      \mathcal{D} N
      &&
      \urar[phantom]{\rotatebox{45}{\(\le\)}}
      N
      \rar{\eta}
      &
      \mathcal{D} N
    \end{tikzcd}
  \end{equation}

\begin{thm}\label{thm:fin-dual}\hfill
  \begin{enumerate}
  \item\label{thm:fin-dual:1} The category of ordered monoids is dually equivalent to the
    category of derivation ACDLs (or
    ACDL-comonoids):
    \[\ordmon \;\simeqop\; \comon \;\cong\; \deracdl. \]

  \item\label{thm:fin-dual:2} The category of ordered monoids and relational morphisms is
    dually equivalent to the category of derivation
    ACDLs (or comonoids) and corelational
    morphisms:
    \[
      \relordmon \;\simeqop\; \relcomon \;\cong\;
      \relderacdl.
    \]
  \end{enumerate}
\end{thm}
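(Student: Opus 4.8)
The two isomorphisms $\comon \cong \deracdl$ and $\relcomon \cong \relderacdl$ are precisely the content of \autoref{thm:fin-iso}, so the real work lies in the dual equivalences $\ordmon \simeqop \comon$ and $\relordmon \simeqop \relcomon$. The plan is to obtain both by instantiating the abstract extended duality (\autoref{thm:extended-duality}) in the discrete setting \eqref{eq:asm-cat}, where $\cat C = \acdl$, $U = \Vm$, the dual monad is the downset monad $T = \mathcal{D}$, and the Kleisli category $\hatC_T$ is $\ordrel$. Taking $k = 1$ and $n = 2$ gives
\[ \op_{\Vm}^{1,2}(\acdl) \;\simeqop\; \op_{J_{\mathcal{D}}}^{2,1}(\pos), \]
whose left-hand objects are ACDLs with a comultiplication $\bar{\gamma}$ and whose right-hand objects are posets $X$ equipped with an order-relation $\mu \colon X \times X \to \mathcal{D} X$, a relational binary operation. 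I would then restrict both sides to the relevant structured objects and morphisms.

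On objects, a pure comonoid $(C,\bar{\gamma},\epsilon)$ is a \emph{pure} $(1,2)$-ary $\Vm$-operator, so by preservation of purity (\iref{rem:defect}{pure}) its dual $\mu$ is a pure Kleisli morphism, i.e.\ the graph of a genuine monotone map $X \times X \to X$; thus $X$ carries an honest binary operation rather than a mere relation. Since the coassociativity and prime-counit diagrams for $\bar{\gamma}$ are assembled from the comultiplication, identities, and the monoidal structure $\hat{\delta}$, \autoref{prop:composite-duality} translates them directly into the associativity and unit diagrams for $\mu$. The only delicate point is the counit: the left adjoint of the prime counit $\epsilon \in \acdl(C,2)$ is the unit element, and under $\acdl \simeqop \pos$ a \emph{prime} morphism $C \to 2$ dualizes to a genuine point $1 \to X$, which is exactly the monoid unit. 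Hence pure comonoids correspond, up to isomorphism, to ordered monoids.

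For part~(1) the morphisms are handled by the same mechanism: a pure coalgebra morphism $f \in \acdl(C,C')$ is a pure $\Vm$-operator morphism, so again by \iref{rem:defect}{pure} its dual is a genuine monotone map, the commuting operator-morphism square dualizes to the homomorphism square for $\mu$, and the condition $\epsilon = \epsilon' \cdot f$ dualizes to preservation of the unit point. Restricting the abstract equivalence accordingly yields $\ordmon \simeqop \comon$, and composing with \autoref{thm:fin-iso} proves part~(1).

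For part~(2) I would keep the objects pure but relax the morphisms, exploiting that the whole framework is order-enriched (\autoref{sec:deriv-concr-form}), so that lax-commuting diagrams dualize to lax-commuting diagrams. A relational morphism of ordered monoids is exactly a \emph{total} Kleisli morphism $\rho \colon M \to \mathcal{D}^+ N$ satisfying the lax diagrams \eqref{eq:rel-mon-mor}, while a corelational comonoid morphism is a $\cslj$-morphism subject to the lax conditions of \iref{def:fin-res-mor}{lax}. Using the heterogeneous, order-enriched form of the extended duality (\autoref{rem:heterogeneous-duality}), the lax comultiplication inequality $(\rho \jsorc \rho)\cdot\gamma \le \gamma' \cdot \rho$ and the lax counit inequality $\epsilon \le \epsilon' \cdot \rho$ translate into the lax multiplication and unit inequalities of \eqref{eq:rel-mon-mor}, and totality of the dual relation corresponds, by the reasoning behind \autoref{cor:dual-total}, to preservation of the top element, i.e.\ $\rho(1) = 1$. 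The hardest and most error-prone step is precisely this morphism correspondence: one must track the adjoint interplay between the join-preserving $\cslj$-morphisms $C \to C'$ used for corelational morphisms and the meet-preserving $\cslm$-morphisms $C' \to C$ actually produced by the extended-duality isomorphism $\cat D(UA,UB) \cong \hatC(\hat B, T\hat A)$, passing between a corelational morphism and its right adjoint, and then verify that the three lax conditions on each side match up under reversal of order. Once this bookkeeping is settled, restricting the order-enriched equivalence to these morphisms gives $\relordmon \simeqop \relcomon$, and \autoref{thm:fin-iso} completes part~(2).
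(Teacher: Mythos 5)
Your proposal follows essentially the same route as the paper: its proof likewise just combines \autoref{thm:fin-iso} with the extended duality instantiated at the setting \eqref{eq:asm-cat}, asserting that ordered monoids \(M \times M \to M\) dualize to comonoids \(C \to C + C\) in \acdl and that relational monoid morphisms \(M \to \mathcal{D}^{+} N\) dualize to corelational comonoid morphisms. Your write-up is in fact more detailed than the paper's four-line argument — the purity preservation, the prime-counit/unit-point correspondence, and the \cslj/\cslm transposition bookkeeping you flag in part~(2) are exactly the points the paper leaves implicit.
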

\begin{proof}
  Both statements follow by extending the equivalences from \autoref{thm:fin-iso}
  with the extended duality applied to the setting \autoref{eq:asm-cat} established in \autoref{sec:adjunct-distr-latt}: For \autoref{thm:fin-dual:1} we get that  ordered monoids \(M \times M \rightarrow M\) are dual to comonoids \(C \rightarrow C + C\) in \acdl, and for \autoref{thm:fin-dual:2}, this means that relational monoid morphisms \(M \rightarrow \mathcal{D}^{+} N\) as in \ref{eq:rel-mon-mor} dualize precisely to corelational morphisms \(\Vj N \rightarrow \Vj M\) of comonoids.
\end{proof}

\begin{rem}\label{rem:fin-dual}
  Both \autoref{thm:fin-iso} and \autoref{thm:fin-dual} restrict to finite carriers.
  All finite lattices are complete and \(\res_{\mathrm{f}} = \resacdl_{\mathrm{f}}\), so, writing \(\der_{\mathrm{f}}\) for \(\deracdl_{\mathrm{f}}\), we get the equivalences
  \[\ordmon_{\mathrm{f}} \simeqop \comon_{\mathrm{f}} \cong \der_{\mathrm{f}} \quad \text{ and } \quad \relordmon_{\mathrm{f}} \simeqop \relcomon_{\mathrm{f}} \cong \relder_{\mathrm{f}}. \]
\end{rem}

\subsection{Locally Finite Residuation Algebras}
\label{sec:locally-finite-resid}

We now extend the correspondence between residuation algebras and coalgebras from \autoref{sec:finite-resid-algebr} from complete  to non-complete carriers.
The main challenge arises from the reliance on adjoints for the constructions in \autoref{sec:finite-resid-algebr}, whose existence is of course not ensured for arbitrary distributive lattices as carriers.
We tackle this problem by considering \emph{locally finite} structures, allowing us to extend the comultiplication~\autoref{con:op-translation} from finite subalgebras to the whole lattice.

We start with the motivating example from automata theory:
the residuals of regular languages from Example~\ref{ex:res-alg}.\ref{ex:res-alg:reg}.

\begin{exa}\label{ex:reg-coalg}
  It is well known that the Boolean algebra \(\reg \Sigma\) of regular
  languages dualizes under Stone duality to the Stone space $\overline{\Sigma^*}$ of \emph{profinite words}\footnote{This space is commonly denoted $\widehat{\Sigma^*}$ in the literature; we use the notation $\overline{\Sigma^{*}}$ to avoid a clash with notation $\widehat{(-)}$ for the dual equivalence.} (see e.g.~Pippenger~\cite{pippenger-97}). The space $\overline{\Sigma^*}$ can be constructed as the limit in the category of Stone spaces of the diagram of all finite quotient monoids of $\Sigma^*$, regarded as discrete spaces. It is equipped with a continuous monoid structure
  \(\mu \colon \pfm \times \pfm \rightarrow \pfm\)
  extending the concatenation of words.
  We calculate below that its dual comultiplication
  on regular languages under Stone duality is given as follows:
  % This was already observed Rhodes and Steinberg~\cite[Section
  % 8.4]{rhodes-09}, who equivalently used \emph{Boolean rings}, which
  % are isomorphic to Boolean algebras.
  \begin{equation}\label{eq:comul-reg}
    \begin{aligned}
      \gamma \colon \reg \Sigma \quad &\rightarrow \quad \reg \Sigma + \reg \Sigma \\
      L \quad \quad &\mapsto \quad \V_{[v] \in \mathrm{Syn}_{L}} [v] \jsor [v] \lres
      L,
    \end{aligned}
  \end{equation}
  where \(\mathrm{Syn}_{L}\) is the \emph{syntactic monoid} of
  $L$; its elements are the  equivalence classes of the \emph{syntactic congruence relation} relation on $\Sigma^*$ defined by
  \[ [v] = [w] \qquad \text{iff} \qquad \forall K,K'\subseteq\Sigma^*\colon  v \in K \lres L \rres K' \iff w \in K \lres L \rres K'. \]
  The Stone monoid \(\overline{\Sigma^{*}}\) is \emph{profinite}, that is, it is the (cofiltered) limit of the diagram of its finite continuous monoid quotients.
  Therefore, by duality, \(\reg \Sigma\) is the filtered colimit of its finite sub-coalgebras \(\pow M \cong \hatM \hookrightarrow \widehat{\overline{\Sigma^{*}}} \cong \reg \Sigma\) dual to finite monoid quotients \(\overline{\Sigma^{*}} \epi M\).
  This means that, given a regular language \(L\), we can compute the
  value of \(\gamma(L)\) using $\mathrm{Syn}(L)$ as in the following diagram:
    \[
      \begin{tikzcd}
        \reg \Sigma \rar[description]{\cong}
        \ar[shiftarr={yshift=20pt}]{rrrr}{\gamma}
        &
        \cl(\overline{\Sigma^{*}})
        \ar{r}{\mu^{-1}}
        &
        \cl(\overline{\Sigma^{*}} \times \overline{\Sigma^{*}})
        \rar[description]{\cong}
        &
        \cl(\overline{\Sigma^{*}}) \!+\! \cl(\overline{\Sigma^{*}})
        \rar[description]{\cong}
        &
        \reg \Sigma\! +\! \reg \Sigma
        \\
        &
        \pow \mathrm{Syn}_L
        \uar[hook]{h^{-1}}
        \rar{\mu^{-1}}
        &
        \pow(\mathrm{Syn}_L \times \mathrm{Syn}_L)
        \rar[description]{\cong}
        \uar[hook]{(h \times h)^{-1}}
        &
        \pow \mathrm{Syn}_L \!+\! \pow \mathrm{Syn}_L
        \uar[hook]{h^{-1} + h^{-1}}
        &
      \end{tikzcd}
    \]
  We denote by \([w]\) the syntactic congruence class of \(w\) with respect to \(L\) to compute
\allowdisplaybreaks
  \begin{align*}
    \mu^{-1}(L) &= \mu^{-1}(h^{-1}( h[L] )) && \text{ \(L\) recognized by \(M\)} \\
              &= (h \times h)^{-1}(\mu^{-1}(h[L])) && \text{
                \(h^{-1}\) is a coalgebra homomorphism} \\
              &= (h \times h)^{-1}(\{(m, n) \mid mn \in h[L]\}) &&  \\
              &= (h \times h)^{-1}(\bigcup_{mn \in h[L]} \{(m, n)\} ) &&  \\
              &\mapsto (h^{-1} + h^{-1})(\V_{mn \in h[L]}\{m\} \otimes \{n\}) &&  \pow (M \times M) \cong \pow M + \pow M \\
              &= \V_{mn \in h[L]}h^{-1}[\{m\}] \otimes h^{-1}[\{n\}] &&\\
              &= \V_{h(v)h(w) \in h[L]}h^{-1}h[ v ] \otimes h^{-1}h[ w ] && \text{ \(h\) surjective\ }\\
              &= \V_{vw \in L}[v] \otimes [w] && \text{ syntactic equivalence classes }\\
              &= \V_{v \in \Sigma^*}\V_{w \in v^{-1} L}[v] \otimes [w] && \text{ definition of \(v^{-1}L\)}\\
              &= \V_{v \in \Sigma^*}[v] \otimes (\V_{w \in v^{-1}L} [w]) && (\star)\\
              &= \V_{v \in \Sigma^*}[v] \otimes v^{-1}L && (\diamond) \\
              &= \V_{[v] \in \mathrm{Syn}_{L}}[v] \otimes [v] \lres L. && (\#) \\
  \end{align*}
  Step \((\star)\) uses that \(L\) is regular and so the join
  \(\bigvee_{w \in v^{-1}L} [w]\) is finite; at step \((\diamond)\) we insert the equality \(v^{-1}L = \bigcup_{w \in v^{-1}L}[w]\), which holds by the definition of syntactic congruence; and at step \((\#)\) we use the definition of residuals in \(\reg \Sigma\) as derivatives, and that $L$ is regular.
\end{exa}

The key to extending the duality of \autoref{rem:fin-dual} is the notion of \emph{residuation ideal} of a residuation algebra. It was introduced by Gehrke~\cite{gehrke-16} to
characterize quotients of Priestley topological algebras.
In particular, she has shown~\cite[Thm.~15]{gehrke-16-proc} that
  the Stone monoid quotient \(\mathrm{Syn}_{L}\) of \(\pfm\) dualizes to the {residuation ideal}
  generated by \(L \in \reg \Sigma\)

\begin{defi}
  A \emph{residuation ideal} of a residuation algebra \(R\) is a
  sublattice \(I \incl R\) closed under derivatives w.r.t.\ arbitrary
  elements of $R$:%
  \[
    \forall z \in R, x \in I \colon x \lres z \in I \text{ and } z
    \rres x \in I.
  \]
  In particular, every residuation ideal is a residuation subalgebra
  of $R$.
  We denote the residuation ideal generated by a subset \(X \subseteq R\) by \(\genres X\).
\end{defi}

\autoref{ex:reg-coalg} suggests a path to extending our constructions of \autoref{sec:finite-resid-algebr} from complete to more general residuation algebras: extend them \emph{locally}, that is, by considering suitable finite substructures.
Note that in
the formula~\eqref{eq:comul-reg} for the comultiplication on regular languages it is
crucial that the residuation ideal \(\genres{\{L\}}\) generated
by a single regular language \(L\) is \emph{finite},
so that the join \((\#)\) is defined.

\begin{defi}\label{D:locally-fin}\hfill
  \begin{enumerate}
  \item A residuation algebra \(R\) is \emph{locally finite} if every
    finite subset of \(R\) is contained in a finite residuation ideal
    of \(R\).
    
  \item A \(\Um\)-coalgebra \(C\) is \emph{locally finite} if
    every finite subset of \(C\) is contained in a finite subcoalgebra
          of \(C\).
          The category of locally finite comonoids is denoted \(\comon_{\mathrm{lf}}\).
  \end{enumerate}
\end{defi}

Note that not every residuation algebra is locally finite, consider
for example an infinite Boolean algebra in Example~\ref{ex:res-alg}.\ref{ex:res-alg:neg}.
% \newcommand{\examplesautorefname}{Example}%
% \iref{ex:res-alg}{neg}.
% \renewcommand{\examplesautorefname}{Examples}

% \begin{notation}
%   \begin{enumerate}
%     \item If \(R\) is a locally finite residuation algebra, and \(F \subseteq R\) is finite, we denote the the inclusion of its generated ideal by  \(\iota_{F} \colon \genres F \incl R \), and if \(F = \{z\}\) is a singleton we simply write \(\iota_{z} = \iota_{\{z\}}\).
%           We use an analogous notation for locally finite \(\Um\)-extended coalgebras and generated subcoalgebras.
%   \end{enumerate}
% \end{notation}

\begin{prop}\label{prop:loc-fin-equiv}\hfill
  \begin{enumerate}
  \item Every locally finite residuation algebra \(R\) induces a locally
    finite \(\Um\)-coalgebra
    \(\gamma_{\lres} \colon \Um R \rightarrow \Um(R \jsor
    R)\) where
    \[
      \gamma_{\lres}(z) = (\iota_{A} \jsor \iota_{A})(\gamma_{A}(z)) =
      \V_{x \in A} \iota_{A}(x)  \jsor \iota_{A}(x \lres z) = \V_{p
        \in \mathcal{J} A} \iota_{A}(p) \jsor \iota_{A}(p \lres z),
    \]
    for every finite residuation ideal \(\iota_{A} \colon A \incl R\)
    containing \(z\) (here \(\gamma_{A}\) is the comultiplication on~\(A\) as in \autoref{con:op-translation}).  

  \item Every locally finite \(\Um\)-coalgebra \((C, \gamma)\)
    induces a locally finite residuation algebra where
    \[
      x \lres_{\gamma} z
      \;=\;
      \iota_{A}(x \moni \gamma(z) ), \qquad z \rres_{\gamma} x \;=\; \iota_{A}(\gamma(z) \imon x), \] for every finite subcoalgebra
    \(\iota_{A} \colon A \incl C\) containing \(x, z\) (here
    \(\lres_{A}\) is the residual on \(A\) as given by
    \autoref{con:op-translation}).  The residuals have a canonical
    presentation as
    \[ x \lres_{\gamma} z \;=\; \iota_{z}(\iota_{z}^{*}(x) \lres z) \qquad\text{and}\qquad 
      z \rres_{\gamma} x \;=\;  \iota_{z}(z\rres \iota_{z}^{*}(x)) \]
    where \(\iota_{z} \colon \langle z \rangle \rightarrow C\) is the
          smallest (finite) subcoalgebra containing \(z\).
    \item These constructions are mutually inverse: \[
\gamma_{\lres_{\gamma}} \;=\; \gamma \qquad\text{and}\qquad \lres_{\gamma_{\lres}} \;=\; \lres.\]
  \end{enumerate}
\end{prop}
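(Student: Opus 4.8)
The plan is to prove all three parts by a \emph{local-to-global} argument: every object and operation is defined by restricting to a finite substructure, where the already-established complete/finite correspondence (\autoref{con:op-translation}, \autoref{lem:gamma-res} and the isomorphisms of \autoref{thm:fin-iso}) applies verbatim, and one then checks that the local data glue coherently and that the defining laws, being equational, transfer. Two facts drive the whole argument. First, local finiteness makes the finite residuation ideals (resp.\ finite subcoalgebras) containing a given finite subset into a \emph{directed} family under inclusion: any two such finite substructures $A_1,A_2$ have $A_1\cup A_2$ finite, hence contained in a common finite residuation ideal (resp.\ subcoalgebra) $A_3\supseteq A_1,A_2$. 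Second, \autoref{prop:fin-mor}(1) provides the dictionary that an inclusion of finite residuation ideals is a pure coalgebra morphism and an inclusion of finite subcoalgebras is an open residuation morphism, which is exactly what is needed to compare the local assignments $\gamma_A$ and $\lres_A$ across inclusions.

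For part~(1), fix $z\in R$, pick a finite residuation ideal $\iota_A\colon A\incl R$ containing $z$, and set $\gamma_{\lres}(z)=(\iota_A\jsor\iota_A)(\gamma_A(z))$, with $\gamma_A$ the comultiplication of the finite (hence complete) residuation algebra $A$ from \autoref{con:op-translation}. The crux is \textbf{well-definedness}. For $A\subseteq A'$ both containing $z$, the inclusion $j\colon A\incl A'$ has a left adjoint (for finite $A$ the set $\{a\in A\mid r\le a\}$ has a least element $\bigwedge\{a\in A\mid r\le a\}\in A$) and is a residuation morphism since $A$ is closed under derivatives; thus $j$ is open, and by \autoref{prop:fin-mor}(1) a pure coalgebra morphism, giving $\gamma_{A'}\cdot j=(j\jsor j)\cdot\gamma_A$. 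Postcomposing with $\iota_{A'}$ shows the two formulas for $\gamma_{\lres}(z)$ agree, and directedness removes any dependence on $A$. That $\gamma_{\lres}$ is a $\Um$-coalgebra morphism (preserving finite meets and the top) is checked on a finite ideal $A$ containing the relevant elements, using that $\gamma_A$ preserves meets and that $\iota_A\jsor\iota_A=\iota_A+\iota_A$ is a lattice homomorphism; local finiteness of $(R,\gamma_{\lres})$ holds because each finite residuation ideal $A$ is a finite subcoalgebra, as $\gamma_{\lres}$ restricts to $\gamma_A$ on $A$.

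For part~(2), symmetrically, for $x,z\in C$ choose a finite subcoalgebra $A$ containing both and put $x\lres_{\gamma}z=\iota_A(x\moni\gamma(z))=\iota_A(x\lres_A z)$ via \iref{lem:gamma-res}{1}. Well-definedness is dual: an inclusion $j\colon A\incl A'$ of finite subcoalgebras is a pure coalgebra morphism, hence by \autoref{prop:fin-mor}(1) an open residuation morphism, and since inclusions are order-reflecting ($j^{*}j=\id$) the identity \eqref{eq:open} yields $x\lres_{A'}z=j(x\lres_A z)$ for $x,z\in A$, which pushes forward to the claimed independence of $A$. The canonical presentation $x\lres_{\gamma}z=\iota_z(\iota_z^{*}(x)\lres z)$ follows by applying the same \eqref{eq:open} identity to the inclusion $\langle z\rangle\incl A$, letting one project the left argument $x$ into the smallest subcoalgebra $\langle z\rangle$ containing $z$ (which exists and is finite, as subcoalgebras are closed under intersection and one of them containing $z$ is finite). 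The residuation property, the $\msl$-morphism conditions, and the preservation properties used below are inherited from \autoref{con:op-translation} by evaluating on a finite subcoalgebra containing all elements in sight, and local finiteness of the induced residuation algebra holds since finite subcoalgebras are exactly finite residuation ideals under this translation.

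Finally, part~(3) is immediate once the local descriptions are available. To see $\gamma_{\lres_{\gamma}}=\gamma$, evaluate at $z$ inside a finite subcoalgebra $A\ni z$, where $\lres_{\gamma}$ restricts to $\lres_A$ by construction, and invoke the finite mutual inverseness $\gamma_{\lres_A}=\gamma_A=\gamma|_A$ from \autoref{thm:fin-iso}; pushing forward along $\iota_A$ recovers $\gamma$. The identity $\lres_{\gamma_{\lres}}=\lres$ is proved in the same way on a finite residuation ideal containing the relevant elements, again using \autoref{thm:fin-iso}. I expect \textbf{well-definedness in parts~(1) and~(2)} to be the main obstacle, since it is precisely the point where local finiteness (for directedness) and the open/pure morphism dictionary of \autoref{prop:fin-mor} must be combined; once coherence of the finite data along inclusions is secured, the structural laws and the inversion identities are routine transfers from the finite setting.
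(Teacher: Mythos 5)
Your proposal has the same local-to-global skeleton as the paper's proof: define $\gamma_{\lres}$ and $\lres_{\gamma}$ via an arbitrary finite residuation ideal (resp.\ subcoalgebra) containing the data, reduce well-definedness to nested inclusions plus directedness, verify the structural laws locally, and obtain part~(3) from the fact that the generated subcoalgebra $\langle z\rangle$ is a residuation ideal of the induced structure. The place where you diverge is also where there is a genuine problem: you derive the crucial nested-inclusion compatibilities by quoting \autoref{prop:fin-mor}(1) for the inclusions $j\colon A\incl A'$. That proposition is stated only for \emph{prime-unital} residuation ACDLs, and both morphism notions it compares carry unit conditions: a residuation morphism in the sense of \autoref{D:mor} must satisfy \eqref{eq:unit}, and a pure coalgebra morphism in the sense of \iref{def:fin-res-mor}{pure} must satisfy $\epsilon = \epsilon'\cdot f$. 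By contrast, \autoref{prop:loc-fin-equiv} makes no unitality assumption at all, and its finite residuation ideals and subcoalgebras need not have units (see Example~\ref{ex:res-alg}.\ref{ex:res-alg:neg} for finite non-unital residuation algebras); even when units happen to exist, an ideal inclusion has no reason to relate them as \eqref{eq:unit} demands. So the result you invoke does not apply to the morphisms you feed it, and this is the step your whole well-definedness argument rests on.

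What you actually need is the unit-free core of \autoref{prop:fin-mor}: for a lattice morphism with a left adjoint, the residual equation \eqref{eq:open} holds iff the comultiplication square commutes. That statement is true, and the relevant computations in the paper's proof of \autoref{prop:fin-mor} (via \autoref{lem:gamma-res} and \iref{prop:mon-impl}{preserve}) never use the units -- but it is not available as a citable result, and it is exactly the content the paper's proof of \autoref{prop:loc-fin-equiv} re-establishes by hand: the join-prime computation proving $(\iota_{I}\jsor\iota_{I})(\gamma_{I}(z)) = (\iota_{K}\jsor\iota_{K})(\gamma_{K}(\iota(z)))$ for nested ideals in part~(1), and the computation proving $\iota_{B}(x\lres\iota(z)) = \iota_{A}(\iota^{*}(x)\lres z)$ for nested subcoalgebras in part~(2). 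Your argument becomes correct once you either prove this unit-free equivalence separately (repeating the relevant halves of the proof of \autoref{prop:fin-mor} with the counit clauses deleted) or carry out the two computations directly, as the paper does -- which is presumably why the paper does not cite \autoref{prop:fin-mor} here in the first place. The remaining ingredients of your proposal -- that inclusions of finite residuation ideals are open (your justification via closure under derivatives is the same fact the paper uses at its step $(*)$), directedness from local finiteness, the local verification of meet-preservation and of the residuation property, and part~(3) via $\langle z\rangle$ -- all match the paper and are sound.
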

\begin{proof}\hfill
  \begin{enumerate}
    \item[(1a)] We first show that the comultiplication
          \(\gamma_{\lres} = (\iota_{A} \jsor \iota_{A})(\gamma(z))\) is well-defined, that is, it
          does not depend on the residuation ideal \(A\) containing \(z\). We
          first prove an auxiliary statement:
          \begin{lem}
            If \(\iota_I = \iota_{K} \cdot \iota \colon I \incl K \incl R\) are finite
            residuation ideals containing \(z\), then
            \begin{equation}
              \label{eq:gammaiota}
              (\iota_{I} \jsor \iota_{I})(\gamma_{I}(z)) = (\iota_{K} \jsor \iota_{K})(\gamma_{K}(\iota(z))).
            \end{equation}
          \end{lem}
          \begin{proof}
            Since \(I \subseteq K\), it is clear that
            \begin{align*}
              (\iota_{I} \jsor \iota_{I})(\gamma_{I}(z))  &= (\iota_{I} \jsor \iota_{I})(\V_{x \in I} x \jsor x \lres z) \\
                                             &\le (\iota_{K} \jsor \iota_{K})(\V_{x \in K} x \jsor x \lres z) \\
                                             &= (\iota_{K} \jsor \iota_{K})(\gamma_{K}(\iota(z))).
            \end{align*}
            For the reverse inequality, note that for every join-prime
            \(q \in \mathcal{J}K\) we find \(p \in \mathcal{J} I\) such that \(q \le \iota(p)\), since
            \[q \le \top = \iota(\top) = \iota(\V_{p \in \mathcal{J} I} p) = \V_{p \in \mathcal{J} I}\iota(p)\] and \(q\) is
            join-prime. We use this to calculate
            \begin{align*}
              (\iota_{K} \jsor \iota_{K})(\gamma_{K}(\iota(z))) &= \V_{q \in \mathcal{J} K} \iota_{K}(q) \jsor \iota_{K}(q \lres \iota(z) ) \\
                                               &= \V_{p \in \mathcal{J} I} \V_{q \le \iota(p)} \iota_{K}(q) \jsor \iota_{K}(q \lres \iota(z)) && \text{idempotence of join} \\
                                               &= \V_{p \in \mathcal{J} I} \V_{q \le \iota(p)} \iota_{K}(q) \jsor \iota_{I}(\iota^{*}(q) \lres z) && {(*)}  \\
                                               &= \V_{p \in \mathcal{J} I} \V_{q \le \iota(p)} \iota_{I}\iota^{*}(q) \jsor \iota_{I}(\iota^{*}(q) \lres z) && {(**)}  \\
                                               &\le \V_{p \in \mathcal{J} I} \iota_{I}(p) \jsor \iota_{I}(p \lres z) && {({**}*)} \\
                                               &= (\iota_{I} \jsor \iota_{I})(\gamma_{I}(z)).
            \end{align*}
            For step $(*)$, we use that for \(p \in \mathcal{J} I, q \in \mathcal{J} K\) such that
            \(q \le \iota(p)\), the following holds:
            \begin{align*}
              \iota_{K}(q \lres \iota(z)) &= \iota_{K}(q) \lres \iota_{K}(\iota(z)) \\
                                  &= \iota_{K}(q) \lres \iota_{I}(z) \\
                                  &= \iota_{I}(\iota_{I}^{*}(\iota_{K}(q)) \lres z) && \iota_{I} \text{ finite (open) residuation morphism} \\
                                  &= \iota_{I}(\iota^{*}(\iota_{K}^{*}(\iota_{K}(q))) \lres z ) && \iota_{I} = \iota_{K} \iota \\
                                  &= \iota_{I}(\iota^{*}(q) \lres z ) && \iota_{K} \text{ embedding.} \\
            \end{align*}
            Simlarly, for step $(**)$ we use have
            \[\iota_{K} = \iota_{K} \cdot \id \le \iota_{K} \cdot \iota \cdot \iota^{*} = \iota_{I} \cdot \iota^{*}.\]
            Lastly, for step $({**}*)$ note that for every \(q \in \mathcal{J} K\) we have
            $\iota^*(q)\in \mathcal{J}(I)$: indeed, $\iota^*(q)\leq x\vee y$ in $I$ implies
            $q\leq \iota(x)\vee \iota(y)$ in $K$, hence $q\leq \iota(x)$ or $q\leq \iota(y)$, so $\iota^*(q)\leq x$
            or $\iota^*(q)\leq y$. In particular for all \(q \in \mathcal{J} K\) we find some
            \(p \in \mathcal{J} I\) with
            \[\iota_{I}\iota^{*}(q) \jsor \iota_{I}(\iota^{*}(q) \lres z) = \iota_{I}(p) \jsor (\iota_{I}(p) \lres z), \]
            which implies \(({**}*)\).
          \end{proof}
          Now, for well-definedness of \(\gamma_{\lres}\), if \(I, I'\) are finite
          residuation ideals containing \(z\) they are both contained in a
          finite residuation ideal $K$, since that $R$ is locally finite; we
          write \(\iota \colon I \incl K \hookleftarrow I' \coc \iota'\) for the inclusion maps. Now
          we have
          \begin{align*}
            (\iota_{I} \jsor \iota_{I})(\gamma_{I}(z)) &= (\iota_{K} \jsor \iota_{K})(\iota \jsor \iota)(\gamma(z)) && \text{\eqref{eq:gammaiota}} \\
                                          &= (\iota_{K} \jsor \iota_{K})(\gamma_{K}(\iota(z))) && \iota \text{ coalgebra morphism} \\
                                          &= (\iota_{K} \jsor \iota_{K})(\gamma_{K}(\iota'(z))) && \iota, \iota' \text{ subcoalgebras of \(K\)} \\
                                          &= (\iota_{I'} \jsor \iota_{I'})(\gamma_{I'}(z)) && \text{backwards.}
          \end{align*}
          This shows that the mapping%
          \[\gamma_{\lres} \colon R \mapsto R \jsor R, \qquad z \mapsto (\iota_{I} \jsor \iota_{I})(\gamma(z))\]
          does not depend on the choice of the residuation ideal \(I\).

    \item[(1b)]\label{item:1b} We show that the mapping \(\gamma_{\lres}\) indeed
          yields a \(\Um\)-coalgebra structure, that is, it preserves
          finite meets. Let \(F \subseteq R\) be a finite subset. By local finiteness we
          find a residuation ideal \(I\) containing \(F\). Now we simply use
          that both the comultiplication on \(I\) and \(\iota_{I} \jsor \iota_{I}\)
          preserve finite meets:
          \[\gamma_{\lres}(\A_{x \in F} x) = (\iota_{I} \jsor \iota_{I})(\gamma(\A_{x \in F} x)) = \A_{x \in F}(\iota_{I} \jsor \iota_{I})(\gamma(x)) = \A_{x \in F}\gamma_{\lres}(x). \]
    \item[(1c)] The coalgebra is easily seen to be locally finite, since for
          every finite subset \(X \subseteq R\) we find a finite residuation ideal \(I\)
          containing \(X\), and the corresponding coalgebra structure on~\(I\)
          is by definition a subcoalgebra of \((R, \gamma_{\lres})\).

    \item[(2a)] We first show that for finite subcoalgebras \(A, A'\) of \(R\)
          containing both \(x, z\) we have
          \[\iota_{A}(x \lres z) = \iota_{A'}(x \lres z).\]
          First, let \(\iota_{B} \cdot \iota \colon A \incl B \incl R\) be finite
          subcoalgebras. Then
          \begin{align*}
            \iota_{A}(x \lres z) &= \iota_{A}(x \moni \gamma(z)) \\
                             &= \iota_{B}(\iota(x \moni \gamma (z))) \\
                             &= \iota_{B}(\iota(x) \moni (\iota \jsor \iota)(\gamma(z))) && \text{embeddings preserve} \moni \\
                             &= \iota_{B}(\iota(x) \moni \gamma(\iota(z))) && \iota \text{ coalgebra morphism}\\
                             &= \iota_{B}(\iota(x) \lres \iota(z)).
          \end{align*}
          From this it follows that
          \begin{equation}
            \label{eq:loc-fin}
            \iota_{B}(x \lres \iota(z)) = \iota_{B}(x \moni (\iota \jsor \iota)(\gamma(z))) = \iota_{B}(\iota(\iota^{*}(x) \moni \gamma(z))) = \iota_{A}(\iota^{*}(x) \lres z).
          \end{equation}
          Now let $A \hookrightarrow C$ be a finite subcoalgebra containing $x$ and $z$. Then
          $A$ certainly contains the (finite) subcoalgebra \( \langle z \rangle\) generated
          by \(z\); we write \(\iota \colon \langle z \rangle \incl A \) for its inclusion into
          \(A\). We now obtain the canonical expression, where the last step
          uses~\eqref{eq:loc-fin}
          \[x \lres_{\gamma} z = \iota_{A}(x \lres z) = \iota_{A}(x \lres \iota(z)) = \iota_{\langle z \rangle }(\iota^{*}(x) \lres z) \]
          For general finite subcoalgebras \(A, A' \incl R\) containing \(x, z\)
          we find an upper bound \(\iota \colon A \incl B \hookleftarrow A' \coc \iota' \) and
          compute
          \[\iota_{A}(x \lres z) = \iota_{B}(\iota(x) \lres \iota(z)) = \iota_{B}(\iota'(x) \lres \iota'(z)) = \iota_{A'}(x \lres z).\]

    \item[(2b)] The proof that the residuals preserve finite meets in the
          covariant component is analogous to the proof for the
          comultiplication.

    \item[(2c)] The residuation algebra structure induced by \(\gamma\) is locally
          finite: Since the coalgebra \((C, \gamma)\) is locally finite, every finite
          subset \(F \subseteq C\) is contained in a finite subcoalgebra \(A \incl C\).
          The finite residuation algebra structure on \(A\) given by
          \iref{lem:gamma-res}{2} makes \(A\) a residuation subalgebra of
          \((C, \lres_{\gamma})\) containing \(F\).
    \item[(2d)] It remains to verify the residuation property:
          \begin{align*}
            y \le x \lres_{\gamma} z &\LR y \le \iota_{z}(\iota_{z}^{*}(x) \lres_{z} z) && \text{definition } \lres_{\gamma} \\
                              &\LR y \le \iota_{z}( \iota_{z}^{*} x \moni \gamma z) && \text{definition } \lres_{z} \\
                              &\LR \iota_{z}^{*} y \le \iota_{z}^{*} x \moni \gamma z  && \iota_{z}^{*} \dashv \iota_{z} \\
                              &\LR \iota_{z}^{*} x \jsor \iota_{z}^{*} y \le \gamma z  && \iota_{z}^{*} x \! \jsor \! (-) \, \dashv \, \iota_{z}^{*} x \! \moni \! (-) \\
                              &\LR \iota_{z}^{*} x \le \gamma z \imon \iota_{z}^{*} y   && (-) \! \jsor \! \iota_{z}^{*} y  \, \dashv \, (-) \imon \iota_{z}^{*} y \!  \\
                              &\LR x \le \iota_{z}( \gamma z \imon \iota_{z}^{*} y ) && \iota_{z}^{*} \dashv \iota_{z} \\
                              &\LR x \le \iota_{z}(z \rres_{\!z} \iota_{z}^{*} y) && \text{definition } \rres_{\! z}\\
                              &\LR x \le z \rres_{\! \gamma} y && \text{definition } \rres_{\! \gamma}
          \end{align*}
    \item[(3)] The translations are inverse since they are liftings of the
          translations between the operators on the finite substructures: We
          first show that \(\gamma_{\lres_{\gamma}} = \gamma\).
          For every \(z \in R\), the
          subcoalgebra \(\iota_{z} \colon \langle z \rangle \incl R\) generated by \(z\) is a
          residuation ideal of \(\lres_{\gamma}\): For all \(x \in R\), we have
          \[x \lres_{\gamma} z = \iota_{z}(\iota_{z}^{*}(x) \moni \gamma(z)) \in \langle z \rangle .\] We can
          therefore choose it as a residuation ideal containing \(z\) in the
          definition of \(\gamma_{\lres_{\gamma}}\) to get
          \[\gamma_{\lres_{\gamma}}(z) = (\iota_{z} \jsor \iota_{z})(\gamma_{\lres_{\gamma}}(z)) = (\iota_{z} \jsor \iota_{z})(\gamma(z)) = \gamma(z).\]
          An analogous argument proves
          \(\lres_{\gamma_{\lres}} = \lres\).\qedhere
  \end{enumerate}
\end{proof}

\autoref{prop:loc-fin-equiv} shows that every locally finite
residuation algebra carries a unique \(\Um\)-coalgebra structure and
vice versa. We may thus translate at will between the residuals and
comultiplication as in the complete case and omit the subscripts.  We
extend Lemmas \ref{lem:fin-op-equiv} and~\ref{lem:op-props} to locally finite structures:

\begin{lem}\label{lem:loc-fin-op-equiv}
  Let \(R\) be a locally finite residuation algebra.
  \begin{enumerate}
    \item\label{lem:loc-fin-op-equiv:sub} Finite residuation ideals correspond to finite subcoalgebras.
    \item The residuals  are associative iff the comultiplication is coassociative.
    \item It is prime-unital iff the comultiplication is prime-counital.
    \item\label{lem:loc-fin-op-equiv:pure} The comultiplication is pure  iff every finite residuation ideal is pure (see \autoref{def:res-alg}).
  \end{enumerate}
\end{lem}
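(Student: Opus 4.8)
The guiding principle is that every assertion in the lemma is \emph{finitary}: each instance of the relevant equations or closure conditions involves only finitely many elements of $R$, which by local finiteness lie in a common finite residuation ideal, and on such a finite substructure the carrier is a finite lattice, hence an ACDL with $\res_{\mathrm{f}}=\resacdl_{\mathrm{f}}$, so the complete-case results \autoref{lem:fin-op-equiv} and \autoref{lem:op-props} apply verbatim. The plan is therefore to reduce each global statement to its finite-substructure version, with part~(1) serving as the bridge that identifies the finite substructures on the two sides. I would prove part~(1) first. For the forward direction, let $\iota_{A}\colon A\incl R$ be a finite residuation ideal; since $A$ contains each of its elements, \autoref{prop:loc-fin-equiv}(1) computes, for $z\in A$, the value $\gamma(z)=(\iota_{A}\jsor\iota_{A})(\gamma_{A}(z))$, which lies in the image $A\jsor A$, so $\gamma$ restricts to $A\to A\jsor A$ and $A$ is a finite subcoalgebra. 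Conversely, if $A$ is a finite subcoalgebra, the canonical presentation of \autoref{prop:loc-fin-equiv}(2) gives $x\lres z=\iota_{z}(\iota_{z}^{*}(x)\lres z)\in\langle z\rangle\subseteq A$ whenever $z\in A$, and symmetrically for $\rres$, so $A$ is closed under derivatives and hence a finite residuation ideal.

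For parts~(2) and~(3) I would exploit the locality of associativity and unitality. Given $x,y,z\in R$, local finiteness yields a finite residuation ideal $A$ containing them; as $A$ is a residuation subalgebra, all terms $z\rres y$, $x\lres z$, $x\lres(z\rres y)$, $(x\lres z)\rres y$ again lie in $A$, so the associativity identity holds in $R$ iff it holds in $A$. Thus the residuals are associative on $R$ iff they are associative on every finite residuation ideal, and dually $\gamma$ is coassociative iff it is so on every finite subcoalgebra; by part~(1) these are the same finite substructures, and \autoref{lem:op-props} supplies the equivalence on each, proving part~(2). For part~(3), a prime unit $e\in R$ restricts to a prime unit of every finite residuation ideal $A\ni e$ (the equations $e\lres z=z=z\rres e$ for $z\in A$ and join-primeness are inherited, and $e\lres z\in A$ since $A$ is a subalgebra), whose adjoint counit assembles into the prime counit of $\gamma$; the converse is symmetric, and \autoref{lem:op-props} again gives the local equivalence.

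Part~(4) follows the same pattern, now invoking \autoref{lem:fin-op-equiv}. If $\gamma$ is pure, i.e.\ lifts to a lattice homomorphism $R\to R+R$, then its restriction to any finite residuation ideal $A$—a finite subcoalgebra by part~(1)—is the pure comultiplication $A\to A+A$, so $A$ is pure in the sense of \autoref{def:res-alg}. Conversely, if every finite residuation ideal is pure, then for $z\in R$ the value $\gamma(z)$, computed in a finite ideal $A\ni z$, lands in the lifted copy $A+A\subseteq R+R$, and the coherence of these local values established in \autoref{prop:loc-fin-equiv} shows that the lifts glue into a single lattice homomorphism $R\to R+R$; hence $\gamma$ is pure.

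The main obstacle is bookkeeping rather than conceptual: in each reduction one must verify that the \emph{entire} witnessing data of a structural identity—not merely the named elements $x,y,z$, but also the elements implicitly ranged over in the expansions of $\gamma$ and $\mu$ via \autoref{lem:gamma-res}—sits inside one finite residuation ideal. This is exactly what local finiteness together with the closure of finite residuation ideals under all residuals guarantees, so once part~(1) is secured the remaining parts are routine transfers of \autoref{lem:fin-op-equiv} and \autoref{lem:op-props} across the finite/locally-finite boundary.
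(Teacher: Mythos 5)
Your overall strategy---use part (1) to identify finite residuation ideals with finite subcoalgebras, then transfer each statement to these finite (hence complete) substructures, where \autoref{lem:fin-op-equiv} and \autoref{lem:op-props} apply---is exactly the strategy of the paper's proof, and your treatment of parts (1), (2) and (4) matches it in substance: for (1) you invoke the canonical presentation through \(\langle z\rangle\) where the paper uses Equation~\eqref{eq:loc-fin}, which amounts to the same computation, and for (2) and (4) the reductions to finite ideals are the ones the paper carries out explicitly.

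In part (3), however, the claim that ``the converse is symmetric'' hides a genuine gap: the two directions are \emph{not} symmetric. Passing from a prime unit \(e\) to a prime counit works because counits are \emph{morphisms} \(R\to 2\), and morphisms glue along the directed union of finite ideals: the local counits \(x\mapsto [e\le x]\) agree on overlaps, so they define a single map \(\epsilon\), whose counit equations can be checked locally. Passing from a prime counit \(\epsilon\) to a prime unit requires producing an \emph{element} of \(R\), and elements do not glue. For nested finite ideals \(A\incl B\) (with inclusion \(\iota\)) the local units are related by \(e_A=\iota^{*}(e_B)\), so along an increasing chain of finite ideals they form a \emph{decreasing} net that need not be eventually constant, and its would-be limit \(\min\epsilon^{-1}(1)\) need not exist in the non-complete lattice \(R\): nothing in the definition of a prime counit forces the prime filter \(\epsilon^{-1}(1)\) to be principal. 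This is precisely the delicate point of the lemma. The paper's proof of this direction does not assemble local units at all; it starts from the left adjoint \(e\) of the \emph{global} counit \(\epsilon\) and only then localizes, verifying unitality via the canonical presentation \(e\lres z=\iota_{z}(\iota_{z}^{*}(e)\lres z)=\iota_{z}(z)=z\). To repair your argument you must supply exactly the ingredient that ``symmetry'' sweeps under the rug: a proof that the counit has a left adjoint (equivalently, that the decreasing net of local units stabilizes), or else follow the paper in taking that adjoint as the starting point of this direction.
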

\begin{proof}\hfill
  \begin{enumerate}
    \item If \(\iota_{I} \colon I \incl R\) is a finite residuation
      ideal, then by definition its comultiplication makes \(I\) a
      subcoalgebra of \((R, \gamma)\).
      
          In the reserve direction, let \(A \incl R\) be a
          finite subcoalgebra.
          The residuation algebra structure on \(A\) due to
          \iref{lem:gamma-res}{1} together with  the definition
          \(x \lres z = \iota_{A}(x \lres z)\) of the residuation algebra structure on \(R\)
          show that \(A\) is a
          residuation subalgebra of \(R\).
          To show that \(A\) is a residuation ideal, let \(x \in R\) and \(z \in A\).
          There exists a finite subcoalgebra \(B\) containing \(x,
          z\); we denote the inclusion map by \(\iota \colon A \incl B\).
          By \eqref{eq:loc-fin} we then have
          \[x \lres z = \iota_{B}(x \lres z) = \iota_{A}(\iota^{*}(x)
            \lres z),
          \]
          which states that $x \lres z$ lies in $A$.

    \item First, let \(\gamma\) be the coassociative comultiplication, and let \(x, y, z \in R\).
          By local finiteness these elements are contained in a finite coassociative subcoalgebra \(\iota_{A} \colon A \incl R\).
          Then by \iref{lem:loc-fin-op-equiv}{sub}, \(\iota_{A}\), is an associative finite residuation ideal of \(R\), whence
          \[x \lres (z \rres y) = \iota_{A}(x \lres (z \rres y)) = \iota_{A}((x \lres z) \rres y) = (x \lres z) \rres y.\]
          The other direction works analogously: if the residuals are associative and \(z \in R\), then
          it is contained in a finite associative residuation ideal \(\iota_{I} \colon I \incl R\).
          So \(I\) is a finite associative subcoalgebra of \(R\) and we have
          \begin{align*}
            (\bar{\gamma} \msor \id)(\bar{\gamma}(\iota_{I}(z))) &= (\iota_{I} \msor \iota_{I} \msor \iota_{I})((\bar{\gamma} \msor \id)(\bar{\gamma}(z)))  \\
            &= (\iota_{I} \msor \iota_{I} \msor \iota_{I})((\id \msor \bar{\gamma})(\bar{\gamma}(z)))  \\
            &= (\id \msor \bar{\gamma})(\bar{\gamma}(\iota_{I}(z))), \\
          \end{align*}
          proving that \(\bar{\gamma}\) is associative.
    \item Let \(\epsilon \colon R \rightarrow 2\) be a counit for the comultiplication
          \[\bar{\gamma} = \omega \cdot \gamma \colon U_{\land } R \rightarrow \Um(R \jsor R) \cong \Um R \msor \Um R \]
          with left adjoint element \(e \in \jsl(2, R) \cong R\).
          Note that \(\epsilon \cdot \iota_{A}\) is a counit for every subcoalgebra \(\iota_{A} \colon A \incl R\):
          Let \(z \in A\), then, since \(\epsilon\) is a unit for \(\bar{\gamma}\), we have
          \[\iota_{A} (((\epsilon \cdot \iota_{A}) \msor \id_{A})(\bar{\gamma}(z))) = (\epsilon \msor \id)(\iota_{A} \msor \iota_{A})(\bar{\gamma}(z)) = (\epsilon \msor \id)(\bar{\gamma}(\iota_{A}(z))) = \iota_{A}(z).\]
          But by the implication \ref{lem:op-props:gamma} \(\Rightarrow\) \ref{lem:op-props:res} of \autoref{lem:op-props} this means that the left adjoint \(e_{A} := \iota_{A}^{*}(e) \in A\) of \(\epsilon \cdot \iota_{A}\) is a unit for the finite residuation ideal \(A\) of \(R\).
          So for all \(z \in R\) we have
          \[e \lres z = \iota_{z}(\iota_{z}^{*}(e) \lres z) = \iota_{z}(e_{z} \lres z) = \iota_{z}(z) = z, \]
          so \(e\) is a unit for the residuals.
          The case for the right residual \(\rres\) is dual.

          For the other direction let \(e \in R\) be a unit for the residuals with right adjoint \(\epsilon \colon R \rightarrow 2\).
          For every residuation ideal \(\iota_{I} \colon I \incl R\) the element \(\iota_{I}^{*}(e) \in I\) is the unit of \(I\):
          The embedding trivially  is an open residuation morphism,
          and we have
          \[
            \iota_{I}(\iota_{I}^{*}(e) \lres z) = e \lres \iota_{I}(z)
            = \iota_{I}(z)
            \qquad\text{for every \(z \in I\)}.
          \]
          So the subcoalgebra structure \(I\) has the counit
          \(\epsilon \cdot \iota_{I}\). Now given $z \in R$, we pick a
          residuation ideal $I$ containing $z$ using that $R$ is
          locally finite. Then we have
          \begin{align*}
            (\epsilon \msor \id)(\bar{\gamma}(z)) &= (\epsilon \msor \id)(\bar{\gamma}(\iota_{I}(z))) \\
                                      &= (\epsilon \msor \id)((\iota_{I} \msor \iota_{I})(\bar{\gamma}(z))) \\
                                      &= \iota_{I}((\epsilon \iota_{I} \msor \id)(\bar{\gamma}(z))) \\
                                      &= \iota_{I}(z) = z.
          \end{align*}
          This shows that \(\epsilon\) is a counit for \(\bar{\gamma}\).
    \item If the comultiplication is pure, associative and has a counit, then this holds for every finite subcoalgebra.
          Every finite residuation ideal \(I\) of \(R\) is a finite pure subcoalgebra, which therefore is a derivation algebra.

          Conversely, if every finite residuation ideal is a derivation algebra, then we only have to show that the comultiplication preserves finite joins, since it already is coassociative and has a counit.
          The join of finitely many elements
          is taken in  some finite subcoalgebra \(A\).
          But~\(A\) is a finite residuation ideal and therefore by assumption a derivation algebra, and the comultiplication preserves finite joins.\qedhere
  \end{enumerate}
\end{proof}

\begin{rem}\label{rem:join-pres-at-prim}
  \iref{lem:loc-fin-op-equiv}{pure} characterizes locally finite
  residuation algebras with a pure comultiplication. By extended
  duality, its dual Priestley relation is functional. 
  Gehrke~\cite[Prop.~3.15]{gehrke-16} has presented a necessary and
  sufficient condition for a general residuation algebra~\(R\) to have
  a functional dual relation, namely \emph{join-preservation at
    primes}:
  \[\forall F \in \dl(R, 2) \colon \forall a \in F\colon \forall b, c \in R \colon \exists a' \in F \colon a \lres (b \lor c) \le (a' \lres b) \lor (a' \lres c). \]
  One can indeed  show that every locally finite residuation algebra with a pure comultiplication (\iref{lem:loc-fin-op-equiv}{pure}) is join-preserving at primes:
    If \(F\) is a prime filter on \(C\) and \(a \in F\), then we obtain by local finiteness for all \(b, c \in C\) a finite \emph{pure} residuation ideal \(\iota \colon I \incl C\) containing \(a, b, c\).
    In this residuation ideal we have \(a = \V K \) for join-primes \(K \subseteq \mathcal{J} I \subseteq C \).
    Since \(F\) is prime, some \(a' \in K\) lies in \(F\), and it satisfies
    \begin{align*}
      a \lres (b \lor c) &= \iota(a \lres (b \lor c)) \\
                         &= \iota((\V K) \lres (b \lor c)) \\
                         &\le \iota(a' \lres (b \lor c)) && \text{\((-) \lres x\) antimonotone} \\
                         &= \iota(a' \lres b \lor a' \lres c) && \text{\(I\) is pure} \\
                         &= \iota(a' \lres b) \lor \iota(a' \lres c) \\
                         &= (a' \lres b) \lor (a' \lres c) && a', b, c \in I.
    \end{align*}
    This shows that the residuals are join-preserving at primes.
\end{rem}

\begin{defi}
  A  residuation algebra \(R\) is a \emph{derivation
    algebra} if it is locally finite, associative, prime-unital and
  every finite residuation ideal~\(I\) is pure.
 This yields full subcategories
\[ \der \hookrightarrow \res \qquad\text{and}\qquad \relres \hookrightarrow \relder.\]
\end{defi}

Note that a derivation algebra with a finite carrier is precisely a finite derivation ACDL as defined in \iref{def:res-alg}{der-alg}. Recall from \autoref{def:fin-res-mor} the definition morphism for \(\Um\)-coalgebras and comonoids.

\begin{prop}\label{prop:loc-fin-mor}
  Let \(R, R'\) be locally finite residuation algebras with units.
  \begin{enumerate}
    \item A lattice morphism \(f \in \dl(R, R')\) is a residuation morphism iff it is a prime-counital morphism of \(\Um\)-coalgebras.
    \item If \(R, R'\) are comonoids, a finite join-preserving function \(\rho \in \jsl(R, R')\) is a corelational residuation morphism iff it is a corelational comonoid morphism.
  \end{enumerate}
\end{prop}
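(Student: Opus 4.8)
The plan is to reduce both statements to the already-established finite (equivalently, ACDL) case \autoref{prop:fin-mor} by systematically localizing to finite substructures. The principal tool is \autoref{lem:loc-fin-op-equiv}, which identifies finite residuation ideals with finite subcoalgebras and shows that prime-unitality — and, in the setting of part~(2), coassociativity and purity — is inherited by these finite substructures; together with \autoref{prop:loc-fin-equiv} it guarantees that the residuals and the comultiplication of \(R\) (resp.\ \(R'\)) restrict on every finite subcoalgebra \(\iota_A\colon A\incl R\) to precisely the finite operations on \(A\) supplied by \autoref{con:op-translation}. Since a finite lattice is complete, each such \(A\) is a finite prime-unital residuation ACDL, i.e.\ a prime-counital \(\Vm\)-coalgebra (a comonoid in the setting of part~(2)), and the embeddings of finite residuation ideals are \emph{open} residuation morphisms; hence within \(A\) and a matching finite subcoalgebra \(B\incl R'\) we may freely use left adjoints and the Open identities of \autoref{lem:loc-fin-mor-open}. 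The crucial point, and the main obstacle, is that the global morphism \(f\) (resp.\ \(\rho\)) need not itself be open — it need not admit a left adjoint on the non-complete lattices \(R,R'\) — so the complete-case argument of \autoref{prop:fin-mor}, which relies on \(f^{*}\), cannot be run directly; localization is what repairs this.

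The key lemma I would isolate is a \emph{local-to-global} characterization: \(f\) is a residuation morphism if and only if, for every finite subcoalgebra \(A\incl R\) and every finite subcoalgebra \(B\incl R'\) with \(f[A]\subseteq \iota_B[B]\) (such \(B\) exist because \(R'\) is locally finite), the restriction \(g\coloneqq \iota_B^{*}\cdot f\cdot \iota_A\colon A\to B\) is a residuation morphism; and the same equivalence holds for prime-counital \(\Um\)-coalgebra morphisms. For the direction from global to local one transfers the conditions \eqref{eq:forth}, \eqref{eq:back} and \eqref{eq:unit} from \(f\) to \(g\) by applying the monotone map \(\iota_B^{*}\) and invoking the Open identity \(y\lres \iota_B(w)=\iota_B(\iota_B^{*}(y)\lres w)\) for the embedding \(\iota_B\) (together with \(\iota_B^{*}\iota_B=\id\)); for the converse one reads off, e.g., \eqref{eq:forth} for \(f\) from \eqref{eq:forth} for a suitable \(g\) by postcomposing with the order-embedding \(\iota_B\) and using that the residuals of \(A,B\) agree with those of \(R,R'\). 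The coalgebra condition localizes in the same way: exploiting that \(A\) and \(B\) are subcoalgebras, one checks pointwise on each \(z\in R\) that \(g\) satisfies \((g\msor g)\cdot\bar\gamma_A=\bar\gamma_B\cdot g\) exactly when \(f\) satisfies the comultiplication square at \(z\), while \(\epsilon_{B}\cdot g=\epsilon_{A}\) corresponds to \(\epsilon'\cdot f=\epsilon\).

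With these two local-to-global characterizations in hand, part~(1) follows at once: \(f\) is a residuation morphism iff all its finite restrictions \(g\) are, iff — by the finite case \autoref{prop:fin-mor}(1) — all the \(g\) are prime-counital coalgebra morphisms, iff \(f\) is a prime-counital \(\Um\)-coalgebra morphism. Part~(2) is proved by the identical scheme, replacing \autoref{prop:fin-mor}(1) by \autoref{prop:fin-mor}(2) and the equational conditions by their lax counterparts from \autoref{D:mor} and \iref{def:fin-res-mor}{lax}; here one additionally uses that \(\rho\) preserves finite joins and the top element so that \(g=\iota_B^{*}\cdot\rho\cdot\iota_A\) is again a top-preserving \(\jsl\)-morphism (note \(\iota_B^{*}(\top)=\top\) since \(\iota_B\) is an injection preserving \(\top\)). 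I expect the difficulty to be bookkeeping rather than conceptual: one must verify that \(g\) is well defined and of the correct type, that the relevant (in)equalities are independent of the chosen finite ideals — which rests on \autoref{prop:loc-fin-equiv} computing the operations independently of the chosen ideal — and that the one-sided Open identities for the embeddings convert the inequalities \eqref{eq:forth}/\eqref{eq:back} for \(f\) into those for \(g\) and back without loss.
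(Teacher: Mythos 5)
Your overall architecture---reduce to the complete case \autoref{prop:fin-mor} by restricting to finite subcoalgebras and then globalize---is sound, and it is in fact close to what the paper does: the paper's proof of the converse direction of part~(1) works with the restrictions $f_z\colon \langle z\rangle \to \langle f(z)\rangle$ and their (finite-lattice) left adjoints, and its proof of part~(2) is exactly the localization you describe. Moreover, most of your key lemma is unproblematic: the coalgebra conditions and the conditions \eqref{eq:forth} and \eqref{eq:unit} do transfer in both directions by the $\iota_B^*\iota_B=\id$ bookkeeping you indicate, and in part~(2) all conditions are lax inequations with no existential quantifier, so there the localization really is routine.

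The genuine gap is the downward (global-to-local) transfer of \eqref{eq:back}, which your recipe cannot handle as stated. Fix finite subcoalgebras $A\incl R$ and $B \incl R'$ with $f[A]\subseteq \iota_B[B]$, and let $y\in B$, $z\in A$. Applying the global condition \eqref{eq:back} to $(\iota_B(y), \iota_A(z))$ produces a witness $x\in R$ with $\iota_B(y)\le f(x)$ and $\iota_B(y)\lres f(\iota_A(z)) = f(x\lres \iota_A(z))$; since $A$ is a residuation ideal you may write $x \lres \iota_A(z)=\iota_A(w)$ and conclude, using the Open identity for $\iota_B$, that $y\lres_B g(z) = g(w)$. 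But \eqref{eq:back} for $g$ demands a witness $x'\in A$ with $y \le g(x')$ and $y \lres_B g(z) = g(x' \lres_A z)$, and nothing forces the global witness $x$ to lie in $A$; applying $\iota_B^*$ only repairs the codomain side. The repair needs an extra idea: by local finiteness choose a finite residuation ideal $A'$ containing $A\cup\{x\}$; the inclusion $\iota\colon A\incl A'$ is an open residuation morphism between finite residuation algebras, so (as in the paper's equation \eqref{eq:loc-fin}) one gets $x \lres \iota_A(z) = \iota_A(\iota^*(x)\lres_A z)$, and $x'=\iota^*(x)\in A$ is the desired witness, where $y\le g(x')$ follows from $x\le \iota_A(\iota^*(x))$ and monotonicity of $f$. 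Note that the paper never needs this step: in the direction ``residuation morphism $\Rightarrow$ coalgebra morphism'' it chooses the finite ideal $I\incl R$ adaptively, \emph{after} invoking \eqref{eq:back}, so that $I$ contains $z$ together with all witnesses $x_{y,z}$ for $y\in I'$; and in the converse direction it does not transfer \eqref{eq:back} downward at all, but builds the global witnesses from local adjoints as $x_{y,z}=\iota_z(f_z^*(\iota_{f(z)}^*(y)))$. So either add the witness-relocation argument above, or restate your key lemma so that the finite substructures may be chosen after the witnesses, as the paper implicitly does.
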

\begin{proof}\hfill
  \begin{enumerate}
    \item First, let \(f \colon R \rightarrow R'\) be a residuation morphism.
          For \(z \in R\) we choose a finite residuation ideal \(I' \incl R'\) containing \(f(z)\).
          Since \(f\) is a residuation morphism we have by the \eqref{eq:back} condition that for every \(y \in I'\) there exists some \(x_{y,z}\) with \(y \le f(x_{y,z})\) and \(y \lres f(z) = f(x_{y,z} \lres z)\).
          We now choose a finite ideal \(I \incl R\) containing \(z\) and all \(x_{y,z}\) for \(y \in I'\).
          We therefore have
          \begin{align*}
            \gamma_{\lres}(f(z)) &= \V_{y \in I'} y \jsor y \lres f(z) \\
                            &\le \V_{y \in I', y \le f(x_{y,z})}  f(x_{y,z}) \jsor f(x_{y,z} \lres z)  \\
                            &= (f \jsor f)(\V_{y \in I', y \le f(x_{y,z})} x_{y,z} \jsor x_{y,z} \lres z) \\
                            &\le (f \jsor f)(\V_{x \in I} x \jsor x \lres z) \\
                            &= (f \jsor f)(\gamma_{\lres}(z)).
          \end{align*}
          For the reverse inequality let \(z \in R\) be contained in the finite residuation ideal \(I\).
          We choose a finite residuation ideal \(J \incl R'\) containing \(f(z)\) and all \(f(x), x \in I\), and use \eqref{eq:forth}:
          \begin{align*}
            (f \jsor f)(\gamma_{\lres}(z)) &= \V_{x \in I}f(x) \jsor f(x \lres z) \\
                                      &\le \V_{x \in I}f(x) \jsor f(x) \lres f(z) \\
                                      &\le \V_{y \in J} y \jsor y \lres f(z) \\
            &= \gamma_{\lres}(f(z)),
          \end{align*}
This proves that $f$ is a morphism of \(\Um\)-coalgebras.

Conversely, let \(f \colon R \rightarrow R'\) be a morphism of \(\Um\)-coalgebras.
          For every \(z \in R\) the morphism \(f\) restricts to the  finite subcoalgebras generated by \(z, f(z)\) as  \(f_{z} \colon \langle z \rangle \rightarrow \langle f(z) \rangle\).
          If we denote the respective inclusions by \(\iota_{z} \colon \langle z \rangle \incl R\) and \(\iota_{f(z)} \colon \langle f(z) \rangle \incl R'\), then this is equivalent to saying that \(f \cdot \iota_{z} = \iota_{f(z)} \cdot f_{z}\).
          From the unit of \(\iota_{z}\) we thus get
          \[f \le f \cdot \iota_{z} \cdot \iota_{z}^{*} = \iota_{f(z)} \cdot f_{z} \cdot \iota_{z}^{*},\]
          which under transposition is equivalent to
          \begin{equation}
            \label{eq:iota}
            \iota_{f(z)}^{*} \cdot f \le f_{z} \cdot \iota_{z}^{*}.
          \end{equation}
          This entails the \eqref{eq:forth} condition:
          \begin{align*}
            f(x \lres_{\gamma} z) &= f(\iota_{z}(\iota_{z}^{*}(x) \moni \gamma(z))) && \text{def. } \lres_{\gamma} \\
                             &= \iota_{f(z)}(f_{z}(\iota_{z}^{*}(x) \moni \gamma(z))) && f_{z} \text{ restriction} \\
                             &\le \iota_{f(z)}(f_{z}(\iota_{z}^{*}(x)) \moni (f_{z} \jsor f_{z}) \gamma(z))) && \moni \text{\autoref{prop:mon-impl}\ref{prop:mon-impl:preserve}} \\
                             &= \iota_{(z)}(f_{z}(\iota_{z}^{*}(x)) \moni \gamma(f_{z}(z))) && f_{z} \text{ coalgbra morphism} \\
                             &\le \iota_{f(z)}(\iota^{*}_{f(z)}(f(x)) \moni \gamma(f_{z}(z))) && \text{\eqref{eq:iota} + contravariance} \\
                             &\le \iota_{f(z)}(\iota^{*}_{f(z)}(f(x)) \moni \gamma(f(z))) && f_{z}(z) = f(z) \\
                             &= f(x) \lres_{\gamma} f(z).
          \end{align*}
          To verify the \eqref{eq:back} condition, let \(y \in R', z \in R\) and put \[x_{y,z} = \iota_{z}(f_{z}^{*}(\iota_{f(z)}^{*}(y))).\] Then
 \[y \le \iota_{f(z)}\iota_{f(z)}^{*}(y) \le \iota_{f(z)}(f_{z}(f_{z}^{*}(\iota_{f(z)}^{*}(y)))) = f(\iota_{z}(f_{z}^{*}(\iota_{f(z)}^{*}(y)))) = f(x_{y,z})\]
and
          \begin{align*}
            y \lres_{\gamma} f(z) &= \iota_{f(z)}(\iota^{*}_{f(z)}(y) \moni \gamma(f_{z}(z))) && \text{def. } \lres_{\gamma} \\
                             &= \iota_{f(z)}(\iota_{f(z)}^{*}(y) \moni (f_{z} \jsor f_{z})(\gamma(z))) && f_{z} \text{ coalgebra morphism} \\
                             &= \iota_{f(z)}(f_z(f_{z}^{*}(\iota_{f(z)}^{*}(y)) \moni \gamma(z))) && \text{\autoref{prop:mon-impl}\ref{prop:mon-impl:preserve}} \\
                             &= f(\iota_{z}(f^{*}_{z}(\iota_{f(z)}^{*}) \moni \gamma(z))) && \iota_{f(z)} \cdot f_{z} = f \cdot \iota_{z} \\
                             &= f(\iota_{z}(\iota_{z}^{*}(\iota_{z}(f_{z}^{*}(\iota_{f(z)}^{*}(y)))) \moni \gamma(z))) && \iota_{z}^{*}\iota_{z} = \id \\
            &= f_{z}(x_{y,z} \lres_{\gamma} z)  && \text{def. } x_{y,z}.
          \end{align*}
          For the prime-(co-)unitality conditions we split the pointwise equality \(\forall x \colon \epsilon'(f(x)) = \epsilon(x)\) into \(\epsilon'(f(x)) \le \epsilon(x)\) and \(\epsilon(x) \le \epsilon'(f(x))\).
          These are equivalent to \(e' \le f(x) \Rightarrow e \le x\) and \(e \le x \Rightarrow e' \le f(x)\), respectively, combining to the desired condition \(\forall x \colon e' \le f(x) \Leftrightarrow e \le x\).
    \item Now let \(R, R'\) be comonoids and let \(f \in \jsl(R, R')\) be a corelational comonoid morphism.
          For \(x, z \in R\) we choose a finite subcomonoid \(\iota \colon I \incl R\) that contains \(x, z\) and a finite subcomonoid \(\iota' \colon I' \incl R'\) (with comultiplication $\gamma'$) containing \(\rho[I]\).
          Then \(\rho\) restricts to a corelational morphism \(\rho \colon I \rightarrow I'\) of finite comonoids. By \autoref{thm:fin-iso} \(\rho\) is a corelational morphism of finite residuation algebras, so it satisfies \(\rho(a \lres_{\gamma} b) \le \rho(a) \lres_{\gamma'} \rho(b)\) for all \(a, b \in I\).
          We therefore get
          \[\rho(x \lres_{\gamma} z) = \rho(\iota(x \lres_{\gamma} z)) = \iota'(\rho(x \lres_{\gamma} z)) \le \iota'(\rho(x) \lres_{\gamma'} \rho(z)) = \rho(x) \lres_{\gamma'} (z).  \]
          which proves that \(f\) is a corelational residuation morphism.
          To verify \(e' \le \rho(e)\) we again choose finite subcoalgebras \(A, A'\) with \(e \in A\) and  \(\rho[A] \cup \{e'\} \subseteq A'\).
          Since \(\rho\) is prime-counital it satisfies \(\epsilon \le \epsilon' \cdot \rho\) and therefore also \(\epsilon \cdot \iota_{A} \le \epsilon' \cdot \rho \cdot \iota_{A} = \epsilon' \cdot \iota_{A'} \cdot \rho \).
          As \(\epsilon \cdot \iota_{A}\) and \(\epsilon' \cdot \iota_{A'}\) are the counits for  \(A\) and \(A'\), respectively, its restriction \(\rho \colon A \rightarrow A'\) is thus also prime-counital and whence satisfies \(\iota_{A}^{*}(e) \le \rho(\iota_{A'}^{*}(e'))\) for the corresponding units of the residuals on \(A, A'\).
          But \(e \in A\) and \(e' \in A'\), so this equation simplifies to the desired \(e \le \rho(e')\).

          Conversely, if \(f\) is a corelational residuation morphism choose for \(z \in R\)
          a finite residuation ideal \(\iota \colon I \incl R\) containing \(z\) and a finite residuation ideal \(\iota' \colon I' \incl R\) containing \(\rho[I]\).
          Then we have
          \begin{align*}
            (\rho \jsor \rho)(\gamma(z)) &= (\rho \jsor \rho)(\gamma(\iota(z))) \\
                              &= (\rho \jsor \rho)(\iota \jsor \iota)(\gamma(z)) \\
                              &= (\iota' \jsor \iota')(\rho \jsor \rho)(\gamma(z)) \\
                              &\le (\iota' \jsor \iota') \gamma(\rho(z)) \\
                              &= \gamma(\rho(z)),
          \end{align*}
          so \(\rho\) is a corelational residuation morphism.
          To show that \(\epsilon \le \epsilon' \cdot \rho\), take \(z \in R\) with ideals \(I, I'\) chosen as before.
          Recall that \(\iota^{*}(e)\) is a unit of the residuation ideal \(I\) and \(\epsilon \cdot \iota\) is a counit for the corresponding subcoalgebra.
          Since \(\rho\) is prime-unital it satisfies
          \[e' \le \rho(e) \le \rho(\iota(\iota^{*}(e))) = \iota'(\rho(\iota^{*}(e)))\]
          which is equivalent to \(\iota'^{*}(e') \le \rho(\iota^{*}(e))\).
          Since \(\rho \colon I \rightarrow I'\) is a corelational residuation morphism it is a corelational morphism of the coalgebra structures on \(I, I'\) and we thus get
          \[\epsilon(z) = \epsilon(\iota(z)) \le \epsilon'(\iota'(\rho(z))) = \epsilon'(\rho(\iota(z))) = \epsilon'(\rho(z)).\tag*{\qedhere}\]
  \end{enumerate}
\end{proof}

\begin{rem}
  The proof of \autoref{prop:loc-fin-mor} gives an alternative formulation of the \eqref{eq:back} condition for locally finite residuation algebras as
  \[
    y \lres f(z) = f((\iota_{z} \cdot f_{z}^{*} \cdot
    \iota_{f(z)}^{*})(y) \lres z).
  \]
  Here we choose the existentially quantified \(x_{y,z}\) in \eqref{eq:back} via \(f_{z}^{*}\) \emph{locally}:
\[
    \begin{tikzcd}[sep=1.2em]
      x_{y,z}
      \drar[phantom]{\rotatebox{-45}{\(\in\)}}
      \ar[mapsto]{rrr}{}
      &
      &
      &
      f(x_{y, z})
      \dar[phantom]{\rotatebox{-90}{\(\in\)}}
      \ar[phantom]{r}{\scalebox{.8}{\(\ge\)}}
      &
      y
      \dlar[phantom]{\rotatebox{225}{\(\in\)}}
      \ar[mapsto]{ddddd}{}
      \\
      &
      R
      \ar{rr}{f}
      \ar[shift right=2,swap]{ddd}{\iota_{z}^*}
      \ar[phantom]{ddd}{\dashv}
      &
      &
      S
      \ar[shift right=2,swap]{ddd}{\iota_{f(z)}^*}
      \ar[phantom]{ddd}{\dashv}
      &
      \\
      & & & &
      \\
      & & & &
      \\
      &
      \genres z
      \ar[yshift=-4, swap]{rr}{f_{z}}
      \ar[phantom]{rr}{\rotatebox{270}{\(\dashv\)}}
      \ar[hook, shift right=2,swap]{uuu}{\iota_{z}}
      &
      &
      \genres{f(z)}
      \ar[yshift=4, swap]{ll}{f_{z}^*}
      \ar[hook, shift right=2,swap]{uuu}{\iota_{f(z)}}
      &
      \\
      f_{z}^* (\iota_{f(z)}^*(y))
      \urar[phantom]{\rotatebox{45}{\(\in\)}}
      \ar[mapsto]{uuuuu}{}
      &
      &
      &
      &
      \iota_{f(z)}^*(y)
      \ular[phantom]{\rotatebox{135}{\(\in\)}}
      \ar[mapsto]{llll}{}
    \end{tikzcd}
  \]
  Compare this with Equation \eqref{eq:open} satisfied by open residuation morphisms,
  where the existence of a global left adjoint \(f^{*}\) allows one to
  choose \(x_{y,z} = f^{*}(y)\) \emph{globally}, that is, independently of \(z\).
\end{rem}
We then have the following extension of \autoref{thm:fin-iso} to locally finite structures:

\begin{thm}\label{thm:loc-fin-dual}\hfill
  \begin{enumerate}
  \item\label{thm:loc-fin-dual:res} The category of
   locally finite residuation algebras and resi\-du\-ation morphisms is isomorphic
    to the category of locally finite prime-unital \(\Um\)-coalgebras and
    pure coalgebra morphisms.

  \item\label{thm:loc-fin-dual:der} This isomorphism restricts to an isomorphism between the
    full subcategories of derivation algebras and locally finite comonoids.

  \item\label{thm:loc-fin-dual:rel} The category of
    derivation algebras and corelational residuation morphisms is isomorphic to the
    category of locally finite comonoids with relational
    morphisms.
  \end{enumerate}
\end{thm}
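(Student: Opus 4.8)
The plan is to assemble the theorem from the object-level correspondence of \autoref{prop:loc-fin-equiv} and the morphism-level correspondences of \autoref{prop:loc-fin-mor}, exploiting throughout the fact that the comparison functors act as the \emph{identity} on the underlying lattice maps, so that functoriality is automatic and the isomorphisms of categories follow formally.

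For part \ref{thm:loc-fin-dual:res}, I would define the two functors on objects via the mutually inverse assignments $R \mapsto (\Um R, \gamma_{\lres})$ and $(C, \gamma) \mapsto (C, \lres_{\gamma})$ from \autoref{prop:loc-fin-equiv}. That proposition, together with the observation that each locally finite residuation algebra carries a \emph{unique} $\Um$-coalgebra structure and conversely, shows these are inverse bijections on objects; by \autoref{lem:loc-fin-op-equiv} they match the prime-unital residuation algebras with the prime-counital $\Um$-coalgebras. On morphisms both functors act as the identity on the underlying lattice homomorphism: \autoref{prop:loc-fin-mor} shows that $f \in \dl(R,R')$ is a residuation morphism precisely when it is a pure prime-counital coalgebra morphism. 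Since composition and identities in both categories are just those of the underlying maps, the assignment is trivially functorial, and the object bijection together with the hom-set bijection yields an isomorphism of categories.

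For part \ref{thm:loc-fin-dual:der}, I would check that the object bijection of \ref{thm:loc-fin-dual:res} carries the full subcategory of derivation algebras onto the full subcategory of locally finite comonoids. By definition a derivation algebra is a locally finite, associative, prime-unital residuation algebra all of whose finite residuation ideals are pure; by \autoref{lem:loc-fin-op-equiv} these three conditions translate exactly into coassociativity, prime-counitality, and purity of the comultiplication, which is precisely the definition of a locally finite comonoid (\autoref{def:fin-res-mor}, \autoref{D:locally-fin}). As both are \emph{full} subcategories and the morphisms already coincide by \ref{thm:loc-fin-dual:res}, the isomorphism restricts. For part \ref{thm:loc-fin-dual:rel}, the objects are the same as in \ref{thm:loc-fin-dual:der}, so only the class of morphisms changes, and here I would invoke the second statement of \autoref{prop:loc-fin-mor}: on comonoids, a finite-join-preserving map is a corelational residuation morphism iff it is a corelational comonoid morphism. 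Again the functor is the identity on underlying maps, so the isomorphism follows exactly as before.

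Since all the analytic work has been discharged in \autoref{prop:loc-fin-equiv}, \autoref{lem:loc-fin-op-equiv}, and \autoref{prop:loc-fin-mor}, the present argument is essentially organizational, and the only point demanding care—hence the main obstacle—is the bookkeeping that lines up the subcategory definitions exactly: one must verify that the associativity/prime-unitality/purity conditions defining derivation algebras correspond term-by-term to the coassociativity/prime-counitality/purity conditions defining locally finite comonoids, and that the object map is genuinely well-defined and bijective between the stated categories rather than merely between the underlying lattices. No new deep argument is required.
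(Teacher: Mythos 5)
Your proposal is correct and takes exactly the paper's route: the paper's entire proof of this theorem reads ``Immediate from \autoref{lem:loc-fin-op-equiv} and \autoref{prop:loc-fin-mor}'', and your write-up simply spells out that assembly --- objects via \autoref{prop:loc-fin-equiv} together with \autoref{lem:loc-fin-op-equiv}, morphisms via \autoref{prop:loc-fin-mor}, with functoriality automatic because the comparison functors act as the identity on underlying maps. The bookkeeping you flag (matching associativity/prime-unitality/purity to coassociativity/prime-counitality/purity) is precisely what those cited results discharge, so no further argument is needed.
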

\begin{proof}
Immediate from  \autoref{lem:loc-fin-op-equiv} and \autoref{prop:loc-fin-mor}.
\end{proof}

\subsection{Duality Theory for Locally Finite Residuation Algebras}
\label{sec:duality-theory}

We now gathered the ingredients to present the first application of our abstract extended duality (\autoref{thm:extended-duality}): a categorical duality between \emph{profinite ordered monoids} and
\emph{derivation algebras}.
Recall that a profinite ordered monoid is a codirected limit of finite ordered monoids; like in the order-discrete setting, they are equivalent to \emph{Priestley monoids}, viz.\ monoids in the cartesian category \priest (\autoref{prop:every-priest-mon-prof}).
This result is a non-trivial
restriction of Gehrke's duality~\cite{gehrke-16-proc,gehrke-16}
between Priestley-topological algebras and residuation algebras.

Conceptually, this general duality is an extension of the finite duality \(\ordmon_{\mathrm{f}}
\simeqop \comon_{\mathrm{f}} \cong \der_{\mathrm{f}}\) by forming suitable completions. We
start by investigating the Ind- and Pro-completions of the categories involved in the finite
duality \autoref{rem:fin-dual}.%

\begin{rem}\label{rem:ind-pro}
  The \emph{Ind-completion} (or \emph{free completion under filtered colimits}) of a small category $\cat C$ is given by a category $\ind(\cat C)$ with filtered (equivalently directed) colimits and a full embedding $I\colon \cat C\hookrightarrow \ind(\cat C)$ such that every functor $F\colon \cat C\to \cat D$ into a category $\cat D$ with filtered colimits extends to a functor $\overline{F}\colon \ind(\cat C)\to \cat D$, unique up to natural isomorphism, such that $F=\overline{F}\cdot I$. To show that a category $\cat D$ is the Ind-completion of a full subcategory $\cat C$, it suffices to prove the following (see e.g.~\cite[Thm.~A.4]{adamek-chen-milius-urbat-21}):
  \begin{enumerate}
  \item $\cat D$ has filtered colimits,

  \item every object of $\cat D$ is a filtered colimit of objects of $\cat C$, and

  \item every object $C$ of $\cat C$ is finitely presentable in $\cat D$, that is, the functor $\cat D(C,-)\colon \cat D\to \set$ preserves filtered colimits.  
  \end{enumerate}
  The \emph{Pro-completion} (or \emph{free completion under cofiltered limits}) of $\cat C$ is defined dually.
\end{rem}

Profinite ordered monoids form the {Pro-completion} of the category of finite ordered monoids.
Dually, lattice comonoids (and therefore also derivation algebras by \iref{thm:loc-fin-dual}{der}) form {Ind-completions}  of their respective subcategories of finite objects:

% Recall that a residuation algebra is
% \emph{Boolean} if it is Boolean as a residuation
% algebra (\autoref{D:bra}).%

\begin{prop}\label{prop:loc-fin-comon-ind-compl}
  The category of locally finite comonoids forms the Ind-completion of the category of finite comonoids:
  \[\comon_{\mathrm{lf}} \simeq \ind(\comon_{\mathrm{f}}).\]
\end{prop}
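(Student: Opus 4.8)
The plan is to apply the criterion recalled in \autoref{rem:ind-pro} with $\cat C=\comon_{\mathrm f}$ and $\cat D=\comon_{\mathrm{lf}}$: it suffices to show that (1)~$\comon_{\mathrm{lf}}$ has filtered colimits, (2)~every locally finite comonoid is a filtered colimit of finite comonoids, and (3)~every finite comonoid is finitely presentable in $\comon_{\mathrm{lf}}$. Throughout I will use that $\dl$ is a finitary variety, so that its forgetful functor to $\set$ creates filtered colimits, and that the coproduct bifunctor $+\colon \dl\times\dl\to\dl$ (equivalently the tensor $\msor$, via \autoref{lem:tensor}) preserves filtered colimits in each variable, being part of a left adjoint to the diagonal. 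Since the diagonal $i\mapsto (i,i)$ is cofinal in any filtered index category, it follows that $C\mapsto C+C$ preserves filtered colimits, i.e.\ $(\operatorname{colim}_i C_i)+(\operatorname{colim}_i C_i)\cong \operatorname{colim}_i(C_i+C_i)$.

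For~(1), given a filtered diagram of comonoids $(C_i,\bar\gamma_i,\epsilon_i)$ I form the colimit $C=\operatorname{colim}_i C_i$ in $\dl$ and equip it with the comultiplication $\bar\gamma=\operatorname{colim}_i\bar\gamma_i\colon \Um C\to \Um C\msor\Um C$, using the identification of the codomain with $\operatorname{colim}_i(\Um C_i\msor\Um C_i)$ just noted, and with the induced counit $\epsilon=\operatorname{colim}_i\epsilon_i$. Coassociativity, prime-counitality and purity are expressed by commuting diagrams of $\dl$-morphisms that hold at each stage, hence hold in the colimit, so $(C,\bar\gamma,\epsilon)$ is a comonoid. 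It is locally finite because any finite subset of $C$ factors through some $C_j$ and there lies in the image of a finite subcomonoid $A\incl C_j$, whose image $g_j[A]$ is again a finite subcomonoid of $C$. Finally, any cocone of comonoid morphisms out of the diagram is in particular a cocone of lattice morphisms, so the universal property of $C$ in $\dl$ upgrades to one in $\comon_{\mathrm{lf}}$.

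For~(2), the finite subcomonoids of a locally finite comonoid $C$ form a directed poset under inclusion: any two of them span a finite subset, which by local finiteness lies in a common finite subcomonoid. Their filtered colimit, computed as in~(1), is the directed union, which is all of $C$ since every single element is contained in a finite subcomonoid (e.g.\ the smallest one, cf.\ \autoref{prop:loc-fin-equiv}). Hence $C$ is canonically a filtered colimit of objects of $\comon_{\mathrm f}$. For~(3), let $C$ be a finite comonoid and $D=\operatorname{colim}_i D_i$ a filtered colimit in $\comon_{\mathrm{lf}}$. A comonoid morphism $C\to D$ is in particular a $\dl$-morphism, and since the finite lattice $C$ is finitely presentable in $\dl$ it factors essentially uniquely through some $D_i$. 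Because $C$ and $C+C$ are finitely presentable and $(-)+(-)$ commutes with the colimit, the finitely many equations witnessing compatibility with $\bar\gamma$ and $\epsilon$ (cf.\ \autoref{def:fin-res-mor}) already hold at some later stage, so the factorization is eventually a comonoid morphism; thus $\comon_{\mathrm{lf}}(C,-)$ preserves filtered colimits. Conditions (1)--(3) then yield $\comon_{\mathrm{lf}}\simeq\ind(\comon_{\mathrm f})$.

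The main obstacle is the bookkeeping in~(1) and~(3) around the identification $C+C\cong\operatorname{colim}_i(C_i+C_i)$: everything reduces to the fact that the tensor/coproduct bifunctor commutes with filtered colimits, which is precisely what lets the equational comonoid conditions both transport to the colimit and reflect back to a finite stage. The remaining checks (that purity, coassociativity and the prime counit survive, and that images of finite subcomonoids are finite subcomonoids) are routine once this commutation is in place.
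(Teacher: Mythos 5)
Your proposal is correct and takes essentially the same route as the paper's proof: both verify the three conditions of \autoref{rem:ind-pro}, using that filtered colimits of comonoids are created from the underlying category (the paper passes through \set via the finitary-variety argument, while you stay in \dl using that $+$ commutes with filtered colimits), that a locally finite comonoid is the directed union of its finite subcomonoids, and that finite comonoids are finitely presentable. The only real difference is cosmetic: for condition (3) the paper cites the known fact that finite coalgebras for the finitary functor $FX=2\times(X+X)$ are finitely presentable in the category of $F$-coalgebras (\cite[Lemma~3.2]{ap04}), whereas you reprove that standard fact inline via finite presentability of $C$ in \dl and commutation of $+$ with the colimit.
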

\begin{proof}
  (a) We first show that filtered colimits of lattice comonoids are formed in \set.
  First, since \dl is a category of algebras over a finitary signature, filtered colimits in \dl are formed in \set.
  Second, since \(+ \colon \dl \times \dl \rightarrow \dl\) is a finitary functor (colimits commute with colimits) filtered colimits in the category of \(+\)-coalgebras are also formed in \set.
  As comonoids are a full subcategory of \(+\)-coalgebras it suffices to show that the filtered colimit in \(+\)-coalgebras of lattice comonoids is again a comonoid, which is a straightforward verification:
  Let \(d_{i} \colon D_{i} \rightarrow D_{i} + D_{i}, i \in I\) be a cofiltered diagram of comonoids, and let \(d \colon D \rightarrow D + D\)  be their colimit in the category of \(+\)-coalgebras with colimit injections \(\kappa_{i} \colon D_{i} \rightarrow D\).
  Since the colimit \(D\) is formed in $\set$, there exists for every \(x \in D\) some \(i \in I\) and \(x_{i} \in D_{i}\) with \(\kappa_{i}(x_{i}) = x\).
  But then
  \begin{align*}
    (d + \id)(d(x)) &= (d + \id)d(\kappa_{i}(x_{i})) && \\
                    &= (d + \id)(\kappa_{i} + \kappa_{i})(d_{i}(x_{i})) && \text{\(\kappa_{i}\) comonoid morphism} \\
                    &= ((d \cdot \kappa_{i}) + \kappa_{i})(d_{i}(x_{i})) && \\
                    &= (((\kappa_{i} + \kappa_{i}) \cdot d_{i}) + \kappa_{i})(d_{i}(x_{i})) && \text{\(\kappa_{i}\) comonoid morphism} \\
                    &= (\kappa_{i} + \kappa_{i} + \kappa_{i})(d_{i} + \id)(d_{i}(x_{i})) && \\
                    &= (\kappa_{i} + \kappa_{i} + \kappa_{i})(\id + d_{i})(d_{i}(x_{i})) && d_{i} \text{ comonoid}\\
                    &= \ \cdots && \text{backwards} \\
                    &= (\id + d)(d(x)), &&
  \end{align*}
  so \(D\) is coassociative.
  Counitality works similarly for \(\epsilon \colon D \rightarrow 2\) with \(\epsilon(x) = \epsilon(\kappa_{i}(x_{i})) = \epsilon_{i}(x_{i})\).
  This proves that filtered colimits of comonoids are formed in \set.

  (b) To prove that the category $\comon_{\mathrm{lf}}$ is  the Ind-completion of its full subcategory $\comon_{\mathrm{f}}$, we verify the conditions of \autoref{rem:ind-pro}: 
\begin{enumerate}
  \item $\comon_{\mathrm{lf}}$ has filtered colimits: filtered colimits of comonoids are formed in \set by (a). Moreover, a filtered colimit $c_i\colon C_i\to C$ ($i\in I$) of locally finite comonoids is locally finite: Given $x\in C$ one has $x=c_i(x_i)$ for some $i\in I$ and $x_i\in C_i$. Thus $x_i\in C_i'$ for some finite subcomonoid $C_i'$ of $C_i$, and so $x_i$ lies in the finite subcomonoid $c_i[C_i']\subseteq C$.
\item Every locally finite comonoid is the directed union of the diagram of all its finite subcomonoids. This follows again from (a), since this is clearly a directed union in $\set$.
\item Every finite comonoid is finitely presentable:  A finite comonoid can be regarded as a coalgebra $C\to 2\times (C+C)$ for the functor $FX=2\times (X+X)$ on $\dl$ by pairing its counit and comultiplication. Every finite $F$-coalgebra is finitely presentable in the category of all $F$-coalgebras~\cite[Lemma~3.2]{ap04}. This implies the corresponding statement for finite comonoids since they form a full subcategory of the category of $F$-coalgebras.\qedhere
\end{enumerate}
\end{proof}

On the dual side, we require a property of Priestley monoids (\autoref{prop:every-priest-mon-prof} below), namely that every Priestley monoid is profinite, that is, it is a cofiltered limit of finite ordered monoids with the discrete topology. The corresponding result for (unordered) Stone monoids is well known~\cite[Thm.~VI.2.9]{johnstone-82}; its ordered version is analogous, and we give a full proof for the convenience of the reader.

We need some auxiliary results:
First, we recall a well-known characterization for quotients of ordered algebras, see e.g.~\cite[Sec.~B.2]{milius-urbat-19}.
Recall that a preorder \(\sqsubseteq\) on an ordered monoid \(A\) is
\emph{stable} if \(\sqsubseteq\) refines the order on \(A\)
(i.e.~\(\le\) is included in \(\sqsubseteq\)) and the multiplication
is monotone: \(a \sqsubseteq b, a' \sqsubseteq b'$ implies $aa' \sqsubseteq bb'\).

\begin{lem}\label{lem:stable-preorder}
  Let \(X\) be an ordered monoid.
  Then ordered monoid quotients of \(X\) are in bijective correspondence with stable preorders on \(X\).
\end{lem}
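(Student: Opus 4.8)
The plan is to exhibit the bijection explicitly in both directions and verify that the two assignments are mutually inverse. In one direction, given an ordered monoid quotient $q \colon X \epi Y$, I would define a relation $\sqsubseteq_q$ on $X$ by setting $a \sqsubseteq_q b$ iff $q(a) \le q(b)$ in $Y$. Reflexivity and transitivity are inherited from the partial order of $Y$, so $\sqsubseteq_q$ is a preorder; it refines $\le$ because $q$ is monotone (whence $a \le b \Rightarrow q(a) \le q(b) \Rightarrow a \sqsubseteq_q b$); and it is stable because $q$ is a monoid homomorphism and multiplication in $Y$ is monotone, so $a \sqsubseteq_q b$ and $a' \sqsubseteq_q b'$ give $q(aa') = q(a)q(a') \le q(b)q(b') = q(bb')$, i.e.\ $aa' \sqsubseteq_q bb'$.

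In the other direction, given a stable preorder $\sqsubseteq$, I would form its symmetric part $\equiv$ (with $a \equiv b$ iff $a \sqsubseteq b$ and $b \sqsubseteq a$). Stability makes $\equiv$ a monoid congruence, and I would equip the quotient set $X/{\equiv}$ with the relation $[a] \le [b]$ iff $a \sqsubseteq b$, which is well-defined on classes and antisymmetric by construction, hence a genuine partial order, together with multiplication $[a][b] = [ab]$ and unit $[1]$, all well-defined by stability. Monotonicity of this multiplication follows again from stability, so $X/{\equiv}$ is an ordered monoid, and the canonical map $q_\sqsubseteq \colon X \to X/{\equiv}$ is a surjective ordered monoid homomorphism, with $q_\sqsubseteq$ monotone precisely because $\sqsubseteq$ refines $\le$.

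Finally I would check the two constructions are mutually inverse. Starting from a stable preorder $\sqsubseteq$ and forming $q_\sqsubseteq$, the induced preorder satisfies $a \sqsubseteq_{q_\sqsubseteq} b$ iff $[a] \le [b]$ iff $a \sqsubseteq b$, recovering $\sqsubseteq$ on the nose. Conversely, starting from a quotient $q \colon X \epi Y$ and forming $\sqsubseteq_q$, its symmetric part coincides with the kernel of $q$ as a function, using antisymmetry of $\le$ on $Y$ (so $q(a) \le q(b)$ and $q(b) \le q(a)$ force $q(a) = q(b)$). Hence $[a] \mapsto q(a)$ is a bijection $X/{\equiv} \to Y$ that is simultaneously an order-isomorphism and a monoid isomorphism, exhibiting $q_{\sqsubseteq_q}$ and $q$ as the same quotient under the standard identification of quotients up to isomorphism of the codomain.

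The only genuinely delicate point is the antisymmetrization step: since an ordered monoid must carry a \emph{partial} order, one cannot quotient by a mere preorder but must pass to its symmetric part $\equiv$, and then verify both that $X/{\equiv}$ retains a monoid structure and that the induced order is well-defined and antisymmetric. All of these well-definedness checks reduce to stability of $\sqsubseteq$ and are routine; the conceptual content is simply that stable preorders are exactly the ``kernels'' of ordered monoid quotients once one records the full comparison $q(a) \le q(b)$ rather than only the equality $q(a) = q(b)$.
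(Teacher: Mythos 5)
Your proposal is correct and follows essentially the same route as the paper's proof: pulling back the codomain order along the quotient map in one direction, and quotienting by the symmetric part of the stable preorder in the other, with the two constructions checked to be mutually inverse. The paper's version simply leaves as ``routine verification'' the details you spell out (well-definedness via stability, antisymmetry of the induced order, and the identification of quotients up to isomorphism of the codomain).
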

\begin{proof}
  A quotient \(e \colon A \epi B\) induces a stable preorder by putting \(a \sqsubseteq a'\) iff \(e(a) \le e(a')\) for \(a, a' \in A\).
  Conversely, from a stable preorder \(\sqsubseteq \) on \(A \times A\) we obtain an equivalence relation on \(A\) by identifying \(a \sim a'\) iff \(a \sqsubseteq a'\) and \(a' \sqsubseteq a\).
  Monotonicity of the multiplication then ensures that multiplication on equivalence classes is well-defined.
  The refinement property ensures that the canonical projection, which maps an element to its equivalence class, is order-preserving.
  It is a routine verification to check that these constructions are inverses.
\end{proof}

\begin{lem}\label{lem:priest-sep}
  Let \(X\) be a Priestley monoid. If \(x \not \le y\), then there exists a finite Priestley monoid quotient \(f\colon X \epi M \) such that \(fx \not \le fy\).
\end{lem}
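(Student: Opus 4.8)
The plan is to build the desired quotient directly as the quotient by a \emph{syntactic} stable preorder attached to a separating clopen upset, and then to establish finiteness by a compactness argument. Since $x\not\le y$ in the Priestley space $X$, the Priestley separation axiom supplies a clopen upset $U\subseteq X$ with $x\in U$ and $y\notin U$. Writing $1$ for the monoid unit and using the continuous, monotone multiplication, I would define a relation $\sqsubseteq$ on $X$ by
\[
a\sqsubseteq b \quad:\Longleftrightarrow\quad \forall s,t\in X\colon\ sat\in U\Rightarrow sbt\in U ,
\]
where $sat$ denotes the (associative) product.

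First I would verify that $\sqsubseteq$ is a stable preorder refining the order of $X$, so that \autoref{lem:stable-preorder} applies. Reflexivity and transitivity are immediate. Refinement of $\le$ uses monotonicity of multiplication together with $U$ being an upset: if $a\le b$ and $sat\in U$, then $sat\le sbt$, hence $sbt\in U$. Stability (that $a\sqsubseteq b$ and $a'\sqsubseteq b'$ imply $aa'\sqsubseteq bb'$) follows by applying the defining implication twice, first in the context $(s,a't)$ and then in the context $(sb,t)$. Finally, instantiating the definition at $s=t=1$ shows that $x\sqsubseteq y$ would force $x\in U\Rightarrow y\in U$, which fails; hence $x\not\sqsubseteq y$. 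By \autoref{lem:stable-preorder} the symmetrization of $\sqsubseteq$ (identifying $a\sim a'$ when $a\sqsubseteq a'$ and $a'\sqsubseteq a$) yields an ordered monoid quotient $f\colon X\epi M$ with $f(x)\not\le f(y)$.

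The main obstacle is to show that this quotient is \emph{finite} and that $f$ is continuous, i.e.\ that $\sqsubseteq$ is clopen in $X\times X$; here I would exploit compactness. Let $m_3\colon X^3\to X$ be the continuous map $(s,a,t)\mapsto sat$, so that $V=m_3^{-1}(U)$ is clopen. Since clopen rectangles form a basis of the product of Priestley (hence Stone) spaces and $V$ is compact, $V$ is a finite union of clopen boxes $V=\bigcup_{k=1}^N P_k\times Q_k\times R_k$. Letting $\mathcal B$ be the finite clopen partition of $X$ generated by all the $P_k,Q_k,R_k$, membership $(s,a,t)\in V$ depends only on the $\mathcal B$-blocks of $s,a,t$. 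Consequently each clopen set $W_{s,t}=\{a\in X\mid sat\in U\}$ is a union of $\mathcal B$-blocks that depends only on the blocks of $s$ and $t$, so the family $\{W_{s,t}\mid s,t\in X\}$ is finite. As
\[
\sqsubseteq \;=\; \bigcap_{s,t\in X}\bigl((X\setminus W_{s,t})\times X \,\cup\, X\times W_{s,t}\bigr)
\]
is therefore a \emph{finite} intersection of clopen sets, it is clopen; hence the $\sqsubseteq$-classes are open and, by compactness of $X$, finite in number.

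It then remains to assemble the conclusion, which is routine: $M$ is a finite poset, each fibre of $f$ is one of the finitely many clopen blocks, so $f$ is continuous onto the finite discrete space $M$, and $f$ is a monotone monoid homomorphism by \autoref{lem:stable-preorder}. Thus $f\colon X\epi M$ is a finite Priestley monoid quotient with $f(x)\not\le f(y)$. I expect the clopen-box decomposition showing $\{W_{s,t}\}$ finite to be the only delicate step, the remaining verifications being straightforward.
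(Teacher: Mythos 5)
Your proof is correct, and it follows the same overall strategy as the paper---a context-closed (``syntactic'') stable preorder, \autoref{lem:stable-preorder}, and a compactness argument---but both of its key steps are carried out differently. For the separating datum, the paper invokes profiniteness of the underlying Priestley space to obtain a continuous monotone surjection $e\colon X\epi A$ onto a finite discrete poset with $e(x)\not\le e(y)$, and then context-closes the preorder induced by $e$ (setting $a\preceq b$ iff $uav\sqsubseteq ubv$ for all $u,v$); your preorder is precisely this construction in the special case $A=2$, $e=\chi_U$, so you trade the (standard but nontrivial) profiniteness fact for the Priestley separation axiom itself, which makes the proof more self-contained. For finiteness and continuity, the paper shows that the induced equivalence relation is open by a pointwise argument: using continuity of multiplication it covers $X\times X$ by finitely many neighbourhoods on which products behave uniformly, and intersects the resulting neighbourhoods of a related pair. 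You instead decompose the clopen set $m_3^{-1}(U)\subseteq X^3$ into finitely many clopen boxes, conclude that only finitely many context sets $W_{s,t}$ occur, and hence that $\sqsubseteq$ itself is a finite intersection of clopens. Your route yields slightly more (the preorder is clopen and of finite index, with an explicit bound on the number of classes in terms of the blocks, rather than openness of the equivalence classes alone), whereas the paper's cover argument is the one that transfers verbatim to Johnstone's original setting of Stone algebras, where the separating datum is an arbitrary finite quotient rather than a single clopen upset.
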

  We modify the proof of a corresponding statement for Stone algebras given by Johnstone~\cite[Ch.~VI, Sec.~2.7]{johnstone-82}.
\begin{proof}
  Let \(x \not \le y\) be elements of a Priestley monoid~\(X\), we show that there exists a quotient of \(f \colon X \epi M\) into some finite discretely-topologized ordered monoid \(M\) satisfying \(f(x) \not \le f(y)\).
  Since the underlying Priestley space of \(X\) is profinite, there exists a continuous surjection \(e \in \priest(X, A)\) such that \(A\) is a finite discretely topologized poset, satisfying \(e(x) \not \le e(y)\).
  We denote the preorder on \(X\) induced by \(e\) by \(\sqsubseteq\) (i.e.~\(x \sqsubseteq y\) iff \(e (x) \le e (y)\)).
  We define \(\preceq\) by \[x \preceq y \quad \text{iff} \quad \forall u, v \in X \colon uxv \sqsubseteq uyv,\] and denote its corresponding equivalence relation by \(\approx\).

  (1) We prove that \(\preceq\) is a stable preorder.
  First we show that \(\preceq\) refines \(\le\): if \(x \le y\), then \(\forall u, v \in X \colon uxv \le uyv\) since the multiplication is monotone. Moreover, we have \(uxv \sqsubseteq uyv\) for all $u,v \in X$, since \(\le\) is contained in \(\sqsubseteq\).
  It follows that multiplication is \(\preceq\)-monotone: if \(x \preceq y\) and \(x' \preceq y'\), then for all \(u, v \in X\) we have
  \[ u(xx')v= ux(x'v) \sqsubseteq uy(x'v) = (uy)x'v \sqsubseteq (uy)y'v = u(yy')v,\]
  whence \(xx' \preceq yy'\).

  (2a) We prove that the equivalence relation \({\approx} \subseteq X \times X\) is open.
  The equivalence relation induced by $\sqsubseteq$ is denoted by \(\equiv\), that is, \(x \equiv y\) iff \(x \sqsubseteq y \sqsubseteq x\)  iff \(e(x) = e(y)\).
  Note that  \(\equiv\) is open since it is the preimage of the (open) diagonal of \(A\) under \(e \times e\).
  Now let \(m \approx n\). By openness of \(\equiv\) and continuity of the multiplication we obtain for all \(u, v \in X\) open neighbourhoods \(U_{u}, U_{v}, V_{u,v}\) and \(W_{u,v}\) of \(u, v, m, n\), respectively, such that
  \begin{align}
    \label{eq:env}
    \forall u' \in U_{u}, v' \in U_{v}, m' \in V_{u, v}, n' \in W_{u, v} \colon u'm'v' \equiv u'n'v'.
  \end{align}
  By compactness of \(X\) we have \(X \times X = \bigcup_{u, v} U_{u} \times U_{v} = \bigcup_{i=1}^{n} U_{u_{i}} \times U_{v_{i}}\) for some $n$ and $u_i,v_i$.
  Now set \(V = \bigcap_{i=1}^{n} V_{u_{i}, v_{i}}\) and \(W = \bigcap_{i=1}^{n} W_{u_{i}, v_{i}}\). Then
  \(V \times W\) is an open neighbourhood of \((m, n)\) satisfying \(V \times W \subseteq {\approx}\):
  for \((m', n') \in V \times W\) we have for all \(u', v' \in X\) some \(i\) with \((u', v') \in U_{u_{i}} \times U_{v_{i}}\). Using \eqref{eq:env} we have \(u'm'v' \equiv u'n'v' \), proving \(m' \approx n'\).

  (2b) We prove that every equivalence class \([x]_{\approx}\) is open.
  For every  \(y \in [x]_{\approx}\), there exists, since \(\approx\) is open by (2a), a basic open \(U \times V \subseteq \,\, \approx \) with \((x, y) \in U \times V\).
  For every \(y' \in V\) we have \((x, y') \in U \times V \subseteq { \approx } \), so \(V\) is an open neighbourhood of \(y\) that is contained in \([x]_{\approx}\), proving that \(\approx\) is open.

  (3) It follows that the quotient \(f\) induced by $\preceq$ has the desired properties:
  (3a) It is a homomorphism of ordered monoids by \autoref{lem:stable-preorder}, since \(\preceq\) is a stable preorder.
%  (3b) It is continuous since every equivalence class is open.
  (3c) The codomain \(X / {\approx}\) is finite and discrete, since equivalence relations with open equivalence classes are discrete, so \(X / {\approx}\) is finite by compactness.
  (3d) Finally we show that \(f\) order-separates \(x\) and \(y\):
  For the sake of contradiction suppose \(f(x) \le f(y)\).
  By definition we have \(x \preceq y\), implying  \(x = 1x1
  \sqsubseteq 1y1 = y\),
  a contradiction.
 \end{proof}

\begin{prop}\label{prop:every-priest-mon-prof}
  Every Priestley monoid is profinite: \[\cat{PriestMon} \cong \cat{ProfOrdMon}.\]
\end{prop}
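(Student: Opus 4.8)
The plan is to realize an arbitrary Priestley monoid $X$ as the cofiltered limit of its finite discrete ordered monoid quotients, which exhibits it directly as a profinite ordered monoid. For the reverse inclusion there is nothing to do: a cofiltered limit of finite ordered monoids is computed in $\priest$ (limits of monoids in $\priest$ are created by the forgetful functor, and are in turn formed as closed subspaces of products of the underlying finite discrete spaces), so it is itself a monoid in $\priest$; hence $\cat{ProfOrdMon}$ is a full subcategory of $\cat{PriestMon}$, and the content of the proposition is that this inclusion is essentially surjective, i.e.\ that \emph{every} Priestley monoid is profinite.

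First I would assemble the relevant cofiltered diagram. By \autoref{lem:stable-preorder}, the finite discretely-topologized ordered monoid quotients $f_i\colon X\epi M_i$ correspond to stable preorders on $X$ with finite quotient, and this system is codirected: given quotients $f_i$ and $f_j$, the image of $\langle f_i,f_j\rangle\colon X\to M_i\times M_j$ is a finite ordered monoid quotient of $X$ refining both. Writing $L=\varprojlim_i M_i$ for the limit in $\priest$, with projections $p_i\colon L\to M_i$, one obtains a canonical comparison morphism of Priestley monoids $\kappa\colon X\to L$ satisfying $p_i\cdot\kappa=f_i$.

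Next I would show that $\kappa$ is a bijective order-embedding. Monotonicity is immediate, and that $\kappa$ reflects the order is precisely \autoref{lem:priest-sep}: if $x\not\le y$ in $X$, then some finite quotient satisfies $f_i(x)\not\le f_i(y)$, whence $\kappa(x)\not\le\kappa(y)$ since $p_i$ is monotone. For surjectivity, fix $\ell\in L$ and consider the clopen sets $C_i=f_i^{-1}(p_i(\ell))\subseteq X$; each is nonempty since $f_i$ is surjective, and the family $\{C_i\}$ has the finite intersection property because whenever $M_k$ refines $M_i$ and $M_j$ (via $\pi_{ki},\pi_{kj}$ with $f_i=\pi_{ki}\cdot f_k$ etc.) one has $C_k\subseteq C_i\cap C_j$. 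By compactness of $X$ the intersection $\bigcap_i C_i$ is nonempty, and any $x$ in it satisfies $f_i(x)=p_i(\ell)$ for all $i$, i.e.\ $\kappa(x)=\ell$.

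Finally, $\kappa$ is a continuous bijection from a compact space to a Hausdorff space, hence a homeomorphism; being a bijective order-embedding it is an order-isomorphism, and since it also preserves multiplication and unit it is an isomorphism of Priestley monoids. Thus $X\cong L$ is profinite, completing the essential surjectivity and hence the equivalence $\cat{PriestMon}\cong\cat{ProfOrdMon}$. The genuinely hard step is the separation statement \autoref{lem:priest-sep}, which guarantees that enough finite ordered-monoid quotients exist to recover the order; once that is in hand the remainder is the standard profinite compactness argument, whose only points requiring care are the finite-intersection property above and the fact that limits of Priestley monoids are computed space-wise.
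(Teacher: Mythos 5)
Your proof is correct and follows essentially the same route as the paper: realize $X$ as the cofiltered limit of its finite discretely-topologized ordered monoid quotients, use \autoref{lem:priest-sep} to get order-reflection of the comparison map, and combine with surjectivity and compactness to conclude it is an isomorphism of Priestley monoids. The only cosmetic difference is that you prove surjectivity directly via the finite-intersection-property argument (and verify codirectedness of the diagram explicitly), whereas the paper cites a standard lemma on codirected limits of compact Hausdorff spaces; these amount to the same argument.
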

\begin{proof}
  For every $X\in \cat{PriestMon}$ we prove that \(X \cong \lim D_{X}\), where \(D_{X}\) is the canonical codirected diagram for \(X\) over all finite discretely-topologized ordered monoid quotients of~\(X\); in particular, this shows that $X$ is profinite.

  First note that the limit of a diagram \(D\) in \cat{PriestMon} is  formed by taking the limit \(L=\lim |D|\) of the underlying diagram \(| - | \cdot D\) in \stonemon and equipping it with the product order. Then all projections are monotone, and
  this order makes \(L\) totally order-disconnected:
  if \((x_{e})_{e}, (y_{e})_{e} \in L\) with \((x_{e})_{e} \not
  \le (y_{e})_{e}\) then there exists by definition of the order some
  quotient \(e \colon X \epi X_{e}\)
  with \(x_{e} \not \le y_{e}\).
  Hence \(p_{e}^{-1}[{\uparrow} x_{e}]\) is a clopen upset of \( L\) containing \((x_{e})_{e}\) but not \((y_{e})_{e}\).
  This shows that \(L\) is a Priestley monoid, and it is easy to verify that it satisfies the universal property in \cat{PriestMon}.

  We have to prove that the canonical homomorphism \(\phi \colon X \rightarrow \lim D_{X}\) is an isomorphism of Priestley monoids, or equivalently, that it is a continuous surjective order-embedding. Continuity and order-preservation are given since \(\phi\) is a morphism in \priest.

  That \(\phi\) is an order-embedding follows from it  being  order-reflecting: \(\phi(x) \le \phi(y) \Rightarrow x \le y\).
  By contraposition it suffices to show \(x \not \le y \Rightarrow \phi(x) \not \le \phi(y)\), but the right side of the implication is equivalent to finding an ordered monoid quotient \(f \colon X \epi A\) into a finite Priestley monoid with \(f(x) \not \le f(y)\).
  This is precisely the content of \autoref{lem:priest-sep}.

  To show that $\phi$ is surjective, we use a general property of codirected limits of compact Hausdorff spaces \cite[Lemma 1.1.5]{rz10}: if \(f_{i} \colon X \epi X_{i}\) is a compatible family of surjections into a codirected diagram \(X_{i}\), then the induced mapping \(f \colon X \rightarrow \lim X_{i}\) is surjective.
\end{proof}

By duality we immediately obtain the following result:

\begin{cor}
  Every comonoid is locally finite: \(\comon \;\cong\; \comon_{\mathrm{lf}}\).
\end{cor}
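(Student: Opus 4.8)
This corollary is exactly the dual of \autoref{prop:every-priest-mon-prof}, and the plan is to transport the latter across the extended Priestley duality. The first step is to record that the duality of \autoref{sec:exampl-extend-priest} restricts to a dual equivalence $\comon \simeqop \cat{PriestMon}$ between pure lattice comonoids and Priestley monoids: a pure $\Um$-comonoid $C \to C + C$ dualizes, by preservation of purity (\iref{rem:defect}{pure}), to a genuine multiplication $\hat C \times \hat C \to \hat C$, and the coassociativity and counit equations translate into associativity and unitality via \autoref{prop:composite-duality}, turning $\hat C$ into a monoid in \priest. Crucially, this uses the extended duality on \emph{all} objects, so it requires no local finiteness of $C$; on finite carriers it restricts to the finite duality $\comon_{\mathrm{f}} \simeqop \ordmon_{\mathrm{f}}$ of \autoref{rem:fin-dual}.

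With this in hand I would argue at the level of objects. Let $C$ be any comonoid; then $\hat C$ is a Priestley monoid, which by \autoref{prop:every-priest-mon-prof} is profinite, i.e.\ a cofiltered limit $\hat C \cong \lim_i M_i$ of finite ordered monoids. Since the duality is contravariant and each $M_i$ is dual to a finite comonoid, $C$ is the corresponding filtered colimit $C \cong \operatorname{colim}_i \widehat{M_i}$ of finite comonoids. As shown in the proof of \autoref{prop:loc-fin-comon-ind-compl}, filtered colimits of comonoids are computed in \set and exhibit their colimit as a directed union of finite subcomonoids; hence $C$ is locally finite. Because $\comon_{\mathrm{lf}}$ is a full, isomorphism-closed subcategory of $\comon$, this yields $\comon = \comon_{\mathrm{lf}}$. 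Equivalently and more abstractly, combining $\comon_{\mathrm{lf}} \simeq \ind(\comon_{\mathrm{f}})$ (\autoref{prop:loc-fin-comon-ind-compl}) with $\cat{ProfOrdMon} = \pro(\ordmon_{\mathrm{f}})$ and the fact that a dual equivalence carries $\ind$-completions to $\pro$-completions of the dual category, the duality $\comon \simeqop \cat{PriestMon}$ restricts to $\comon_{\mathrm{lf}} \simeqop \cat{ProfOrdMon}$, so that local finiteness of $C$ corresponds precisely to profiniteness of $\hat C$.

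The only genuine content beyond citing \autoref{prop:every-priest-mon-prof} is this identification of $\comon_{\mathrm{lf}}$ with the profinite Priestley monoids under the duality; once it is in place the corollary is a formal transport of structure. The point to be careful about is therefore the compatibility of the two completions --- that the extended Priestley duality not only relates comonoids to Priestley monoids but also matches finite subcomonoids with finite monoid quotients and directed unions with cofiltered limits --- which is exactly what the $\ind$/$\pro$ argument, resting on \autoref{prop:loc-fin-comon-ind-compl}, supplies.
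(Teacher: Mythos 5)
Your proposal is correct and takes essentially the same route as the paper: the paper's entire proof is ``By duality we immediately obtain\ldots'' applied to \autoref{prop:every-priest-mon-prof}, and your argument supplies exactly the implicit content of that phrase --- the restriction of the duality to $\comon \simeqop \cat{PriestMon}$ via purity, the dualization of cofiltered limits of finite ordered monoids to filtered colimits of finite comonoids, and the local-finiteness of such colimits from \autoref{prop:loc-fin-comon-ind-compl}. Your flagged compatibility concern (matching the concrete duality with the $\ind$/$\pro$ completions) is resolved most directly by your own object-level argument in the second paragraph, which is self-contained and needs no further appeal to the abstract completion equivalence.
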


\begin{defi}
  Let \(X\) and \(Y\) be Priestley monoids.
  A \emph{Priestley relational morphism} \(X \rightarrow Y\) is a total Priestley relation
  \(\rho \colon X \rightarrow \dviet Y\) such that
  \[\rho(x) \rho(x') \subseteq \rho(xx') \qquad \text{ and } \qquad 1_{N} \in \rho(1_{M}).\]
\end{defi}

\begin{thm}\label{thm:dual}\hfill
  \begin{enumerate}
  \item\label{thm:dual:monoids} The category of derivation algebras is
    dually equivalent to the category of Priestley monoids
    \[
      \der \;\cong\; \comon \;\simeqop\; \cat{PriestMon}.
    \]

    \item The duality from \autoref{thm:dual:monoids} extends to Priestley relational morphisms:
          \[
      \relder \;\cong\; \relcomon \;\simeqop\; \cat{RelPriestMon}.
    \]
  \end{enumerate}
\end{thm}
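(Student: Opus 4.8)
The plan is to assemble both dualities from ingredients already established: the finite dualities of \autoref{rem:fin-dual}, the completion results \autoref{prop:loc-fin-comon-ind-compl} and \autoref{prop:every-priest-mon-prof}, and the local-to-general isomorphisms of \autoref{thm:loc-fin-dual}.

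For the first statement, I would first record the left-hand isomorphism $\der \cong \comon$: by \iref{thm:loc-fin-dual}{der}, derivation algebras are isomorphic to locally finite comonoids, and by the corollary $\comon \cong \comon_{\mathrm{lf}}$ every comonoid is locally finite, so $\der \cong \comon_{\mathrm{lf}} \cong \comon$. It then remains to prove $\comon \simeqop \cat{PriestMon}$, which I would obtain by chaining four identifications. On the algebraic side, $\comon \cong \comon_{\mathrm{lf}} \simeq \ind(\comon_{\mathrm{f}})$ by \autoref{prop:loc-fin-comon-ind-compl}; on the topological side, $\cat{PriestMon} \cong \cat{ProfOrdMon} = \pro(\ordmon_{\mathrm{f}})$ by \autoref{prop:every-priest-mon-prof} and the definition of the Pro-completion. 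The finite duality $\comon_{\mathrm{f}} \simeqop \ordmon_{\mathrm{f}}$, i.e.\ an equivalence $\comon_{\mathrm{f}} \simeq \ordmon_{\mathrm{f}}^{\mathrm{op}}$, then lifts to these completions via the general principle that a dual equivalence interchanges Ind- and Pro-completions: since $\pro(\cat B) = \ind(\cat B^{\mathrm{op}})^{\mathrm{op}}$ and $\ind(-)$ preserves equivalences, we get $\ind(\comon_{\mathrm{f}}) \simeq \ind(\ordmon_{\mathrm{f}}^{\mathrm{op}}) = \pro(\ordmon_{\mathrm{f}})^{\mathrm{op}}$, that is, $\ind(\comon_{\mathrm{f}}) \simeqop \pro(\ordmon_{\mathrm{f}})$. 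Composing the four identifications yields $\comon \simeqop \cat{PriestMon}$.

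For the second statement, the left-hand isomorphism $\relder \cong \relcomon$ is exactly \iref{thm:loc-fin-dual}{rel}, so the remaining task is $\relcomon \simeqop \cat{RelPriestMon}$. The objects on both sides are those of $\comon$ and $\cat{PriestMon}$ and are already in dual correspondence by the first statement; what is genuinely new is the matching of the relational hom-sets, and here the Ind/Pro argument does not transfer directly, since relational morphisms do not organise into a free completion as cleanly as pure morphisms. Instead I would argue by finite approximation on both sides and reduce to \autoref{rem:fin-dual}. Writing profinite ordered monoids as codirected limits $X = \lim_i X_i$ and $Y = \lim_j Y_j$ of their finite quotients, and using that $\dviet$ preserves codirected limits of Priestley spaces, a Priestley relational morphism $X \to Y$ is a compatible family of continuous relations $X \to \dviet Y_j$; since each $Y_j$ is finite and discrete, compactness of $X$ forces each member to factor through some finite quotient $X_i$, so that $\cat{RelPriestMon}(X, Y) \cong \lim_j \operatorname{colim}_i \relordmon_{\mathrm{f}}(X_i, Y_j)$. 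Dually, a corelational morphism $D \to C$ between the locally finite comonoids $D = \varinjlim_j D_j$ and $C = \varinjlim_i C_i$ is controlled by its restrictions to the finite subcomonoids, giving the matching description $\relcomon(D, C) \cong \lim_j \operatorname{colim}_i \relcomon_{\mathrm{f}}(D_j, C_i)$. The finite relational duality $\relcomon_{\mathrm{f}} \simeqop \relordmon_{\mathrm{f}}$ of \autoref{rem:fin-dual} supplies compatible bijections $\relcomon_{\mathrm{f}}(D_j, C_i) \cong \relordmon_{\mathrm{f}}(X_i, Y_j)$ at each finite stage, which assemble into the required hom-set bijection, natural in both arguments.

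I expect this relational step to be the main obstacle. One must verify that Priestley relational morphisms between profinite monoids are genuinely reconstructed from their finite approximations --- in particular that the compactness factorization is compatible with relational composition and with the multiplicativity and totality constraints --- and, symmetrically, that corelational morphisms of locally finite comonoids are determined by finite subcomonoids, which, unlike the pure case treated in \autoref{prop:loc-fin-comon-ind-compl}, requires checking finite presentability with respect to the relational hom-functors. Finally one must confirm that the two finite decompositions are interchanged coherently by the finite relational duality, so that the assembled bijection is genuinely natural and functorial. By contrast, the first statement is routine once the completion results are invoked.
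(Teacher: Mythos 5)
Your part (1) is correct and is essentially the paper's own proof: the same chain
\[
\der \;\cong\; \comon_{\mathrm{lf}} \;\simeq\; \ind(\comon_{\mathrm{f}}) \;\simeqop\; \pro(\ordmon_{\mathrm{f}}) \;\simeq\; \cat{PriestMon},
\]
assembled from \iref{thm:loc-fin-dual}{der}, \autoref{prop:loc-fin-comon-ind-compl}, the finite duality of \autoref{rem:fin-dual}, and \autoref{prop:every-priest-mon-prof}, with the identification $\comon \cong \comon_{\mathrm{lf}}$ supplied by the corollary to \autoref{prop:every-priest-mon-prof}.

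For part (2), however, your proposal both misses the intended idea and leaves a genuine gap. The paper needs no finite-approximation argument at the relational level: a Priestley relational morphism is by definition a Kleisli morphism $\rho \colon M \to \dviet N$ for $\dviet \cong \hatF_{\lor}\hat\Uj$, subject to two lax diagrams (lax multiplicativity and lax unitality) plus totality. Since the objects are already dual by part (1), the abstract extended duality (\autoref{thm:extended-duality}, instantiated as \autoref{thm:extended-priestley-duality}), together with the fact that the transposition and duality isomorphisms are order-isomorphisms (\autoref{sec:deriv-concr-form}) and that totality dualizes to top-preservation (\autoref{cor:dual-total}), sends $\rho$ bijectively to a top-preserving $\jsl$-morphism $\Uj\widehat{N} \to \Uj\widehat{M}$ satisfying exactly the dual lax diagrams, i.e.\ to a corelational morphism of comonoids; combining this with \iref{thm:loc-fin-dual}{rel} finishes the proof in a few lines. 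By contrast, your route rests on the two hom-set formulas $\cat{RelPriestMon}(X,Y) \cong \lim_j \operatorname{colim}_i \relordmon_{\mathrm{f}}(X_i,Y_j)$ and $\relcomon(D,C) \cong \lim_j \operatorname{colim}_i \relcomon_{\mathrm{f}}(D_j,C_i)$, which you state but do not prove, explicitly deferring them as ``the main obstacle''. These are not routine. For the first, surjectivity of the colimit comparison requires that a continuous relation $X \to \dviet Y_j$ satisfying the lax conditions factor through a finite \emph{ordered monoid} quotient of $X$ via a \emph{monotone} relation: topological factorization through some finite quotient is standard, but monotonicity of the factored relation fails for an arbitrary quotient, so one must construct a finer stable preorder with open equivalence classes, i.e.\ redo a compactness argument of the type in \autoref{lem:priest-sep}. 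Likewise, the preservation of codirected limits by $\dviet$, and the naturality and compatibility with the multiplicativity and totality constraints of the assembled bijections, are asserted rather than shown. So, as written, part (2) of your proposal is an unfinished plan; it could probably be completed, but only with substantially more work than the paper's one-step dualization of the lax diagrams, which is precisely what the abstract framework of \autoref{sec:extending-dualities} was built to provide.
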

\begin{proof}\hfill
\begin{enumerate}
  \item\label{item:1}  We assemble all the steps of the (dual) equivalences:
        The category  of profinite ordered monoids is the Pro-completion of the category of finite ordered monoids~\cite[Prop.~2.10]{adamek-chen-milius-urbat-21}.
        Since the category of finite ordered monoids is dual to the category of finite comonoids (\autoref{thm:fin-dual} restricted to finite structures), the Pro-completion of the former is dual to the Ind-completion of the latter. By \autoref{prop:loc-fin-comon-ind-compl},
        the latter is equivalent to the category of locally finite comonoids -- but this category is, by \iref{thm:loc-fin-dual}{der}, equivalent to the category of derivation algebras:
        \[\cat{PriestMon} \;\simeq\; \pro(\ordmon_{\mathrm{f}}) \;\simeqop\; \ind(\comon_{\mathrm{f}}) \;\simeq\; \comon_{\mathrm{lf}} \;\simeq\; \der. \]

  \item A Priestley relational morphism from \(M\) to \(N\) is precisely a total Priestley relation \(\rho \colon M \rightarrow \dviet N\) such that the following diagrams commute laxly as indicated.
\[
    \begin{tikzcd}
      M \times M
      \ar{rr}{\cdot_{M}}
      \dar{\rho \times \rho}
      &
      &
      M
      \dar{\rho}
      &&
      1
      \rar{1_{M}}
      \dar{1_{N}}
      &
      M
      \dar{\rho}
      \\
      \dviet N \times \dviet N
      \rar{\hat{\delta}}
    \ar[phantom]{urr}{\rotatebox{45}{\(\le\)}}
     &
     \dviet(N \times N)
     \rar{\dviet(\cdot_N)}
     &
      \dviet N
      &&
    \urar[phantom]{\rotatebox{45}{\(\le\)}}
      N
      \rar{\eta}
      &
      \dviet N
    \end{tikzcd}
  \]
\end{enumerate}
Recall that \(\dviet \cong \hatF_{\lor}\hat\Uj\) for \(\Uj \colon \dl \rightarrow \jsl\), so
under extended duality \(\rho\) dualizes precisely to a corelational morphism of comonoids:
\[
  \begin{tikzcd}[baseline = (B.base)]
    \Uj\hat{M} \jsor \Uj\hat{M}
    &
    \Uj \hat{M}
    \lar[swap]{\Uj(\hat{\cdot}_{M})}
    &&
    2
    &
    \Uj\hat{M}
    \lar[swap]{\Uj(\hat{1}_{M})}
    \\
    \Uj\hat{N} \jsor \Uj\hat{N}
    \uar[swap]{\hat{\rho}^{-} \jsor \hat{\rho}^{-}}
    \ar[phantom]{ur}{\rotatebox{45}{\(\le\)}}
    &
    \Uj\hat{N}
    \uar[swap]{\hat{\rho}^{-}}
    \lar[swap]{\Uj(\hat{\cdot}_{M})}
    &
    &
    \phantom{X}
    \urar[phantom, near end]{\rotatebox{45}{\(\le\)}}
    &
    |[alias = B]|
    U_\lor \hat{N}
    \uar[swap]{\hat{\rho}^-}
    \ular{U_\lor(\hat{1}_N)}
  \end{tikzcd}
\]
Together with \iref{thm:loc-fin-dual}{rel} this extends the duality established in \autoref{item:1}. \qedhere
\end{proof}

\begin{rem}
    \autoref{thm:dual} restricts to a duality between the category of profinite monoids and Stone relational morphisms and the category of Boolean derivation algebras and corelational residuation morphisms.
\end{rem}

\section{Duality for the Category of Small Categories}
\label{sec:duality-cat}

As a final application of the abstract extended duality framework, we derive a concrete description of \(\ccat^{\mathrm{op}}\), the dual of the category of small categories and functors.
We instantiate the parameters of \autoref{asm} to the Diagram~\eqref{eq:asm-cat}.

  It is well known that small categories can be described in an object-free way as partial monoids~\cite{maclane-98,fahrenberg-23,schweizer-67} with an additional \emph{locality} condition.
  For our purposes it will be convenient  to describe partial monoids more generally as monoids in the monoidal category \rel.

  \begin{nota}\label{not:rel}
    Given a ternary relation \(r \colon X \times Y \rightarrow \pow Z\), we obtain the relation
    \begin{equation}\label{eq:lift-r}
      \bigcup { \cdot } \pow(r) \cdot \hat{\delta} \colon \pow X \times \pow Y \rightarrow \pow (X \times Y ) \rightarrow \pow \pow Z \rightarrow \pow Z, \qquad (A, B) \mapsto \bigcup_{x \in A, y \in B} r(x, y).
    \end{equation}
    Abusing notation, we denote the map \eqref{eq:lift-r} also by \(r \colon \pow X \times \pow Y \rightarrow \pow Z\).
    We write \(r_{@}(x, y)\) if \(r(x, y) \ne \emptyset\), and similarly \(x \mathrel{r_{@}} y\) if \(r\) is used as an infix operator.
    Furthermore, we identify singleton sets \(\{x\} \subseteq X\) with their unique inhabitants \(x \in X\).
  \end{nota}

  \begin{defi}\label{def:rel-mon}\hfill
    \begin{enumerate}
    \item 
    A \emph{relational monoid}  consists of a carrier set \(M\), a subset \(E \incl M\) of \emph{identities} and a \emph{multiplication relation} \(\circ  \colon M \times M \rightarrow \pow M\) such that the following diagrams commute in \rel:
\[
    \begin{tikzcd}
      M \bar{\times} M \bar{\times} M
      \rar{\id \bar{\times} \circ}
      \dar{\circ \bar{\times} \id}
      &
      M \bar{\times} M
      \dar{\circ}
      &
      \\
      M \bar{\times}  M
      \rar{\circ}
      &
      M
    \end{tikzcd}
    \begin{tikzcd}
      1 \bar{\times} M
      \rar[phantom]{\cong}
      \dar{E \bar{\times} \id}
      &
      M
      \dar{\id}
      \rar[phantom]{\cong}
      &
      M \bar{\times} 1
      \dar{\id \bar{\times} E}
      \\
      M \bar{\times} M
      \rar{\circ}
      &
      M
      &
      M \bar{\times} M
      \lar[swap]{\circ}
    \end{tikzcd}
  \]
  Using \autoref{not:rel}, the diagrams above read
  \[
    \forall x, y, z \in M \colon (x \circ y) \circ z = x \circ (y \circ z), \qquad E \circ x = x = x \circ E.
  \]

  \item
  A relational monoid is \emph{partial} if \(\circ\) is single-valued, and it is
  \emph{local}~\cite{fahrenberg-23} if it satisfies
  \begin{equation*}
    x \circ_{@} y \text{ and } v \in y \circ z \implies x \circ_{@} v.
  \end{equation*}

  \item
    A \emph{functorial} morphism \(h \colon (M, \circ, E) \rightarrow (M', \circ, E')\)  of partial monoids is a pure morphism of binary \(J_{\pow}\)-operators (that is, \(\pow(h)(x \circ y) = h(x) \circ h(y)\)) satisfying \(h(E) \subseteq E'\). The category of relational monoids and functorial morphisms is denoted by \(\relmon\).
  \end{enumerate}
  \end{defi}

  Every monoid is obviously also a relational monoid.
  An example of a partial monoid that is not local is given by  \((\pow(X), \circ, \emptyset)\), \(|X| \ge 1\), with \(A \circ B = A + B\) if \(A, B\) are disjoint and \(A \circ B = \emptyset\) otherwise.

  \begin{rem}\hfill
    \begin{enumerate}
      \item The full subcategory  of \relmon consisting of  local partial monoids is equivalent to the category of small categories~\cite{maclane-98,fahrenberg-23,schweizer-67}:
            \[\relmon_{\mathrm{fun}, \mathrm{loc}} \simeq \ccat.\]
            From right to left, a small category \(\cat{C}\) is mapped to the relational monoid \(\mor{\cat{C}}\) with identities \(\{\id_{C} \mid C \in \cat{C}\}\).
            It is easy to see that \(\mor{\cat{C}}\) is a local partial monoid.

            For left to right, one first has to show that in a partial monoid \(M\) every element \(f \in M\) has unique left and right units \(e^{l}_{f}, e^{r}_{f} \in E\).
            The equivalence then sends a monoid \(M\) to the category whose objects are given by the units \(E\) and with morphisms \[\hom(e, e') = \{f \in M \mid e = e^{l}_{f},\, e' = e^{r}_{f}\}.\]
      \item The multiplication of the relational monoid corresponding to a small category is single-valued, that is, it factorizes as \[\mor(\cat{C}) \times \mor(\cat{C}) \rightarrow \mor(\cat{C}) + \{\bot\} \incl \pow(\mor(\cat{C})),\]
            its set of units can be any element  of \(\pow(\mor(\cat{C}))\).
            This is the reason why we work with the full powerset in our assumptions~\eqref{eq:asm-cat} in lieu of the maybe submonad.
    \end{enumerate}
  \end{rem}

  From \autoref{sec:adjunct-distr-latt} we already know relational monoids correspond to certain associative residuation CABAs,
  so it remains to single out the images of local partial monoids.
\pagebreak
  \begin{defi}\hfill
    \begin{enumerate}
      \item A residuation CABA is \emph{functional} if for all atoms \(a \in \at(R)\) the \cslm-morphisms \(a \lres (-) \colon R \rightarrow R\) also preserve non-empty joins.
            It is \emph{local} if it satisfies \(x^{?} \lres x^{?} = \top\), where 
            \[(-)^{?} \in \cslj(R, R), \qquad x \mapsto \neg (x \lres \bot).\]
            A residuation CABA is \emph{categorical} if it has a (not necessarily atomic!) unit \(e \in R\) and is associative, functional and local.
      \item A (non-unital) \emph{morphism} from a residuation CABA \(R\) to \(R'\) is a complete residuation algebra morphism \(h \colon R \rightarrow R'\), that is, \(h\) preserves all joins and meets and satisfies
            \(x' \lres f(z) = f(f^{*}(x') \lres z) \).
            If \(R, R'\) are unital residuation CABAs, then a morphism is \emph{lax unital} if it satisfies  \(e' \le f(e)\).
            
            We denote the category of  residuation CABAs by \(\cat{ResCABA}\) and its subcategory  of categorical residuation CABAs and lax unital morphisms by \(\cat{CatResCABA}\).
    \end{enumerate}
  \end{defi}

  \begin{thm}
    The category of small categories is dually equivalent to the category of categorical residuation CABAs:
    \[\ccat \simeqop \cat{CatResCABA}.\]
  \end{thm}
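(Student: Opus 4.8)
The plan is to realize the duality as a composite of the known equivalence $\ccat\simeq\relmon_{\mathrm{fun},\mathrm{loc}}$ with a dual equivalence between relational monoids and residuation CABAs obtained from the abstract extended duality. First I would instantiate \autoref{thm:extended-duality} in the order-discrete setting $\caba\simeqop\set$ of \eqref{eq:asm-cat}, whose Kleisli category is the category $\rel=\set_{\pow}$ of sets and relations. A relational monoid $(M,\circ,E)$ is then exactly a monoid object in $\rel$: the multiplication $\circ$ is a $(2,1)$-ary $J_{\pow}$-operator and the identity set $E\colon 1\to\pow M$ is a $(0,1)$-ary one, subject to the associativity and unit laws of \autoref{def:rel-mon}.

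Under \autoref{thm:extended-duality}, the multiplication $\circ$ dualizes to a $(1,2)$-ary $U$-operator, i.e.\ a comultiplication $\gamma\colon\Vm R\to\Vm R\msorc\Vm R$ on the CABA $R=\pow M$, while $E$ dualizes to a counit $\epsilon\colon\Vm R\to 2$. By \autoref{prop:composite-duality}, governing tensor products, composites and identities of operators, the associativity and unit equations for $(\circ,E)$ translate into coassociativity and counitality of $(\gamma,\epsilon)$; here the unit $e\in R$ arises as the left adjoint of $\epsilon$ and need not be an atom, matching the possibility that $|E|\neq 1$. Passing from $\gamma$ to the residuals via \autoref{con:op-translation} and \autoref{lem:gamma-res} ($x\lres z=x\moni\gamma(z)$), and using \autoref{lem:op-props}, this yields a contravariant equivalence between relational monoids and associative, unital residuation CABAs, which is the CABA restriction (and mild generalization to non-prime counits) of \autoref{thm:fin-iso}.

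It then remains to check that this equivalence restricts to the claimed full subcategories, i.e.\ that local partial monoids correspond to categorical residuation CABAs. Single-valuedness of $\circ$ (partiality) corresponds to the functionality condition that $a\lres(-)$ preserve non-empty joins for every atom $a$, this being the discrete partial-function instance of \autoref{cor:par-fun}, equivalently the join-preservation at primes of \autoref{rem:join-pres-at-prim}; the difference to purity is exactly the empty-join clause $\gamma(\bot)=\bot$ of \iref{lem:fin-op-equiv}{mu}, which corresponds to totality of $\circ$. The unit set $E$ matches the (possibly non-atomic) unit $e$, and associativity is already handled. The crux is locality: computing residuals of atoms in $R=\pow M$ gives $x\lres\bot=\{v\mid\forall a\in x\colon\neg(a\circ_{@}v)\}$, whence $x^{?}=\neg(x\lres\bot)=\{v\mid\exists a\in x\colon a\circ_{@}v\}$, and unfolding $x^{?}\lres x^{?}=\top$ over singletons reads precisely as the locality axiom $x\circ_{@}y\ \wedge\ v\in y\circ z\Rightarrow x\circ_{@}v$.

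Finally I would match morphisms. A functorial morphism $h=J_{\pow}(h_{0})$ is a pure $J_{\pow}$-operator morphism with $h_{0}[E]\subseteq E'$; by purity-preservation of extended duality (\iref{rem:defect}{pure}) and \autoref{prop:fin-mor} it dualizes to a complete residuation morphism $f=\widehat{h_{0}}$, and the inclusion $h_{0}[E]\subseteq E'$ becomes the lax-unitality inequality $e'\le f(e)$. Composing $\ccat\simeq\relmon_{\mathrm{fun},\mathrm{loc}}$ with the restricted equivalence $\relmon_{\mathrm{fun},\mathrm{loc}}\simeqop\cat{CatResCABA}$ then yields the theorem. I expect the locality translation to be the main obstacle: one must introduce the auxiliary operator $(-)^{?}$, compute it explicitly from the residuals on the powerset CABA, and argue that testing the single equation $x^{?}\lres x^{?}=\top$ on singleton $x$ already captures the full first-order locality condition.
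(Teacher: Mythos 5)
Your proposal is correct and follows essentially the same route as the paper's proof: both compose the equivalence $\ccat\simeq\relmon_{\mathrm{fun},\mathrm{loc}}$ with the order-discrete extended duality $\op_{\Vm}^{1,2}(\caba)\simeqop\op_{J_{\pow}}^{2,1}(\set)$, translate partiality, unit and associativity via \autoref{lem:fin-op-equiv}, \autoref{lem:op-props} and the analogue of \autoref{cor:par-fun}, compute $x^{?}=\{v\mid\exists a\in x\colon a\circ_{@}v\}$ on the powerset CABA, and match functorial morphisms with lax unital ones. The one step you flag as the main obstacle is exactly where the paper does its work in part (1c), and it discharges easily: since $(-)^{?}$ and $\circ$ preserve all joins, one has $A^{?}\circ M=\bigcup_{a\in A}\bigl(\{a\}^{?}\circ M\bigr)\subseteq\bigcup_{a\in A}\{a\}^{?}=A^{?}$, so validity of $x^{?}\lres x^{?}=\top$ at singletons (equivalently, locality of $M$) already yields it for all elements.
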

  \begin{proof}
  Restricting the results from \autoref{sec:adjunct-distr-latt} to the order-discrete setting, we get a duality
  \begin{equation}
    \label{eq:dual-me}
    \rescaba \cong \coalg(\Vm) = \op_{\Vm}^{1, 2}(\caba) \simeqop \op_{J_{\pow}}^{2, 1}(\set).
  \end{equation}
    From right to left, the dual of a \(J_{\pow}\)-algebra \(\circ \colon M \times M \rightarrow \pow M\) is given by the residuation CABA with carrier \(\hatM \cong \pow(M)\), whose left residual is given by
    \[\lres \colon  \pow(M)^{\partial} \msorc  \pow(M) \rightarrow  \pow(M), \qquad   A \msor C \mapsto \{ b \mid A \circ b \subseteq C  \}. \]
    It remains to show that this duality restricts to a duality between the category of local partial monoids and functorial homomorphisms, and the category of categorical residuation CABAs and lax unital morphisms.
    \begin{enumerate}
    \item  We start by showing that the duality restricts on objects:
    a binary \(J_{\pow}\)-algebra \(\circ \colon M \bar{\times} M \rightarrow \pow M\) is a local, partial monoid iff its dual residuation CABA \(\hatM\) is unital, associative, functional and local.

    \smallskip\noindent
            (1a) Partiality of multiplication corresponds to functionality of \(\hatM\): By \autoref{lem:fin-op-equiv} restricted to non-empty joins, the residuation CABA \(\hatM\) is functional iff its \(\Vm\)-coalgebra structure preserves non-empty joins.
            An argument analogous to \autoref{cor:par-fun} then gives that a \(\Vm\)-coalgebra preserves non-empty joins iff its dual order-relation is a partial map.
            Combining these two equivalences we get that the residuation CABA \(\hatM\) is functional if its dual relation \(\circ \colon M \times M \rightarrow \pow(M)\) is a partial map.

    \smallskip\noindent
    (1b) By \autoref{lem:op-props} the residuation CABA \(\hatM\) has a unit \(e \in \hatM\) iff its comultiplication has a counit \(\epsilon \vdash e\) iff its dual algebra \(M\) has a unit \(E = e\). By the same lemma, \(\hatM\) is associative iff its comultiplication is coassociative iff \(M\) is associative.

    \smallskip\noindent
    (1c) We show that the residuation CABA \(\hatM\) satisfies \(\forall A \in \hatM \colon A^{?} \lres A^{?} = \top\) iff its dual algebra \(M\) is local:
    We prepare by making some observations about the locality of the residuation CABA \(\hatM\).
    Note that in \(\hatM\) we have \(\bot = \emptyset\) and \(\top = M\).
    For \(A \in \hatM\) we have
    \begin{equation}
      \label{eq:?}
      A^{?} = \neg (A \lres \bot) = \{n \mid \neg (A \circ n \subseteq \emptyset)\} = \{n \mid A \circ_{@} n\} = \{n \mid \exists m \in A \colon m \circ_{@} n\}.
    \end{equation}
    In particular, on atoms \(m \in \hatM\) membership in \(m^{?}\) simplifies to the condition
    \begin{equation}
      \label{eq:mdn}
    n \in m^{?} \iff m \circ_{@} n.
    \end{equation}
    We rewrite the locality condition using the adjunction \(B \circ (-) \dashv B \lres (-)\):
    \begin{equation}
      \label{eq:local}
      A^{?} \lres A^{?} = M  \iff M \subseteq A^{?} \lres A^{?} \iff A^{?} \circ M \subseteq A^{?},
    \end{equation}
    where \(\circ \colon \hatM \jsorc \hatM \rightarrow \hatM\) is the \(\Vj\)-algebra structure on \(\hatM\) corresponding to the extension $\pow M\times \pow M\to \pow M$ of the partial monoid multiplication.

    We start with the direction that if \(\hatM\) satisfies \eqref{eq:local}, then \(M\) is local:
    For this, let \(x, y \in M\) such that \(x \circ_{@} y\) and \(v
    \in y \circ z\). We have to show that \(x \circ_{@} v\).
    We first apply \eqref{eq:mdn} to $x \circ_{@} y$ to obtain $y \in
    x^?$. Monotonicity of~\(\circ\) in both
    arguments then yields the first inclusion below
    and~\eqref{eq:local} the second one: 
    \[v \in y \circ z \subseteq x^{?} \circ M \subseteq x^{?}. \]
    But by \eqref{eq:mdn} this means that \(x \circ_{@} v\), as
    required.
    
    For the other direction, we prove that if \(M\) is local, then \(\hatM\) is local.
    By \eqref{eq:local} it suffices to prove \(\forall A \in \hatM \colon A^{?} \circ M \subseteq A^{?}\).
    Let \(v \in A^{?} \circ M\). Then there exist \(y \in A^{?}\), \(z \in M\) such that \(v \in y \circ z\).
    Since \(y \in A^{?}\), there exists, by~\eqref{eq:?}, some \(x \in A\) with \(x \circ_{@} y\).
    Locality of \(M\) now implies that \(x \circ_{@} v\), so \(v \in
    A^{?}\) as required using~\eqref{eq:?} again.

  \item Finally, we show that the duality in~\eqref{eq:dual-me} restricts on morphisms as claimed.
            Let \(f \colon M \rightarrow M'\) be a pure morphism of \(J_{\pow}\)-algebras with dual morphism \(h = f^{-1} \colon \widehat{M'} \rightarrow \hatM\) of residuation CABAs.
            We have to show that \(f\) is functorial iff \(h\) is lax unital:
    By (1b) the unit \(E\) of a partial monoid $M$ dualizes to the residual unit $E \in \hatM$.
    Now \(f\) is a functorial morphism of partial monoids iff \(f[E]
    \subseteq E'\) iff \(E \subseteq f^{-1}[E'] = h(E')\) iff \(h\) is
    lax unital.\qedhere
  \end{enumerate}
\end{proof}

\section{Conclusion and Future Work}
\label{sec:concl-future-work}

We have presented an abstract approach to extending Stone-type dualities based on adjunctions
between monoidal categories and instantiated it to recover classical extended Stone and
Priestley duality, along with a generalization of it to relational morphisms.  Guided by these
foundations, we have investigated residuation and derivation algebras, leading to a new duality for
Priestley monoids, and we extended this duality to include relational morphisms. In addition, we
have derived a new dual characterization of the category of small categories.

Relational morphisms are an important tool in algebraic language theory, notably for characterizing language operations algebraically. For instance, Straubing~\cite{str81} first showed that relational morphisms are tightly connected to the concatenation product and the star operation on regular languages; see also the surveys by Pin~\cite{pin-88,pin11}.
In future work, we intend to apply our duality-theoretic insights on relational morphisms to
illuminate, and possibly recover, these results from a categorical perspective, much in the
spirit of the duality-theoretic view of Eilenberg's Variety Theorem by Gehrke
et~al.~\cite{gehrke-gregorieff-pin-08} and the categorical works it has inspired~(see e.g.~\cite{sal-17,urb-ada-che-mil-17-proc,boj15,blu21}).

Another goal is to apply our duality framework beyond
classical Stone and Priestley dualities.  Specifically, we aim to
develop an extended duality theory for the recently developed
\emph{nominal} Stone duality~\cite{birkmann-milius-urbat-23}, which
would enable a generalization of our present results on residuation algebras
to the nominal setting with applications to data languages.

A conceptually rather different dual characterization of the category of profinite monoids and
continuous monoid morphisms in terms of \emph{semi-Galois categories} has
been provided by Uramoto~\cite{uramoto-16}. Extending this result to
relational morphisms, similar to our \autoref{thm:dual}, is another
interesting point for future work.

Potential applications of our abstract approach to extended Stone duality are not limited to
algebraic language theory. In~\autoref{sec:towards-point-free} we have used the compositionality of extended (Stone) duality to recover results from modal correspondence theory by purely categorical methods.
We hope that the monoidal approach might bring a new impulse into the historic endeavor of correspondence theory.
In future work we will investigate the expressiveness of this idea, that is, study which
relational properties can be captured by suitable inequations of operators. Applying this
approach to more complex modal axioms that, for example, involve negation or implication or combine multiple modalities, can be expected to be non-trivial.

Our applications so far were based on Stone and Priestley duality, but the general framework of abstract extended duality applies far beyond this setting. For instance, Furber and Jacobs~\cite{furber-jacobs-15} showed how to extend the duality between \(C^{*}\)-algebras and compact Hausdorff spaces (`Gelfand duality') to a `probabilistic Gelfand duality', which emerges by employing a weaker notion  of morphism between \(C^{*}\)-algebras and by replacing the category of compact Hausdorff spaces by the  Kleisli category of the Radon monad.
This result seems to fit perfectly into our approach of extending dualities.
A thorough instantiation of the results from~\cite{furber-jacobs-15} to our framework could not only place these results in a larger categorical context but also uncover new results in probabilistic duality theory.

\section*{Acknowledgement}
\noindent
The authors would like to thank Mai Gehrke for the helpful discussion on Priestley monoids, and
the anonymous reviewers for their detailed comments and suggestions for improving the presentation.

\section*{Funding}
  \noindent Fabian Lenke and Stefan Milius are supported by Deutsche Forschungsgemeinschaft (DFG, German Research Foundation) under project 419850228. Henning Urbat is supported by Deutsche Forschungsgemeinschaft (DFG, German Research Foundation) under project 470467389.

\bibliographystyle{alphaurl}
\bibliography{bibliography}

\end{document}